\newlist{enumerati}{enumerate}{10}
\setlist[enumerati]{label=\emph{(\roman*)}, ref=\emph{(\roman*)}}
\newlist{enumerata}{enumerate}{10}
\setlist[enumerata]{label=\emph{(\alph*)}, ref=\emph{(\alph*)}}
\tikzset{every arrow/.append style = -{Computer Modern Rightarrow[]}}
\newtheorem{terminology}[thm]{Terminology}
\theoremstyle{defC}
\newtheorem{exaC}[thm]{Example}
\newtheorem{remC}[thm]{Remark}
\newcommand{\xto}[1]{\xrightarrow{#1}}
\newcommand{\xot}[1]{\xleftarrow{#1}}
\newcommand{\nxto}[1]{\xarrow[negate,->]{#1}}
\newcommand{\ens}[1]{\{ #1 \}}
\newcommand{\op}[1]{{#1}^{\mathit{op}}}
\newcommand{\src}{\mathit{src}}
\newcommand{\tgt}{\mathit{tgt}}
\renewcommand{\pmod}{^{\mathit{stmod}}}
\newcommand{\pmodmon}{^{\mathit{monmod}}}
\DeclareMathOperator{\alg}{-alg}
\DeclareMathOperator{\Mod}{-\mathbf{Mod}}
\DeclareMathOperator{\Mon}{-\mathbf{Mon}}
\DeclareMathOperator{\id}{id}
\DeclareMathOperator{\Id}{Id}
\DeclareMathOperator{\colim}{colim}
\DeclareMathOperator{\app}{app}
\DeclareMathOperator{\ope}{op}
\DeclareMathOperator{\val}{val}
\DeclareMathOperator{\paral}{par}
\DeclareMathOperator{\arr}{arr}
\DeclareMathOperator{\ev}{ev}
\newcommand{\Gph}{\text{-}𝐆𝐩𝐡}
\newcommand{\specfun}{\mathit{spec}}
\newcommand{\spec}[1]{\specfun(#1)}
\newcommand\Tstrut{\rule{0pt}{3ex}}
\newcommand\Bstrut{\rule[-0.9ex]{0pt}{0pt}}
\newcommand\Bstrutbas{\rule[-1.2ex]{0pt}{0pt}}
\newcommand\Tstruthaut{\rule{0pt}{2em}}
\newcommand\Tstrutpetit{\rule{0pt}{1em}}
\renewcommand{\epsilon}{\varepsilon}
\renewcommand{\phi}{\varphi}
\newcommand{\abar}{\bar{a}}
\DeclareTextMath{\lam}[lambda]{λ}
\DeclareTextMath{\lbmu}[lambda-bar-mu]{\bar{λ}μ}
\DeclareTextMath{\pii}[pi]{π}
\newcommand{\transition}{transition\xspace}
\newcommand{\Transition}{Transition\xspace}
\newcommand{\atransition}{a transition\xspace}
\newcommand{\Red}{\mathit{Trans}}
\newcommand{\semsig}{abstract signature\xspace}
\newcommand{\semsigs}{abstract signatures\xspace}
\newcommand{\asemsig}{an abstract signature\xspace}
\newcommand{\Asemsig}{An abstract signature\xspace}
\newcommand{\SemSig}{𝐒𝐞𝐦𝐒𝐢𝐠}
\newcommand{\labredrule}[3]{\ensuremath{{#2} \xto{#1} {#3}}}
\newcommand{\alert}[1]{{\bf #1}}
\newcommand{\mypar}[1]{\medskip \noindent\textbf{#1}\ }
\keywords{operational semantics; category theory}
\begin{document}

\title{Modules over monads and operational semantics (expanded
  version)}

\titlecomment{An extended abstract of this paper appeared in FSCD
  '20~\cite{DBLP:conf/fscd/HirschowitzHL20}. The present version
  incorporates several improvements and additions, listed at the end
  of~§\ref{s:intro}.}

\thanks{We thank the anonymous reviewers for their careful reading and
  constructive comments.}

\author[A.\ Hirschowitz]{André Hirschowitz\lmcsorcid{0000-0003-2523-1481}}[a]
\author[T.\ Hirschowitz]{Tom Hirschowitz\lmcsorcid{0000-0002-7220-4067}}[b]
\author[A.\ Lafont]{Ambroise Lafont\lmcsorcid{0000-0002-9299-641X}}[c]

\address{Univ. Côte d’Azur, CNRS, LJAD, 06103, France}
\address{Univ. Grenoble Alpes, Univ. Savoie Mont Blanc, CNRS, LAMA, 73000, Chambéry, France}
\address{UNSW, Sydney, Australia}

\begin{abstract}
  This paper is a contribution to the search for efficient and
  high-level mathematical tools to specify and reason about (abstract)
  programming languages or calculi.  Generalising the reduction monads
  of Ahrens et al., we introduce \transition monads, thus covering new
  applications such as $\overline{λ}μ$-calculus, $π$-calculus,
  Positive GSOS specifications, differential $λ$-calculus, and the
  simply-typed, call-by-value $λ$-calculus.  Moreover, we design a
  suitable notion of signature for \transition monads.
\end{abstract}

\maketitle

\section{Introduction}\label{s:intro}
The search for a mathematical notion of programming language goes back
at least to Turi and Plotkin~\cite{plotkin:turi:bialgebraic}, who
coined the name “Mathematical Operational Semantics”, and explained
how known classes of well-behaved rules for structural operational
semantics, such as GSOS~\cite{GSOS}, can be categorically understood
and specified via bialgebras and distributive laws. Their initial
framework did not cover variable binding, and several authors have
proposed variants which
do~\cite{DBLPconf/lics/FioreT01,DBLPconf/lics/FioreS06,DBLPconf/lics/Staton08},
treating examples like the $π$-calculus.  However, none of these
approaches covers higher-order languages like the $λ$-calculus.

In recent work, following previous work on modules over monads for
syntax with binding~\cite{hirscho:lam,AHLM:2sig} (see
also~\cite{Ahrens16}), Ahrens et al.~\cite{AHLM19} introduce
\alert{reduction monads}, and show how they cover several standard
variants of the $λ$-calculus. Furthermore, as expected in similar
contexts, they propose a mechanism for specifying reduction monads by
suitable signatures.

Our starting point is the fact that already the call-by-value
$λ$-calculus does not form a reduction monad. Indeed, in this
calculus, variables are placeholders for values but not for general
$λ$-terms; in other words, reduction, although it involves general
terms, is stable under substitution by values only.  In the present
work, we generalise reduction monads to what we call
\alert{\transition monads}.  The main new ingredients of our
generalisation are as follows.
\begin{itemize}
\item We now have two kinds of terms, called \alert{placetakers} and
  \alert{states}: variables are placeholders for our placetakers, while
  transitions relate states. Typically, in the call-by-value, small-step
  $λ$-calculus, placetakers are values, while states are general
  terms.
\item We also have a set of types for placetakers, and a possibly
  different set of types for transitions and states: transitions of a
  given type relate states of this type.  Typically, in the
  simply-typed, call-by-value $λ$-calculus, both sets of types
  coincide and are given by simple types; while in pure \lbmu-calculus, we
  have two placetaker types, one for programs and one for stacks, and
  three transition types, respectively for programs, stacks, and
  commands.
\item We in fact have two possibly different kinds of states for each 
transition type,
  \alert{source} states and \alert{target} states, so that a transition
  of a given type 
  now relates a source state to a target state of this type. Typically, in the
  untyped, call-by-value, big-step $λ$-calculus, source states (of the unique
  transition type)  are general
  terms, while target states are values.
\item 
Our variables form  a (variable!) family of sets indexed by the placetaker types.
To such a variables family $X$, a \transition monad assigns 
\begin{itemize}
\item 
an object $ T(X)$ (`of placetakers with free
  variables in $X$'), which is again a 
  family of sets indexed by the placetaker types, and
  \item 
  two state objects $S₁(T(X))$ and $S₂(T(X))$ (`of source  (resp.\ target) states with free
  variables in $X$'), which are  
  families of sets indexed by transition types;
  here $S₁$ and $S₂$ are two functors producing state objects out of placetaker objects. 
\end{itemize}
\end{itemize}

Roughly speaking (see~§\ref{sec:operational-monads}),  
 a \transition monad  consists of three components:
\begin{itemize}
\item a \alert{placetaker} monad $T$,
\item two \alert{state} functors $S₁,S₂$, 
\item a \alert{\transition structure} consisting of a 
  $T$-module $R$ and two $T$-module morphisms
  \begin{center}
    $\src∶ R → S₁T$ \qquad and \qquad $\tgt∶ R → S₂T$,
  \end{center}
\end{itemize}
where $T$-modules~\cite{hirscho:lam} are objects equipped with
substitution by elements of $T$.  We view the transition structure
$(R,\src,\tgt)$ as an object of the slice category $T\Mod_f/S₁T×S₂T$ of
(finitary) $T$-modules over $S₁ T×S₂ T$.

Reduction monads~\cite{AHLM19} correspond to the untyped case with 
$S₁ = S₂ = \Id_{𝐒𝐞𝐭}$.
There, reduction monads are identified with suitable 
 \alert{relative
  monads}~\cite{DBLP:journals/corr/AltenkirchCU14}, and 
  we provide a similar identification for \transition monads 
(see Proposition~\ref{prop:synthana}).

We present our series of examples of \transition monads
in~§\ref{s:examples}: $\overline{λ}μ$-calculus, simply-typed
$λ$-calculus (in its call-by-value, big-step variant), $π$-calculus
(as an unlabelled
transition system), positive GSOS systems, and differential $λ$-calculus.\\

After defining \transition monads, we embark on a second part of the
paper, devoted to offering the operational semanticist some hopefully
convenient tools for defining programming languages (as \transition
monads).

Specifically, we propose an approach to the specification of
\transition monads via signatures, which follows the spirit of Initial Algebra
Semantics~\cite{adj}.  This approach is thus categorical in nature,
hence we have to upgrade our sets of \transition monads into
categories, say $𝐓𝐫𝐚𝐧𝐬𝐌𝐧𝐝(ℙ,𝕊)$, one for each pair $(ℙ,𝕊)$ of sets of
placetaker types and transition types (see
Definition~§\ref{def:omndof}). For these categories, we propose what
we call a \alert{register}: for a category $𝐂$, a register consists of
\begin {enumerate}
\item a set $𝐒𝐢𝐠$ of \alert{signatures},
\item a \alert{semantics} map $⟦-⟧$ assigning to each signature $S$ in
  $𝐒𝐢𝐠$ a category $S\alg$ of \alert{algebras}, or \alert{models},
  together with a \alert{forgetful} \footnote{Here ``algebra'' and
    ``forgetful'' have no technical meaning and are chosen by
    analogy.}  functor $U∶ S\alg → 𝐂$, and
\item a \alert{validity} proof of the fact that for each $S$,
the category $S\alg$ 
has an initial object.
\end{enumerate}
\begin{rem}
  This definition is written in type-theoretic style, in the sense
  that the validity proof is treated as a proper mathematical object.
  The reader working in a standard, set-theoretical logical setting
  should of course understand this as an additional condition that
  registers must satisfy.  
\end{rem}
A register for $𝐂$ yields a  ``decoding'' map $\specfun$ sending $S$
to $\spec{S} ≔ U(0)$, where $0$ here denotes the initial
object in $S\alg$. The efficiency of a register for a category lies in the fact 
that it allows the expert to easily formalise the informal specification
they have in mind for the relevant object.
We illustrate in~§\ref{s:applications} the expressiveness of
our register $𝐑𝐞𝐠𝐓𝐫𝐚𝐧𝐬𝐌𝐧𝐝_{ℙ,𝕊}$ for transition monads by designing signatures
for our examples of~§\ref{s:examples}.  For Positive GSOS systems, 
we even go further by
defining a specific register, in which each individual system is more
easily specified. 

Our register for \transition monads is built out from three
intermediate registers, corresponding to the three components listed
above:
\begin{itemize}
\item a register $𝐑𝐞𝐠𝐌𝐧𝐝_f(𝐒𝐞𝐭^ℙ)$  for monads in the category $𝐒𝐞𝐭^ℙ$,
\item a register $𝐑𝐞𝐠[𝐒𝐞𝐭^ℙ,𝐒𝐞𝐭^𝕊]_f$ for functors $𝐒𝐞𝐭^ℙ → 𝐒𝐞𝐭^𝕊$, and
\item a register $𝐑𝐞𝐠𝐓𝐫𝐚𝐧𝐬𝐒𝐭𝐫𝐮𝐜𝐭_{ℙ,𝕊}(T,S₁,S₂)$ for transition
  structures over $(T,S₁,S₂)$.
\end{itemize}
And as could be expected, a signature for a \transition monad in our
register $𝐑𝐞𝐠𝐓𝐫𝐚𝐧𝐬𝐌𝐧𝐝_{ℙ,𝕊}$ is a record consisting of
\begin{itemize}
\item a signature of $𝐑𝐞𝐠𝐌𝐧𝐝_f(𝐒𝐞𝐭^ℙ)$ specifying the placetaker monad $T$,
\item two signatures of $𝐑𝐞𝐠[𝐒𝐞𝐭^ℙ,𝐒𝐞𝐭^𝕊]_f$ specifying the state
functors  $S₁$ and $S₂$, and
\item a signature of $𝐑𝐞𝐠𝐓𝐫𝐚𝐧𝐬𝐒𝐭𝐫𝐮𝐜𝐭_{ℙ,𝕊}(T,S₁,S₂)$ specifying the
  transition structure.
\end{itemize}

Here, a crucial feature is that the involved register in the last
field of this record depends on the objects specified by previous
fields. This contrasts with the approach taken in~\cite{AHLM19}, where
the corresponding field cannot be expressed in terms of the specified
placetaker monad, and must instead be parametric in the models of
previous fields.  Our choice allows more signatures, and we take
advantage of this subtle fact in our treatment
(see~§\ref{ss:difftrans}) of Ehrhard and Regnier's differential
\lam-calculus~\cite{LambdaDiff}.  Let us mention that the counterpart
for this advantage is that the recursion principle induced by our
initial semantics is significantly weaker than could be expected (see
Remark~\ref{rk:induction} and~§\ref{s:conclu}).

Our signatures for functors and monads incorporate equations similar
to those of equational systems: a signature without equations
specifies a kind of free object, and adding equations specifies a kind
of quotient of this free object, obtained by somehow forcing the added
equations to hold.  Such a quotienting procedure has already been
achieved in a fairly general (and elaborate) way by Fiore and
Hur~\cite{FioreHurEquational}. Under mild additional hypotheses on the
ambient category and on equations, we obtain a compact
``free+quotient'' description of our initial models
(Theorems~\ref{thm:modmnd} and~\ref{prop:statefuns}).

This description roughly goes as follows: we consider a
signature augmented with one formal operation for each equation, which
yields a new set of terms, say \alert{augmented} terms.  By
interpreting the new operations as prescribed by each side of the
equations, we obtain two translations, say $L$ and $R$, from augmented
terms to plain terms. The initial model is then obtained by
identifying any two plain terms of the form $L(a)$ and $R(a)$, for
some augmented term $a$.

This description is a crucial ingredient of our treatment of the
differential \lam-calculus: as already mentioned, the type of the
third field of its signature depends in particular on the monad
specified by the first field, and the construction of the signature
for this third field relies on our explicit description for that monad
(see the proof of Lemma~\ref{lem:deriv}).

Along the elaboration of our registers, we strive to offer the
operational semanticist intuitive notation for defining signatures.
They may thus specify programming languages as \transition monads by
the following procedure:
\begin{itemize}
\item Fix sets $ℙ$ and $𝕊$ of placetaker and transition types.
\item Pick a signature of $𝐑𝐞𝐠𝐌𝐧𝐝_f(𝐒𝐞𝐭^ℙ)$ for specifying the
  placetaker monad $T$, by operations and equations (e.g., using
  Notation~\ref{not:monadformat}).
\item Choose two signatures of $𝐑𝐞𝐠[𝐒𝐞𝐭^ℙ,𝐒𝐞𝐭^𝕊]_f$ for specifying the
  state functors $S₁$ and $S₂$, again by operations and equations
  (e.g., using Notation~\ref{not:endoformat}).
\item Choose a signature of $𝐑𝐞𝐠𝐓𝐫𝐚𝐧𝐬𝐒𝐭𝐫𝐮𝐜𝐭_{ℙ,𝕊}(T,S₁,S₂)$ for
  specifying the transition structure, e.g., by giving some rules
  using the notation of~§\ref{sss:formatrules}.
\end{itemize}

\subsection*{Plan}
In~§\ref{s:preliminaries}, we present our notations and give some
categorical preliminaries.  In~§\ref{sec:operational-monads} we define
\transition monads and in~§\ref{s:examples}, we present our selected
examples (in the traditional way).  Then, in~§\ref{s:sigtrans}, we
introduce registers and define our register for \transition monads,
deferring the precise definition of its components to the next
sections.  We then introduce our registers for transition structures
(§\ref{s:slices}), monads (§\ref{ss:regmonads}), and functors
(§\ref{s:regfun}), stating the announced explicit descriptions of
initial algebras along the way.  Proofs are dealt with
in~§\ref{sec:gen-signatures} and~§\ref{s:friendliness}.
In~§\ref{s:applications}, we then revisit all examples
from~§\ref{s:examples}, specifying them through signatures of our
registers $𝐑𝐞𝐠𝐓𝐫𝐚𝐧𝐬𝐌𝐧𝐝_{ℙ,𝕊}$ for transition monads.
Finally, we conclude in~§\ref{s:conclu} by summing up our
contributions, assessing the scope of our registers $𝐑𝐞𝐠𝐓𝐫𝐚𝐧𝐬𝐌𝐧𝐝_{ℙ,𝕊}$, and
giving some perspectives.

\begin{rem}
  The reader only interested in using our framework may safely
  skip~§\ref{sec:gen-signatures} and~§\ref{s:friendliness}.  
\end{rem}

\subsection*{Related work}
As mentioned above, our work refines a recent work~\cite{AHLM19},
allowing to cover many new applications with a very similar approach.
This approach differs from the bialgebraic one
introduced long ago by Plotkin and
Turi~\cite{plotkin:turi:bialgebraic}: the positive difference is that
it covers higher-order languages like the $λ$-calculus, while a 
negative difference is that it does not recover congruence of
bisimilarity.

Regarding syntax with variable binding, we model it using monads,
following a standard approach going back to Bellegarde and
Hook~\cite{DBLP:journals/scp/BellegardeH94} (but see
also~\cite{DBLP:conf/csl/AltenkirchR99,DBLP:journals/iandc/HirschowitzM10,AHLM:2sig}).
Because the monad-based approach is essentially equivalent to the
presheaf-based one~\cite{fiore:presheaf,DBLP:conf/lics/Fiore08}, we
anticipate that our whole framework could be straightforwardly adapted
to presheaf models.  We are more cautious about nominal
sets~\cite{PittsAM:newaas}, mainly because we would need a better
understanding of the status of substitution in the latter approach
(see, e.g., Power~\cite{DBLP:journals/entcs/Power07}).

Furthermore, our main register for monads (§\ref{ss:regmonads}) is a
simply-typed refinement of a known one~\cite{AHLM:2sig}, close in
spirit to~\cite{Ahrens16}.  Its validity proof relies on another, new
register (§\ref{sec:gen-signatures}), which combines two standard
registers from the literature, respectively based on equational
systems~\cite{FioreHurEquational} and pointed strong
endofunctors~\cite{fiore:presheaf}.  We could in fact define a
register based on equational systems, and use it directly for
specifying monads. Moreover, our explicit description of initial
models applies in particular to equational systems
(Theorem~\ref{thm:ES}).

Concerning our registers for slice categories, we suspect that they
could be understood in terms of polynomial functors~\cite{KockPoly},
which were recently used in a similar context by Arkor and
Fiore~\cite{DBLP:conf/lics/ArkorF20}.

We are aware of a few notions of signatures for languages with
variable binding equipped with some notion of evaluation (or
transition)~\cite{Hamana03,DBLP:conf/aplas/Hamana04,HIRSCHOWITZ:2010:HAL-00540205:2,Ahrens16}. They
essentially model rewriting, which implies that transitions are closed
under arbitrary contexts. Such approaches cannot cover languages like
the $π$-calculus, in which transitions may not occur under inputs or
outputs.

As is well-known,  evaluation is a directed variant of equality, and 
 transition proofs are a directed form of identity proofs.
For this reason, notions of signatures for dependently-typed languages like type
theory may provide an alternative approach to the specification of
operational semantics systems.  Examples of such notions of signatures
include Fiore's simple dependently-sorted
signatures~\cite{DBLP:conf/lics/Fiore08}, Altenkirch et al.'s
categorical semantics of inductive-inductive
definitions~\cite{DBLP:conf/calco/AltenkirchMFS11}, and
Garner's combinatorial approach~\cite{DBLP:journals/corr/Garner14}.

Finally, let us mention that a preliminary version of the present work
appears in the third author's PhD thesis~\cite[Chapter 6]{AL19}.

\subsection*{Differences with conference version}
Here is a list of significant differences w.r.t.\ the conference
version, beyond more detailed proofs.
\begin{itemize}
\item Most importantly, we provide an explicit description of the initial
  algebras of signatures involving equations, in~§\ref{s:friendliness}.
\item We also correct a few errors, notably:
  \begin{itemize}
  \item We had omitted the congruence rules for reduction in
    \lbmu-calculus and differential $λ$-calculus. 
  \item We had omitted the syntactic equation
    $D (D e · f) · g ≡ D (D e · g) · f$ from our definition of
    differential $λ$-calculus.
  \item Again about differential $λ$-calculus, we had also erroneously
    claimed that the method we used for defining unary multiterm
    substitution applies to partial differentiation.  We now rely on
    the explicit descriptions of~§\ref{s:friendliness} for both
    operations.
\end{itemize}
\item Finally, we include a few minor improvements, e.g.:
  \begin{itemize}
  \item We design an abstract version of the original register for
    slice module categories. In passing, the new version is slightly
    more expressive.
\item We design a new register combining the features of equational
  systems and pointed strong endofunctors into \alert{monoidal
    equational systems} (§\ref{ss:MES}).
\item We provide an alternative way in which to organise the
  $π$-calculus as \atransition monad.
  \end{itemize}
\end{itemize}

\section{Notations and categorical preliminaries}
\label{s:preliminaries}
In this section, we fix some notation, and recall a few categorical
notions. We advise the reader to skip it, except perhaps
for~§\ref{ss:notation}, and get back to it when needed.
In~§\ref{ss:notation}, we fix some basic notation.
In~§\ref{ss:finitely-pres-cat}, we recall some well-known results
about locally finitely presentable categories.  Then,
in~§\ref{ss:convenient-monoidal}, we introduce a notion of
\emph{convenient} monoidal category. This is important for us because
we will use finitary monads a lot, and these are monoids for the
composition monoidal structure on the category of finitary
endofunctors. We show in particular that this composition monoidal
structure is convenient.  We go on in~§\ref{ss:modules} by recalling
the standard notion of module over a monoid in a monoidal category,
and introduce a notion of parametric module, roughly a module
definable for all monoids.  We then recall a slightly more general
definition of (parametric) modules in the particular case of modules
over monads (i.e., when the base category is a functor
category). In~§\ref{ss:creation}, after briefly recalling Beck's
monadicity theorem, we state one of its useful (folklore)
consequences, a ``cancellation'' property for monadic functors.
In~§\ref{ss:street}, we recall the well-known correspondence between
finitary monadic functors and finitary monads.  Finally,
in~§\ref{ss:monadic}, we recall the important fact that equalisers of
(finitary) monadic functors are computed as in $𝐂𝐀𝐓$.

\subsection{Basic notation}\label{ss:notation}
In the following, $𝐒𝐞𝐭$ denotes the category of sets,
and $𝐂𝐀𝐓$ denotes the (very large) category of locally small
categories.  We often implicitly view natural numbers $n$ as intervals
$\ens{1,…,n}$ in $ℕ$, so that, e.g., $2 ∈ 3$, $2 ⊆ 3$, and so on.
Furthermore, in any category with binary products, we denote by
$⟨f,g⟩∶ C → A×B$ the pairing induced by any morphisms $f∶ C → A$ and
$g∶ C → B$.  Similarly, when it makes sense, the copairing of
$f∶ A → C$ and $g∶ B → C$ is denoted by $[f,g]∶ A + B → C$.
Initial objects are denoted by $0$.
Given an endofunctor $F$, the category of algebras $F x → x$ is denoted by
$F\alg$. When $F$ is a monad, the notation $F\alg$ rather refers to
its category of algebras in the sense of monads, that is, morphisms $F x → x$ satisfying
the three standard axioms.
Finally, given an
object $c$ of a category $C$, we denote the slice (resp.\ coslice)
category over (resp.\ under) $c$ by $C/c$ (resp.\ $c/C$).

\subsection{Locally finitely presentable categories and finitary functors}
\label{ss:finitely-pres-cat}
We heavily rely on the theory of locally finitely presentable
categories~\cite{Adamek}.  Very briefly, recall that filtered
categories are a categorical generalisation of directed posets.
\begin{defi}
  A category is \alert{filtered} when
\begin{itemize}
\item it is not empty,
\item for any two objects $C$ and $D$, there is an object $E$
  and arrows $C → E$ and $D → E$, and furthermore,
\item any two parallel arrows $C ⇉ D$ are coequalised by some morphism
  $D → E$.
\end{itemize}
A \alert{filtered colimit} is a colimit of some functor from some
small filtered category.
\end{defi}
\begin{defi}
  An object of a category is \alert{finitely presentable} iff
its covariant hom-functor preserves filtered colimits.
\end{defi}
\begin{defiC}[{\cite[Definition 1.9]{Adamek}}]
  A locally small category is \alert{locally finitely presentable} iff it is
  cocomplete and every object is a filtered colimit of objects from a
  fixed set of finitely presentable generators.
\end{defiC}
\begin{exaC}[{\cite[Example 1.12]{Adamek}}]
 Any presheaf category is locally finitely presentable. 
\end{exaC}
\begin{propC}[{\cite[Corollary 1.28]{Adamek}}]
 Any locally finitely presentable category is complete. 
\end{propC}
In this context, functors that preserve filtered colimits are
important.  They are called \alert{finitary}.
\begin{defi}
  Let $[𝐂,𝐃]_f$ denote the category of finitary functors $𝐂 → 𝐃$ for
  any categories $𝐂$ and $𝐃$.
\end{defi}
\begin{prop}
  \label{prop:functor-finitely-pres}
  If $𝐂$ and $𝐃$ are locally finitely presentable, then so
  is $[𝐂, 𝐃]_f$.
\end{prop}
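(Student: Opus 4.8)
The plan is to reduce the statement to two standard facts about locally finitely presentable categories, by identifying $[𝐂,𝐃]_f$ with an ordinary functor category whose domain is small. Write $𝐂_f$ for a small full subcategory of $𝐂$ spanned by a representative set of finitely presentable objects; such a set exists precisely because $𝐂$ is locally finitely presentable. Let $J∶ 𝐂_f \hookrightarrow 𝐂$ denote the inclusion. The key claim is that restriction along $J$ induces an equivalence
\[
  [𝐂,𝐃]_f \simeq [𝐂_f,𝐃],
\]
whose pseudo-inverse is left Kan extension $\mathrm{Lan}_J$ along $J$. Since local finite presentability is invariant under equivalence of categories, it then suffices to show that $[𝐂_f,𝐃]$ is locally finitely presentable.

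First I would establish the displayed equivalence. In one direction, every object of $𝐂$ is a filtered colimit of objects of $𝐂_f$, so a finitary functor $F∶ 𝐂 → 𝐃$ is determined by its restriction $F \circ J$: concretely, the canonical comparison $\mathrm{Lan}_J(F \circ J) → F$ is an isomorphism, because $J$ is dense and $F$ preserves the canonical filtered colimits presenting the objects of $𝐂$. This shows that restriction is fully faithful and that $\mathrm{Lan}_J$ is a section up to isomorphism. In the other direction one must check that, for any $G∶ 𝐂_f → 𝐃$, the left Kan extension $\mathrm{Lan}_J G$ is again finitary; this is the crux, and it uses that $𝐃$ is cocomplete (so the pointwise Kan extension exists) together with the fact that the inclusion of finitely presentable objects into a locally finitely presentable category behaves well with respect to filtered colimits. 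Granting finitariness, the assignment $G \mapsto \mathrm{Lan}_J G$ lands in $[𝐂,𝐃]_f$ and is pseudo-inverse to restriction, yielding the equivalence.

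It then remains to show that $[𝐂_f,𝐃]$ is locally finitely presentable, and here I would invoke the standard result that for any small category $𝐀$ and any locally finitely presentable category $𝐃$ the functor category $[𝐀,𝐃]$ is again locally finitely presentable~\cite{Adamek}. Colimits in $[𝐀,𝐃]$ are computed pointwise, so it is cocomplete, and a generating set of finitely presentable objects is obtained, up to isomorphism, from finite colimits of functors of the form $𝐀(a,-) \cdot d$ (the copower of a representable $𝐀(a,-)$ by a finitely presentable object $d$ of $𝐃$). Applying this with $𝐀 = 𝐂_f$, which is small, and with the given locally finitely presentable $𝐃$, and transporting along the equivalence above, completes the proof.

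The main obstacle is the middle step, specifically the verification that $\mathrm{Lan}_J G$ is finitary for every $G∶ 𝐂_f → 𝐃$: one must show that the pointwise colimit formula for the Kan extension commutes with filtered colimits in its argument, which is exactly where local finite presentability of $𝐂$ is used—namely that the finitely presentable objects generate $𝐂$ under filtered colimits and that their covariant hom-functors preserve such colimits. Once this is in hand, the two remaining ingredients, invariance of local presentability under equivalence and the stability of local presentability under forming functor categories with small domain, are entirely standard.
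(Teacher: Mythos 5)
Your proof is correct and takes essentially the same route as the paper: the paper's ``proof'' is just a citation of \cite[Section~4]{KellyPower}, whose (enriched) argument is precisely your identification $[𝐂,𝐃]_f \simeq [𝐂_f,𝐃]$ via restriction along $J∶ 𝐂_f \hookrightarrow 𝐂$ and left Kan extension, followed by local finite presentability of functor categories with small domain. The crux you flag—finitariness of $\mathrm{Lan}_J G$—goes through as you indicate, most cleanly via the coend formula $\mathrm{Lan}_J G \cong \int^{A \in 𝐂_f} 𝐂(JA,-)\cdot G(A)$, which exhibits $\mathrm{Lan}_J G$ as a colimit of finitary functors (each $𝐂(JA,-)$ is finitary since $A$ is finitely presentable, and copowers are cocontinuous).
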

\begin{proof}
  See~\cite[Section 4]{KellyPower} for the more general enriched case.
\end{proof}
\subsection{Convenient monoidal categories}
\label{ss:convenient-monoidal}
We sometimes work in a monoidal category satisfying some additional
properties. We call such monoidal categories convenient.
\begin{defi}
  A monoidal category is \alert{convenient} when 
  \begin{itemize}
  \item it is locally finitely presentable;
  \item the tensor preserves filtered colimits on the right and all
    colimits on the left.
  \end{itemize}
\end{defi}
\begin{prop}
\label{prop:finitary-endo-convenient}
Any category $[𝐂,𝐂]_f$ of finitary endofunctors on a locally finitely
presentable category $𝐂$ is convenient for the composition monoidal
structure.
\end{prop}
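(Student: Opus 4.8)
The plan is to check the two defining clauses of convenience in turn, working with the composition tensor $F ⊗ G = F ∘ G$ and unit $\Id_𝐂$. The first clause, that $[𝐂,𝐂]_f$ is locally finitely presentable, is immediate from Proposition~\ref{prop:functor-finitely-pres} applied with $𝐃 = 𝐂$ (recall $𝐂$ is locally finitely presentable by hypothesis). It then remains to analyse how composition interacts with colimits, and for this the key preliminary observation is that colimits in $[𝐂,𝐂]_f$ are computed pointwise, exactly as in the full functor category $[𝐂,𝐂]$.

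To justify pointwise colimits, I would show that finitary functors are closed under arbitrary small colimits inside $[𝐂,𝐂]$: given a small diagram $(G_i)$ of finitary functors, its pointwise colimit $G = \colim_i G_i$ again preserves filtered colimits, since for any filtered diagram $(X_j)$ in $𝐂$ one has $G(\colim_j X_j) = \colim_i \colim_j G_i(X_j) = \colim_j \colim_i G_i(X_j) = \colim_j G(X_j)$, the middle equality being the commutation of colimits with colimits. Hence the pointwise colimit is already the colimit in $[𝐂,𝐂]_f$, so the inclusion into $[𝐂,𝐂]$ preserves colimits. The same pointwise computation, restricted to filtered diagrams, also shows that $F ∘ G$ is finitary whenever $F$ and $G$ are, so the tensor is well defined on $[𝐂,𝐂]_f$.

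With pointwise colimits in hand, both preservation clauses follow by direct computation. For preservation of all colimits on the left, I fix $G$ and consider $F ↦ F ∘ G$; this is precomposition by $G$, and since colimits are pointwise, $(\colim_i F_i) ∘ G$ and $\colim_i (F_i ∘ G)$ agree on every object by the very definition of the pointwise colimit, so all colimits are preserved. For preservation of filtered colimits on the right, I fix a finitary $F$ and consider $G ↦ F ∘ G$; evaluating a filtered colimit $\colim_i G_i$ at an object $X$ yields a filtered colimit $\colim_i G_i(X)$ in $𝐂$, and because $F$ is finitary, $F(\colim_i G_i(X)) = \colim_i F(G_i(X))$, which is precisely $(F ∘ \colim_i G_i)(X) = \colim_i (F ∘ G_i)(X)$.

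I expect no serious obstacle here: the content is entirely the interplay of pointwise colimits with pre- and postcomposition. The one point deserving care is the asymmetry in the statement — all colimits on the left, but only filtered colimits on the right — which is exactly accounted for by the fact that precomposition merely reindexes the pointwise colimits and is therefore fully cocontinuous, whereas postcomposition by $F$ can only be expected to commute with those colimits that $F$ itself preserves, namely the filtered ones. Aligning this left/right bookkeeping with the chosen orientation of the composition tensor is the only thing to remain vigilant about.
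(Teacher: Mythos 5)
Your proof is correct and follows essentially the same route as the paper's: local finite presentability comes from Proposition~\ref{prop:functor-finitely-pres}, pointwise computation of colimits gives cocontinuity of precomposition (the left argument), and finitariness of the fixed outer functor gives preservation of filtered colimits in the right argument. The only difference is that you explicitly verify that finitary functors are closed under pointwise colimits (via interchange of colimits), so that colimits in $[𝐂,𝐂]_f$ really are pointwise — a step the paper leaves implicit behind its citation for pointwise colimits in full functor categories, and a worthwhile piece of care.
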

\begin{proof}
  By Proposition~\ref{prop:functor-finitely-pres}, any category of
  finitary endofunctors on a locally finitely presentable category is
  locally finitely presentable.
  Furthermore, since colimits are computed pointwise in functor
  categories whenever the codomain category is
  cocomplete~\cite[§V.4]{MacLane:cwm}, we have
  $(\colimᵢ Gᵢ) ∘ F ≅ \colimᵢ (Gᵢ ∘ F)$ for any diagram $G$ and object
  $F$, thus the composition tensor product preserves all colimits on
  the left. Finally, by definition of finitariness, the considered
  functors preserve filtered colimits, hence for any such diagram $G$
  and object $F$ we have $F ∘ \colimᵢ Gᵢ ≅ \colimᵢ (F ∘ Gᵢ)$ as
  desired.
\end{proof}
\begin{exa}
  In particular, all categories of the form $[𝐒𝐞𝐭^ℙ,𝐒𝐞𝐭^ℙ]_f$ that we
  will consider below (where $ℙ$ is a set) are convenient
  for the composition monoidal structure.
\end{exa}

\subsection{(Parametric) modules over monoids and monads}\label{ss:modules}

\subsubsection{(Parametric) modules over monoids}
Let us fix a monoidal category $𝐂$.  We first recall the standard
notions of monoid and (right) module over a monoid in a monoidal
category, and then introduce the notion of parametric module, inspired
by~\cite{hirscho:lam}.

\begin{defi}
  Let $𝐌𝐨𝐧(𝐂)$ denote the category of monoids, in any monoidal category $𝐂$~\cite[VII.3]{MacLane:cwm}.
\end{defi}
\begin{defi}\label{def:Xmodules}
  Given a monoid $(X,e_X,m_X)$ in $𝐂$, a (right) \alert{$X$-module} is an object
  $M$ equipped with a morphism $a∶ M ⊗ X → M$
  making the following diagrams commute
  \begin{mathpar}
    % file:///home/thirs/github/quiver/src/index.html?q=WzAsMyxbMCwwLCJNIFxcb3RpbWVzIEkiXSxbMiwwLCJNIFxcb3RpbWVzIFgiXSxbMSwxLCJNIl0sWzAsMSwiTSBcXG90aW1lcyBlX1giXSxbMSwyLCJhIl0sWzAsMiwiXFxyaG9fTSIsMl1d
\begin{tikzcd}[ampersand replacement=\&]
	{M \otimes I} \&\& {M \otimes X} \\
	\& M
	\arrow["{M \otimes e_X}", from=1-1, to=1-3]
	\arrow["a", from=1-3, to=2-2]
	\arrow["{\rho_M}"', from=1-1, to=2-2]
\end{tikzcd}
\and
% file:///home/thirs/github/quiver/src/index.html?q=WzAsNSxbMCwwLCIoTSBcXG90aW1lcyBYKSBcXG90aW1lcyBYIl0sWzIsMCwiTSBcXG90aW1lcyAoWCBcXG90aW1lcyBYKSJdLFsyLDEsIk0gXFxvdGltZXMgWCJdLFsxLDIsIk0iXSxbMCwxLCJNIFxcb3RpbWVzIFgiXSxbMCwxLCJcXGFscGhhX3tNLFgsWH0iXSxbMSwyLCJNIFxcb3RpbWVzIG1fWCJdLFsyLDMsImEiXSxbMCw0LCJhIFxcb3RpbWVzIFgiLDJdLFs0LDMsImEiLDJdXQ==
\begin{tikzcd}[ampersand replacement=\&]
	{(M \otimes X) \otimes X} \&\& {M \otimes (X \otimes X)} \\
	{M \otimes X} \&\& {M \otimes X} \\
	\& M
	\arrow["{\alpha_{M,X,X}}", from=1-1, to=1-3]
	\arrow["{M \otimes m_X}", from=1-3, to=2-3]
	\arrow["a", from=2-3, to=3-2]
	\arrow["{a \otimes X}"', from=1-1, to=2-1]
	\arrow["a"', from=2-1, to=3-2]
\end{tikzcd}
  \end{mathpar}
 (the axioms for the dual case of left modules
  are given in~\cite[§3.2]{SealTensors}).  We denote by $X\Mod$ the
  category of $X$-modules, with action-preserving morphisms between
  them.
\end{defi}
\begin{rem}
  The coherence conditions amount to equipping $M$ with algebra
  structure for the monad ${-} ⊗ X$ induced by $X$.
\end{rem}

A parametric module will assign a module to any monoid.
In order to formally define this, let us introduce
the following category, which collects all categories of the form $X\Mod$.
\begin{defi}
Let $𝐌𝐨𝐝(𝐂)$ denote the category
\begin{itemize}
\item whose objects are pairs $(X,M)$ where $X$ is a monoid and $M$ is an
  $X$-module, and
\item whose morphisms $(X,M) → (Y,N)$ are pairs $(f,g)$ of a monoid
  morphism $f∶ X → Y$ and a morphism $g∶ M → N$ in $𝐂$ such that the
  following diagram commutes.
  \begin{center}
    \diag{%
      M⊗X \& N⊗Y \\
      M \& N
    }{%
      (m-1-1) edge[labela={g⊗f}] (m-1-2) %
      edge[labell={}] (m-2-1) %
      (m-2-1) edge[labelb={g}] (m-2-2) %
      (m-1-2) edge[labelr={}] (m-2-2) %
    }
  \end{center}
\end{itemize}
Let $𝐔^{𝐌𝐨𝐝}∶ 𝐌𝐨𝐝(𝐂) → 𝐌𝐨𝐧(𝐂)$ denote the forgetful functor.
\end{defi}
\begin{defi}\label{def:paramX}
  A \alert{parametric module} over $𝐂$ is a section of the forgetful
  functor $𝐔^{𝐌𝐨𝐝}∶ 𝐌𝐨𝐝(𝐂) → 𝐌𝐨𝐧(𝐂)$, i.e., a functor
  $S∶ 𝐌𝐨𝐧(𝐂) → 𝐌𝐨𝐝(𝐂)$ such that $𝐔^{𝐌𝐨𝐝} ∘ S = \id_{𝐌𝐨𝐧(𝐂)}$.
\end{defi}
In other words, a parametric module functorially assigns to each monoid
a module over it.
\begin{exa}
  If $𝐂$ has products, then we can define the parametric module mapping
  any monoid $X$ to the $X$-module $X×X$. 
  It will become clear in~§\ref{ss:regmonads} how this
  parametric module can be viewed as the arity of a binary operation,
  with $𝐂 = [𝐒𝐞𝐭, 𝐒𝐞𝐭]_f$.
\end{exa}
\begin{exa}\label{ex:Tmod}
  Any endofunctor $T$ on $𝐌𝐨𝐧(𝐂)$ equipped with a natural
  transformation $η∶ \Id → T$ induces a parametric module $T\pmodmon$
  mapping any monoid $M$ to $T(M)$, with action given by
  \[T(M) ⊗ M \xto{T(M)⊗η_M} T(M)⊗T(M) → T(M)\rlap{,}\]
  where the second morphism is the multiplication of $T(M) ∈ 𝐌𝐨𝐧(𝐂)$.
  This construction applies in particular for any monad on $𝐌𝐨𝐧(𝐂)$.
\end{exa}

\begin{defi}
  A \alert{parametric module morphism} $M → N$ over $𝐂$ is a natural
  transformation $α∶ M → N$ between underlying functors
  $𝐌𝐨𝐧(𝐂) → 𝐌𝐨𝐝(𝐂)$, such that $𝐔^{𝐌𝐨𝐝} ∘ α = \id$.
\end{defi}
\begin{rem}
  Concretely, the components of a natural transformation $α∶ M → N$ at
  any monoid $X ∈ 𝐌𝐨𝐧(𝐂)$ are pairs of a monoid morphism $f∶ X → X$
  and a suitable natural transformation $g∶ M(X) → N(X)$.  The
  condition $𝐔^{𝐌𝐨𝐝} ∘ α = \id$ unfolds to $f = \id_X$ for all $X$.
\end{rem}

\subsubsection{(Parametric) modules over monads}\label{sss:param}
The previous definitions of (parametric) modules specialise to the case where $𝐂 = [𝐄,𝐄]_f$
(for the composition monoidal structure). Then, monoids are finitary monads.
However, because in this case $𝐂$ is an
endofunctor category, there is a slightly more general, ``relative'',
or ``heterogeneous'' notion of module~\cite{hirscho:lam} which will be
important for us.  Let us recall it now.

\begin{defi}
  Given a monad $T$ on $𝐂$ and a category $𝐃$, a \alert{$𝐃$-valued $T$-module} is a
  finitary functor $M∶ 𝐂 → 𝐃$ equipped with a right $T$-action
  $M∘T → M$ satisfying coherence conditions analogous to those of
  Definition~\ref{def:Xmodules}.  A morphism of $T$-modules is
  similarly a natural transformation commuting with action.  We denote
  by $T\Mod_f(𝐃)$ the category of $T$-modules and morphisms between
  them.
\end{defi}
\begin{rem}\label{rk:samesame}
  $𝐃$-valued $T$-modules are algebras for the monad $-∘T $ on
  $ [𝐂,𝐃]_f $.  When $𝐃 = 𝐂$, these are the same as $T$-modules in
  $[𝐂,𝐂]_f$ in the sense of Definition~\ref{def:Xmodules}.
\end{rem}
\begin{nota}\label{not:freemodule}
  Given any monad $T$ on $𝐂$, and functor $F∶ 𝐂 → 𝐃$, we sometimes
  implicitly equip the functor $FT$ with its canonical $T$-module
  structure. This module is free on $F$, in the sense that
  for any $T$-module $M$ and natural transformation
  $α∶ F → M$, there is a unique $T$-module morphism
  $\tilde{α}∶ FT → M$ making the following diagram commute.
  \begin{center}
    \diag{%
      F \& FT \\
        \& M
      }{%
        (m-1-1) edge[labela={Fηᵀ}] (m-1-2) %
        edge[labelbl={α}] (m-2-2) %
        (m-1-2) edge[labelr={\tilde{α}}] (m-2-2) %
    }
  \end{center}
  Of course, $\tilde{α}$ is merely the composite
  $FT \xto{α_T} MT → M$.
\end{nota}
Let us briefly show how to exploit the variability of $𝐃$.
\begin{defi}
  For any $p$ in a set $ℙ$, and any monad $T$ on $𝐒𝐞𝐭^ℙ$, the functor
  $Tₚ∶ 𝐒𝐞𝐭^ℙ → 𝐒𝐞𝐭$ mapping any $X ∈ 𝐒𝐞𝐭^ℙ$ to $T(X)(p)$ is a
  $T$-module, with action given by
  \[μ_{X,p}∶ T(T(X))(p) → T(X)(p)\rlap{,}\] at any $X ∈ 𝐒𝐞𝐭^ℙ$.
\end{defi}
Here is another example construction of $T$-module, which is useful
for specifying syntax with variable binding.
\begin{defi}
For any sequence $p₁,…,pₙ$ in a set $ℙ$, for any monad $T$ on $𝐒𝐞𝐭^ℙ$
and $𝐃$-valued $T$-module $M$, we denote by $M^{(p₁,…,pₙ)}$ the
$𝐃$-valued $T$-module defined by
  \[M^{(p₁,…,pₙ)} (X) = M(X + 𝐲_{p₁} + … + 𝐲_{pₙ})\rlap{,}\] where $𝐲∶ ℙ → 𝐒𝐞𝐭^ℙ$
is the embedding defined by $𝐲ₚ(q) = 1$ if $p=q$ and $∅$ otherwise.
If $ℙ$ is a singleton, we
abbreviate this to $M^{(n)}$.
\end{defi}

Let us now recall parametric modules over monads (called \alert{signatures}
in~\cite{DBLP:conf/csl/AhrensHLM18}).
\begin{defi}
  Given categories $𝐂$ and $𝐃$, let $𝐌𝐨𝐝(𝐂,𝐃)$, or $𝐌𝐨𝐝(𝐃)$ when $𝐂$
  is clear from context, denote the category
\begin{itemize}
\item whose objects are pairs $(T,M)$ of a finitary monad $T$ on $𝐂$
  and a finitary $T$-module $M∶ 𝐂 → 𝐃$,
\item and whose morphisms $(T,M) → (U,N)$ are pairs $(α,β)$ of a monad
  morphism $α∶ T → U$ and a natural transformation $β∶ M → N$
  commuting with action, in the sense that the following square commutes.
  \begin{center}
    \diag{%
      M∘T \& N∘U \\
      M \& N
    }{%
      (m-1-1) edge[labela={β∘α}] (m-1-2) %
      edge[labell={}] (m-2-1) %
      (m-2-1) edge[labelb={β}] (m-2-2) %
      (m-1-2) edge[labelr={}] (m-2-2) %
    }
  \end{center}
\end{itemize}
The first projection yields a forgetful functor $𝐩∶ 𝐌𝐨𝐝(𝐃) → 𝐌𝐧𝐝_f$.
\end{defi}
\begin{defi}\label{def:paramodule}
  A ($𝐃$-valued) \alert{parametric} module is a section of $𝐩$, or in
  other words a functor $s∶ 𝐌𝐧𝐝_f → 𝐌𝐨𝐝(𝐃)$ such that
  $𝐩∘s = \id_{𝐌𝐧𝐝_f}$.

  A \alert{parametric module morphism} $M → N$ is a natural
  transformation $α∶ M → N$ between underlying functors such that
  $𝐩 ∘ α = \id$.
\end{defi}

\begin{terminology}\label{term:setvalued}
  In the following, $𝐂 = 𝐒𝐞𝐭^ℙ$, and parametric modules are implicitly
  $𝐒𝐞𝐭$-valued by default.
\end{terminology}

\begin{exa}
\label{ex:basic-param-mod}
  Let us start by a few basic constructions of parametric modules:
  \begin{itemize}
  \item we denote by $Θ$ the $𝐒𝐞𝐭^ℙ$-valued parametric module mapping
    a monad $T$ to itself, as a module over itself;
  \item for any $p₁,…,pₙ ∈ ℙ$ and $𝐃$-valued parametric module $M$, let
    $M^{(p₁,…,pₙ)}$ associate to each monad $T$ the $T$-module
    $M(T)^{(p₁,…,pₙ)}$ as in~§\ref{ss:modules}, i.e.,
    $M^{(p₁,…,pₙ)}(T)(X) = M(T)(X + 𝐲_{p₁} + … + 𝐲_{pₙ})$; when
    $ℙ = 1$, we merely count the $pᵢ$'s and write $M^{(n)}$;
  \item for any finitary functor $F ∶ 𝐃 → 𝐄$ and $𝐃$-valued parametric module $M$,
    the $𝐄$-parametric module $F∘M$ maps any monad $T$ to the $T$-module $F∘M(T)$;
    as particular cases:
    \begin{itemize}
    \item when $𝐃$ has a terminal object, the terminal $𝐃$-valued
      parametric module $1 = 1 ∘ Θ$ maps any monad $T$ to the constant
      $T$-module $1$;
    \item for any $p∈ℙ$ and $𝐒𝐞𝐭^ℙ$-valued parametric module $M$, we
      denote by $Mₚ$ the $𝐒𝐞𝐭$-valued parametric module mapping any
      monad $T$ to the $T$-module $X ↦ M(X)ₚ$;
    \item in particular, for any $p∈ℙ$, the $𝐒𝐞𝐭$-valued parametric module $Θₚ$ maps
      a monad $T$ on $𝐒𝐞𝐭^ℙ$ to the module $X↦T(X)ₚ$;
    \item given a finite family $(Mᵢ)_{i∈ I}$ of $𝐒𝐞𝐭$-valued
      parametric modules, let $∏ᵢMᵢ$ associate to any monad $T$ the
      $T$-module $∏ᵢMᵢ(T)$.
    \end{itemize}
  \end{itemize}
\end{exa}

  In the paper, we will at times use two distinct viewpoints on
  our signatures for monads.  One, more user-friendly, is based on
  $𝐒𝐞𝐭$-valued, or \alert{heterogeneous}, parametric modules.  The other,
  more efficient for stating the explicit description of initial algebras,
  is based on $𝐒𝐞𝐭^ℙ$-valued, or \alert{homogeneous}, modules.
  The two viewpoints are related by a $ℙ$-indexed family of
  adjunctions, which we now recall.  For all $r ∈ ℙ$, there is an
  adjunction
  \begin{center}
    \adj{[𝐒𝐞𝐭^ℙ,𝐒𝐞𝐭]_f}{[𝐒𝐞𝐭^ℙ,𝐒𝐞𝐭^ℙ]_f\rlap{,}}{(-)·𝐲ᵣ}{(-)ᵣ}
  \end{center}
  where
  \begin{itemize}
  \item $𝐲ᵣ∶ 𝐒𝐞𝐭^ℙ$ denotes the family with a single element, sitting
    over $r ∈ ℙ$, and
  \item the left adjoint maps any $F∶ 𝐒𝐞𝐭^ℙ → 𝐒𝐞𝐭$ to the endofunctor
    $(X ∈ 𝐒𝐞𝐭^ℙ) ↦ F(X)·𝐲ᵣ$, i.e., $(F(X)·𝐲ᵣ)(r) = F(X)$ and
    $(F(X)·𝐲ᵣ)(p) = ∅$ when $p≠r$.
  \end{itemize}
  Indeed, we have a natural isomorphism
  \begin{mathpar}
    {\mprset{fraction={===}}\inferrule{F(X)·𝐲ᵣ → G(X)}{F(X) → G(X)(r)}}~·
  \end{mathpar}
  \begin{prop}\label{prop:adjyev}
    These adjunctions lift to a $ℙ$-indexed family of adjunctions
    \begin{equation}
      \adj{𝐌𝐨𝐝(𝐒𝐞𝐭^ℙ,𝐒𝐞𝐭)}{𝐌𝐨𝐝(𝐒𝐞𝐭^ℙ,𝐒𝐞𝐭^ℙ)\rlap{.}}{(-)·𝐲ᵣ}{(-)ᵣ}
      \label{eq:adjyev}
    \end{equation}
  \end{prop}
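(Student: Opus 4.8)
The plan is to fix $r ∈ ℙ$ and lift the base adjunction $(-)·𝐲ᵣ ⊣ (-)ᵣ$ between $[𝐒𝐞𝐭^ℙ,𝐒𝐞𝐭]_f$ and $[𝐒𝐞𝐭^ℙ,𝐒𝐞𝐭^ℙ]_f$ along the forgetful functors to $𝐌𝐧𝐝_f$, working fibrewise over each finitary monad $T$. By Remark~\ref{rk:samesame}, the fibre of $𝐌𝐨𝐝(𝐒𝐞𝐭^ℙ,𝐒𝐞𝐭)$ (resp.\ of $𝐌𝐨𝐝(𝐒𝐞𝐭^ℙ,𝐒𝐞𝐭^ℙ)$) over $T$ is the category of algebras for the monad ${-}∘T$ on $[𝐒𝐞𝐭^ℙ,𝐒𝐞𝐭]_f$ (resp.\ on $[𝐒𝐞𝐭^ℙ,𝐒𝐞𝐭^ℙ]_f$). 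I would first observe that both $(-)·𝐲ᵣ$ and $(-)ᵣ$ are \emph{postcompositions} (by the functor $(-)·𝐲ᵣ∶ 𝐒𝐞𝐭 → 𝐒𝐞𝐭^ℙ$ and by evaluation at $r$, respectively), and therefore commute strictly with precomposition by $T$: on the nose we have $(M∘T)·𝐲ᵣ = (M·𝐲ᵣ)∘T$ and $(N∘T)ᵣ = (Nᵣ)∘T$. Hence both functors lift to the algebra categories, transporting the $T$-action along these equalities; this yields the two functors appearing in~\eqref{eq:adjyev} (and finitariness is not an issue, the base adjunction being already stated between finitary functor categories).

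Next I would check that the unit and counit of the base adjunction are module morphisms, so as to serve as unit and counit for the lifted adjunction. The unit $η_M∶ M → (M·𝐲ᵣ)ᵣ$ is in fact an identity, since restricting $M·𝐲ᵣ$ to its $r$-component returns $M$; it is thus trivially compatible with the action. The only genuine verification is that the counit $ε_N∶ (Nᵣ)·𝐲ᵣ → N$ commutes with the $T$-action: at the $r$-component $ε_N$ is the identity and the two actions agree by the strict commutation above, while at each component $p≠r$ the source $(Nᵣ)·𝐲ᵣ$ is $∅$, so the square commutes for lack of anything to check. This pointwise bookkeeping is where I expect the only real work to sit, though it is entirely routine.

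Finally I would conclude by faithfulness and assembly. Since the lifted functors and the (now module-level) unit and counit project, under the faithful forgetful functors to the base functor categories, to the corresponding data of the base adjunction, the triangle identities hold in the module categories because they hold in $[𝐒𝐞𝐭^ℙ,-]_f$ and are reflected. As all these constructions keep the monad component $T$ fixed and carry monad morphisms $α∶ T → U$ along unchanged, the counit is natural even with respect to morphisms that vary the monad (its naturality in the module variable being exactly that of the base counit), so the fibrewise adjunctions assemble into a single adjunction between the total categories $𝐌𝐨𝐝(𝐒𝐞𝐭^ℙ,𝐒𝐞𝐭)$ and $𝐌𝐨𝐝(𝐒𝐞𝐭^ℙ,𝐒𝐞𝐭^ℙ)$. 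The construction being uniform in $r$, this produces the announced $ℙ$-indexed family.
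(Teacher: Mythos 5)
Your proof is correct, and it takes what is essentially the paper's approach: the paper's own proof is the single word ``Straightforward,'' and your argument supplies precisely the routine verification being left implicit --- strict commutation of the two postcomposition functors with precomposition by $T$ (so that module structures transport on the nose), the check that the unit and counit are module morphisms, triangle identities reflected along the faithful forgetful functors, and naturality over morphisms that vary the monad so that the fibrewise adjunctions assemble into one adjunction between the total categories. There is nothing to correct.
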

  \begin{proof}
    Straightforward.
  \end{proof}

\subsection{Creation of (co)limits and monadic functors}\label{ss:creation}
In this section, we recall Beck's monadicity theorem~\cite[Theorem
VI.7.1]{MacLane:cwm}, and state a ``cancellation'' property for
monadic functors.  First, we need to recall creation of (co)limits,
monadic functors, and absolute (co)limits.  We will see
in~§\ref{s:friendliness} that the main technical notion for our
explicit descriptions of initial algebras is creation of
(co)limits, so let us briefly recall the basics.
\begin{defiC}[{\cite[§V.1]{MacLane:cwm}}]
  Given a small category $𝐄$, a functor $F∶ 𝐂 → 𝐃$ \alert{creates} (co)limits of shape $𝐄$ if for
  any functor $J∶ 𝐄 → 𝐂$, if $FJ$ has a (co)limit, then the
  (co)limiting (co)cone uniquely lifts to $𝐂$, and the lifting is 
  again (co)limiting.
\end{defiC}
A typical example is:
\begin{propC}[{\cite[Exercise VI.2.2]{MacLane:cwm}}]
  For any monad $T$ on a category $𝐂$, the forgetful functor $U∶ T\alg → 𝐂$
  creates limits.
\end{propC}

We also have the following well-known result:
\begin{prop}\label{prop:algprescolim}
  For any monad $T$ on a category $𝐂$, the forgetful functor
  $U∶ T\alg → 𝐂$ creates colimits of a given shape whenever $T$
  preserves them.  More concretely, if $T$ preserves all colimits of
  functors with some domain $𝐃$, then $U$ creates them.
\end{prop}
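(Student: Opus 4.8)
The plan is to verify the creation property directly from the definition. Fix a small category $𝐃$ for which $T$ preserves all colimits of shape $𝐃$, and let $J∶ 𝐃 → T\alg$ be a diagram, writing $J(d) = (x_d, a_d)$ with structure map $a_d∶ T x_d → x_d$. Suppose $UJ$ has a colimit in $𝐂$, say $c$ with colimiting cocone $(ι_d∶ x_d → c)_d$. I must produce a unique $T$-algebra structure on $c$ turning $(ι_d)_d$ into a cocone in $T\alg$, and then check that this lifted cocone is colimiting in $T\alg$.

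First I would construct the structure map. Since $T$ preserves the colimit, $Tc$ is the colimit of $d ↦ T x_d$ with cocone $(T ι_d)_d$. The family $(ι_d ∘ a_d∶ T x_d → c)_d$ is a cocone over this diagram: for $f∶ d → d'$, the algebra-morphism condition on $J(f)$ reads $J(f) ∘ a_d = a_{d'} ∘ T J(f)$, and combining it with $ι_{d'} ∘ J(f) = ι_d$ gives $ι_{d'} ∘ a_{d'} ∘ T J(f) = ι_d ∘ a_d$. The universal property then yields a unique $a∶ Tc → c$ with $a ∘ T ι_d = ι_d ∘ a_d$ for every $d$, and this equation says exactly that each $ι_d$ is an algebra morphism $(x_d, a_d) → (c, a)$.

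Next I would check the unit and associativity laws for $(c, a)$, each by precomposing with a jointly epimorphic colimiting cocone. For $a ∘ η_c = \id_c$, precomposing with $ι_d$ and using naturality of $η$ together with the unit law for $a_d$ reduces the identity to $ι_d = ι_d$. For $a ∘ μ_c = a ∘ Ta$, I precompose with the cocone $(T^2 ι_d)_d$, which is colimiting since $T^2 = T ∘ T$, being a composite of functors preserving colimits of shape $𝐃$, preserves $c$; naturality of $μ$ and the associativity law for each $a_d$ then make both sides agree after precomposition.

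The remaining, and most delicate, point is uniqueness of the lift and the colimiting property. Uniqueness is immediate, since any structure $a'$ making the $ι_d$ into algebra morphisms satisfies $a' ∘ T ι_d = ι_d ∘ a_d$ and hence equals $a$ by the universal property. For the colimiting property, given a $T$-algebra $(y, b)$ and a cocone of algebra morphisms $(g_d∶ (x_d, a_d) → (y, b))_d$, the underlying maps factor uniquely through some $h∶ c → y$ in $𝐂$; I would then verify that $h$ is an algebra morphism by precomposing $h ∘ a$ and $b ∘ Th$ with the jointly epimorphic cocone $(T ι_d)_d$ and invoking that each $g_d$ commutes with the actions. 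Throughout, the only real subtlety is tracking which cocones are jointly epimorphic, and this is exactly where the preservation hypothesis on $T$, and its consequence for $T^2$, is used.
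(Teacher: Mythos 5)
Your proof is correct, and every step checks out: preservation by $T$ gives the colimiting cocone $(T\iota_d)_d$ and hence the structure map $a$; the unit law follows by precomposing with the jointly epimorphic $(\iota_d)_d$; associativity follows by precomposing with $(T^2\iota_d)_d$, and you rightly flag that this cocone is colimiting because $T^2$, as a composite of functors preserving colimits of shape $𝐃$, also preserves them; uniqueness of the lift and the colimiting property of the lifted cocone are handled exactly as they should be.

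The paper, by contrast, proves nothing on the spot: it simply cites Proposition~4.3.2 of Borceux (volume~II), which is the standard result that the Eilenberg--Moore forgetful functor creates those colimits preserved by $T$ and $T^2$. So your argument is not a different theorem but rather the unpacking of the citation: you reconstruct, from first principles, the proof that Borceux's reference encapsulates, including the small observation (implicit in the paper's statement, which only hypothesises preservation by $T$) that preservation by $T^2$ comes for free. What your route buys is self-containedness and a clear record of exactly where each hypothesis is used --- in particular that the $T^2$ step is where the ``monad'' structure, rather than mere functoriality, genuinely enters via the associativity axiom; what the paper's route buys is brevity and delegation of the bookkeeping to a standard reference. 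There is no gap in your argument.
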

\begin{proof}
  This is a straightforward consequence of~\cite[Proposition 4.3.2]{BorceuxII}.
\end{proof}

\begin{lem}\label{lem:creates:preserves}
  For any category $𝐄$ and functor $F∶ 𝐂 → 𝐃$ with $𝐃$ cocomplete, if
  $F$ creates colimits of shape $𝐄$, then $F$ preserves them.
\end{lem}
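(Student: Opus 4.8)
The plan is to exploit the conditional nature of creation: the definition of creating colimits of shape $𝐄$ only asserts something about a diagram $J∶ 𝐄 → 𝐂$ once $FJ$ is known to admit a colimit in $𝐃$, and the hypothesis that $𝐃$ is cocomplete is exactly what supplies this. So I would fix a functor $J∶ 𝐄 → 𝐂$ admitting a colimiting cocone $(ℓ, (ℓ_e)_{e ∈ 𝐄})$ in $𝐂$, and use cocompleteness of $𝐃$ to obtain a colimiting cocone $(c, (c_e)_{e ∈ 𝐄})$ of $FJ$ in $𝐃$. The goal is then to show that $F$ sends $(ℓ, (ℓ_e)_e)$ to a colimiting cocone, i.e.\ that $(Fℓ, (Fℓ_e)_e)$ is a colimit of $FJ$.

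Next I would invoke creation. Since $FJ$ has the colimit $(c,(c_e)_e)$, this cocone lifts uniquely to a cocone $(\tilde c, (\tilde c_e)_e)$ over $J$ in $𝐂$, with $F\tilde c = c$ and $F\tilde c_e = c_e$; moreover, by the very definition of creation, this lifted cocone is again colimiting in $𝐂$.

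Then I would compare the two colimiting cocones over $J$. Both $(ℓ,(ℓ_e)_e)$ and $(\tilde c,(\tilde c_e)_e)$ are colimits of $J$ in $𝐂$, hence there is a (unique) isomorphism of cocones between them. Applying $F$, which preserves isomorphisms and carries cocones to cocones, yields an isomorphism of cocones between $(Fℓ,(Fℓ_e)_e)$ and $(F\tilde c, (F\tilde c_e)_e) = (c, (c_e)_e)$, the latter being a colimit of $FJ$ by construction. Since a cocone isomorphic to a colimiting cocone is itself colimiting, $(Fℓ,(Fℓ_e)_e)$ is a colimit of $FJ$, which is precisely preservation of colimits of shape $𝐄$.

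No step here is genuinely hard; the argument is essentially bookkeeping around the uniqueness of colimits up to canonical isomorphism. The only point deserving attention is that cocompleteness of $𝐃$ is doing real work: it is what makes the conditional hypothesis in the definition of creation applicable in the first place, and without it creation would impose no constraint on $F$.
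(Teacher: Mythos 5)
Your proof is correct and follows exactly the paper's argument: use cocompleteness of $𝐃$ to obtain a colimit of $FJ$, lift it via creation to a colimiting cocone over $J$, identify it with the given colimit by the unique isomorphism between colimiting cocones, and push that isomorphism through $F$. No differences worth noting.
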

\begin{proof}
  Consider any colimiting cocone, say $λ∶ J → C$ of a functor
  $J∶ 𝐄 → 𝐂$.  We want to show that $Fλ$ is colimiting. By
  cocompleteness of $𝐃$, $FJ$ has a colimiting cocone, say
  $γ∶ FJ → D$. By creation of colimits, $γ$ has a unique lifting
  $γ'∶ F → C'$ to $𝐂$, which is colimiting.  Because colimiting
  cocones are uniquely isomorphic, there is a unique isomorphism
  $λ ≅ γ'$. Finally, functors preserve isomorphisms, so we have
  $Fλ ≅ Fγ' = γ$. But $γ$ is colimiting, hence so is $Fλ$, as desired.
\end{proof}

\begin{cor}\label{cor:Ucreates:Tpreserves}
  For any monad $T$ on a cocomplete category $𝐂$, and for any category
  $𝐃$, the following are equivalent
  \begin{enumerati}
  \item \label{item:Upreserves} the forgetful functor $U∶ T\alg → 𝐂$
    preserves all colimits of shape $𝐃$, 
  \item \label{item:Tpreserves} the monad $T$ preserves all colimits
    of shape $𝐃$, and
  \item \label{item:Ucreates} the forgetful functor $U∶ T\alg → 𝐂$
    creates all colimits of shape $𝐃$.
\end{enumerati}
\end{cor}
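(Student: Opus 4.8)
The plan is to prove the cycle of implications \ref{item:Tpreserves} $\Rightarrow$ \ref{item:Ucreates} $\Rightarrow$ \ref{item:Upreserves} $\Rightarrow$ \ref{item:Tpreserves}, each step being a short consequence of a result already at hand. Two of the three implications are direct citations, and only one requires a genuine argument.

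First, the implication \ref{item:Tpreserves} $\Rightarrow$ \ref{item:Ucreates} is nothing but Proposition~\ref{prop:algprescolim}: when the monad $T$ preserves colimits of shape $𝐃$, the forgetful functor $U∶ T\alg → 𝐂$ creates them. For \ref{item:Ucreates} $\Rightarrow$ \ref{item:Upreserves}, I would apply Lemma~\ref{lem:creates:preserves} to the functor $U∶ T\alg → 𝐂$; its codomain $𝐂$ is cocomplete by hypothesis, so the lemma turns creation of colimits of shape $𝐃$ by $U$ into preservation of them.

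The only step that does not merely quote an earlier statement is \ref{item:Upreserves} $\Rightarrow$ \ref{item:Tpreserves}, and here I would invoke the free-forgetful adjunction. The forgetful functor $U$ admits a left adjoint, the free-algebra functor $F∶ 𝐂 → T\alg$ sending an object $X$ to $(TX, \mu_X)$, and by construction the underlying endofunctor of the monad satisfies $T = U ∘ F$. Being a left adjoint, $F$ preserves all colimits, in particular those of shape $𝐃$; composing with the preservation of such colimits by $U$ granted by \ref{item:Upreserves}, the composite $T = U ∘ F$ preserves colimits of shape $𝐃$, which is exactly \ref{item:Tpreserves}.

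I do not expect a genuine obstacle, since each implication collapses to a single observation. The one point deserving care is the last step, where one must recall that $U$ has a left adjoint $F$ with $U ∘ F = T$, so that preservation of colimits by the functor $U$ upgrades to preservation by the monad $T$ itself; the other two implications follow immediately from Proposition~\ref{prop:algprescolim} and Lemma~\ref{lem:creates:preserves} together with the assumed cocompleteness of $𝐂$.
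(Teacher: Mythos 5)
Your proposal is correct and follows essentially the same route as the paper: the same cycle of implications, with \ref{item:Tpreserves} $\Rightarrow$ \ref{item:Ucreates} by Proposition~\ref{prop:algprescolim}, \ref{item:Ucreates} $\Rightarrow$ \ref{item:Upreserves} by Lemma~\ref{lem:creates:preserves} (cocompleteness of $𝐂$ being needed exactly as you note), and \ref{item:Upreserves} $\Rightarrow$ \ref{item:Tpreserves} via cocontinuity of the free-algebra left adjoint and $T = U \circ F$. The only difference is the order in which the implications are presented, which is immaterial.
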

\begin{proof}
  For \ref{item:Upreserves} $⇒$ \ref{item:Tpreserves}, the left
  adjoint $L$ to $U$ is cocontinuous, so $T = UL$ preserves all such
  colimits by composition. Furthermore, \ref{item:Tpreserves} $⇒$
  \ref{item:Ucreates} follows readily from
  Proposition~\ref{prop:algprescolim}.  Finally,
  Lemma~\ref{lem:creates:preserves} proves \ref{item:Ucreates} $⇒$
  \ref{item:Upreserves}.
\end{proof}

\begin{defi}
  A functor $U∶ 𝐄 → 𝐁$ is \alert{monadic} if $𝐄$ is isomorphic to a category of
  algebras $T\alg$ for some monad $T$ on $𝐁$, and furthermore, the following diagram commutes.
  \[
    \begin{tikzcd}
     𝐄 \ar[rr, "≅"] \ar[rd, swap, "U"] && T\alg \ar[dl]
     \\ & 𝐁
    \end{tikzcd}
  \]
\end{defi}

\begin{rem}\label{rem:strictlymonadic}
  This notion of monadicity is quite strict. Some authors prefer a
  relaxed version where isomorphism is replaced with equivalence.
\end{rem}

\begin{defi}
  A (co)limit is \alert{absolute} if it is preserved by all functors.
\end{defi}
\begin{exa}
  A split coequaliser is a coequaliser
  \[
    \begin{tikzcd}
     A \ar[r, shift left=.75ex, "f"] \ar[r, shift right=.75ex, "g", swap] & B
     \ar[r, "e"]
     &C,
    \end{tikzcd}
  \]
  such that
  \begin{itemize}
  \item $e$ has a section $s∶ C → B$, 
  \item $f$ has a section $t∶ B → A$ with
    \item $g∘t = s ∘ e$.
  \end{itemize}
  Such a coequaliser is absolute~\cite[Corollary~VI.6]{MacLane:cwm}.

  On the other hand, the initial object $∅$ in $𝐒𝐞𝐭$, viewed as the
  colimit of the unique functor from the empty category, is not
  absolute, since it is not preserved by the constant terminal
  endofunctor on $𝐒𝐞𝐭$.
\end{exa}
\begin{thmC}[{Beck's monadicity
  theorem~\cite[Theorem~VI.7]{MacLane:cwm}}] 
\label{thm:beck} \hfill

  A functor $U∶ 𝐄 → 𝐁$ is monadic iff
  \begin{itemize}
  \item it has a left adjoint, and
  \item it creates coequalisers for those parallel pairs
    $f,g∶ E₁ ⇉ E₂$ for which $(U(f),U(g))$ has an absolute coequaliser
    in $𝐁$.
  \end{itemize}
\end{thmC}
We will use  the following consequence.
\begin{prop}\label{prop:monadic}
  Given a commuting triangle of functors
  \begin{center}
    \diag{%
      𝐀\& \& 𝐁 \\
      \& 𝐂 %
    }{%

      (m-1-1) edge[labela={H}] (m-1-3) %
      edge[labelbl={F}] (m-2-2) %
      (m-1-3) edge[labelbr={G}] (m-2-2) %
    }
  \end{center}
  between cocomplete categories, if $F$ and $G$ are monadic, then so
  is $H$.

  If furthermore $F$ and $G$ are finitary, then so is $H$.
\end{prop}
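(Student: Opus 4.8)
The plan is to verify the two hypotheses of Beck's monadicity theorem (Theorem~\ref{thm:beck}) for $H$: that $H$ has a left adjoint, and that $H$ creates coequalisers of those parallel pairs $f,g : a_1 \to a_2$ in $\mathbf{A}$ for which $(Hf,Hg)$ admits an absolute coequaliser in $\mathbf{B}$. Throughout I would use two features of a (strictly) monadic functor $U$: it is conservative, and — being, up to an isomorphism of categories, the forgetful functor out of a category of algebras — it \emph{reflects identities}, in the sense that $Uv = \mathrm{id}$ forces $v = \mathrm{id}$, so that in particular the domain and codomain of $v$ then coincide.

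For the left adjoint, I would use that $F$ and $G$ are monadic to identify $\mathbf{A}$ and $\mathbf{B}$, over $\mathbf{C}$, with the Eilenberg--Moore categories of the monads on $\mathbf{C}$ respectively induced by $F$ and $G$. Since the triangle commutes, $GH = F$, so $H$ becomes a functor between these Eilenberg--Moore categories commuting with the forgetful functors to $\mathbf{C}$; such functors are exactly the restriction (algebraic) functors along a morphism between the two monads. An algebraic functor of this kind admits a left adjoint by the standard construction via reflexive coequalisers, which is available here because $\mathbf{A}$ is cocomplete (see~\cite{BorceuxII}). I expect this step — actually producing the left adjoint — to be the main obstacle, as it is the only point requiring a genuine construction rather than a diagram chase.

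For the creation of $H$-split coequalisers, fix $f,g$ as above with absolute coequaliser $e : Ha_2 \to c$ of $(Hf,Hg)$. Since absolute coequalisers are preserved by every functor, $Ge$ is an absolute coequaliser of $(GHf,GHg) = (Ff,Fg)$ in $\mathbf{C}$. Applying Beck's theorem to the monadic functor $F$, this absolute coequaliser lifts uniquely to a coequaliser $q : a_2 \to a_3$ of $(f,g)$ in $\mathbf{A}$ with $Fq = Ge$ and $Fa_3 = Gc$. Now $Hq$ coequalises $(Hf,Hg)$ (because $q$ coequalises $(f,g)$), so there is a unique $u$ with $u \circ e = Hq$; applying $G$ and cancelling the epimorphism $Ge$ gives $Gu = \mathrm{id}$, whence $u = \mathrm{id}$ since $G$ reflects identities. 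Thus $Ha_3 = c$ and $Hq = e$, so $q$ is a strict lift of $e$ along $H$; it is a coequaliser in $\mathbf{A}$ by construction, and the lift is unique because any $H$-lift is in particular an $F$-lift ($F = GH$), and $F$-lifts are unique. Hence $H$ creates this coequaliser, and Beck's theorem yields that $H$ is monadic.

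Finally, for the finitary refinement, if $G$ is finitary then, by Corollary~\ref{cor:Ucreates:Tpreserves} (applicable since $\mathbf{C}$ is cocomplete and $G$ is monadic), $G$ creates filtered colimits. Given a filtered colimit cocone $\lambda$ over a diagram $J$ in $\mathbf{A}$, the cocone $GH\lambda = F\lambda$ is colimiting because $F$ is finitary, and $H\lambda$ is a cocone over $HJ$ with $G(H\lambda) = F\lambda$; since $G$ creates the filtered colimit of $HJ$, its unique lift is colimiting, so $H\lambda$ must be that lift and is therefore colimiting. Hence $H$ preserves filtered colimits, that is, $H$ is finitary.
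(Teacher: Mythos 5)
Your proposal is correct and takes essentially the same route as the paper: the paper's proof simply cites Borceux's Corollary~4.5.7 (whose content is exactly your argument — a left adjoint via the monad-morphism correspondence and adjoint lifting through reflexive coequalisers, then verification of Beck's creation condition) and remarks that adapting it to strict monadicity is straightforward, which is precisely the adaptation you carry out. Your finitariness argument is the paper's own Lemma~\ref{lem:finitary} combined with Corollary~\ref{cor:Ucreates:Tpreserves}, exactly as in the paper.
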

\begin{proof}
  For monadicity, Borceux~\cite[Corollary~4.5.7]{BorceuxII} gives a
  proof in the weakly monadic case (see
  Remark~\ref{rem:strictlymonadic}). This is a straightforward
  adaptation.
  Finitariness follows from the next lemma.
\end{proof}

\begin{lem}\label{lem:finitary}
  Given a commuting triangle of functors as in
  Proposition~\ref{prop:monadic}, if $F$ is finitary and $G$ creates
  filtered colimits, then $H$ is finitary.
\end{lem}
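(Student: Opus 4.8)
The plan is to prove that $H$ preserves filtered colimits directly, using the lifting clause in the definition of creation together with finitariness of $F$. So fix a small filtered category $𝐄$ and a diagram $J∶ 𝐄 → 𝐀$, and let $λ∶ J → A$ be a colimiting cocone in $𝐀$. What I want to show is that the image cocone $Hλ∶ HJ → HA$ is colimiting in $𝐁$; since the filtered diagram is arbitrary, this yields finitariness of $H$.

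First I would pass through $F$. Because $F = G∘H$ is finitary and $λ$ is colimiting, the cocone $Fλ = G(Hλ)∶ FJ → FA$ is colimiting in $𝐂$. Now consider the diagram $HJ∶ 𝐄 → 𝐁$: its composite with $G$ is precisely $G∘(HJ) = FJ$, which by the previous sentence admits the colimiting cocone $Fλ$ in $𝐂$.

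Next I would invoke creation of filtered colimits by $G$. Applied to the diagram $HJ$, it tells us that the colimiting cocone $Fλ$ of $G∘(HJ)$ lifts \emph{uniquely} to a cocone for $HJ$ in $𝐁$, and that this unique lift is colimiting. It then suffices to observe that $Hλ$ is itself a lift of $Fλ$ in the required sense: its apex $HA$ satisfies $G(HA) = (G∘H)(A) = FA$, and applying $G$ to $Hλ$ returns $G(Hλ) = (G∘H)λ = Fλ$. By the uniqueness clause, $Hλ$ must coincide with the canonical lift, and is therefore colimiting. This completes the argument.

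I do not expect any genuine obstacle here; the statement is formal and follows from chasing the equation $F = G∘H$ through the definitions. The only point requiring a moment's care is verifying that $Hλ$ genuinely qualifies as a \emph{lift} of $Fλ$ as demanded by the creation hypothesis — i.e.\ that both its apex and its component morphisms map onto those of $Fλ$ under $G$ — but this is immediate from $F = G∘H$. (One could alternatively phrase the conclusion via Lemma~\ref{lem:creates:preserves}, but using the uniqueness of the lift directly is cleaner and avoids any cocompleteness assumption on $𝐁$.)
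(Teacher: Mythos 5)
Your proof is correct and is essentially the paper's own argument: apply finitariness of $F=G\circ H$ to get that the image cocone is colimiting, use creation by $G$ to obtain a unique colimiting lift, and observe that $H\lambda$ is a lift, hence \emph{the} lift, hence colimiting. The paper phrases this in terms of "antecedents" but the reasoning, including the implicit avoidance of any cocompleteness hypothesis, is identical.
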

\begin{proof}
  Given a colimiting cocone $c∶ J → A$ for a filtered diagram
  $J∶ 𝐃 → 𝐀$, this cocone is preserved by $F$ and created by $G$.  So
  $F(c)$ is colimiting, and has a unique antecedent by $G$, which is
  again colimiting.  But $H(c)$ is an antecedent, hence has to be
  \emph{the} antecedent, and so is colimiting as desired.
\end{proof}

\subsection{Monads vs.\ monadic functors}\label{ss:street}
In this section, we recall the equivalence between (finitary) monadic
functors and (finitary) monads.

Let us fix a locally finitely presentable category $𝐂$.
\begin{defi}
  Let $𝐌𝐧𝐝(𝐂)$ denote the category of monads on $𝐂$.  When
  $𝐂$ is clear from context, we sometimes abbreviate this to just
  $𝐌𝐧𝐝$.

  Furthermore, let $𝐌𝐧𝐝_f(𝐂)$ denote the full subcategory of finitary
  monads on $𝐂$.
\end{defi}

\begin{defi} 
  Let $𝐌𝐨𝐧𝐚𝐝𝐢𝐜/𝐂$ denote the full subcategory of $𝐂𝐀𝐓/𝐂$ spanning
  (strictly) monadic functors.

  Furthermore, let $𝐌𝐨𝐧𝐚𝐝𝐢𝐜_f/𝐂$ denote the full subcategory spanning
  (strictly) monadic functors that are finitary, or equivalently (by
  Corollary~\ref{cor:Ucreates:Tpreserves}), whose underlying monad is
  finitary.
\end{defi}

Let us readily make the following observation.
\begin{lem}\label{lem:allmonadic} The functor underlying any morphism in
  $𝐌𝐨𝐧𝐚𝐝𝐢𝐜_f/𝐂$ is itself monadic.
\end{lem}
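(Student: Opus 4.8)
The plan is to reduce the statement directly to Proposition~\ref{prop:monadic}. By definition, a morphism in $𝐌𝐨𝐧𝐚𝐝𝐢𝐜_f/𝐂$ from $(𝐀,F)$ to $(𝐁,G)$ is a functor $H∶ 𝐀 → 𝐁$ fitting into a commuting triangle $G ∘ H = F$, in which both legs $F∶ 𝐀 → 𝐂$ and $G∶ 𝐁 → 𝐂$ are finitary monadic functors. This is exactly the situation of Proposition~\ref{prop:monadic}, whose conclusion is the monadicity of $H$. So the only hypothesis I would still have to supply is that the three categories $𝐀$, $𝐁$, and $𝐂$ are cocomplete.

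First, $𝐂$ is cocomplete, being assumed locally finitely presentable throughout this section. For the domains, monadicity of $F$ identifies $𝐀$ with a category $T\alg$ of algebras for a monad $T$ on $𝐂$, and the defining finitariness of objects of $𝐌𝐨𝐧𝐚𝐝𝐢𝐜_f/𝐂$ forces $T$ to be finitary (equivalently by Corollary~\ref{cor:Ucreates:Tpreserves}); likewise $𝐁 ≅ U\alg$ for a finitary monad $U$. It thus remains to recall that the category of algebras for a finitary monad on a locally finitely presentable category is itself locally finitely presentable, hence in particular cocomplete: the forgetful functor creates limits and, by Proposition~\ref{prop:algprescolim}, filtered colimits, while the remaining colimits are supplied by the standard theory of algebras for accessible monads. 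Hence $𝐀$ and $𝐁$ are cocomplete.

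Having verified cocompleteness of $𝐀$, $𝐁$, $𝐂$ together with monadicity of $F$ and $G$, I would conclude directly by Proposition~\ref{prop:monadic} that $H$ is monadic. The one point carrying genuine content — and the place where finitariness, rather than bare monadicity, is essential — is the cocompleteness of the domains: for a general monad on a cocomplete category the algebra category need not be cocomplete, whereas for a finitary (indeed accessible) monad on a locally finitely presentable base it always is. Everything else is a matter of unfolding the definition of a morphism in $𝐌𝐨𝐧𝐚𝐝𝐢𝐜_f/𝐂$ and quoting the earlier proposition, so I expect no real obstacle beyond this cocompleteness check.
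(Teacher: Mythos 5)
Your proof is correct and follows exactly the paper's route: the paper's entire proof of this lemma is ``By Proposition~\ref{prop:monadic}.'' Your additional verification of the cocompleteness hypotheses is sound (and is exactly what the paper leaves implicit); note that the cocompleteness of the domains you argue via ``standard theory of accessible monads'' is available in the paper itself as Lemma~\ref{lem:algfinitary}.
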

\begin{proof} By Proposition~\ref{prop:monadic}.
\end{proof}

In fact, a lot of our understanding of $𝐌𝐨𝐧𝐚𝐝𝐢𝐜_f/𝐂$ will follow from
the following equivalence.
\begin{prop}\label{prop:street}
  The functor \[(-)\alg∶ \op{𝐌𝐧𝐝(𝐂)} → 𝐌𝐨𝐧𝐚𝐝𝐢𝐜/𝐂\] mapping any monad
  $T$ to the forgetful functor $T\alg → 𝐂$ is an equivalence.
\end{prop}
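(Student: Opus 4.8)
The plan is to verify directly that the functor $(-)\alg$ is essentially surjective, full, and faithful, the functoriality and well-definedness (each $U_T\colon T\alg → 𝐂$ is monadic by definition, and restriction of scalars commutes with the forgetful functors) being routine. \emph{Essential surjectivity} is in fact immediate from the strict definition of monadicity: an object of $𝐌𝐨𝐧𝐚𝐝𝐢𝐜/𝐂$ is a functor $U\colon 𝐄 → 𝐂$ for which there exist a monad $T$ and an isomorphism $𝐄 ≅ T\alg$ making the triangle over $𝐂$ commute; such data is precisely an isomorphism $(T\alg, U_T) ≅ (𝐄, U)$ in $𝐂𝐀𝐓/𝐂$, hence in the full subcategory $𝐌𝐨𝐧𝐚𝐝𝐢𝐜/𝐂$. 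So every object lies in the image up to isomorphism.

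For \emph{full faithfulness}, I would fix monads $T$ and $S$ on $𝐂$ and set up a bijection between the two relevant hom-sets. Unwinding the definitions, and recalling the variance introduced by $\op{(-)}$, a morphism $T → S$ in $\op{𝐌𝐧𝐝(𝐂)}$ is a monad morphism $β\colon S → T$, whereas a morphism between the images is a functor $G\colon T\alg → S\alg$ with $U_S G = U_T$. In one direction, a monad morphism $β$ yields, by restriction of scalars, the functor $G_β$ sending a $T$-algebra $a\colon TA → A$ to the $S$-algebra $a ∘ β_A\colon SA → A$; this visibly satisfies $U_S G_β = U_T$ and is functorial in $β$. In the other direction, given such a $G$, I would probe free algebras: writing $F_T ⊣ U_T$ for the free/forgetful adjunction (so $T = U_T F_T$), the object $G(F_T X)$ has underlying object $U_S G(F_T X) = U_T F_T X = TX$, hence is an $S$-algebra structure $σ_X\colon S(TX) → TX$ on $TX$; precomposing with $S(η^T_X)$ produces $β_X\colon SX → TX$, and one checks, using naturality together with the unit and associativity axioms, that $β$ is a monad morphism $S → T$.

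The main work, and the step I expect to be the real obstacle, is to show these two assignments are mutually inverse, the crux being that every $G$ over $𝐂$ equals $G_β$ for the extracted $β$. The point is that $G$ is entirely determined by its restriction to free algebras. I would invoke the canonical presentation of a $T$-algebra $(A,a)$ as the reflexive coequaliser $F_T(TA) \rightrightarrows F_T A → (A,a)$ in $T\alg$, whose image under $U_T$ is a split, hence absolute, coequaliser in $𝐂$. Applying $G$ and then $U_S$ returns exactly this split coequaliser (since $U_S G = U_T$), so the pair $G(F_T(TA)) \rightrightarrows G(F_T A)$ is $U_S$-split; as $U_S$ is monadic it creates the coequaliser of such a pair (Beck's theorem, Theorem~\ref{thm:beck}), and by uniqueness of the created lift the cocone $G(F_T A) → G(A,a)$ must be that coequaliser. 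Thus $G(A,a)$ is reconstructed as a coequaliser from the values of $G$ on free algebras, and those values are in turn fixed by $β$; since $G_β$ agrees with $G$ on free algebras and on the coequaliser diagram, the two functors coincide on all of $T\alg$.

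Faithfulness is the easy half of the same calculation: $β_A$ can be read off from the $S$-algebra structure of $G_β(F_T A)$, so distinct monad morphisms induce distinct functors. It then remains to check that the bijection $β ↦ G_β$ is natural in $T$ and $S$, which is a direct diagram chase. Combining essential surjectivity with full faithfulness gives that $(-)\alg$ is an equivalence, as claimed.
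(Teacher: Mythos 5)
Your proof is correct, and it supplies exactly the content that the paper's own proof leaves to a citation: the paper disposes of essential surjectivity in one sentence (just as you do, via the strict definition of monadicity and fullness of the subcategory $𝐌𝐨𝐧𝐚𝐝𝐢𝐜/𝐂 ⊆ 𝐂𝐀𝐓/𝐂$), and then invokes Barr's Proposition~5.3 for fullness and faithfulness, with no argument given. What you write out---restriction of scalars $β ↦ G_β$ in one direction; extraction of $β_X = σ_X ∘ S(η^T_X)$ from the $S$-algebra structure $σ_X$ on $G(F_T X)$ in the other; and reconstruction of $G$ on an arbitrary algebra $(A,a)$ from its values on free algebras, using that the canonical presentation $F_T(TA) \rightrightarrows F_T A → (A,a)$ has split (hence absolute) image in $𝐂$, so that Theorem~\ref{thm:beck} makes $U_S$ create its coequaliser and uniqueness of the created lift forces $G(A,a) = G_β(A,a)$---is the standard proof of the cited result, and it is sound. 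The trade-off is the expected one: the paper stays short by delegating, while your argument is self-contained, uses only ingredients the paper has already set up (Beck's theorem, absoluteness of split coequalisers), and makes visible where \emph{strict} monadicity is actually used, namely in the faithfulness of $U_S$ (to identify $G$ and $G_β$ on morphisms) and in the uniqueness clause of creation (to identify them on objects). One step you gloss over but which deserves a line: that the value of $G$ on the free algebra $F_T X$ is determined by $β$ requires the identity $σ_X = μ^T_X ∘ β_{TX}$, which one gets by applying $G$ to the counit $ε_{F_T X}∶ F_T(TX) → F_T X$ (a morphism between free algebras) and using that its image is an $S$-algebra morphism with underlying map $μ^T_X$; this is a short diagram chase, but without it the claim that $G$ and $G_β$ agree on free algebras is not yet justified.
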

\begin{proof}
  The functor is essentially surjective by definition of monadic
  functors. It is also full and faithful by
  \cite[Proposition~5.3]{Barr1970CoequalizersAF}.
\end{proof}

\begin{cor}\label{cor:street}
  Let $𝐂$ be cocomplete. The equivalence
  \[(-)\alg∶ \op{𝐌𝐧𝐝(𝐂)} → 𝐌𝐨𝐧𝐚𝐝𝐢𝐜/𝐂\]
  of Proposition~\ref{prop:street} lifts to a functor
  \[(-)\alg∶ \op{𝐌𝐧𝐝_f(𝐂)} → 𝐌𝐨𝐧𝐚𝐝𝐢𝐜_f/𝐂\rlap{,}\] which is again an
  equivalence.
\end{cor}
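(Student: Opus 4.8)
The plan is to show that the equivalence $\Phi = (-)\alg$ of Proposition~\ref{prop:street} restricts to a functor between the two finitary subcategories that is again full, faithful, and essentially surjective; since any such functor is an equivalence, this yields the claim. The key structural observation I would exploit is that both inclusions $\op{𝐌𝐧𝐝_f(𝐂)} ⊆ \op{𝐌𝐧𝐝(𝐂)}$ and $𝐌𝐨𝐧𝐚𝐝𝐢𝐜_f/𝐂 ⊆ 𝐌𝐨𝐧𝐚𝐝𝐢𝐜/𝐂$ are \emph{full}. Consequently, fullness and faithfulness of the restriction come for free from those of $\Phi$, and the only genuine work is to check that $\Phi$ carries the former subcategory into the latter and is essentially surjective onto it.

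First I would verify that $\Phi$ is well defined on finitary monads, i.e.\ that for a finitary monad $T$ the forgetful functor $T\alg → 𝐂$ actually lies in $𝐌𝐨𝐧𝐚𝐝𝐢𝐜_f/𝐂$. This is immediate from Corollary~\ref{cor:Ucreates:Tpreserves}: since $𝐂$ is cocomplete (by hypothesis) and $T$ preserves filtered colimits, the forgetful functor $U∶ T\alg → 𝐂$ creates, hence preserves, filtered colimits, so it is finitary, and therefore belongs to $𝐌𝐨𝐧𝐚𝐝𝐢𝐜_f/𝐂$ (equivalently, its underlying monad $T$ is finitary). Thus the restricted functor $\Phi_f∶ \op{𝐌𝐧𝐝_f(𝐂)} → 𝐌𝐨𝐧𝐚𝐝𝐢𝐜_f/𝐂$ is well defined.

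Next I would establish essential surjectivity of $\Phi_f$. Given any $V ∈ 𝐌𝐨𝐧𝐚𝐝𝐢𝐜_f/𝐂$, essential surjectivity of $\Phi$ supplies a monad $T$ together with an isomorphism $T\alg → 𝐂 ≅ V$ over $𝐂$. It remains to see that this $T$ is finitary. Since $V$ is finitary and $T\alg → 𝐂$ is isomorphic to it, the forgetful functor $T\alg → 𝐂$ is finitary, whence by Corollary~\ref{cor:Ucreates:Tpreserves} the monad $T$ preserves filtered colimits, i.e.\ $T ∈ 𝐌𝐧𝐝_f(𝐂)$. Hence $V$ is in the essential image of $\Phi_f$.

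Finally, fullness and faithfulness follow formally from the two subcategories being full: for finitary monads $T$ and $U$, the hom-set $\op{𝐌𝐧𝐝_f(𝐂)}(T,U)$ coincides with $\op{𝐌𝐧𝐝(𝐂)}(T,U)$, and likewise on the monadic side, so $\Phi_f$ acts on hom-sets by the same bijection as $\Phi$. Being full, faithful, and essentially surjective, $\Phi_f$ is an equivalence. I do not expect any real obstacle: the whole argument is the standard fact that an equivalence restricts to an equivalence between full subcategories matched by the object correspondence, and the only point demanding care is the two appeals to Corollary~\ref{cor:Ucreates:Tpreserves} that translate between finitariness of a monad and finitariness of its forgetful functor.
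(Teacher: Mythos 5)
Your proposal is correct and follows essentially the same route as the paper: both directions of the translation between finitariness of a monad and finitariness of its forgetful functor are handled by Corollary~\ref{cor:Ucreates:Tpreserves} (using cocompleteness of $𝐂$), exactly as in the paper's proof, and full faithfulness of the restriction is the same observation the paper dismisses as ``clearly fully faithful,'' which you merely justify explicitly via fullness of the two subcategories.
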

\begin{proof}
  For any finitary monad $T$, the forgetful functor is finitary by
  Corollary~\ref{cor:Ucreates:Tpreserves}, which proves that the
  functor lifts as claimed. Furthermore, the lifted functor is clearly
  fully faithful.  Finally, it is essentially surjective because if a
  monadic functor is finitary, then so is the induced monad, again by
  Corollary~\ref{cor:Ucreates:Tpreserves}.
\end{proof}

\subsection{Limits of finitary monadic functors}\label{ss:monadic}
In this section, we recall a well-known result about limits of
(finitary) monadic functors, namely that they are computed as in
$𝐂𝐀𝐓$:

\begin{prop}\label{prop:monadic:eq}
  The forgetful functor $𝐌𝐨𝐧𝐚𝐝𝐢𝐜_f/𝐂 → 𝐂𝐀𝐓/𝐂$ creates limits.
\end{prop}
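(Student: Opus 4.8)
The plan is to exploit that the functor in question, $I∶ 𝐌𝐨𝐧𝐚𝐝𝐢𝐜_f/𝐂 → 𝐂𝐀𝐓/𝐂$, is a full subcategory inclusion, and that $𝐂𝐀𝐓/𝐂$ is complete (a slice of the complete category $𝐂𝐀𝐓$). First I would reduce \emph{creation} of limits to \emph{closure} under limits. Since $I$ is full, faithful and injective on objects, whenever the $𝐂𝐀𝐓/𝐂$-limit of a diagram lying in the subcategory happens to land in the subcategory, the limiting cone lifts uniquely (injectivity on objects pins the apex, full faithfulness the legs), and fullness makes the lift limiting there, because fully faithful functors reflect limits. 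Hence everything reduces to the following closure statement: the limit $U∶ 𝐋 → 𝐂$ in $𝐂𝐀𝐓/𝐂$ of a small diagram $J∶ 𝐄 → 𝐌𝐨𝐧𝐚𝐝𝐢𝐜_f/𝐂$, $e ↦ (U_e∶ 𝐀_e → 𝐂)$, is again finitary monadic.

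To prove this I would verify Beck's conditions (Theorem~\ref{thm:beck}) for $U$. Computed in $𝐂𝐀𝐓/𝐂$, the category $𝐋$ has as objects the compatible families $(c,(a_e)_e)$ with $c ∈ 𝐂$, $a_e ∈ 𝐀_e$, $U_e(a_e)=c$, carried to one another by the connecting functors, with $U$ reading off the $𝐂$-component (and morphisms described likewise). For creation of $U$-absolute coequalisers: given a parallel pair $f,g$ in $𝐋$ whose image $(U f, U g)$ has an absolute coequaliser $q$ in $𝐂$, the identity $U_e ∘ P_e = U$ (with $P_e∶ 𝐋 → 𝐀_e$ the projection) makes $q$ an absolute coequaliser of the $e$-component pair as well, so each monadic $U_e$ creates a unique lift $\bar q_e$. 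Because the connecting functors are themselves monadic (Lemma~\ref{lem:allmonadic}) and live over $𝐂$, each of them carries $\bar q_e$ to the corresponding $\bar q_{e'}$ (both lift the absolute $q$, and such lifts are unique), so the $\bar q_e$ assemble into a single coequaliser $\bar q$ of $(f,g)$ in $𝐋$ lifting $q$; uniqueness and the universal property are then checked componentwise. Thus $U$ creates $U$-absolute coequalisers.

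The main obstacle is the remaining Beck condition, the existence of a left adjoint to $U$. Here I would argue that $𝐋$ is locally finitely presentable: each $𝐀_e ≅ T_e\alg$ is locally finitely presentable (algebras of a finitary monad on the locally finitely presentable $𝐂$), the connecting functors are finitary, and a finitary, hence accessible, limit of locally finitely presentable categories along finitary functors is again locally finitely presentable — this last step, drawing on the theory of accessible categories, is the technical heart. Granting it, $U$ is continuous (limits in $𝐋$ are computed componentwise, the monadic $U_e$ and the connecting functors all preserving them) and finitary (filtered colimits in $𝐋$ are likewise componentwise by Corollary~\ref{cor:Ucreates:Tpreserves}, and $U$ reads off the $𝐂$-component), so being a continuous accessible functor between locally finitely presentable categories it has a left adjoint. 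Beck's theorem then yields that $U$ is monadic, finitary by the above, hence $U ∈ 𝐌𝐨𝐧𝐚𝐝𝐢𝐜_f/𝐂$; this establishes closure, and with it the claimed creation. Alternatively, one can bypass the accessibility input by setting $T ≔ \colim_e T_e$ in the cocomplete category $𝐌𝐧𝐝_f(𝐂)$, using Corollary~\ref{cor:street} to view $T\alg → 𝐂$ as the limit taken in $𝐌𝐨𝐧𝐚𝐝𝐢𝐜_f/𝐂$, and checking that the canonical comparison $T\alg → 𝐋$ over $𝐂$ is an isomorphism.
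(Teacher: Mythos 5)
Your componentwise verification of Beck's coequaliser condition is fine; the genuine gap is at the step you yourself call the technical heart, namely the existence of a left adjoint to $U∶ 𝐋 → 𝐂$, which you base on the claim that a limit of locally finitely presentable categories along finitary (right adjoint) functors is again locally finitely presentable. For the \emph{strict} $1$-categorical limits that $𝐂𝐀𝐓/𝐂$ computes --- and creation, as defined in the paper, concerns exactly these --- that claim is false. The theorems you are gesturing at (Bird, Makkai--Paré) concern \emph{pseudo}-limits/bilimits, and strict limits in $𝐂𝐀𝐓$ need not be equivalent to pseudo-limits even when every functor in the diagram is an iso-fibration. Concrete counterexample: let $𝐀$ be the indiscrete category on two objects (so $𝐀 ≃ 1$ is locally finitely presentable), and take the strict equaliser of $\id_𝐀$ and the swap automorphism. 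Both functors are isomorphisms of categories, hence finitary right adjoints, yet the strict equaliser is the empty category, which is not locally finitely presentable (it is not even cocomplete), whereas the pseudo-equaliser is equivalent to $1$. The diagram can even be placed over $𝐂 = 1$; of course $𝐀 → 1$ is not \emph{strictly} monadic --- it is not amnestic --- which is precisely the point: what makes the actual statement true is not accessibility theory at this level of generality, but the rigidity peculiar to strictly monadic functors (they are amnestic iso-fibrations), and your argument never invokes it. Indeed $𝐋$ \emph{is} lfp, but the honest way to see this is to identify $𝐋$ with $T\alg$ over $𝐂$ for $T = \colim_e T_e$, i.e., your fallback route, not the accessibility route.

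Two further remarks. First, your reduction of creation to closure is slightly too quick: the paper's (Mac Lane-style) notion of creation quantifies over \emph{every} limiting cone in $𝐂𝐀𝐓/𝐂$, and an arbitrary limiting cone has apex only \emph{isomorphic} to your compatible-families category $𝐋$; to lift it you also need repleteness of $𝐌𝐨𝐧𝐚𝐝𝐢𝐜_f/𝐂$ in $𝐂𝐀𝐓/𝐂$ (monadic functors are closed under isomorphism), which is exactly the iso-fibration hypothesis in the paper's Lemma~\ref{lem:amnestic:creates}. Second, for comparison: the paper sidesteps all of the above by transporting the problem along the equivalence $𝐌𝐨𝐧𝐚𝐝𝐢𝐜_f/𝐂 ≃ \op{𝐌𝐧𝐝_f(𝐂)}$ of Corollary~\ref{cor:street}; completeness then comes from local finite presentability of $𝐌𝐧𝐝_f(𝐂)$ (Lemma~\ref{lem:lfpmonads}), continuity of the forgetful functor to $𝐂𝐀𝐓/𝐂$ from cited results of Blackwell and Kelly, and creation from Lemma~\ref{lem:amnestic:creates}, which packages the conservativity/amnesticity/iso-fibration bookkeeping. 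Note that your closing alternative --- ``checking that the canonical comparison $T\alg → 𝐋$ over $𝐂$ is an isomorphism'' --- is essentially this proof, and the innocuous-looking ``checking'' is precisely the nontrivial continuity result the paper cites from Kelly (together with the fact that colimits of finitary monads agree with those computed among all monads); it is not something one gets for free.
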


The rest of this section will be devoted to the proof, but first let us
record the following.
\begin{cor}\label{cor:monadicequaliser}
  The forgetful functor $𝐌𝐨𝐧𝐚𝐝𝐢𝐜_f/𝐂 → 𝐂𝐀𝐓$ creates equalisers.  More
  precisely, given monadic functors $F₁∶𝐃₁ → 𝐂$ and $F₂ ∶ 𝐃₂ → 𝐂$ 
  and functors $G₁,G₂ ∶ 𝐃₁→ 𝐃₂$ such that $F₂ ∘ Gᵢ = F₁$, 
  if $𝐀 → 𝐃₁$ is the equaliser of $G₁$ and $G₂$ in $𝐂𝐀𝐓$, then the
  composite $𝐀 → 𝐃₁ → 𝐂$ is finitary monadic and underlies the
  equaliser of $G₁$ and $G₂$ in $𝐌𝐨𝐧𝐚𝐝𝐢𝐜_f/𝐂$.
\end{cor}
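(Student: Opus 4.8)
The plan is to derive Corollary~\ref{cor:monadicequaliser} as a special case of Proposition~\ref{prop:monadic:eq}, which states that the forgetful functor $𝐌𝐨𝐧𝐚𝐝𝐢𝐜_f/𝐂 → 𝐂𝐀𝐓/𝐂$ creates limits. First I would observe that an equaliser is a particular limit (the limit over the parallel-pair shape $• ⇉ •$), so creation of all limits specialises immediately to creation of equalisers. The only subtlety is that the corollary is phrased in terms of $𝐂𝐀𝐓$ rather than $𝐂𝐀𝐓/𝐂$, so I must first reconcile these two statements.

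**From $𝐂𝐀𝐓/𝐂$ to $𝐂𝐀𝐓$.** The key point is that the data $(G₁,G₂ ∶ 𝐃₁ → 𝐃₂)$ with $F₂ ∘ Gᵢ = F₁$ is exactly a parallel pair in the slice $𝐂𝐀𝐓/𝐂$, between the objects $F₁∶𝐃₁ → 𝐂$ and $F₂∶𝐃₂ → 𝐂$ (here $G₁,G₂$ are slice morphisms precisely because $F₂ ∘ Gᵢ = F₁$). Now I would use the standard fact that the forgetful functor $𝐂𝐀𝐓/𝐂 → 𝐂𝐀𝐓$ creates limits of connected diagrams, and a parallel pair is connected. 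Hence the equaliser of $G₁,G₂$ computed in $𝐂𝐀𝐓$, namely the subcategory $𝐀 → 𝐃₁$ on which $G₁$ and $G₂$ agree, carries a canonical structure $F₁ ∘ (𝐀 → 𝐃₁) ∶ 𝐀 → 𝐂$ making it the equaliser in $𝐂𝐀𝐓/𝐂$. Concretely, the composite $𝐀 → 𝐃₁ → 𝐂$ is the apex of the equaliser cone in the slice.

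**Transporting the structure through the monadic functor.** With the equaliser identified in $𝐂𝐀𝐓/𝐂$, I would then invoke Proposition~\ref{prop:monadic:eq}: since $F₁$ and $F₂$ are finitary monadic, they are objects of $𝐌𝐨𝐧𝐚𝐝𝐢𝐜_f/𝐂$, and $G₁,G₂$ lie in this subcategory by Lemma~\ref{lem:allmonadic} (their underlying functors are monadic, and they commute with the structure maps). Because $𝐌𝐨𝐧𝐚𝐝𝐢𝐜_f/𝐂 → 𝐂𝐀𝐓/𝐂$ creates limits, the equaliser $𝐀 → 𝐃₁$ in $𝐂𝐀𝐓/𝐂$ lifts uniquely to an equaliser in $𝐌𝐨𝐧𝐚𝐝𝐢𝐜_f/𝐂$, and the lifted cone is again limiting there. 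By definition of the subcategory, this means the object $F₁ ∘ (𝐀 → 𝐃₁) ∶ 𝐀 → 𝐂$ is itself finitary monadic, which is exactly the assertion that $𝐀 → 𝐃₁ → 𝐂$ is finitary monadic and underlies the equaliser in $𝐌𝐨𝐧𝐚𝐝𝐢𝐜_f/𝐂$.

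**Where the real work sits.** None of this is where the difficulty lies; the corollary is genuinely a routine unpacking of the proposition together with the connected-limit behaviour of slices. The main obstacle — and the reason the proposition is stated separately with its proof deferred — is establishing Proposition~\ref{prop:monadic:eq} itself, i.e.\ that limits of finitary monadic functors are computed as in $𝐂𝐀𝐓$. That argument must show both that the limiting category of algebras is again monadic over $𝐂$ (via Beck's theorem, Theorem~\ref{thm:beck}, checking creation of the relevant absolute coequalisers) and that finitariness is preserved (via Lemma~\ref{lem:finitary} and Corollary~\ref{cor:Ucreates:Tpreserves}). For the corollary specifically, the only thing I would take care to verify explicitly is that the candidate $𝐀$ produced as a plain $𝐂𝐀𝐓$-equaliser coincides with the $𝐂𝐀𝐓/𝐂$-equaliser, which holds precisely because the parallel-pair shape is connected so that the forgetful slice functor does not disturb the apex.
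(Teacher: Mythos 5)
Your proposal is correct and takes essentially the same route as the paper: the paper's proof likewise combines Proposition~\ref{prop:monadic:eq} with the fact that the forgetful functor from a slice category to its base creates equalisers (your ``connected limits'' phrasing is just the general form of that fact, and your citation of Lemma~\ref{lem:allmonadic} is harmless but unnecessary, since fullness of $𝐌𝐨𝐧𝐚𝐝𝐢𝐜_f/𝐂$ in $𝐂𝐀𝐓/𝐂$ already makes $G₁,G₂$ morphisms there). The only inaccuracy is in your closing aside: Proposition~\ref{prop:monadic:eq} is not proved in the paper via Beck's theorem, but via the equivalence with $\op{𝐌𝐧𝐝_f(𝐂)}$ and the amnestic iso-fibration argument of Lemma~\ref{lem:amnestic:creates} --- though this does not affect the validity of your proof of the corollary itself.
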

\begin{proof}
  This follows straightforwardly from
  Proposition~\ref{prop:monadic:eq} and the fact that the forgetful
  functor from any slice category to the base category creates
  equalisers (see the proof of~\cite[Proposition~2.16.3]{BorceuxI}).
\end{proof}

Returning to the proof of Proposition~\ref{prop:monadic:eq}, let us
start with the following two lemmas.
\begin{lem}\label{lem:lfpmonads} Finitary monads over any locally
  finitely presentable category form a locally finitely presentable
  category.
\end{lem}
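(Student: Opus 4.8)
The plan is to reduce the statement to a fact about algebras for a finitary monad. Recall from §\ref{sss:param} that finitary monads on $\mathbf{C}$ are exactly the monoids of the monoidal category $\mathbf{V} := [\mathbf{C},\mathbf{C}]_f$ under composition, so that $\mathbf{Mnd}_f(\mathbf{C}) \cong \mathbf{Mon}(\mathbf{V})$. By Proposition~\ref{prop:finitary-endo-convenient}, $\mathbf{V}$ is convenient: it is locally finitely presentable, its tensor $- \otimes X$ preserves all colimits, and $X \otimes -$ preserves filtered colimits. Since the category of algebras for a finitary monad on a locally finitely presentable category is again locally finitely presentable (a standard consequence of accessibility, see~\cite{Adamek}), it suffices to exhibit $\mathbf{Mon}(\mathbf{V})$ as the category of algebras for a finitary monad on $\mathbf{V}$, i.e.\ to show that the forgetful functor $\mathbf{Mon}(\mathbf{V}) \to \mathbf{V}$ is finitary monadic.

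First I would construct free monoids. Following Kelly's transfinite construction of free monoids, the free monoid on an object $V$ is the colimit $V^{*} := \colim_{n} V_{n}$ of the chain $V_{0} = I$, $V_{n+1} = I + V \otimes V_{n}$ (the coproduct and tensor being those of $\mathbf{V}$), with connecting maps induced by the inclusion $I \to I + V \otimes (-)$. Convenience makes this chain converge at $\omega$: since $V \otimes -$ preserves the filtered colimit of the chain and $I + -$ preserves filtered colimits in the locally finitely presentable category $\mathbf{V}$, the canonical comparison $I + V \otimes V^{*} \to V^{*}$ is an isomorphism, whence $V^{*}$ carries a monoid structure (concatenation) and is the free one on $V$. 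Thus the forgetful functor $\mathbf{Mon}(\mathbf{V}) \to \mathbf{V}$ has a left adjoint, and, as monoids are precisely the algebras for the induced monad, it is monadic.

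The crux is to check that the free-monoid monad $(-)^{*}$ is finitary. I would argue by induction that each stage $(-)_{n} \colon \mathbf{V} \to \mathbf{V}$, $V \mapsto V_{n}$, is finitary. The base case is the constant functor $I$. For the inductive step, $(-)_{n+1}$ is $I + (V \mapsto V \otimes V_{n})$, and the key point is that a functor of the form $V \mapsto V \otimes G(V)$, with $G$ finitary, preserves filtered colimits: writing a filtered colimit as $V = \colim_{i} W_{i}$ and using left cocontinuity, right finitariness of $\otimes$, and the inductive hypothesis on $G$, one rewrites $V \otimes G(V)$ as a colimit over the product of the (filtered) index category $\mathbf{I}$ with itself, and then invokes finality of the diagonal $\mathbf{I} \to \mathbf{I} \times \mathbf{I}$ to collapse it to $\colim_{i} (W_{i} \otimes G(W_{i}))$. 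Since finite coproducts commute with filtered colimits in $\mathbf{V}$, each $(-)_{n}$ is finitary; and as a filtered colimit (over $n$) of finitary functors is again finitary, so is $(-)^{*}$. Hence the free-monoid monad is finitary, its category of algebras $\mathbf{Mon}(\mathbf{V}) \cong \mathbf{Mnd}_f(\mathbf{C})$ is locally finitely presentable, and we are done.

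I expect the finitariness of $(-)^{*}$ to be the main obstacle: it hinges on the non-formal fact that the two convenience conditions on the tensor, together with the finality of the diagonal $\mathbf{I} \to \mathbf{I} \times \mathbf{I}$ for filtered $\mathbf{I}$, force the free-monoid construction to commute with filtered colimits. Everything else is either a direct invocation of Proposition~\ref{prop:finitary-endo-convenient} or a citation of standard locally-presentable-category theory.
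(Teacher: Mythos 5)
Your proof is correct, and its skeleton coincides with the paper's: identify $\mathbf{Mnd}_f(\mathbf{C})$ with monoids for the composition tensor in a locally finitely presentable category of finitary endofunctors, exhibit those monoids as the algebras of a finitary monad, and conclude by the standard fact that algebras for a finitary monad on a locally finitely presentable category form a locally finitely presentable category (Lemma~\ref{lem:algfinitary}, i.e.\ \cite[Remark~2.78]{Adamek}). The difference lies in how the middle step is discharged. The paper's proof is a sketch that outsources the finitary monadicity of monoids over the base to the literature (\cite{MonadicityMonads}), working with $[\mathbf{C}_f,\mathbf{C}]$ so that local presentability follows at once from smallness of $\mathbf{C}_f$. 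You instead prove that step from scratch: free monoids as the $\omega$-chain computing $\mu A.(I + V \otimes A)$, convergence at $\omega$ from right-finitariness of the tensor, the multiplication from left cocontinuity (Dubuc/Kelly), and finitariness of the free-monoid monad by induction on the stages using finality of the diagonal of a filtered category. What this buys is self-containedness: your argument uses only Proposition~\ref{prop:finitary-endo-convenient} plus standard locally-presentable-category theory, and it is the same style of chain-plus-interchange-of-colimits argument the paper itself deploys later (Propositions~\ref{prop:sifted-colim-mon}, \ref{prop:sigmamonchar} and~\ref{prop:algebraicptstr}) to establish friendliness of free monoids; the paper's citation is shorter but leaves the key content opaque. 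Two points you assert rather than prove are genuinely standard but worth flagging: that the colimit of your chain, equipped with concatenation, really is the \emph{free} monoid is Kelly's theorem and needs exactly the left-cocontinuity hypothesis you have; and that the forgetful functor from monoids, once it has a left adjoint, is monadic requires Beck's theorem applied to split coequalisers (which, being absolute, are preserved by the tensor in each variable).
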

\begin{proof}[Proof sketch (see~\cite{MonadicityMonads})] Let $𝐂$ be
  locally presentable. Then $𝐂_f$ is small, so $[𝐂_f,𝐂]$ is again
  locally presentable.  Now, finitary monads on $𝐂$ are equivalently
  monoids for the composition tensor product in $[𝐂_f,𝐂]$, hence also
  algebras for a finitary monad on a locally presentable category,
  which allows us to conclude by the following lemma.
\end{proof}

\begin{lem}\label{lem:algfinitary}
  The category of algebras for a finitary monad on any locally
  finitely presentable category is again locally finitely presentable.
\end{lem}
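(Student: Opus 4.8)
The plan is to verify directly that $T\alg$ satisfies the definition of a locally finitely presentable category recalled in §\ref{ss:finitely-pres-cat}: it must be cocomplete and admit a set of finitely presentable generators of which every object is a filtered colimit. Write $U∶ T\alg → 𝐂$ for the forgetful functor and $F$ for its left adjoint (free algebras). Two ingredients come for free from earlier results. Since $U$ creates limits and $𝐂$, being lfp, is complete, the category $T\alg$ is complete. Moreover, because $T$ is finitary, Corollary~\ref{cor:Ucreates:Tpreserves} shows that $U$ creates and preserves filtered colimits, so filtered colimits exist in $T\alg$ and are computed as in $𝐂$.

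First I would pin down the finitely presentable objects. If $C$ is finitely presentable in $𝐂$, then $FC$ is finitely presentable in $T\alg$: by adjunction $T\alg(FC,-) ≅ 𝐂(C,U(-))$, and the right-hand side preserves filtered colimits since $C$ is finitely presentable and $U$ preserves filtered colimits. Fixing a set $𝒢$ of finitely presentable generators of $𝐂$, I would take as candidate generators the free algebras $FC$ for $C ∈ 𝒢$ together with all coequalisers of parallel pairs between such free algebras; these form a set (up to isomorphism, since the relevant hom-sets $𝐂(C', TC)$ are sets) and are finitely presentable, finitely presentable objects being closed under finite colimits. Using the canonical presentation $FUFUA ⇉ FUA → A$ of any algebra $A$ as a reflexive coequaliser of free algebras, and expressing $UA$ as a filtered colimit of objects of $𝒢$, one exhibits $A$ as a filtered colimit of these finitely presented algebras; this is the finite-accessibility half of the definition.

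The main obstacle is cocompleteness of $T\alg$, since Eilenberg--Moore categories need not be cocomplete in general. The key point is the existence of finite colimits, which, combined with the filtered colimits already in hand, yields all colimits; concretely, coproducts of free algebras are free on the corresponding coproduct in $𝐂$, while general coproducts and coequalisers are assembled from these via the reflexive-coequaliser presentations above, using cocompleteness of $𝐂$ and Proposition~\ref{prop:algprescolim}. Alternatively one may simply invoke the general fact~\cite{Adamek} that algebras for a finitary monad on an lfp category are cocomplete. With cocompleteness and the generating set of finitely presentable objects both established, $T\alg$ meets the definition of §\ref{ss:finitely-pres-cat} and is therefore locally finitely presentable.
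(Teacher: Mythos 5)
Your overall strategy---verifying the definition of local finite presentability directly, by producing a set of finitely presentable generators and establishing cocompleteness---is genuinely different from the paper's proof, which consists of a single citation of \cite[Remark~2.78]{Adamek} with $λ = ω$. The accessibility half of your argument is the standard route and is sound as far as it goes: free algebras on finitely presentable objects are finitely presentable by the adjunction together with preservation of filtered colimits by $U$ (Corollary~\ref{cor:Ucreates:Tpreserves}), and the intended generators are the right ones. Note, though, that the step ``one exhibits $A$ as a filtered colimit of these finitely presented algebras'' conceals a substantial amount of combinatorial work with filtered diagrams.

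The genuine gap is in the cocompleteness step, exactly at the point you flag as the main obstacle. Your construction of coequalisers and coproducts rests on Proposition~\ref{prop:algprescolim}, but that proposition only applies to colimit shapes that $T$ \emph{preserves}, and a finitary monad on a general locally finitely presentable category need not preserve reflexive coequalisers: this is precisely the distinction the paper draws between \emph{finitary} and \emph{friendly} functors, and Example~\ref{ex:friendly} records that finitary endofunctors are not friendly in general---only over categories of the form $𝐒𝐞𝐭^ℙ$ does finitariness suffice. Consequently the reflexive-coequaliser presentations do not yield the required colimits, and the same problem already infects your generating set, whose members are coequalisers whose existence has not been established. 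The fallback you offer---invoking cocompleteness of $T\alg$ from~\cite{Adamek}---is not circular, but it concedes the hard part of the lemma: in \cite{Adamek}, cocompleteness is not obtained by a direct construction but as a consequence of the theorem that a complete accessible category is locally presentable, accessibility of $T\alg$ being supplied by the machinery of limits of accessible categories (alternatively, one can appeal to Kelly's transfinite constructions). Once you lean on that, you are in effect citing the same result the paper cites, and the direct verification loses its point.
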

\begin{proof} This is~\cite[Remark~2.78]{Adamek} with $λ = ω$.
\end{proof}

We now need to recall three standard definitions, and prove two more lemmas.
\begin{defi}
  A functor $F∶ 𝐂 → 𝐃$ is \alert{conservative} iff, for any morphism
  $f∶ C → C'$ in $𝐂$ such that $F(f)$ is an isomorphism, so is $f$.
\end{defi}
\begin{defi}
  A functor $F∶ 𝐂 → 𝐃$ is \alert{amnestic} iff, for any isomorphism
  $i∶ C → C'$ in $𝐂$ such that $F(C) = F(C')$ and $F(i) = \id$, we
  have $C=C'$ and $i = \id$.
\end{defi}
\begin{lem}\label{lem:consamn}
  For any conservative and amnestic functor $F∶ 𝐂 → 𝐃$ and morphism
  $f$ in $𝐂$, if $F(f)$ is an identity, then so is $f$.
\end{lem}
\begin{proof}
  By conservativeness, $f$ is an isomorphism. But then by amnesia $f$
  is an identity.
\end{proof}
\begin{defi}
  A functor $F∶ 𝐂 → 𝐃$ is an  \alert{iso-fibration} iff, for any
  isomorphism of the form $i∶ D → F(C)$, there exists an isomorphism
  $j∶ C' → C$ such that $F(j) = i$.  
\end{defi}
\begin{lem}\label{lem:amnestic:creates}
  Any continuous, conservative, amnestic iso-fibration from a complete category
  creates limits.
\end{lem}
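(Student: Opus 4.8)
The plan is to verify the definition of creation of limits head-on. Fix a small category $\mathbf{E}$ and a diagram $J\colon \mathbf{E} \to \mathbf{C}$ such that $FJ$ admits a limiting cone $\gamma\colon D \to FJ$ in $\mathbf{D}$; I must produce a cone in $\mathbf{C}$ over $J$ whose image under $F$ is exactly $\gamma$, show that this lift is unique, and show that it is limiting. For existence, I would use completeness of $\mathbf{C}$ to obtain a limiting cone $\lambda\colon L \to J$ in $\mathbf{C}$, and continuity of $F$ to conclude that $F\lambda\colon FL \to FJ$ is again limiting. Since two limiting cones over the same diagram are related by a unique comparison isomorphism, there is a unique isomorphism $i\colon D \to FL$ with $F\lambda \circ i = \gamma$ componentwise. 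This $i$ is an isomorphism of the form $D \to F(L)$, so the iso-fibration hypothesis supplies an isomorphism $j\colon C' \to L$ in $\mathbf{C}$ with $F(j) = i$, whence $F(C') = D$. Transporting $\lambda$ along $j$ gives the cone $\lambda' = \lambda \circ j\colon C' \to J$, which is again limiting (precomposing the apex of a limiting cone with an isomorphism preserves the universal property) and satisfies $F(\lambda') = F\lambda \circ i = \gamma$. Thus $\lambda'$ is a limiting cone lifting $\gamma$.

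For uniqueness of the lift, let $\mu\colon A \to J$ be any cone in $\mathbf{C}$ with $F(\mu) = \gamma$, so in particular $F(A) = D$. Since $\lambda'$ is limiting, $\mu$ factors uniquely as $\lambda' \circ m$ through a mediating morphism $m\colon A \to C'$. Applying $F$ yields $\gamma \circ F(m) = F(\lambda') \circ F(m) = F(\mu) = \gamma$; as $\gamma$ is limiting, the only endomorphism $h$ of its apex with $\gamma \circ h = \gamma$ is the identity, so $F(m) = \id_D$. By Lemma~\ref{lem:consamn}, conservativeness and amnesia together force $m$ to be an identity, so in particular $A = C'$ and $\mu = \lambda'$. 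Hence the lifting cone is strictly unique, and being equal to $\lambda'$ it is limiting, which completes the verification.

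I expect the delicate point to be precisely this uniqueness step, and the role played by amnesia. Conservativeness alone would only show that the mediating morphism $m$ is an isomorphism, giving uniqueness of the lift merely \emph{up to isomorphism}; but creation of limits demands uniqueness on the nose, including equality of apexes $A = C'$. Amnesia, packaged with conservativeness into Lemma~\ref{lem:consamn}, is exactly what upgrades ``$F(m)$ is an identity'' to ``$m$ is an identity'', and so is the essential hypothesis here. By contrast, continuity and completeness serve only to manufacture a limiting cone lying in the image of $F$ with which to compare $\gamma$, and the iso-fibration property is used only to pull the comparison isomorphism back into $\mathbf{C}$.
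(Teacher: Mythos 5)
Your proof is correct and follows essentially the same route as the paper's: build a limit in $\mathbf{C}$ by completeness, push it down by continuity, compare with the given limiting cone via the unique isomorphism, lift that isomorphism using the iso-fibration property, and then establish strict uniqueness of the lift via the mediating morphism together with Lemma~\ref{lem:consamn}. The paper's argument is step-for-step the same, including the use of conservativeness plus amnesia exactly where you place it.
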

\begin{proof}
  Consider any continuous, amnestic iso-fibration $𝐅∶ 𝐂 → 𝐃$ with $𝐂$
  complete, and any functor $J∶ 𝐗 → 𝐂$ such that $FJ$ has a limiting
  cone, say $δ∶ D → FJ$.  Then, because $𝐂$ is complete, $J$ also has
  a limiting cone, say $γ∶ C → J$. Because $F$ is continuous,
  $F(γ)∶ F(C) → FJ$ is again limiting, hence we get an isomorphism
  $i∶ D → F(C)$ of cones over $FJ$.  Because $F$ is an iso-fibration,
  we then lift $i$ to an isomorphism $j∶ C' → C$ in $𝐂$ such that
  $F(j) = i$.  The cone $γj∶ C' → J$ is thus limiting, and an
  antecedent of $δ$.  It thus remains to show that it is the only
  antecedent of $δ$.  Let thus $γ''∶ C'' → J$ be any antecedent of
  $δ$. Because $γj$ is limiting, there exists a unique cone morphism
  $m∶ C'' → C'$. But now $F(m)$ is a cone endomorphism $δ → δ$, hence
  $F(m) = \id$. By Lemma~\ref{lem:consamn}, we then get $C'' = C'$ and
  $m=id$, thus proving that $γ'' = γj$ as desired.
\end{proof}

At last, we have:
  \begin{proof}[Proof of Proposition~\ref{prop:monadic:eq}]
    By Corollary~\ref{cor:street}, $𝐌𝐨𝐧𝐚𝐝𝐢𝐜_f/𝐂$ is equivalent to
    $\op{𝐌𝐧𝐝_f(𝐂)}$. But the latter is complete as the opposite of
    $𝐌𝐧𝐝_f(𝐂)$, which is locally finitely presentable by
    Lemma~\ref{lem:lfpmonads}.  Thus, $𝐌𝐨𝐧𝐚𝐝𝐢𝐜_f/𝐂$ is complete.

    Moreover, the forgetful functor $𝐌𝐨𝐧𝐚𝐝𝐢𝐜_f/𝐂 → 𝐂𝐀𝐓/𝐂$ is
    continuous. Indeed, it is equivalent (in $𝐂𝐀𝐓^{→}$) to the
    composite \[\op{𝐌𝐧𝐝_f(𝐂)} ↪ \op{𝐌𝐧𝐝(𝐂)} → 𝐂𝐀𝐓/𝐂\rlap{,}\] whose
    first component is continuous
    by~\cite[Proposition~5.6]{blackwell_phd}, while the second is
    by~\cite[Proposition~26.3]{KellyUnified}.

    Finally, the forgetful functor $𝐌𝐨𝐧𝐚𝐝𝐢𝐜_f/𝐂 → 𝐂𝐀𝐓/𝐂$ is a conservative,
    amnestic iso-fibration:
    \begin{itemize}
    \item it is conservative and amnestic as a full subcategory
      embedding, and
    \item an iso-fibration because the subcategory in question is
      replete (otherwise said, monadic functors are closed under
      isomorphisms).
    \end{itemize}
    The result thus follows by Lemma~\ref{lem:amnestic:creates}.
\end{proof}

\section{\Transition monads}
\label{sec:operational-monads}
In this section, we introduce the main new mathematical notion of the
paper: \transition monads.
In~§\ref{ss:overview-transmon} we give an informal description 
and in~§\ref{ss:def-transmon}, we give our formal definition.
In~§\ref{ss:transmon-relmon}, we provide an equivalent definition based on
the notion of relative monad. Finally, in~§\ref{ss:irrelevant-trans-monads},
we sketch a proof-irrelevant variant of \transition monads.

\subsection{Overview of \transition monads}
\label{ss:overview-transmon}
\hfill
\medskip

\mypar{Placetakers and states}

In standard $λ$-calculus, we have terms, variables are
placeholders for terms, and transitions relate a source term to a
target term.  In a general \transition monad we still have variables and
transitions, but placetakers for variables and endpoints of transitions
can be of a different nature, which we phrase as follows: variables
are placeholders for \alert{placetakers}, while transitions relate a
 \alert{source state} with a \alert{target state}.  

\mypar{The categories for placetakers and for states}

In standard $λ$-calculi, we have a set $𝕋$ of types for terms (and
variables); for instance in the untyped version, $𝕋$ is a singleton.
Accordingly, terms form a \alert{monad} on the category $𝐒𝐞𝐭^𝕋$.  In a
general \transition monad we have a set $ℙ$ of placetaker types, and
placetakers form a \alert{monad} on the category $𝐒𝐞𝐭^ℙ$; similarly,
we have a set $𝕊$ of transition types, and the family of possible
states (depending on a given family of variables) forms an object in
$𝐒𝐞𝐭^𝕊$.  For example, for the simply-typed $λ$-calculus, $ℙ=𝕊$ is the
set of simple types.

\mypar{The object of variables}

In our  view of the untyped $λ$-calculus, there is a
(variable!) set of variables and everything is parametric in this `variables
set'.
Similarly, in a general \transition monad $R$, there is a `variables object' $V$ in $𝐒𝐞𝐭^ℙ$
and everything is functorial in this variables object. In particular, we
have a placetaker object $T_R(V)$ in $𝐒𝐞𝐭^ℙ$ and a source (resp.\ target) state
object in $𝐒𝐞𝐭^𝕊$, both depending upon the variables object.

\mypar{The state functors $S₁$ and $S₂$} 

While in the $λ$-calculus,
states are the same as placetakers, in a general \transition monad,
they may differ, and more precisely both state objects are derived
from the placetaker object by applying the \alert{state functors}
$S₁,S₂∶ 𝐒𝐞𝐭^ℙ → 𝐒𝐞𝐭^𝕊$.  

\mypar{The transition structure  }

In standard $λ$-calculi, there is a (typed!) set of transitions, which
yields a graph on the set of terms. That is to say, if $V$ is the
variables object, and $LC(V)$ the placetaker object, there is a
transition object $\Red(V)$ equipped with two maps
$\src_V, \tgt_V∶ \Red(V) → LC(V)$.  Note that we consider
`proof-relevant' transitions here, in the sense that two different
transitions may have the same source and target
(see~\ref{ss:irrelevant-trans-monads} for the proof-irrelevant variant).

In a general \transition monad $R$, we still have a transition object
$\Red_R (V)$, which now lives in $𝐒𝐞𝐭^𝕊$, together with
state objects $S₁ ( T_R (V))$  and $S₂ ( T_R (V))$,
so that the pairing
$⟨\src_V,trg_V⟩$ forms a morphism
 $ \Red_R(V)  → S₁(T_R(V)) \times S₂(T_R(V))$.

 One main feature of \transition monads is that transitions are closed
 under substitution. Technically, this is realised by taking the
 transition object $\Red_R(V)$ to be a $T_R$-module, and that the
 morphism $ \Red_R(V) → S₁(T_R(V)) \times S₂(T_R(V))$ to be a
 $T_R$-module morphism.

\subsection{The definition of \transition monad}
\label{ss:def-transmon}
Here is our formal definition:

  \begin{defi}
    Given two sets $ℙ$ and $𝕊$, a  \alert{\transition monad}
    over $(ℙ,𝕊)$ consists of
    \begin{itemize}
     \item a finitary monad $T$ on $𝐒𝐞𝐭^ℙ$, called the
       \alert{placetaker} monad,
     \item two finitary functors $S₁, S₂∶ 𝐒𝐞𝐭^ℙ → 𝐒𝐞𝐭^𝕊$, called
       \alert{state} functors, and
    \item a \alert{transition structure} $ R
      \xrightarrow{⟨\src, \tgt⟩} S₁T  \times S₂T$ consisting of
      \begin{itemize}
      \item 
        a finitary $T$-module $R∶ 𝐒𝐞𝐭^ℙ → 𝐒𝐞𝐭^𝕊$, called the
        \alert{transition} module,
      \item a \alert{source} $T$-module morphism $\src∶ R → S₁ T$,
        recalling from Notation~\ref{not:freemodule} that $S₁T$ is the
        free $T$-module on $S₁$,
      \item a \alert{target} $T$-module morphism $\tgt∶ R → S₂ T$.
      \end{itemize}
    \end{itemize}
    \end{defi}

  \begin{defi}\label{def:omndef}
    For any sets $ℙ$ and $𝕊$, finitary monad $T$ over $𝐒𝐞𝐭^ℙ$, and
    finitary functors $S₁,S₂∶ 𝐒𝐞𝐭^ℙ → 𝐒𝐞𝐭^𝕊$, we let
    $𝐓𝐫𝐚𝐧𝐬𝐒𝐭𝐫𝐮𝐜𝐭_{ℙ,𝕊}(T,S₁,S₂)$ denote the class of \transition
    structures over $T$, $S₁$, and $S₂$.

    Furthermore, let $𝐓𝐫𝐚𝐧𝐬𝐌𝐧𝐝_{ℙ,𝕊}$ denote the coproduct
    $∑_{T,S₁,S₂} 𝐓𝐫𝐚𝐧𝐬𝐒𝐭𝐫𝐮𝐜𝐭_{ℙ,𝕊}(T,S₁,S₂)$.
  \end{defi}

  In Definition~§\ref{def:omndof}, we will upgrade these classes into
  categories with the same names.

\subsection{\Transition monads as relative monads }
\label{ss:transmon-relmon}
In our definition of \transition monad, we have required the two state
modules to take the form $S₁T$ and $S₂T$, but this is not essential in
our development. Moreover, it probably leaves some relevant examples
out of reach. We think in particular about the standard presentation
of the $π$-calculus as a labelled transition system. Our original
reason for this design choice was purely aesthetic: it ensures that
our \transition monads are \alert{relative} monads, as were the
reduction monads introduced in~\cite{AHLM19} (in particular the
untyped $λ$-calculus). These were monads relative to the `discrete
graph' functor from sets to graphs, and in our extended context, we
have to replace graphs by \alert{$S$-graphs}. Let us provide more
detail.

Let us first 
recall~\cite{DBLP:journals/corr/AltenkirchCU14} that, given any
functor $J∶ 𝐂 → 𝐃$, a \alert{monad relative to $J$}, or
\alert{$J$-relative monad}, consists of
\begin{itemize}
\item an object mapping $T∶ 𝐨𝐛(𝐂) → 𝐨𝐛(𝐃)$, together with
\item morphisms $η_X∶ J(X) → T(X)$, and
\item for each morphism $f∶ J(X) → T(Y)$, an \alert{extension}
  $f^⋆∶ T(X) → T(Y)$,
\end{itemize}
satisfying coherence conditions.  Any $J$-relative monad $T$ has an
underlying functor $𝐂 → 𝐃$, and is said \alert{finitary} when this
functor is. Note that a monad is nothing but a $J$-relative monad, for $J$
the identity endofunctor.

Now we define $S$-graphs:  
 
\begin{defi}\label{def:sgraphs}
  For any pair $S = (S₁,S₂)$ of functors $𝐒𝐞𝐭^ℙ → 𝐒𝐞𝐭^𝕊$, an \alert{$S$-graph} 
  over an object $V ∈ 𝐒𝐞𝐭^ℙ$ consists of
\begin{itemize}
\item an object $E$ (of \alert{edges})  in $𝐒𝐞𝐭^𝕊$, and
\item a morphism $∂∶ E → S₁(V)×S₂(V)$.
\end{itemize}
Accordingly, an \alert{$S$-graph} consists of an object $V ∈ 𝐒𝐞𝐭^ℙ$ and an $S$-graph
over $V$.
\end{defi}

We can now say  that  in a general \transition monad, transitions form
an $S$-graph over the placetaker object (the whole thing depending
upon the variables object…). Before proceeding, we must introduce the 
category of $S$-graphs:
 a morphism
 $G → G'$
 consists of a morphism for vertices
$f∶ V_G → V_{G'}$ together with a morphism for edges $g∶ E_G → E_{G'}$
making the following diagram commute. 
\begin{center}
  \diag|baseline=(m-1-1.base)|{%
    E_G \& E_{G'} \\
    S₁(V_G)×S₂(V_G) \& S₁(V_{G'})×S₂(V_{G'}) }{%
    (m-1-1) edge[labela={g}] (m-1-2) %
    edge[labell={∂_G}] (m-2-1) %
    (m-2-1) edge[labelb={S₁(f)×S₂(f)}] (m-2-2) %
    (m-1-2) edge[labelr={∂_{G'}}] (m-2-2) %
  }
\end{center}

\begin{prop}
  For any pair $S = (S₁,S₂)$ of functors $𝐒𝐞𝐭^ℙ → 𝐒𝐞𝐭^𝕊$, $S$-graphs
  form a category $S\Gph$.
\end{prop}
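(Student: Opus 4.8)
The plan is to realise $S\Gph$ as a comma category, so that the statement reduces to the standard fact that comma categories are categories~\cite{MacLane:cwm}.

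First I would package the two state functors into a single functor $P ∶ 𝐒𝐞𝐭^ℙ → 𝐒𝐞𝐭^𝕊$ sending an object $V$ to the product $S₁(V) × S₂(V)$ and a morphism $f ∶ V → V'$ to $S₁(f) × S₂(f)$. Here I use that $𝐒𝐞𝐭^𝕊$, being a presheaf category, has binary products (indeed it is complete by the results of~§\ref{ss:finitely-pres-cat}); functoriality of $P$ then follows routinely from functoriality of $S₁$ and $S₂$ together with the functoriality of the binary-product operation.

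Next I would observe that an $S$-graph, in the sense of Definition~\ref{def:sgraphs}, is exactly an object $(E, V, ∂)$ of the comma category $(\Id_{𝐒𝐞𝐭^𝕊} \downarrow P)$: an edge object $E ∈ 𝐒𝐞𝐭^𝕊$, a vertex object $V ∈ 𝐒𝐞𝐭^ℙ$, and a morphism $∂ ∶ E → P(V) = S₁(V) × S₂(V)$. Unfolding the definition of morphism in this comma category, a map $(E,V,∂) → (E',V',∂')$ is a pair $(f,g)$ with $f ∶ V → V'$ in $𝐒𝐞𝐭^ℙ$ and $g ∶ E → E'$ in $𝐒𝐞𝐭^𝕊$ satisfying $P(f) ∘ ∂ = ∂' ∘ g$, that is $(S₁(f) × S₂(f)) ∘ ∂ = ∂' ∘ g$. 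This is precisely the commuting square appearing in the definition of a morphism of $S$-graphs, so $S\Gph$ is exactly this comma category.

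The only genuine content is the (routine) verification that $P$ is a functor; once this is in place, identities, composition, and the associativity and unit laws for $S\Gph$ are inherited verbatim from those of the comma category, so there is no real obstacle. Should one prefer to avoid the comma-category machinery, the same facts can be checked directly: $(\id_V, \id_E)$ gives the identity on a graph $(E,V,∂)$, composition is defined componentwise by $(f',g') ∘ (f,g) = (f' ∘ f,\, g' ∘ g)$, and the required commuting squares are obtained by horizontal pasting of the two given squares — the comma-category identification merely organises this bookkeeping.
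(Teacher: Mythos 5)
Your proof is correct. The paper itself gives no details here — it simply declares the verification straightforward, meaning a direct check that componentwise identities and composition satisfy the category axioms, which is exactly the fallback you sketch in your last paragraph. What you do differently is to first exhibit $S\Gph$ as the comma category $(\Id_{𝐒𝐞𝐭^𝕊} \downarrow P)$ for the functor $P = ⟨S₁,S₂⟩$ followed by binary product, $P(V) = S₁(V)×S₂(V)$, so that the category structure is inherited from a standard construction rather than checked by hand. The identification is accurate: objects and morphisms of the comma category unfold precisely to Definition~\ref{def:sgraphs} and to the commuting square defining $S$-graph morphisms, and the only thing left to verify is functoriality of $P$, which is immediate. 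Your route buys a little more than the paper's: the comma-category presentation makes the two projection functors $S\Gph → 𝐒𝐞𝐭^𝕊$ and $S\Gph → 𝐒𝐞𝐭^ℙ$ visible for free, gives a slick description of the discrete $S$-graph functor $J_S$ (the section of the second projection with empty edge object), and would streamline any later arguments about (co)limits in $S\Gph$; the cost is invoking machinery where a two-line direct check suffices. Either argument is perfectly acceptable.
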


The proof is a straightforward verification.

We will consider monads relative to the following functors:

\begin{defi}
  For any pair $S = (S₁,S₂)$ of functors $𝐒𝐞𝐭^ℙ → 𝐒𝐞𝐭^𝕊$, the
  \alert{discrete $S$-graph} functor $J_S∶ 𝐒𝐞𝐭^ℙ → S\Gph$ maps any
  $V ∈ 𝐒𝐞𝐭^ℙ$ to the $S$-graph over $V$ with no edges.
\end{defi}

Now we are ready to deliver our characterisation of 
\transition monads as relative monads:

\begin{prop}\label{prop:synthana}
Given finitary functors $S₁,S₂∶ 𝐒𝐞𝐭^ℙ → 𝐒𝐞𝐭^𝕊$, \transition monads 
with state  functors $S₁$ and $S₂$ are exactly
monads relative to the discrete $S$-graph functor  
for $S = (S₁,S₂)$, such that the induced
functor $𝐒𝐞𝐭^ℙ → S\Gph$ is finitary.
\end {prop}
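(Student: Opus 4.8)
The plan is to exhibit a bijection between transition monads with state functors $S₁$ and $S₂$ and finitary $J_S$-relative monads, and to check that this bijection is compatible with all the structure. Recall that a transition monad consists of a finitary placetaker monad $T$ on $𝐒𝐞𝐭^ℙ$ together with a transition structure $R \xrightarrow{⟨\src,\tgt⟩} S₁T × S₂T$, where $R$ is a finitary $T$-module. The key observation is that, at each object $V ∈ 𝐒𝐞𝐭^ℙ$, the data $R(V) \xrightarrow{⟨\src_V,\tgt_V⟩} S₁(T(V)) × S₂(T(V))$ is exactly an $S$-graph over the object $T(V)$; so the assignment $V ↦ (T(V), R(V), ⟨\src_V,\tgt_V⟩)$ is precisely an object mapping $𝐨𝐛(𝐒𝐞𝐭^ℙ) → 𝐨𝐛(S\Gph)$, which is the first ingredient of a $J_S$-relative monad.

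First I would spell out the correspondence on the remaining relative-monad data. The unit $η^T∶ \Id → T$ of the placetaker monad supplies, at each $V$, a map $J_S(V) → \widehat{T}(V)$ in $S\Gph$, where $\widehat{T}(V)$ denotes the $S$-graph just described: on vertices it is $η^T_V$, and on edges it is the unique map out of the empty edge-object of the discrete graph $J_S(V)$. For the extension operator, given a morphism $f∶ J_S(V) → \widehat{T}(W)$ in $S\Gph$ — which, since $J_S(V)$ has no edges, is the same as a placetaker map $g∶ V → T(W)$ in $𝐒𝐞𝐭^ℙ$ — I would define $f^⋆∶ \widehat{T}(V) → \widehat{T}(W)$ to act on vertices by the monadic extension $g^⋆ = μ^T_W ∘ T(g)∶ T(V) → T(W)$ and on edges by the $T$-module action of $R$ along $g$. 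The coherence conditions of a relative monad then unwind exactly into: the monad axioms for $(T,η^T,μ^T)$ on vertices, and the $T$-module axioms for $R$ together with naturality of $⟨\src,\tgt⟩$ as a module morphism on edges. This is because $S₁T$ and $S₂T$ carry their free $T$-module structure (Notation~\ref{not:freemodule}), so that $⟨\src,\tgt⟩$ being a $T$-module morphism is exactly the statement that the boundary map commutes with extension; conversely, any $J_S$-relative monad whose vertex part restricts to $T$ yields such a module structure on its edge part.

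I would then address finitariness, which is where the hypothesis ``such that the induced functor $𝐒𝐞𝐭^ℙ → S\Gph$ is finitary'' enters. On the transition-monad side, $T$, $S₁$, $S₂$, and $R$ are all required to be finitary; I must check that the induced functor $V ↦ \widehat{T}(V)$ into $S\Gph$ is finitary, and conversely that finitariness of this functor forces $T$ and $R$ to be finitary. Here I would use that colimits in $S\Gph$ are computed componentwise over vertices and edges — the forgetful functor $S\Gph → 𝐒𝐞𝐭^ℙ × 𝐒𝐞𝐭^𝕊$ sending a graph to its pair of vertex and edge objects creates filtered colimits, since $S₁$ and $S₂$ are finitary and hence the product $S₁(V) × S₂(V)$ commutes with filtered colimits. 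Thus the relative-monad functor is finitary iff both its vertex and edge components are, matching the two finitariness clauses on $T$ and $R$ exactly.

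The main obstacle I anticipate is the careful bookkeeping in the extension-versus-module-action step: verifying that the associativity and unit laws of a $J_S$-relative monad correspond, term by term, to the module axioms of $R$ plus the requirement that $⟨\src,\tgt⟩$ be a morphism of $T$-modules into the free modules $S₁T × S₂T$. In particular one must confirm that the relative-monad extension on edges is uniquely determined by the module action — there is a subtlety in that a relative monad's extension is given for \emph{all} boundary-respecting maps $f∶ J_S(V) → \widehat{T}(W)$, and one must see that because $J_S(V)$ is edge-free these are in natural bijection with plain placetaker substitutions $V → T(W)$, so no extra data beyond the $T$-action on $R$ is needed. Once this identification of the extension data is pinned down, the coherence laws translate mechanically, and the bijection of structures, being manifestly functorial, upgrades to the claimed equivalence of the two notions.
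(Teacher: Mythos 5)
Your proposal is correct and takes the same route as the paper: the paper's entire proof is the remark that the statement ``consists merely in unfolding the definitions,'' and your argument is exactly that unfolding carried out — identifying edge-free graph morphisms $J_S(V) → \widehat{T}(W)$ with substitutions $V → T(W)$, matching relative-monad extension with the $T$-module action of $R$ (so that the graph-morphism condition on extensions is precisely the statement that $⟨\src,\tgt⟩∶ R → S₁T×S₂T$ is a module morphism), and using finitariness of $S₁,S₂$ to see that filtered colimits in $S\Gph$ are computed componentwise, whence finitariness of the induced functor $𝐒𝐞𝐭^ℙ → S\Gph$ corresponds exactly to finitariness of $T$ and $R$. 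No gaps.
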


The proof consists merely in unfolding the definitions. 
Since we do not use this result, we do not give further details.

\subsection{The proof-irrelevant variant}
\label{ss:irrelevant-trans-monads}

Although we have chosen a  ``proof-relevant'' notion of \transition monad, 
we can sketch a presentation of a ``proof-irrelevant'' variant.

  \begin{defi}
    Let $𝐈𝐓𝐫𝐚𝐧𝐬𝐒𝐭𝐫𝐮𝐜𝐭_{ℙ,𝕊}(T,S₁,S₂)$ denote the subset of
    $𝐓𝐫𝐚𝐧𝐬𝐒𝐭𝐫𝐮𝐜𝐭_{ℙ,𝕊}(T,S₁,S₂)$ consisting of \transition monads
    $⟨\src,\tgt⟩∶R → S₁T × S₂T$ such that $⟨\src,\tgt⟩$ is a pointwise
    inclusion.
  \end{defi} 
  \begin{rem}
\label{rk:natural-retract}    
We have a natural retraction 
    \[𝐓𝐫𝐚𝐧𝐬𝐒𝐭𝐫𝐮𝐜𝐭_{ℙ,𝕊}(T,S₁,S₂) ↠ 𝐈𝐓𝐫𝐚𝐧𝐬𝐒𝐭𝐫𝐮𝐜𝐭_{ℙ,𝕊}(T,S₁,S₂)\rlap{,}\] which
maps \atransition structure $∂∶ R → S₁T × S₂T$ to the monomorphism
$\overline{R} ↪ S₁T × S₂T$ obtained from the (strong epi)-mono
factorisation; this factorisation
exists~\cite[Proposition~1.61]{Adamek} since the category of finitary
$𝐒𝐞𝐭$-valued $T$-modules is a presheaf
category~\cite[Definition~2.71]{AhrensPhD}.  In
Proposition~\ref{prop:irrelevant-trans-monads}, we will upgrade this
retraction into a coreflection of categories.
  \end{rem}

\section{Examples of \transition monads}\label{s:examples}
In this section, we present very informally
the announced examples of \transition
monads. This presentation should eventually be compared to the one via
signatures given in~§\ref{s:applications}.

\subsection{The call-by-value, simply-typed, big-step \lam-calculus}\label{ss:stlambda}
The notion of transition monad accounts for many different variants
of the \lam-calculus.  Let us detail the case of the  simply-typed,
call-by-value, big-step $λ$-calculus.  Most often, big-step semantics
describes evaluation of closed terms.
Here we consider a variant describing the evaluation
of open terms~\cite{cbvsolv,Lassencbv}.  In this setting, the main
subtlety lies in the fact that variables are only placeholders for
values.

We 
fix some set of base types, ranged over by $ι$ and define successively types,
values and terms, typing contexts, well-typed terms, and transitions, as follows.
\begin{defi}
  The set $ℙ$ of \alert{types}, ranged over by $A$, $B$, is defined
  inductively by  \[A, B \Coloneqq ι ｜ A → B\rlap{.}\]
\end{defi}
\begin{defi}
  For any set, say $X$, of \alert{variables}, we then define
  \alert{values}, ranged over by $v$, $w$, and general \alert{terms},
  ranged over by $e$, $f$ in a mutually inductive way as follows.
\begin{displaymath}
  \begin{array}{rcll}
    v,w & \Coloneqq & x ｜ λx.e \\
    e,f & \Coloneqq & v ｜ e\ f
  \end{array}
\end{displaymath}
Here, $x$ ranges over $X$, and Terms are considered equivalent modulo
$α$-conversion. Furthermore, one may define capture-avoiding
substitution, as usual.
\end{defi}
\begin{defi}
  A \alert{typing context} is a type-indexed family of sets, i.e., an
  object of $𝐒𝐞𝐭^ℙ$.  For any $Γ ∈ 𝐒𝐞𝐭^ℙ$, $A ∈ ℙ$, and $x ∉ Γ_A$, we
  let $Γ,x:A$ denote $Γ$ augmented with $x$ over $A$.
\end{defi}

\begin{rem}
  The extended context $Γ,x:A$ is isomorphic to $Γ + 𝐲_A$, where $𝐲$
  denotes the Yoneda embedding $ℙ ↪ 𝐒𝐞𝐭^ℙ$, viewing $ℙ$ as a discrete
  category.  Indeed, because $ℙ$ is discrete, $𝐲_A(B)$ is empty when
  $A≠B$, and $𝐲_A(A)$ is a singleton. Thus $Γ+𝐲_A$ is $Γ$, plus one
  element of type $A$.
 \end{rem}

\begin{defi}
  \alert{Well-typed} terms are inductively defined by the following rules.
\begin{mathpar}
  \inferrule{ }{Γ ⊢ x : A}~(x ∈ Γ_A) \and
  \inferrule{Γ, x : A ⊢ e : B}{Γ ⊢ λx.e : A → B} \and
  \inferrule{Γ ⊢ e : A → B \\ Γ ⊢ f : A}{Γ ⊢ e\ f  : B}
\end{mathpar}
\end{defi}

\begin{defi}
\alert{Transitions} are inductively defined by the following rules.
\begin{mathpar}
  \inferrule{ }{v ⇓ v} \and
  \inferrule{e₁ ⇓ λx.e₃ \\ e₂ ⇓ w \\ e₃[x ↦ w] ⇓ v}{e₁\ e₂ ⇓ v}
\end{mathpar}
\end{defi}

This calculus forms \atransition monad as follows.

\mypar{Placetakers and transition types} As foreshadowed by the notation,
because variables and values are  indexed by (simple) types, we take
$ℙ = 𝕊$ to be the set of types.

\mypar{Placetaker monad} The placetaker monad $T$ over $𝐒𝐞𝐭^ℙ$ is
given by well-typed values: given any $Γ ∈ 𝐒𝐞𝐭^ℙ$, the placetaker
object $T(Γ) ∈ 𝐒𝐞𝐭^ℙ$ assigns to each type $A$ the set $T(Γ)_A$ of
value typing derivations, i.e., the set of typing derivations with
conclusion of the form $Γ ⊢ v : A$.

\mypar{State functors}
In big-step semantics, transitions relate terms to values.  Hence,
we are seeking state functors $S₁,S₂∶ 𝐒𝐞𝐭^ℙ→𝐒𝐞𝐭^ℙ$ such that
$S₁(T(Γ))_A$ is the set of typing derivations of type $A$ with free
variables in $Γ$, and $S₂(T(Γ))_A$ is the subset of value typing
derivations therein.  For $S₂$, we should clearly take the identity
functor since $T$ consists of all value typing derivations.  For $S₁$,
we first observe that $λ$-terms can be described as \alert{application
  binary trees} whose leaves are values (internal nodes being typed
applications). More formally, let $S₁(Γ)_A$ denotes the set of typing
derivations with conclusion of the form $Γ ⊢_{S₁} b : A$ inductively
generated by the following rules.
\begin{mathpar}
  \inferrule{x ∈ Γ_A }{Γ ⊢_{S₁} x : A} \and
  \inferrule{Γ ⊢_{S₁} b : A→B \\ Γ ⊢_{S₁} b' : A}{Γ ⊢_{S₁} b\ b' : B}~⋅
\end{mathpar}
If $Γ$ is a typing context, $S₁(T(Γ))$ is indeed the set of general
typing derivations with variables in $Γ$.

\mypar{Transition module} Finally, the transition module 
maps any $Γ ∈ 𝐒𝐞𝐭^ℙ$ to the family, over all $A ∈ ℙ$, of transition
proofs of some $e ⇓ v$ with $e ∈ S₁(T(Γ))_A$ and $v ∈ T(Γ)_A$.  Such
transition proofs are stable under value substitution, so we obtain
\atransition monad.

\begin{rem}
  Transition proofs are \emph{not} stable under general substitution.
  E.g., the following proof
  \begin{mathpar}
    \inferrule*{
      \inferrule{ }{λx.x ⇓ λx.x} \\
      \inferrule{ }{y ⇓ y} \\
      \inferrule{ }{y ⇓ y}
    }{
      (λx.x)\ y ⇓ y
    }
  \end{mathpar}
  of $(λx.x)\ y ⇓ y$ becomes invalid if we replace $y$ with any
  non-value term $e$. Indeed, $e ⇓ e$ does not hold.  And in fact, the
  conclusion itself becomes invalid: we cannot have $(λx.x)\ e ⇓ e$ since
  evaluation results are all values.
\end{rem}

\subsection{The \lbmu-calculus}\label{ss:lbmu}
The $\overline{λ}μ$-calculus, introduced by
Herbelin~\cite{DBLP:phd/hal/Herbelin95}, models the computational
contents of cut elimination in the sequent calculus.  Following Vaux's
presentation~\cite[§2.4.4]{Iouri}, its grammar is given by
\begin{center}
  $\begin{array}[t]{c}
    \mbox{Commands} \\
    c  \Coloneqq  ⟨e|π⟩
  \end{array}$
\hfil
  $\begin{array}[t]{c}
    \mbox{Programs} \\
    e  \Coloneqq  x ｜ μα.c ｜ λx.e
  \end{array}$
    \hfil
    $\begin{array}[t]{c}
    \mbox{Stacks} \\
    π  \Coloneqq  α ｜ e⋅π,
      \end{array}$
\end{center}
where $x$ and $α$ range over two disjoint sets of variables, called
\alert{stack} and \alert{program} variables, respectively.  Both
constructions $μ$ and $λ$ bind their variable in the body.  Reduction
 is generated by the following two
basic transition rules concerning commands:
\begin{mathpar}
{⟨μα.c|π⟩ → c[α↦π]} \and
{⟨λx.e|e'⋅π⟩ → ⟨e[x↦e']|π⟩}.
\end{mathpar}
Reduction may occur ``everywhere'', so it involves programs and stacks as well.

Let us show how this gives rise to \atransition monad.

\mypar{Placetaker types}
In the transition rules above, we see that placetakers may be programs  or stacks.
So, we take  two placetaker types:
$ℙ ≔ 2 = \ens{𝐩,𝐬}$.  A variables object is an element of $𝐒𝐞𝐭^ℙ$, that
is, a pair of sets: the first one gives the available free program
variables, and the second one the available free stack variables.

\mypar{Placetaker monad} The syntax may be viewed as a monad
$T∶ 𝐒𝐞𝐭^ℙ→𝐒𝐞𝐭^ℙ$: given a variables object $X = (X_𝐩,X_𝐬) ∈ 𝐒𝐞𝐭^ℙ$, the
placetaker object $(T(X)_𝐩, T(X)_𝐬)∈𝐒𝐞𝐭^ℙ$ consists of the sets of
program and stack terms with free variables in $X$, up to bound
variable renaming. As usual, monad multiplication is given by
capture-avoiding substitution.

\mypar{Transition types and state functors} As mentioned above,
transitions involve programs and stacks as well as commands.  Thus, we
take three transition types: $𝕊 = 3= \ens{𝐜,𝐩,𝐬}$.  Furthermore,
commands are pairs of a program and a stack, so that, setting
$S₁(A) = S₂(A) = (A_𝐩 × A_𝐬, A_𝐩, A_𝐬)$, we get
$Sᵢ(T(X)) = (T(X)_𝐩 × T(X)_𝐬, T(X)_𝐩, T(X)_𝐬)$ for $i = 1,2$, as
desired.

\mypar{Transition module} Finally, transitions with free variables in
$X$ form a triple of graphs with vertices respectively in
$T(X)_𝐩 × T(X)_𝐬$ (the set of commands taking free variables in $X$), $T(X)_𝐩$, and $T(X)_𝐬$.
This family is
natural in $X$ and commutes with substitution, hence forms a
$T$-module morphism, which completes our \transition monad.

\subsection{The \pii-calculus}\label{ss:expi}
For an example involving equations on placetakers, let us recall the
following standard presentation of (a simple variant of) the
$π$-calculus~\cite{DBLP:books/daglib/0004377}. The syntax for
\alert{processes} is given by 
\[P,Q \Coloneqq x ｜ 0 ｜ (P|Q) ｜ νa.P ｜ \abar⟨b⟩.P ｜ a(b).P\rlap{,}\]
where $x$ ranges over \alert{process variables},
$a$ and $b$ range over \alert{channel names}, and $b$ is bound
in $νb.P$ and $a(b).P$.  Processes are identified when related by
the smallest context-closed equivalence relation $≡$ satisfying
\begin{center}
 $0|P ≡ P$
\hfil $P|Q ≡ Q|P$ \hfil $P|(Q|R) ≡ (P|Q)|R$ \hfil
$(νa.P)|Q ≡ νa.(P|Q)$,
\end{center}
where in the last equation $a$ should not occur free in $Q$.
Transitions are then given by the following rules.
\begin{mathpar}
\inferrule{ }{\abar⟨b⟩.P | a(c).Q ⟶ P|(Q[c↦b])} \and
\inferrule{P ⟶ Q}{P|R ⟶ Q|R} \and \inferrule{P ⟶ Q}{νa.P ⟶ νa.Q} 
\end{mathpar}
\begin{rem}
  Please note that there are no context rules for inputs or outputs,
  so that nothing happens under them.
\end{rem}
The $π$-calculus gives rise to \atransition monad as follows.

\mypar{Placetaker types} We consider two placetaker
types, one for channels and one for processes. Hence,
$ℙ = 2 = \ens{𝐜,𝐩}$.

\mypar{Placetaker monad} Then, the syntax may be viewed as a monad
$T∶ 𝐒𝐞𝐭^ℙ→𝐒𝐞𝐭^ℙ$: given a variables object $X = (X_𝐜,X_𝐩) ∈ 𝐒𝐞𝐭^ℙ$, the
placetaker object $T(X)=(X_𝐜, T(X)_𝐩) ∈ 𝐒𝐞𝐭^ℙ$ consists of the sets of
channels and processes with free variables in $X$ (modulo $≡$). Note
that $T(X)_𝐜=X_𝐜$ as there is no operation on channels.

\mypar{Transition type and state functors} Transitions relate processes, so we take
$𝕊 = 1$ and $S₁(X) = S₂(X) = X_𝐩$.

\mypar{Transition module} Transitions are generated by the above three
rules, and obviously stable under substitution. We thus obtain
\atransition monad.

Let us now describe a second way of organising the $π$-calculus as
\atransition monad, this time over a single placetaker type.
For this, we consider the same calculus, albeit without process variables, so that the syntax
becomes
\[P,Q ∶∶= 0 ｜ (P|Q) ｜ νa.P ｜ \abar⟨b⟩.P ｜ a(b).P\rlap{.}\]

\mypar{Placetaker} In this variant of the $π$-calculus, all that needs
substitution is channels, so we set $ℙ = 1$.

\mypar{Placetaker monad} Since the syntax contains no channel
constructor, the placetaker monad is merely the identity monad.

\mypar{Transition types and state functors} However, since transitions relate processes, we
need to fit the syntax into the state functors. We thus take $𝕊 = 1$
and $S₁(X) = S₂(X)$ to be the set of processes with free channels in
$X$.

\mypar{Transition module} Transitions are generated as above, and
stable under channel renaming, hence again form \atransition monad.

\begin{rem}
  Neither of our presentations would work in the presence of the
  mismatch operator~\cite[p13]{DBLP:books/daglib/0004377}, which
  breaks stability of reduction under renaming.
\end{rem}

  \subsection{Positive GSOS systems}\label{ss:gsos}
  An example involving labelled transitions is given by Positive GSOS
  systems~\cite[p246]{GSOS}. They specify labelled transition systems,
  in which transitions have the shape $\labredrule{a}{e}{f}$, where
  $a$ is drawn from some fixed set $𝔸$ of \alert{labels}.  Let us
  further fix a set $O$ of \alert{operations} with arities in $ℕ$ and
  an infinite set of \alert{variables}, ranged over by $x$, $y$,…

\begin{defi}
A \alert{Positive GSOS rule} has the shape
\begin{equation}
\inferrule{
  … \\ \labredrule{a_{i,j}}{xᵢ}{y_{i,j}} \\ … \\ (i ∈ n, j ∈ nᵢ)}{\labredrule{c}{\ope(x₁,…,xₙ)}{e}}\rlap{,}
\label{eq:GSOS}
\end{equation}
where
\begin{itemize}
\item $\ope ∈ O$ is an operation with arity $n$,
\item for all $i ∈ n$, $nᵢ$ is a natural number,
\item the variables $xᵢ$ and $y_{i,j}$ are all distinct,  and
\item $e$ is an expression potentially depending on all the variables.
\end{itemize}
A \alert{Positive GSOS system} is a set of Positive GSOS rules.
\end{defi}
The semantics of a Positive GSOS rule is that of a ``rule scheme'', in the following sense.
\begin{defi}
  The labelled transition relation \alert{generated} by a Positive GSOS
  system is the smallest $𝔸$-labelled transition system on expressions
  generated by $O$, such that for all rules~\eqref{eq:GSOS}, and
  expressions $e₁,…,eₙ$ and $e_{i,1},…,e_{i,nᵢ}$ for all $i ∈ n$, if
  $\labredrule{a_{i,j}}{eᵢ}{e_{i,j}}$ for all $i ∈ n$ and $j ∈ nᵢ$,
  then
  \[\labredrule{c}{\ope(e₁,…,eₙ)}{e[…, xᵢ ↦ eᵢ, …, y_{i,j} ↦ e_{i,j}, …]}.\]
\end{defi}

Otherwise said, each rule scheme~\eqref{eq:GSOS} induces a rule 
\begin{mathpar}
  \inferrule{… \\ \labredrule{a_{i,j}}{eᵢ}{e_{i,j}} \\ … \\ (i ∈ n, j ∈ nᵢ)
  }{
    \labredrule{c}{\ope(e₁,…,eₙ)}{e[ …,xᵢ ↦ eᵢ, …, y_{i,j} ↦ e_{i,j}, …]}
  }~·
\end{mathpar}
\begin{rem}
  The general notion of GSOS system includes negative rules, which
  means rules that may have premises of the shape $xᵢ \nxto{a}$\ .
  Their semantics is significantly more involved, so we leave their
  integration as an open problem.
\end{rem}
Each Positive GSOS system yields \atransition monad as follows.

\mypar{Placetaker and transition types} We take
$ℙ=1$, because we are in an untyped setting, and $𝕊 = 1$ because
states are terms.

\mypar{Placetaker monad} The selected family of operations (or rather
arities) specifies the term monad $T$.

\mypar{State functors} In order to take labels into account, we take
$S₁(X) = X$ and $S₂(X) = 𝔸×X$. Transitions thus form a set over
$X×(𝔸×X)$ as desired.

\begin{rem}
  We could as well take $S₁(X) = 𝔸×X$ and $S₂(X)=X$, as ultimately
  only the product $S₁(X)×S₂(X)$ matters.  
\end{rem}

\mypar{Transition structure} As before, we take as transitions the set
of all proofs generated by the rules, which is indeed stable under
substitution by construction.

\subsection{The differential \lam-calculus}\label{ss:lambdadiff}
Let us finally sketch how  differential
$λ$-calculus~\cite{LambdaDiff} provides a further example with
$S₁≠S₂$. Following Vaux~\cite[§6]{Iouri}, its syntax may be defined by
  \begin{displaymath}
    \begin{array}[t]{rcll}
      e,f,g & \Coloneqq & x ｜ λx.e ｜ e⟨U⟩ ｜ D e ⋅ f & \mbox{(terms)} \\
      U,V & \Coloneqq & 0 ｜ e + U & \mbox{(multiterms)}\rlap{,}
    \end{array}
  \end{displaymath}
  where terms and multiterms are considered equivalent up to the
  following equations.
  \begin{mathpar}
    e + e' + U = e' + e + U \and D (D e · f) · g = D (D e · g) · f.
  \end{mathpar}

  The definition of transitions is based on two auxiliary
  constructions:
    \begin{enumerate}
    \item \alert{Unary multiterm substitution}
      $e[x↦ U]$
      of a multiterm $U$ for a variable $x$ in a term $e$,
      which returns a multiterm (not to be confused with unary monadic
      substitution, which handles the particular case where $U$ is a
      mere singleton).

      \item \alert{Partial derivative} $\frac{∂e}{∂x} ⋅ U$
      of a term $e$
      w.r.t.\ a term variable $x$ along a multiterm $U$. This again
      returns a multiterm.
    \end{enumerate}
    Both are defined by induction on $e$ (see~\cite[Definition~6.3
    and~6.4]{Iouri}).

    We may now define the transition relation as the smallest
    context-closed relation satisfying the rules below.
    \begin{center}
      \hfil $\inferrule{}{(λx.e)⟨U⟩ → e[x ↦ U]}$ \hfil
      $\inferrule{}{D(λx.e)⋅f → λx.\left (\frac{∂e}{∂x}⋅f \right )}$
  \end{center}
  The compact formulation of the second rule
  relies on the abbreviation $ λx.(e₁+…+eₙ) ≔ λx.e₁ + … + λx.eₙ$. 

  Let us now sketch how  this forms \atransition monad.
  
  \mypar{Placetaker monad} Terms induce a monad $T$ on $𝐒𝐞𝐭$, which we
  take as the placetaker monad (hence $ℙ=1$).

  \mypar{State functors} Transitions relate terms to multiterms, hence
  $𝕊=1$, $S₁$ is the identity, and $S₂=\oc$ is the functor mapping any
  set $X$ to the set of (finite) multisets over $X$.

  \mypar{Transition structure} Transitions are stable under
  substitution by terms, hence we again have \atransition monad.

\section{Signatures for \transition monads}\label{s:sigtrans}
In the previous section, we have shown that several significant
(abstract) programming languages may be organised as \transition
monads. We are now interested in specifying such languages by
signatures. We first introduce registers, as announced in the
introduction, and then our register for \transition monads.

\subsection{Signatures registers}
In this section, we introduce signatures registers, which are a
formalisation of the notion of signature, at least in the context of
initial algebra semantics.

The idea is that each signature $S$ over a fixed category $𝐂$ should
give rise to a particular category $S\alg$, equipped with a
``forgetful'' functor to $𝐂$, the specified object $\spec{S}$ being
the carrier of the initial object of $S\alg$:

\begin{defi}
  \Asemsig over a category $𝐂$ consists of a category $𝐄$ with an
  initial object, equipped with a functor $𝐄 → 𝐂$.  We denote by
  $\SemSig_𝐂$ the class of \semsigs over $𝐂$.
\end{defi}
\begin{nota}
  We denote the components of any \semsig $S$ over a category $𝐂$ by
  $S\alg$ and $𝐔_S$, so that $S$ is precisely $𝐔_S∶ S\alg → 𝐂$.
  Accordingly, we call objects of $S\alg$ \alert{$S$-algebras}, or
  \alert{models} of $S$.
\end{nota}

\begin{defi}
  A \alert{register} $𝐑$ for a given category $𝐂$ consists of
  \begin{itemize}
  \item a class $𝐒𝐢𝐠_𝐑$ of \alert{signatures}, and
  \item a \alert{semantics} map $⟦-⟧_𝐑 ∶ 𝐒𝐢𝐠_𝐑 → \SemSig_𝐂$.
  \end{itemize}
\end{defi}
  \begin{terminology}\label{term:spec}
    We say that any $S ∈ 𝐒𝐢𝐠_𝐑$ is a \alert{signature for
      $\spec{S}≔𝐔_{S}(0)$} ($0$ here denotes the initial object in
    $S\alg$), or alternatively that $S$ \alert{specifies} $\spec{S}$.
    Finally, when we define our registers below, we first introduce
    signatures and associate a functor $𝐄 → 𝐂$ to each signature.  It
    then remains to prove that $𝐄$ has an initial object: as mentioned
    in~§\ref{s:intro}, we call such proofs \alert{validity proofs}.
\end{terminology}
Most of our registers will be monadic in the following sense.
\begin{defi}
  A register $𝐑$ is \textbf{monadic} when the \semsig $𝐄 → 𝐂$
  associated to any signature in $𝐒𝐢𝐠_𝐑$ is finitary and monadic.
\end{defi}

\begin{rem}
  Strictly speaking, we should call this ``finitary monadic''. We omit
  the ``finitary'' for readability. 
\end{rem}

\begin{nota}\label{not:star}
  When a register is monadic, we denote by $S^⋆$ the monad induced by
  any signature $S$, in the sense that we have an isomorphism
  $S\alg ≅ S^⋆\alg$ of categories over $𝐂$ (i.e., which commutes with
  forgetful functors). In such cases we have
  \[S^⋆(0_𝐂) ≅ \spec{S} = 𝐔_S(0_{S\alg}).\]
\end{nota}

Let us now introduce a simple notion of morphism between registers.
\begin{defi}\label{def:compilation}
  A \emph{compilation} from a register $R₁$ on a category $𝐂$ to a
  register $R₂$ on the same category is a map $c∶𝐒𝐢𝐠_{R₁} → 𝐒𝐢𝐠_{R₂}$
  preserving the semantics up to isomorphism, in the sense that for
  any $Σ ∈ 𝐒𝐢𝐠_{R₁}$, there is an isomorphism 
  $⟦ Σ ⟧_{R₁} ≅ ⟦ c(Σ) ⟧_{R₂}$ as objects of $𝐂𝐀𝐓/𝐂$.
  We say that $R₁$ is a \alert{subregister} of $R₂$ if there
  exists a compilation of $R₁$ to $R₂$.
\end{defi}

Let us finish this subsection by recasting a well-known fact as the
definition of a register.  The well-known fact is the following.
\begin{propC}[{\cite[p62]{Reiterman77}}]\label{prop:reiterman}
  For any finitary endofunctor $F$ on a cocomplete category $𝐂$, the
  forgetful functor $F\alg → 𝐂$ is monadic, and the left adjoint maps
  any object $C ∈ 𝐂$ to the initial algebra $F^*(C)$ of the functor
  $A ↦ C + F(A)$, i.e., the least fixed point $μA. (C + F(A))$, with
  $F$-algebra structure given by
  \[F(F^*(C)) ↪ C + F(F^*(C)) ≅ F^*(C).\]
\end{propC}
And here comes the register:
\begin{defi}\label{def:endos}
  For a given cocomplete category $𝐂$, we define the monadic register
  $𝐄𝐅_𝐂$, called the \alert{endofunctor register}, as
  follows.
  \begin{description}
  \item[Signatures] A signature is a finitary endofunctor on $𝐂$.
  \item[Semantics] The \semsig associated to any finitary endofunctor
    $F$ is the forgetful functor $U_F∶ F\alg → 𝐂$.
\end{description}
\end{defi}
 \begin{proof}[{\rm \bf Validity proof:}\nopunct]
   By Proposition~\ref{prop:reiterman}.
 \end{proof}

 \begin{rem}\label{rk:ambiguity}
   Let $F$ be any finitary endofunctor on a cocomplete category $𝐂$.
   Please note the difference between $F^⋆(0)$ and $F^*(0)$: $F^⋆(0)$
   denotes the object specified by $F$ \emph{qua} signature of $𝐄𝐅_𝐂$,
   while $F^*(0)$ denotes the (carrier of the) initial $F$-algebra.
   In this case, of course, the denotations coincide, but this will no
   longer be the case, for instance, in~§\ref{ss:ia:mnd}.  There,
   $F^⋆(0)$ will denote the initial \alert{$F$-monoid}, while $F^*(0)$
   will still denote the initial $F$-algebra.
 \end{rem}

 Let us conclude by naming all registers defined by compilation into
 $𝐄𝐅_𝐂$.
 \begin{defi}\label{d:endofunctorial}
   We call \alert{endofunctorial} all subregisters of $𝐄𝐅_𝐂$.
 \end{defi}

\subsection{A register for \transition monads}
Our goal is to define a register for \transition monads.  Thus, we
should at least organise them into a category in the first place:
  \begin{defi}\label{def:omndof}
    For any sets $ℙ$ and $𝕊$, finitary monad $T$ over $𝐒𝐞𝐭^ℙ$, and
    finitary functors $S₁,S₂∶ 𝐒𝐞𝐭^ℙ → 𝐒𝐞𝐭^𝕊$, let
    \[𝐓𝐫𝐚𝐧𝐬𝐒𝐭𝐫𝐮𝐜𝐭_{ℙ,𝕊}(T,S₁,S₂) = T\Mod_f (𝐒𝐞𝐭^𝕊) /S₁T×S₂T \]denote
    the slice of the category of finitary, $𝐒𝐞𝐭^𝕊$-valued $T$-modules
    over $S₁T×S₂T$.

    And for any sets $ℙ$ and $𝕊$, we let
    $𝐓𝐫𝐚𝐧𝐬𝐌𝐧𝐝_{ℙ,𝕊} ≔ ∑_{T,S₁,S₂} 𝐓𝐫𝐚𝐧𝐬𝐒𝐭𝐫𝐮𝐜𝐭_{ℙ,𝕊}(T,S₁,S₂)$.
  \end{defi}
  \begin{rem}\label{rk:induction}
    Reduction monads~\cite{AHLM19} correspond to the case when
    $S₁=S₂ = \Id $, with $ℙ=𝕊 = 1$, but contrary to the present work, morphisms there
    can live between reduction monads with different underlying
    monads. We don't need such morphisms in the present work because 
     we enforce that models of a signature share the same underlying monad. This allows for
    a simpler notion of signature, at the cost of reducing the scope
    of the recursion principle.
  \end{rem}

  In the coming sections, we will introduce
  \begin{itemize}
  \item a register $𝐑𝐞𝐠𝐌𝐧𝐝_f(𝐒𝐞𝐭^ℙ)$ for finitary monads on the category $𝐒𝐞𝐭^ℙ$,
  \item a register $𝐑𝐞𝐠[𝐒𝐞𝐭^ℙ,𝐒𝐞𝐭^𝕊]_f$ for finitary functors $𝐒𝐞𝐭^ℙ → 𝐒𝐞𝐭^𝕊$, and
  \item for any finitary monad $T$ and functors $S₁$ and $S₂$, a
    register $𝐑𝐞𝐠𝐓𝐫𝐚𝐧𝐬𝐒𝐭𝐫𝐮𝐜𝐭_{ℙ,𝕊}(T,S₁,S₂)$ for
    \transition structures over $T$, $S₁$, and $S₂$.
  \end{itemize}
  
  Assuming this is done, we may already give our register for \transition monads:
  \begin{defi}\label{d:omndreg}
  We define the register $𝐑𝐞𝐠𝐓𝐫𝐚𝐧𝐬𝐌𝐧𝐝_{ℙ,𝕊}$ for the category $𝐓𝐫𝐚𝐧𝐬𝐌𝐧𝐝_{ℙ,𝕊}$ as follows.
  \begin{description}
  \item[Signatures] A signature, which we call
    a \alert{\transition signature}, consists of
    \begin{itemize}
    \item a signature $Σ$ of $𝐑𝐞𝐠𝐌𝐧𝐝_f(𝐒𝐞𝐭^ℙ)$, specifying a finitary monad $T$ on $𝐒𝐞𝐭^ℙ$,
      
    \item signatures $Σ₁$ and $Σ₂$ of $𝐑𝐞𝐠[𝐒𝐞𝐭^ℙ,𝐒𝐞𝐭^𝕊]_f$, specifying
      functors $S₁,S₂∶ 𝐒𝐞𝐭^ℙ → 𝐒𝐞𝐭^𝕊$, and
    \item a signature $Σ_{\Red}$ of $𝐑𝐞𝐠𝐓𝐫𝐚𝐧𝐬𝐒𝐭𝐫𝐮𝐜𝐭_{ℙ,𝕊}(T,S₁,S₂)$.
    \end{itemize}
  \item[Semantics] The \semsig associated to a signature $(Σ,Σ₁,Σ₂,Σ_\Red)$ is
    \[Σ_\Red\alg \xto{𝐔_{Σ_\Red}} 𝐓𝐫𝐚𝐧𝐬𝐒𝐭𝐫𝐮𝐜𝐭_{ℙ,𝕊}(T,S₁,S₂) ↪ 𝐓𝐫𝐚𝐧𝐬𝐌𝐧𝐝_{ℙ,𝕊}.\]
  \end{description}
\end{defi}
 \begin{proof}[{\rm \bf Validity proof:}\nopunct]
   We need to prove that $Σ_\Red\alg$ has an initial object; but this
   follows from $𝐑𝐞𝐠𝐓𝐫𝐚𝐧𝐬𝐒𝐭𝐫𝐮𝐜𝐭_{ℙ,𝕊}(T,S₁,S₂)$ being a register, which will be
   proved below.
 \end{proof}

 It remains to introduce the registers $𝐑𝐞𝐠𝐌𝐧𝐝_f(𝐒𝐞𝐭^ℙ)$ for monads,
 $𝐑𝐞𝐠[𝐒𝐞𝐭^ℙ,𝐒𝐞𝐭^𝕊]_f$ for functors, and
 $𝐑𝐞𝐠𝐓𝐫𝐚𝐧𝐬𝐒𝐭𝐫𝐮𝐜𝐭_{ℙ,𝕊}(T,S₁,S₂)$ for transition structures. The most
 novel is clearly the latter. It is furthermore
 independent from the others, so we introduce it first.

 For the reader's convenience, we list our registers in
 Figure~\ref{fig:registers}, together with corresponding categories.
 \begin{figure}[tp]
   {\renewcommand{\tabcolsep}{3pt}
   \begin{tabular}{cccc}
     \toprule
     Main Register & Signatures &  Base category & Where \\ \midrule
     $𝐑𝐞𝐠𝐓𝐫𝐚𝐧𝐬𝐌𝐧𝐝_{ℙ,𝕊}$ & \Transition signature & $𝐓𝐫𝐚𝐧𝐬𝐌𝐧𝐝_{ℙ,𝕊}$ & \ref{d:omndreg} \\
     $𝐑𝐞𝐠𝐈𝐓𝐫𝐚𝐧𝐬𝐌𝐧𝐝_{ℙ,𝕊}$ & \emph{idem} & $𝐈𝐓𝐫𝐚𝐧𝐬𝐌𝐧𝐝_{ℙ,𝕊}$ & \ref{def:irrelreg} \\
     $𝐑𝐞𝐠𝐓𝐫𝐚𝐧𝐬𝐒𝐭𝐫𝐮𝐜𝐭_{ℙ,𝕊}(T,S₁,S₂)$ & Families of rules & $𝐓𝐫𝐚𝐧𝐬𝐒𝐭𝐫𝐮𝐜𝐭_{ℙ,𝕊}(T,S₁,S₂)$ & \ref{def:regtransstruct} \\
     $𝐑𝐞𝐠𝐌𝐧𝐝_f(𝐒𝐞𝐭^ℙ)$ & Eq.\ modular signatures & $𝐌𝐧𝐝_f(𝐒𝐞𝐭^ℙ)$ & \ref{def:modmnd}\\
     $𝐑𝐞𝐠[𝐒𝐞𝐭^ℙ,𝐒𝐞𝐭^𝕊]_f$ & Eq.\ facet-based signatures & $[𝐒𝐞𝐭^ℙ,𝐒𝐞𝐭^𝕊]_f$ &
     \ref{def:PModFun} \\ \midrule
     Auxiliary Register & Signatures & Base category & Where \\ \midrule
     %\Tstrut\Bstrut
     $𝐑^*$  & Families of sig's of $𝐑$ & Same as $𝐑$ & \ref{def:copairingreg} \\
     $𝐑𝐞𝐠𝐓𝐫𝐚𝐧𝐬𝐒𝐭𝐫𝐮𝐜𝐭⁰_{ℙ,𝕊}(T,S₁,S₂)$ & Rules  & $𝐓𝐫𝐚𝐧𝐬𝐒𝐭𝐫𝐮𝐜𝐭_{ℙ,𝕊}(T,S₁,S₂)$ & \ref{def:Rule} \\
     $𝐑𝐞𝐠⁰(𝐂/X)$ & Abstract simple rules & $𝐂/X$ & \ref{def:rule0} \\
     $𝐑𝐞𝐠¹(𝐂/X)$ & Abstract medium rules & $𝐂/X$ & \ref{def:MSlice} \\
     $𝐑𝐞𝐠(𝐂/X)$ & Abstract rules & $𝐂/X$ & \ref{def:Slice} \\
     $𝐑𝐞𝐠𝐌𝐧𝐝⁰_f(𝐒𝐞𝐭^ℙ)$ & Modular signatures & $𝐌𝐧𝐝_f(𝐒𝐞𝐭^ℙ)$ & \ref{def:elem} \\
     $𝐑𝐞𝐠⁰[𝐒𝐞𝐭^ℙ,𝐒𝐞𝐭^𝕊]_f$ & Facet-Based signatures & $[𝐒𝐞𝐭^ℙ,𝐒𝐞𝐭^𝕊]_f$
                                                 & \ref{def:elemfunc} \\
                                                 \midrule
     General Register & Signatures & Base category & Where \\ \midrule
     %\Tstrut\Bstrut
     $𝐄𝐅_𝐂$ & Finitary endofunctors & $𝐂$ &
                                                           \ref{def:endos} \\
     $𝐄𝐒_𝐂$ & Equational systems & $𝐂$ & \ref{def:eqsysreg} \\
     $𝐏𝐒𝐄𝐅_𝐂$ & Pointed strong endos &
                                              $𝐌𝐨𝐧(𝐂)$ &
                                                         \ref{def:fpt} \\
     $𝐌𝐄𝐒_𝐂$ & Monoidal eq.\ systems &  $𝐌𝐨𝐧(𝐂)$   & \ref{def:moneqsysreg} \\
     \bottomrule
   \end{tabular}
   }

   \smallskip
   For $𝐑^*$, $𝐑$ is assumed to be an endofunctorial register
   (Definition~\ref{d:endofunctorial}).
   \caption{Registers}
   \label{fig:registers}
 \end{figure}

 \section{Registers for transition structures}\label{s:slices}
 In this section, we define the register
 $𝐑𝐞𝐠𝐓𝐫𝐚𝐧𝐬𝐒𝐭𝐫𝐮𝐜𝐭_{ℙ,𝕊}(T,S₁,S₂)$ for \transition structures on fixed
 $T,S₁,S₂$.  Since these are $𝐒𝐞𝐭^𝕊$-valued $T$-modules over $S₁T×S₂T$,
 this register should specify objects of the slice
 category $T\Mod_f(𝐒𝐞𝐭^𝕊)/S₁T×S₂T$.  We will in fact design a register for
 more general slice categories $𝐂 / X$, where $𝐂$ is a locally
 finitely presentable category and $X$ an object of $𝐂$.  The desired register
 $𝐑𝐞𝐠𝐓𝐫𝐚𝐧𝐬𝐒𝐭𝐫𝐮𝐜𝐭_{ℙ,𝕊}(T,S₁,S₂)$ will then be obtained as an instance
 by taking $𝐂 = T\Mod_f$ and $X = S₁T×S₂T$.

 In~§\ref{ss:small}, we first present a basic register $𝐑𝐞𝐠⁰(𝐂/X)$
 that would work for the untyped case without variable binding. Then,
 in~§\ref{ss:binding}, we extend $𝐑𝐞𝐠⁰(𝐂/X)$ to a more powerful
 register $𝐑𝐞𝐠¹(𝐂/X)$ that deals with variable binding.  Observing
 that this register is slightly heavy to use in the typed setting, we
 design a more convenient variant, called $𝐑𝐞𝐠(𝐂/X)$.  Finally,
 in~§\ref{sss:formatrules}, we focus on instances of this register to
 categories of the form $𝐂 = T\Mod_f/(S₁T × S₂T)$.  In this case, we
 introduce special notation, which allows us to write signatures
 essentially as in usual operational semantics literature.  Finally,
 in~§\ref{ss:irr-cats}, we derive a register for the proof-irrelevant
 variant of \transition monads.

\subsection{Small register for slice categories}\label{ss:small}
In this section, we introduce a first, limited register for slice
categories.
\begin{exa}\label{ex:stapp}
  Let $ℙ$ denote the set of simple types over a given set of basic
  types, as in~§\ref{ss:stlambda}, and consider the arity for
  application, i.e., the endofunctor $Σ_{\app}∶ 𝐒𝐞𝐭^ℙ → 𝐒𝐞𝐭^ℙ$ defined
  by \[Σ_{\app}(X)(A) = ∑_B X(B→A)×X(B)\rlap{.} \]Across the
  equivalence $𝐒𝐞𝐭^ℙ ≃ 𝐒𝐞𝐭/ℙ$, one way of presenting this endofunctor,
  which is perhaps closer to the syntactic inference rule for
  application, is as the span
  \[ℙ² \xot{⟨\arr,π₂⟩} ℙ² \xto{π₁} ℙ\rlap{,}\]
  where $\arr(A,B) = (B→A)$.
  Indeed, $Σ_{\app}(X)$ corresponds to
  \begin{itemize}
  \item taking the product of $X → ℙ$ with itself in the arrow category,
  \item pulling back along $⟨\arr,π₂⟩$, and
  \item postcomposing with $π₁$.
  \end{itemize}
  To see this, let us observe that after pulling back, we obtain a
  family $X'$ over $ℙ²$ such that $X'(A,B) = X(B→A)×X(B)$.
  Postcomposing, we take the disjoint union over $B$ as desired.
\end{exa}

Generalising from this example, we obtain the following ``small''
register.
\begin{defi}\label{def:rule0}
  For any locally finitely presentable category $𝐂$ and object
  $X ∈ 𝐂$, we define the endofunctorial register $𝐑𝐞𝐠⁰(𝐂/X)$ for the
  slice category $𝐂/X$ as follows.
\begin{description}
\item[Signatures]
  A signature consists of:
  \begin{itemize}
  \item a \alert{metavariable} object $V$;
    \item a list of \alert{premise} morphisms $(V\xrightarrow{sᵢ} X)_{i∈n}$
      denoted by $V \xrightarrow{\vec{s}} Xⁿ$;
      \item a \alert{conclusion} morphism $V\xrightarrow{t} X$. 
  \end{itemize}
\item[Semantics] A model of a signature
  $(Xⁿ \xleftarrow{\vec{s}} V \xrightarrow{t} X)$ is an algebra for
  the endofunctor mapping any $Y \xrightarrow{p} X$ to
  $Y' → V \xrightarrow{t} X$, where $Y' → V$ denotes the following
  pullback.
  \begin{equation*}
    \Diag{%
      \pbk{m-2-1}{m-1-1}{m-1-2} %
    }{%
      Y' \& Yⁿ \\
       V \& Xⁿ
    }{%
      (m-1-1) edge[labela={}] (m-1-2) %
      edge[labell={}] (m-2-1) %
      (m-2-1) edge[labelb={\vec{s}}] (m-2-2) %
      (m-1-2) edge[labelr={pⁿ}] (m-2-2) %
    }%
  \end{equation*}
  Morphisms of models are morphisms of algebras.
\end{description}
\end{defi}
\begin{proof}[{\rm \bf Validity proof:}\nopunct]
  All we have to show is that the induced endofunctor is finitary.
  But this functor is a composite of three functors
  \[𝐂/X \xto{(-)ⁿ/X} 𝐂/Xⁿ \xto{\vec{s}^*} 𝐂/V \xto{t_!} 𝐂/X\rlap{,}\]
  where
  \begin{itemize}
  \item $(-)ⁿ/X$ denotes $n$-fold self-product in the arrow category,
  \item $\vec{s}^*$ denotes pullback along $\vec{s}$, and
  \item $t_!$ denotes postcomposition with $t$.
  \end{itemize}
  The last two functors, as left adjoints, are cocontinuous, hence
  finitary. Finally, the forgetful functor $𝐂/Xⁿ → 𝐂$ creates
  colimits, so it suffices to show that the composite functor
  $𝐂/X \xto{(-)ⁿ/X} 𝐂/Xⁿ → 𝐂$ is finitary.  But this in turn is the
  composite of $𝐂/X → 𝐂 \xto{(-)ⁿ} 𝐂$.  Because the first component
  creates, hence preserves all colimits, it suffices to show that
  $(-)ⁿ$ is finitary, which holds since filtered colimits commute with
  finite limits in locally finitely presentable
  categories~\cite[Proposition~1.59]{Adamek}.
\end{proof}

\begin{exa}\label{ex:Rule0}
  Let $ℙ=1$ so that $𝐒𝐞𝐭^ℙ ≅ 𝐒𝐞𝐭$, and $T$ denote the monad for pure
  $λ$-calculus syntax.  Taking $𝐂 = T\Mod_f(𝐒𝐞𝐭)$ and $X $ the product module
  $T² = T×T$, let us
  consider the left congruence rule for application
  \begin{mathpar}
    \inferrule{M ⟶ N}{M\ P ⟶ N\ P}
  \end{mathpar}
  as a signature of $𝐑𝐞𝐠⁰(𝐂/X)$.  For this, we take
  \begin{itemize}
  \item as metavariable module $V = T³$,
  \item a single premise given by $⟨π₁,π₂⟩∶ T³ → T²$, and
  \item as conclusion the morphism $t∶ T³ → T²$ mapping any $(M,N,P)$ to
    $(M\ P,N\ P)$.
  \end{itemize}
  Let us now see what it means for a module morphism $R → T²$ to be a
  model. In this case, the pullback along $⟨π₁,π₂⟩$ yields
  $R×T$, and so a model structure amounts to a morphism
  $R×T → R$ making the following square commute.
  \begin{center}
    \diag{%
      R×T \& R \\
      T²×T \& T² %
    }{%
      (m-1-1) edge[labela={}] (m-1-2) %
      edge[labell={}] (m-2-1) %
      (m-2-1) edge[labelb={t}] (m-2-2) %
      (m-1-2) edge[labelr={}] (m-2-2) %
    }
  \end{center}
  Unfolding the definition, such a map associates to any transition
  $r ∈ R(X)$ over $(M,N) ∈ T²(X)$ and term $P ∈ T(X)$ a transition
   over $t(M,N,P) = (M\ P,N\ P)$, as desired.
\end{exa}

\subsection{Binding register for slice categories}\label{ss:binding}
In this section, we observe that the register $𝐑𝐞𝐠⁰(𝐂/X)$ is not
expressive enough in the presence of variable binding, hence we refine
it.
\begin{exa}\label{ex:xi}
  Let $ℙ=1$ so that $𝐒𝐞𝐭^ℙ ≅ 𝐒𝐞𝐭$, and $T$ denote the monad for pure
  $λ$-calculus syntax.  Taking $𝐂 = T\Mod_f$ and $X = T²$ as in
  Example~\ref{ex:Rule0}, let us consider the $ξ$-rule
  \begin{mathpar}
    \inferrule{M → N}{λx.M → λx.N}~·
  \end{mathpar}
  The natural metavariable module for this is $T^{(1)}×T^{(1)}$,
  because $M$ and $N$ have an additional, fresh variable, and the
  natural premise would be the identity map thereon.  However, the
  register $𝐑𝐞𝐠⁰(𝐂/X)$ only allows premises to target powers of $X$.
\end{exa}
In order to rectify the situation, we refine $𝐑𝐞𝐠⁰(𝐂/X)$
to obtain the following more general register.
\begin{defi}\label{def:MSlice}
  For any locally finitely presentable category $𝐂$ and object
  $X ∈ 𝐂$, we define the endofunctorial register $𝐑𝐞𝐠¹(𝐂/X)$ for the
  slice category $𝐂/X$ as follows.  
\begin{description}
\item[Signatures]
  A signature consists of:
  \begin{itemize}
  \item a \alert{metavariable} object $V$;
  \item a list of \alert{premise} morphisms
    $(V\xrightarrow{sᵢ} FᵢX)_{i∈n}$,
      where each $Fᵢ$ is a finitary endofunctor on $𝐂$,
      denoted by $V \xrightarrow{\vec{s}} ∏ᵢ FᵢX$;
    \item a \alert{conclusion} morphism $V\xrightarrow{t} X$.
  \end{itemize}
\item[Semantics]
  A model of a signature $(∏ᵢ FᵢX \xleftarrow{\vec{s}} V \xrightarrow{t} X)$
  is an algebra for the functor mapping $Y \xrightarrow{p} X$ to
  $Y' → V \xrightarrow{t} X$,
  where $Y'$ denotes the pullback
  \begin{equation*}
    \Diag{%
      \pbk{m-2-1}{m-1-1}{m-1-2} %
    }{%
      Y' \& ∏ᵢFᵢ(Y) \\
       V \& ∏ᵢFᵢ(X)
    }{%
      (m-1-1) edge[labela={}] (m-1-2) %
      edge[labell={}] (m-2-1) %
      (m-2-1) edge[labelb={\vec{s}}] (m-2-2) %
      (m-1-2) edge[labelr={∏ᵢFᵢ(p)}] (m-2-2) %
    }%
  \end{equation*}
\end{description}
\end{defi}
\begin{proof}[{\rm \bf Validity proof:}\nopunct]
  Similar to Definition~\ref{def:rule0}, with the following composite endofunctor.
  \begin{center}
    \hfill $𝐂/X \xto{(∏ᵢFᵢ(-))/X} 𝐂/{\textstyle ∏}ᵢFᵢ(X) \xto{\vec{s}^*} 𝐂/V \xto{t_!} 𝐂/X$ \qedhere
\end{center}
\end{proof}
\begin{exa}
  Let us now treat the $ξ$-rule, rectifying Example~\ref{ex:xi}. 
  \begin{itemize}
  \item We
  first take as metavariable module $V ≔ T^{(1)}×T^{(1)}$, as planned.
\item We then take as unique premise the identity on $V$.  For this we
  should justify that $V$ does have the desired form $F(T²)$. This is
  the case with $F(M) = M^{(1)}$ since $(T²)^{(1)} = (T^{(1)})²$.
\item Finally, we take as conclusion
  $V ≔ T^{(1)}×T^{(1)} \xto{λ×λ} T×T$.
  \end{itemize}
\end{exa}
\begin{rem}
\label{rk:deriv-adjoint}
  We may generalise this register to permit a conclusion between terms
  with additional free variables, by having the conclusion morphism
  target $RX$ rather than $X$, for some finitary right adjoint functor
  $R$.  For example, exploiting the adjunction
  $-^{(1)} ⊢ - × T$ in the category of $T$-modules~\cite[Proposition 13]{DBLP:conf/csl/AhrensHLM18},
  the application of untyped $λ$-calculus can be
  viewed as an operation $\app_{alt}∶ T → T^{(1)}$.
  Anticipating on~§\ref{sss:formatrules}, the corresponding modified
  $β$-rule is
  \begin{mathpar}
    \inferrule{ }{\app_{alt}(abs(t)) ↝ t}~·
  \end{mathpar}
  The point here is that for any set $X$ and $t ∈ T^{(1)}(X)$, both
  terms $\app_{alt}(abs(t))$ and $t$ lie in $T^{(1)}(X)$.
\end{rem}

\subsection{Typed variant}\label{sss:bindingrules}
In this section, we observe that the medium register $𝐑𝐞𝐠¹(𝐂/X)$
is slightly inconvenient in a typed setting, and propose our last register
$𝐑𝐞𝐠(𝐂/X)$ for slice categories.
\begin{exa}\label{ex:typedrule}
  As in~§\ref{ss:stlambda}, let $ℙ$ denote the set of simple types
  over a given set of ground types and $T∶ 𝐒𝐞𝐭^ℙ → 𝐒𝐞𝐭^ℙ$ denote the
  monad for simply-typed $λ$-calculus values. Furthermore, let us
  recall the first state functor $S₁∶ 𝐒𝐞𝐭^ℙ → 𝐒𝐞𝐭^ℙ$: $S₁(X)$ is the
  set of typed application binary trees with leaves in $X$.  Let us
  consider the $β$-rule
  \begin{mathpar}
    \inferrule{e₁ ⇓ λ(e₃) \\ e₂ ⇓ w \\ e₃[w] ⇓ v}{e₁\ e₂ ⇓ v}~·
  \end{mathpar}
  Implicitly, this is in fact a family of rules indexed over all pairs
  $(A,B) ∈ ℙ²$ of types. For any such $(A,B)$, we have
  $e₁ ∈ S₁T(X)_{A → B}$, $e₂ ∈ S₂T(X)_A$, and $e₁\ e₂ ∈ S₁T(X)_B$.
  But these are all $𝐒𝐞𝐭$-valued modules, while the transition module
  $R → S₁T×S₂T$ is $𝐒𝐞𝐭^ℙ$-valued.

  In order to work with $𝐒𝐞𝐭$-valued modules, we now want to introduce
  a refinement of the register $𝐑𝐞𝐠¹(𝐂/X)$.  Let us first sketch on
  this example how it should look like.  First of all, because
  $e₁\ e₂$ and $v$ have type $B$, we would like to replace the
  $𝐒𝐞𝐭^ℙ$-valued module $S₁T×S₂T$ with the $𝐒𝐞𝐭$-valued $(S₁T×S₂T)_B$.
  We would then use as metavariable module the product
  \[V =  (S₁T)_{A→B} × (S₁T)_A × (S₁T)^{(A)}_{B} × T_{B} × T_A\rlap{,}\]
  whose elements are tuples $(e₁,e₂,e₃,v,w)$ as in the rule.  The
  conclusion of our rule should then consist of a morphism
  $V → (S₁T×S₂T)_B$, and similarly for the premises (see
  Example~\ref{ex:betaformatted} below).  The crucial ingredient here
  is the functor $(-)_B∶ T\Mod_f(𝐒𝐞𝐭^𝕊) → T\Mod_f(𝐒𝐞𝐭)$. Furthermore,
  in order to prove the existence of an initial model, it is important
  that this functor has a left adjoint $(-)·𝐲_B$, as
  in Proposition~\ref{prop:adjyev}.
\end{exa}

Abstracting over this situation, we are led to:
\begin{defi}\label{def:Slice}
  For any locally finitely presentable category $𝐂$ and object
  $X ∈ 𝐂$, we define the endofunctorial register $𝐑𝐞𝐠(𝐂/X)$ for the
  slice category $𝐂/X$ as follows.  
\begin{description}
\item[Signatures]
  A signature, called a \alert{rule}, consists of:
  \begin{itemize}
  \item a category $𝐃$ and a right adjoint $E∶ 𝐂 → 𝐃$;
  \item a \alert{metavariable} object $V ∈ 𝐃$;
  \item a list of \alert{premise} morphisms
    $(V \xrightarrow{sᵢ} FᵢX)_{i∈n}$,
      where each $Fᵢ∶ 𝐂 → 𝐃$ is a finitary functor,
      denoted by $V \xrightarrow{\vec{s}} ∏ᵢ FᵢX$;
    \item a \alert{conclusion} morphism $V\xrightarrow{t} E(X)$.
  \end{itemize}
\item[Semantics] Let us consider any signature $S$, consisting of
  $E∶ 𝐂 → 𝐃$, with left adjoint $J∶ 𝐃 → 𝐂$, and morphisms
  $(∏ᵢ FᵢX \xleftarrow{\vec{s}} V \xrightarrow{t} E(X))$.  Then, $S$
  induces a functor $Σ_S∶ 𝐂/X → 𝐃/E(X)$ mapping any
  $Y \xrightarrow{p} X$ to $Y' → V \xrightarrow{t} E(X)$,
  where $Y'$ denotes the following pullback.
  \begin{equation*}
    \Diag{%
      \pbk{m-2-1}{m-1-1}{m-1-2} %
    }{%
      Y' \& ∏ᵢFᵢ(Y) \\
       V \& ∏ᵢFᵢ(X)
    }{%
      (m-1-1) edge[labela={}] (m-1-2) %
      edge[labell={}] (m-2-1) %
      (m-2-1) edge[labelb={\vec{s}}] (m-2-2) %
      (m-1-2) edge[labelr={∏ᵢFᵢ(p)}] (m-2-2) %
    }%
  \end{equation*}
  Composing with
  the composite
  \begin{equation}
    𝐃/E(X) \xto{\widetilde{(-)}} J/X → 𝐂/X\rlap{,}\label{eq:composite}
  \end{equation}
  where the first functor denotes transposition, we obtain an
  endofunctor $\widetilde{Σ_S}$, and define the \semsig associated to
  $S$ to be the forgetful functor
    \[\widetilde{Σ_S}\alg → 𝐂/X.\]
  
\end{description}
\end{defi}
\begin{rem}
    Equivalently, a model is a morphism $p∶ Y → X$ equipped with a map
  $k$ making the following triangle commute,
  \begin{center}
    \diag{%
      Y' \& \& E(Y) \\
      \& E(X) %
    }{%

      (m-1-1) edge[labela={k}] (m-1-3) %
      edge[labelbl={Σ_S(Y,p)}] (m-2-2) %
      (m-1-3) edge[labelbr={E(p)}] (m-2-2) %
    }
  \end{center}
  and a morphism of models $(Y,p) → (Z,q)$ is a morphism $f$ in $𝐂/X$
  making the following square commute.
  \begin{center}
    \diag{%
      Y' \& E(Y) \\
      Z' \& E(Z) %
    }{%
      (m-1-1) edge[labela={k_Y}] (m-1-2) %
      edge[labell={f'}] (m-2-1) %
      (m-2-1) edge[labelb={k_Z}] (m-2-2) %
      (m-1-2) edge[labelr={E(f)}] (m-2-2) %
    }
  \end{center}
\end{rem}
\begin{proof}[{\rm \bf Validity proof:}\nopunct]
  The endofunctor $\widetilde{Σ_S}$ is obtained by composing $Σ_S$
  with~\eqref{eq:composite}, so it suffices to show that both of these
  functors are finitary.

  The composite~\eqref{eq:composite} is in fact cocontinuous, because
  colimits in slice categories are computed on domains, and, on
  domains, \eqref{eq:composite} acts like the left adjoint $J$.
  
  Let us now consider the functor $Σ_S$. It is a composite of
  three functors, so it suffices to show that each of these functors
  is finitary.
  The last two, pullback along $\vec{s}$ and postcomposition with $t$,
  have right adjoints, hence are even cocontinuous.
  Finally, the first functor $𝐂/X → 𝐃/∏ᵢ Fᵢ(X)$, mapping any
  $Y \xto{p} X$ to $∏ᵢFᵢ(Y) \xto{∏ᵢ Fᵢ(p)} ∏ᵢFᵢ(X)$ is finitary
  because colimits in slice categories are computed on domains, and,
  on domains, this functor acts like $Y ↦ ∏ᵢ Fᵢ(Y)$, which is a finite
  product of finitary functors -- and filtered colimits commute with
  finite products in locally finitely presentable categories.
\end{proof}

\begin{exa}\label{ex:betaformatted}
  Let us now treat the big-step $β$-rule, finishing Example~\ref{ex:typedrule}.
  For any types $A$ and $B$, we define the following rule:
  \begin{itemize}
  \item we take $𝐃 = T\Mod_f(𝐒𝐞𝐭)$ and
    $E∶ T\Mod_f(𝐒𝐞𝐭^𝕊) → T\Mod_f(𝐒𝐞𝐭)$ to be pointwise evaluation at
    $B$, which is indeed right adjoint to $(-)·𝐲_B$
    (Proposition~\ref{prop:adjyev});
  \item we further take $V ≔ (S₁T)_{A→B} × (S₁T)_A × (S₁T)^{(A)}_{B} × T_{B} × T_A$ as metavariable object;
  \item we have three premises $V → S₁T×S₂T$, which, at any
    $X ∈ 𝐒𝐞𝐭^ℙ$, respectively map any $(e₁,e₂,e₂,w,v) ∈ V(X)$ to:
    \begin{itemize}
    \item $(e₁,λ_{A,B}(e₃))$,
    \item $(e₂,w)$, and
    \item $(e₃[w],v)$;
    \end{itemize}
  \item the conclusion maps any such $(e₁,e₂,e₂,w,v) ∈ V(X)$ to $(\app_{A,B}(e₁, e₂),v)$.
  \end{itemize}
  Using the notation of~§\ref{sss:formatrules} below, this will look
  much like the standard, syntactic rule.
\end{exa}

\begin{defi}\label{def:Rule}
  Let $𝐑𝐞𝐠𝐓𝐫𝐚𝐧𝐬𝐒𝐭𝐫𝐮𝐜𝐭⁰_{ℙ,𝕊}(T,S₁,S₂) = 𝐑𝐞𝐠(T\Mod_f(𝐒𝐞𝐭^𝕊)/S₁T×S₂T)$.
\end{defi}

Finally, we would like signatures to consist of families of rules.
For this, we use the following generic construction of registers.
\begin{defi}\label{def:copairingreg}
  For any endofunctorial register $𝐑$ for a category with coproducts, we denote by $𝐑^*$ the
  endofunctorial register whose signatures are families of signatures
  in $𝐒𝐢𝐠_𝐑$, and whose semantics maps any family to the coproduct of
  associated endofunctors.
\end{defi}

We may at last define our register for the category
$𝐓𝐫𝐚𝐧𝐬𝐒𝐭𝐫𝐮𝐜𝐭_{ℙ,𝕊}(T,S₁,S₂)$ of \transition structures, which we
recall is by definition the slice category $T\Mod_f(𝐒𝐞𝐭^𝕊)/S₁T×S₂T$.
For this, we take as signatures all families of signatures in
$𝐑𝐞𝐠𝐓𝐫𝐚𝐧𝐬𝐒𝐭𝐫𝐮𝐜𝐭⁰_{ℙ,𝕊}(T,S₁,S₂)$:
\begin{defi}\label{def:regtransstruct}
  Let $𝐑𝐞𝐠𝐓𝐫𝐚𝐧𝐬𝐒𝐭𝐫𝐮𝐜𝐭_{ℙ,𝕊}(T,S₁,S₂) = 𝐑𝐞𝐠𝐓𝐫𝐚𝐧𝐬𝐒𝐭𝐫𝐮𝐜𝐭⁰_{ℙ,𝕊}(T,S₁,S₂)^*$.
\end{defi}

\subsection{A format for displaying signatures in rule-based registers}\label{sss:formatrules}
In all of our examples of signatures in
$𝐑𝐞𝐠𝐓𝐫𝐚𝐧𝐬𝐒𝐭𝐫𝐮𝐜𝐭⁰_{ℙ,𝕊}(T,S₁,S₂)$, the metavariable object $V$ is a
functor to $𝐒𝐞𝐭$, so the premises and conclusion are set-maps (which
are in fact module morphisms).  In this case, we adopt the following
notational conventions.
\begin{itemize}
\item For each premise or conclusion
 $
  \begin{array}[t]{rcll}
    V & → & W \\
    x & ↦ & e
  \end{array}$
  of a rule,
  we write $x ∶ V ⊢ e ∶ W$.
\item Furthermore, we organise the premises and conclusion as usual: 
  \begin{center}
  $\inferrule{x ∶ V ⊢ e₁ ∶ W₁ \\ … \\ x ∶ V ⊢ eₙ
    ∶ Wₙ}{x ∶ V ⊢ e ∶ W}~\rlap{,}$
\end{center}
or just
\quad $\inferrule{e₁ \\ … \\ eₙ}{e}$ \quad
when the rest may be inferred from context.
\end{itemize}

Moreover, in our examples $M = S₁T×S₂T$, so each element $e$ is in
fact a pair $(L,R)$, which we generally denote with an arrow, e.g., by
$L ↝ R$, $L → R$,….
\begin{exa}
  The big-step $β$-rule from Example~\ref{ex:betaformatted} reads as
  follows.
  \begin{mathpar}
    \inferrule{ e₁ \leadsto λ_{A, B}(e₃) \\ e₂ \leadsto w \\ e₃[w] \leadsto v }{ \app_{A,B}(e₁, e₂) \leadsto v }
  \end{mathpar}
\end{exa}

\begin{rem}
  The module $V$ is often a product and thus $x$ is a tuple.
\end{rem}

\begin{remC}[{\cite{AHLM19}}]
  In practice, there are several choices for building the transition
  rule out of such a schematic presentation, depending on the order of
  metavariables. This order is irrelevant: all interpretations yield
  isomorphic semantics, in the obvious sense.
\end{remC}

\begin{rem}
  This format could be generalised to any metavariable object, by
  using the internal language of categories.
\end{rem}

\subsection{Proof-irrelevant variant}\label{ss:irr-cats}
In this section, we introduce a register for the proof-irrelevant
variant of \transition monads.  The idea is very simple: we keep the
same signatures as in the proof-relevant setting, and interpret each
signature in a proof-irrelevant way. This is done by constructing a
functor from proof-relevant \transition monads to proof-irrelevant
ones.
  \begin{prop}
    \label{prop:irrelevant-trans-monads}
    Let $𝐈𝐓𝐫𝐚𝐧𝐬𝐒𝐭𝐫𝐮𝐜𝐭_{ℙ,𝕊}(T,S₁,S₂)$ denote the full subcategory of
    \transition structures $⟨\src,\tgt⟩∶ R → S₁T × S₂T$ such that
    $⟨\src,\tgt⟩$ is a pointwise inclusion.  Then, the embedding
    $U_{T,S₁,S₂} ∶ 𝐈𝐓𝐫𝐚𝐧𝐬𝐒𝐭𝐫𝐮𝐜𝐭_{ℙ,𝕊}(T,S₁,S₂) ↪ 𝐓𝐫𝐚𝐧𝐬𝐒𝐭𝐫𝐮𝐜𝐭_{ℙ,𝕊}(T,S₁,S₂)$ is
    reflective.  Consequently, letting
    $𝐈𝐓𝐫𝐚𝐧𝐬𝐌𝐧𝐝_{ℙ,𝕊} ≔ ∑_{T,S₁,S₂} 𝐈𝐓𝐫𝐚𝐧𝐬𝐒𝐭𝐫𝐮𝐜𝐭_{ℙ,𝕊}(T,S₁,S₂)$, the induced
    embedding $𝐈𝐓𝐫𝐚𝐧𝐬𝐌𝐧𝐝_{ℙ,𝕊} ↪ 𝐓𝐫𝐚𝐧𝐬𝐌𝐧𝐝_{ℙ,𝕊}$ is also reflective.
  \end{prop}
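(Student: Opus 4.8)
The plan is to recognise this as an instance of the standard fact that, in a category carrying a (strong epi, mono) factorisation system, the full subcategory of a slice spanned by the monomorphisms into the base object is reflective, with reflector given by taking images. Recall from Definition~\ref{def:omndof} that $𝐓𝐫𝐚𝐧𝐬𝐒𝐭𝐫𝐮𝐜𝐭_{ℙ,𝕊}(T,S₁,S₂)$ is the slice $𝐂/X$ for $𝐂 = T\Mod_f(𝐒𝐞𝐭^𝕊)$ and $X = S₁T×S₂T$, and that $𝐈𝐓𝐫𝐚𝐧𝐬𝐒𝐭𝐫𝐮𝐜𝐭_{ℙ,𝕊}(T,S₁,S₂)$ is the full subcategory of those objects whose structure map is a pointwise inclusion. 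As recorded in Remark~\ref{rk:natural-retract}, $𝐂$ is a presheaf category, so it carries the (strong epi, mono) factorisation used there, and the pointwise inclusions into $X$ are precisely the monomorphisms into $X$ (up to the canonical subobject representatives).

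First I would define the reflector. Given a transition structure $∂∶ R → X$, the factorisation produces a strong epimorphism $e∶ R ↠ \overline{R}$ followed by a monomorphism $i∶ \overline{R} ↪ X$ with $i ∘ e = ∂$; I set $L(∂) = (i∶ \overline{R} ↪ X)$, which is an object of $𝐈𝐓𝐫𝐚𝐧𝐬𝐒𝐭𝐫𝐮𝐜𝐭_{ℙ,𝕊}(T,S₁,S₂)$, and take as unit $η_∂$ the map $e$, viewed as a morphism $∂ → U_{T,S₁,S₂}(L(∂))$ in $𝐂/X$ — it is a slice morphism since $i ∘ e = ∂$. Functoriality of $L$ and naturality of $η$ follow from the functoriality of the factorisation, which is the content underlying the natural retraction of Remark~\ref{rk:natural-retract}.

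Next I would verify the universal property, which is the heart of the argument. Let $m∶ M ↪ X$ be any object of $𝐈𝐓𝐫𝐚𝐧𝐬𝐒𝐭𝐫𝐮𝐜𝐭_{ℙ,𝕊}(T,S₁,S₂)$ and let $g∶ ∂ → U_{T,S₁,S₂}(m)$ be a morphism in $𝐂/X$, that is, a $T$-module morphism $g∶ R → M$ with $m ∘ g = ∂$. Then the square with top edge $g$, bottom edge $i$, left edge $e$ and right edge $m$ commutes ($m ∘ g = ∂ = i ∘ e$), and since $e$ is a strong epimorphism and $m$ a monomorphism, orthogonality yields a unique diagonal $d∶ \overline{R} → M$ with $d ∘ e = g$ and $m ∘ d = i$. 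The first equation says $d ∘ η_∂ = g$, and the second that $d$ is a morphism $L(∂) → m$ in $𝐈𝐓𝐫𝐚𝐧𝐬𝐒𝐭𝐫𝐮𝐜𝐭_{ℙ,𝕊}(T,S₁,S₂)$; its uniqueness is exactly the uniqueness of the filler. This gives a bijection $(𝐂/X)(∂, U_{T,S₁,S₂}(m)) ≅ 𝐈𝐓𝐫𝐚𝐧𝐬𝐒𝐭𝐫𝐮𝐜𝐭_{ℙ,𝕊}(T,S₁,S₂)(L(∂), m)$, natural in $∂$ and $m$, witnessing that $L$ is left adjoint to $U_{T,S₁,S₂}$. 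Hence $U_{T,S₁,S₂}$ is reflective.

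For the ``consequently'' clause, I would observe that $𝐓𝐫𝐚𝐧𝐬𝐌𝐧𝐝_{ℙ,𝕊}$ and $𝐈𝐓𝐫𝐚𝐧𝐬𝐌𝐧𝐝_{ℙ,𝕊}$ are coproducts of categories over the same index of triples $(T,S₁,S₂)$, and that the embedding $𝐈𝐓𝐫𝐚𝐧𝐬𝐌𝐧𝐝_{ℙ,𝕊} ↪ 𝐓𝐫𝐚𝐧𝐬𝐌𝐧𝐝_{ℙ,𝕊}$ is the coproduct of the embeddings $U_{T,S₁,S₂}$. Since a coproduct of categories has no morphisms across components, an adjunction on it is precisely a component-wise family of adjunctions, so $∑_{T,S₁,S₂} L_{T,S₁,S₂}$ is left adjoint to the embedding, which is therefore reflective. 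The only delicate points are the ambient facts invoked at the outset — that $𝐂 = T\Mod_f(𝐒𝐞𝐭^𝕊)$ admits the factorisation and that images of finitary module morphisms remain finitary $T$-modules — but these are already secured by the identification of $𝐂$ with a presheaf category in Remark~\ref{rk:natural-retract}; the reflection argument itself is then purely formal, resting only on the orthogonality of strong epimorphisms to monomorphisms.
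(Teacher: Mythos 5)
Your proposal is correct and follows essentially the same route as the paper's proof: both identify $T\Mod_f(𝐒𝐞𝐭^𝕊)$ as a presheaf category (via Remark~\ref{rk:natural-retract}), define the reflector by (strong epi)-mono factorisation of $∂∶ R → S₁T×S₂T$, and obtain the adjunction bijection from the orthogonality (lifting property) of strong epimorphisms against monomorphisms. Your write-up simply makes explicit two points the paper leaves implicit, namely the verification of the universal property via the unique diagonal filler and the componentwise treatment of the coproduct $∑_{T,S₁,S₂}$.
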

  \begin{proof}
    Refining Remark~\ref{rk:natural-retract},
    by~\cite[Definition~2.71]{AhrensPhD}, the category of finitary
    $T$-modules is a category of presheaves on a small category.
    Thus, by~\cite[Example~4.3.10.g]{BorceuxI}, all epimorphisms are
    strong.  Furthermore, by~\cite[Proposition~4.4.3]{BorceuxI}, it
    admits (strong epi)-mono factorisations.
    
    Using this, we define the left adjoint
    \[L∶𝐓𝐫𝐚𝐧𝐬𝐒𝐭𝐫𝐮𝐜𝐭_{ℙ,𝕊}(T,S₁,S₂) → 𝐈𝐓𝐫𝐚𝐧𝐬𝐒𝐭𝐫𝐮𝐜𝐭_{ℙ,𝕊}(T,S₁,S₂) \]to map
    \atransition structure $∂∶ R → S₁T × S₂T$ to the monomorphism
    $\overline{R} ↪ S₁T × S₂T$ obtained from the (strong epi)-mono
    factorisation of $∂$.  Then, the natural bijection
    \[
    𝐓𝐫𝐚𝐧𝐬𝐒𝐭𝐫𝐮𝐜𝐭_{ℙ,𝕊}(T,S₁,S₂)( R₁, U R₂)≃
    𝐈𝐓𝐫𝐚𝐧𝐬𝐒𝐭𝐫𝐮𝐜𝐭_{ℙ,𝕊}(T,S₁,S₂)(L R₁, R₂)\] follows from the lifting
    property of strong epimorphisms.
  \end{proof}
  
  Let us now introduce the relevant register.  For this, we first
  observe that postcomposition with a functor $F∶ 𝐂 → 𝐃$ turns a
  register for $𝐂$ into one for $𝐃$.
  \begin{defi}[Post-composition register]
    For any functor $F∶ 𝐂 → 𝐃$ and register $𝐑$ on $𝐂$, let $F_!(𝐑)$
    denote the register for $𝐃$ with $𝐒𝐢𝐠_{F_!(𝐑)} = 𝐒𝐢𝐠_{𝐑}$ and
    $⟦s⟧_{F_!(𝐑)} = F ∘ ⟦s⟧_𝐑$.
  \end{defi}

  \begin{defi}\label{def:irrelreg}
    The register $𝐑𝐞𝐠𝐈𝐓𝐫𝐚𝐧𝐬𝐌𝐧𝐝_{ℙ,𝕊}$ is defined as
    $F_{!}(𝐑𝐞𝐠𝐓𝐫𝐚𝐧𝐬𝐌𝐧𝐝_{ℙ,𝕊})$, where $F∶ 𝐓𝐫𝐚𝐧𝐬𝐌𝐧𝐝_{ℙ,𝕊} → 𝐈𝐓𝐫𝐚𝐧𝐬𝐌𝐧𝐝_{ℙ,𝕊}$
    denotes the reflection.
  \end{defi}

\section{Registers for monads}\label{ss:regmonads}
In this section and the next, we design the missing registers,
respectively for monads and state functors.  The registers are mostly
adapted from existing constructions and results in the
literature~\cite{AHLM:2sig,fiore:presheaf,DBLP:conf/lics/Fiore08,FioreHurEquational}.
The novelty here lies in our new explicit description of
initial algebras.

The basic idea for specifying operations in our register for monads is
that the arity of an operation consists of
\begin{itemize}
\item an ``input'' ($𝐒𝐞𝐭$-valued) \alert{parametric} module, in the
  sense of~§\ref{sss:param}, together with
\item an ``output'' placetaker type $p ∈ ℙ$.
\end{itemize}
The role of parametric modules lies in specifying how capture-avoiding
substitution should interact with operations. A similar role is played
in~\cite{fiore:presheaf,DBLP:conf/lics/Fiore08} by ``pointed strong''
endofunctors; we explain the connection in~§\ref{ss:PSEF}.

In~§\ref{ss:regmon}, we construct a first register $𝐑𝐞𝐠𝐌𝐧𝐝⁰_f(𝐒𝐞𝐭^ℙ)$,
which only allows to specify operations. We then deal with equations
in~§\ref{ss:ModMnd}.  Finally, we characterise initial algebras
in~§\ref{ss:ia:mnd}.  All proofs are deferred
to~§\ref{s:friendliness}.

  \subsection{The register \texorpdfstring{$𝐑𝐞𝐠𝐌𝐧𝐝⁰_f(𝐒𝐞𝐭^ℙ)$}{RegMnd0f(Set)} for specifying operations}
  \label{ss:regmon}
  This section is devoted to defining the monadic register
  $𝐑𝐞𝐠𝐌𝐧𝐝⁰_f(𝐒𝐞𝐭^ℙ)$.
  
  Signatures will rely on parametric modules, but these need to be
  restricted in order to ensure existence of an initial algebra (and
  even monadicity).

  \begin{prop}
    For any set $ℙ$, $p₁,…,pₙ ∈ ℙ$, and finitary functor
    $F∶ 𝐒𝐞𝐭^ℙ → 𝐒𝐞𝐭$, the assignment
    $T,X ↦ F(T(X + ∑_{i ∈ n} 𝐲_{pᵢ}))$ defines a parametric module
    denoted by $(F∘ Θ)^{(p₁,…,pₙ)}$.
  \end{prop}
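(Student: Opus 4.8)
The plan is to exhibit the claimed assignment as an instance of the basic parametric-module constructions already recorded in Example~\ref{ex:basic-param-mod}, so that no module axiom has to be checked by hand. Concretely, I would start from the tautological $𝐒𝐞𝐭^ℙ$-valued parametric module $Θ$, which sends each finitary monad $T$ to itself viewed as a module over itself, so that $Θ(T)(X) = T(X)$.

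First I would postcompose with the given finitary functor $F∶ 𝐒𝐞𝐭^ℙ → 𝐒𝐞𝐭$. Since $F$ is finitary by hypothesis, the postcomposition construction $F∘(-)$ of Example~\ref{ex:basic-param-mod} applies and produces a genuine $𝐒𝐞𝐭$-valued parametric module $F∘Θ$ with $(F∘Θ)(T)(X) = F(T(X))$; finitariness of the underlying functor $F∘T$ is immediate, as a composite of finitary functors between locally finitely presentable categories. Next I would apply the shift construction $(-)^{(p₁,…,pₙ)}$, again from Example~\ref{ex:basic-param-mod}, to this $𝐒𝐞𝐭$-valued parametric module. By the defining formula of that construction, the result $(F∘Θ)^{(p₁,…,pₙ)}$ satisfies
\[
  (F∘Θ)^{(p₁,…,pₙ)}(T)(X) = (F∘Θ)(T)\bigl(X + {\textstyle∑_{i∈n}} 𝐲_{pᵢ}\bigr) = F\bigl(T(X + {\textstyle∑_{i∈n}} 𝐲_{pᵢ})\bigr),
\]
which is exactly the assignment in the statement. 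This also explains the notation: $(F∘Θ)^{(p₁,…,pₙ)}$ is literally the shift applied to the composite $F∘Θ$, read left to right.

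The remaining verifications are purely formal and require no new work. The $T$-action and its two coherence conditions, together with functoriality in $T$ — that is, naturality with respect to monad morphisms $α∶ T → U$, which is what makes the assignment a section of $𝐩∶ 𝐌𝐨𝐝(𝐒𝐞𝐭) → 𝐌𝐧𝐝_f$ rather than a mere object map — are inherited stepwise from the corresponding properties of $Θ$, of postcomposition, and of the shift, all of which are already established in Example~\ref{ex:basic-param-mod}. There is thus no genuine obstacle here; the only mild point worth flagging is the finitariness requirement on the parametric module, which holds because $F∘T$ is a composite of finitary functors and precomposition with $X ↦ X + ∑_{i∈n} 𝐲_{pᵢ}$ (a finite coproduct of the identity with a constant) preserves filtered colimits.
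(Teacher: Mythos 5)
Your proposal is correct and takes essentially the same route as the paper, whose entire proof is ``By Example~\ref{ex:basic-param-mod}'': you simply unfold that citation, building the parametric module by applying the postcomposition construction $F∘(-)$ to $Θ$ and then the shift $(-)^{(p₁,…,pₙ)}$, with the module axioms, functoriality in $T$, and finitariness inherited from those constructions exactly as the paper intends.
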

  \begin{proof}
    By Example~\ref{ex:basic-param-mod}.
  \end{proof}

  \begin{defi}
    A parametric module is \alert{elementary} if it is isomorphic to
    some finite product of parametric modules of the shape
    $(F ∘ Θ)^{(p₁,…,pₙ)}$.
  \end{defi}
  
  \begin{exa}
    Typically, taking $F(X) = X(p)$, any finite product of parametric
    modules of the shape $Θ_p^{(p₁,…,pₙ)}$, for some $p,p₁,…,pₙ ∈ ℙ$,
    is elementary.
  \end{exa}

\begin{exa}
  Recall that the idea of our register is that an operation will be
  specified by two parametric modules, one for the source and another
  (very simple) for the target.  Let us give the parametric modules
  for a few operations from our examples.

  \begin{center}
  \begin{tabular}[t]{cccc} \toprule Language &
    Operation & Source & Target \\ \midrule
                     Pure $\overline{λ}μ$ & Push $e·π$ & $Θ_𝐩×Θ_𝐬$ & $Θ_𝐬$ \\
                     Pure $\overline{λ}μ$ & Abstraction $λx.e$  & $Θ^{(𝐩)}_𝐩$ & $Θ_𝐩$ \\
                     $π$-calculus & Input $a(b).P$ & $Θ_𝐜 × Θ^{(𝐜)}_𝐩$
                       & $Θ_𝐩$ \\ \bottomrule
    \end{tabular}
  \end{center}
  \bigskip

  In the above table, $𝐩$, $𝐬$, and $𝐜$ are placetaker types, and the
  subscripts on $Θ$ refer to the notation $Mₚ$ introduced in
  Example~\ref{ex:basic-param-mod}.  Thus, e.g.,
  $(Θ_𝐩×Θ_𝐬)(T)(X) = T(X)_𝐩 × T(X)_𝐬$.
  \end{exa}

  \begin{defi}
    Given any set $ℙ$, a \alert{modular signature}
    is a family of pairs $(d,p)$ where
    \begin{itemize}
    \item $d$ is an elementary parametric module, and
    \item $p ∈ ℙ$.
    \end{itemize}
    Given any modular signature $S = (dᵢ,pᵢ)_{i ∈ I}$, an
    \alert{$S$-algebra} is a finitary monad $T$ equipped with $T$-module
    morphisms $dᵢ(T) → T_{pᵢ}$ for all $i ∈ I$.  An $S$-algebra
    morphism is a monad morphism commuting with these morphisms.  We
    denote by $S\alg → 𝐌𝐧𝐝_f(𝐒𝐞𝐭^ℙ)$ the forgetful functor.
  \end{defi}

  \begin{defi}\label{def:elem}
    We define the monadic register $𝐑𝐞𝐠𝐌𝐧𝐝⁰_f(𝐒𝐞𝐭^ℙ)$ for
    $𝐌𝐧𝐝_f(𝐒𝐞𝐭^ℙ)$
    as follows, for any set $ℙ$.
    \begin{description}
    \item[Signatures] A signature is a modular signature.
    \item[Semantics] The \semsig associated to any signature $S$ is
      the forgetful functor $S\alg → 𝐌𝐧𝐝_f(𝐒𝐞𝐭^ℙ)$.
    \end{description}
\end{defi}
\begin{proof}[{\rm \bf Validity proof:}\nopunct]
  By Corollary~\ref{cor:SModMnd} below.
\end{proof}

\subsection{The register \texorpdfstring{$𝐑𝐞𝐠𝐌𝐧𝐝_f(𝐒𝐞𝐭^ℙ)$}{RegMndf(Set)}}\label{ss:ModMnd}
We now define our register $𝐑𝐞𝐠𝐌𝐧𝐝_f(𝐒𝐞𝐭^ℙ)$, where a signature will
consist of a signature of $𝐑𝐞𝐠𝐌𝐧𝐝⁰_f(𝐒𝐞𝐭^ℙ)$, plus a family of
``equations''.  An equation is essentially a pair of ``derived
operations'' with a common ``arity''. The arity consists of an input
arity, which intuitively models the metavariables of the equation, and
an output arity, which models the output type.  The input arity will
be an elementary parametric module $d$, and the output arity will be a
placetaker type $p ∈ ℙ$. For a family of equations, the arity thus is
a family of such pairs $(d,p)$ i.e., a modular signature.  An equation
will then consist of two derived operations with the same arity, in
the following sense.
\begin{defi}\label{def:derivedops}
  Given any modular signatures $S$ and $S'$, an \alert{$S$-derived
    operation of arity $S'$} is a functor $L∶ S\alg → S'\alg$ over
  $𝐌𝐧𝐝_f(𝐒𝐞𝐭^ℙ)$, i.e., making the following triangle commute.
    \begin{center}
      \diag{%
        S\alg \& \& S'\alg \\
        \& 𝐌𝐧𝐝_f(𝐒𝐞𝐭^ℙ)
      }{%
        (m-1-1) edge[labela={L}] (m-1-3) %
        edge[labelbl={}] (m-2-2) %
        (m-1-3) edge[labelbr={}] (m-2-2) %
      }
    \end{center}
    We call operations of $S$ \alert{basic}, by contrast with the
    \alert{derived} operations of $S'$.

  More concretely, we may introduce derived operations in two stages,
  as follows:
    \begin{itemize}
    \item an \alert{$S$-module morphism} $M → N$ between parametric
      modules $M$ and $N$ is a natural family of morphisms
      $(α_T ∶M(T) ⟶ N(T))_{T∈ S\alg}$, such that $α_T$ is a $T$-module
      morphism, for each $S$-algebra $T$;
    \item letting $S' = (Vⱼ,qⱼ)_{j ∈ J}$, an $S$-derived operation $L$
      of arity $S'$ is a family of parametric module morphisms $Vⱼ → Θ_{qⱼ}$, for
      all $j ∈ J$.
    \end{itemize}
  \end{defi}
  \begin{rem}
    Concretely, an $S$-derived operation $L$ of arity $S'$ associates
    to each $S$-algebra $T$ a family of $T$-module morphisms
    $Vⱼ(T) → T_{qⱼ}$, naturally in $T$.

    Equivalently, by Proposition~\ref{prop:adjyev}, an $S$-derived
    operation of arity $S'$ associates to each $T$ a single,
    $𝐒𝐞𝐭^ℙ$-valued $T$-module morphism $𝐇_{S'}(T) → T$, where
    $𝐇_{S'}(T) ≔ ∑_{j ∈ J} Vⱼ(T)·𝐲_{qⱼ}$.
  \end{rem}
\begin{exa}\label{ex:eqn}
  Consider associativity of parallel composition in the $π$-calculus,
  $P|(Q|R) ≡ (P|Q)|R$: the metavariables are $P$, $Q$, and $R$.  The
  corresponding input arity is $Θ_𝐩³$, and the output arity is $𝐩$.
  Recalling~§\ref{ss:expi}, and anticipating on~§\ref{ss:signature:pi}
  below, 
  the basic signature $S$ contains
  in particular an operation $\paral∶ Θ_𝐩² → Θ_𝐩$ for parallel
  composition, and the
   derived operations for associativity respectively
  map any algebra $(T, \paral_T∶ T_𝐩² → T_𝐩,…)$ to
  \begin{center}
    $T³_𝐩 \xto{\paral × T_𝐩} T²_𝐩 \xto{\paral} T_𝐩$ \hfil and \hfil
    $T³_𝐩 \xto{T_𝐩 × \paral} T²_𝐩 \xto{\paral} T_𝐩$.
  \end{center}
\end{exa}

Returning to the general case, we now define our notion of signature for monads.
\begin{defi}
  An \alert{equational modular signature} consists of
  \begin{itemize}
  \item a modular signature $S$ called the modular signature for
    \alert{operations}, or the \alert{operations} modular signature,
  \item a modular signature $S'$ called the modular signature for
    \alert{equations}, or the \alert{equations} modular signature,
    together with
  \item a pair of $S$-derived operations of arity $S'$.
\end{itemize}
  For any equational modular signature $E = (S,S',L,R)$, an
  \alert{$E$-algebra} is an $S$-algebra $T$ such that the $S'$-algebra
  structures $L(T)$ and $R(T)$ coincide, i.e., $L(T) = R(T)$.  A
  morphism of $E$-algebras is a morphism of $S$-algebras.  We let
  $E\alg$ denote the category of $E$-algebras and morphisms between
  them.
\end{defi}

Let us at last define our register.
\begin{defi}\label{def:modmnd}
  We define the monadic register $𝐑𝐞𝐠𝐌𝐧𝐝_f(𝐒𝐞𝐭^ℙ)$ for $𝐌𝐧𝐝_f(𝐒𝐞𝐭^ℙ)$
  as follows, for any set $ℙ$.
  \begin{description}
  \item[Signatures] A signature is an equational modular signature.
  \item[Semantics] The \semsig associated to any equational modular
    signature $E$ is the forgetful functor $E\alg → 𝐌𝐧𝐝_f(𝐒𝐞𝐭^ℙ)$.
\end{description}
\end{defi}
\begin{proof}[{\rm \bf Validity proof:}\nopunct]
  This is Corollary~\ref{cor:modmnd}\ref{item:ems:monadic} below.
\end{proof}

Let us conclude this subsection by introducing some convenient
notation for specifying equations. 

\begin{nota}[Format for equations]\label{not:monadformat}
  We write any equational modular signature whose equations modular
  signature is a singleton $S' = (V,p)$, say with derived operations
  given by
   \[\begin{array}[t]{rcl}
       V & → & Θₚ² \\
       x & ↦ & (L,R)\rlap{,}
   \end{array}\]
     as \[x∶ V ⊢ L ≡ R : Θₚ \](leaving the operations modular signature implicit),
     or even just $L ≡ R$ when the rest may be inferred.

     Furthermore, given any common (implicit) operations modular
     signature $S$, any family $(x∶ Vᵢ ⊢ Lᵢ ≡ Rᵢ : Θ_{pᵢ})_{i ∈ I}$ 
     will accordingly denote the equational modular signature
     \begin{itemize}
     \item whose equations modular signature is $S' = (Vᵢ,pᵢ)_{i ∈ I}$, and 
     \item whose $S$-derived operations of arity $S'$ are given at any
       $S$-algebra $T$ by the morphisms $Lᵢ(T)∶ Vᵢ(T) → T_{pᵢ}$ and
       $Rᵢ(T)∶ Vᵢ(T) → T_{pᵢ}$, for each $i ∈ I$.
     \end{itemize}
\end{nota}
\begin{exa}\label{ex:paranot}
  We write associativity from Example~\ref{ex:eqn} as just
  \[\paral (P,\paral(Q,R)) ≡ \paral (\paral (P,Q),R). \]In this case, the argument
  $x$ is the triple $(P,Q,R)$.
\end{exa}

\subsection{Explicit description of initial algebras}\label{ss:ia:mnd}
In this final subsection, we provide an explicit description of the
initial $E$-algebra, for any equational modular signature $E$.  We
first deal with the case without equations, recalling the standard
identification of the initial $S$-algebra as a free algebra for a
suitable endofunctor.  In the presence of equations, we then
characterise the initial $E$-algebra as a coequaliser of free
algebras.

In order to compute the initial algebra for a modular signature $S$,
we first observe that the induced homogeneous parametric module
$𝐇_S$ in fact comes from an endofunctor.

  \begin{defi}\label{def:SigmaS}
    For any modular signature $S$, we define its \alert{associated
      endofunctor} $Σ_S$ on $[𝐒𝐞𝐭^ℙ,𝐒𝐞𝐭^ℙ]_f$ as follows.
    \begin{itemize}
    \item For any elementary $d = (F ∘ Θ)^{(p₁,…,pₙ)}$ and $r ∈ ℙ$,
      let \[Σ_{(d,r)}(P) = (F ∘ P)^{(p₁,…,pₙ)}·𝐲ᵣ\rlap{.}\]
    \item For a family $S = (dᵢ,rᵢ)$, let
      $Σ_S = ∑_{i ∈ I} Σ_{(dᵢ,rᵢ)}.$
  \end{itemize}
  Explicitly, we have for any $(d,r)$:
    \begin{center}
    $Σ_{(d,r)}(P)(X) ≔ (F(P(X + ∑ᵢ 𝐲_{pᵢ})))·𝐲ᵣ\rlap{.}$
  \end{center}
  In perhaps more elementary terms, we have
    \[
  \begin{array}{rcll}
    Σ_{(d,r)}(P)(X)(r) & = & F(P(X + ∑ᵢ 𝐲_{pᵢ})) \\
    Σ_{(d,r)}(P)(X)(r') & = & ∅ & \mbox{for $r' ≠ r$.} 
  \end{array}\]
  \end{defi}
By construction, we have:
\begin{prop}
  For any modular signature $S$,  the following square commute,
  \begin{center}
    \diag{%
      𝐌𝐧𝐝_f(𝐒𝐞𝐭^ℙ) \& 𝐌𝐨𝐝(𝐒𝐞𝐭^ℙ,𝐒𝐞𝐭^ℙ) \\
      {[𝐒𝐞𝐭^ℙ,𝐒𝐞𝐭^ℙ]_f} \&
      {[𝐒𝐞𝐭^ℙ,𝐒𝐞𝐭^ℙ]_f} %
    }{%
      (m-1-1) edge[labela={𝐇_S}] (m-1-2) %
      edge[labell={}] (m-2-1) %
      (m-2-1) edge[labelb={Σ_S}] (m-2-2) %
      (m-1-2) edge[labelr={}] (m-2-2) %
    }
  \end{center}
  where the right-hand functor maps any pair $(T,M)$ to $M$.
\end{prop}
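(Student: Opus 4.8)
The plan is to unfold the two definitions involved and observe that the two composites around the square agree \emph{strictly}, first on objects and then on morphisms. Write $|T|$ for the underlying endofunctor of a finitary monad $T$; the left-hand and right-hand vertical functors are the two evident forgetful functors, sending $T$ to $|T|$ and sending a pair $(T,M)$ to its underlying functor $M$, respectively. Thus the right-then-down composite sends $T$ to the underlying functor of $𝐇_S(T)$, while the down-then-right composite sends $T$ to $Σ_S(|T|)$, and I must check that these coincide.

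First I would treat a single pair $(d,r)$ with $d = (F∘Θ)^{(p₁,…,pₙ)}$. By the defining formula for $(F∘Θ)^{(p₁,…,pₙ)}$, the $T$-module $d(T)$ is the functor $X ↦ F(T(X + ∑ᵢ 𝐲_{pᵢ}))$, so the underlying functor of $d(T)·𝐲ᵣ$ is $X ↦ F(T(X + ∑ᵢ 𝐲_{pᵢ}))·𝐲ᵣ$. On the other hand, the explicit description in Definition~\ref{def:SigmaS} gives $Σ_{(d,r)}(|T|)(X) = (F(|T|(X + ∑ᵢ 𝐲_{pᵢ})))·𝐲ᵣ$. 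Since $|T|$ and $T$ agree on objects, these are literally the same expression, so $Σ_{(d,r)}(|T|)$ equals the underlying functor of $d(T)·𝐲ᵣ$ on the nose.

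Summing over a family $S = (dᵢ,rᵢ)_{i∈I}$, and using that $𝐇_S(T) = ∑_{i∈I} dᵢ(T)·𝐲_{rᵢ}$ while $Σ_S = ∑_{i∈I} Σ_{(dᵢ,rᵢ)}$ by Definition~\ref{def:SigmaS}, the previous equalities combine to yield object-level commutativity. For the morphism part, I would note that on a monad morphism $α∶ T → U$ the module component produced by the parametric module $𝐇_S$ is, summand by summand, obtained by whiskering $α$ into $Fᵢ$ (through the identity parametric module $Θ$) and postcomposing with $(-)·𝐲_{rᵢ}$; this is exactly the value of the endofunctor $Σ_{(dᵢ,rᵢ)}$ on $|α|$, so the two composites agree on morphisms as well.

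I do not expect any genuine obstacle: as the text signals with ``By construction'', the proposition holds by direct comparison of the explicit formulas. The only point worth stressing is that the agreement is a strict equality rather than a mere isomorphism, which is visible precisely because $Σ_{(d,r)}$ in Definition~\ref{def:SigmaS} was defined so as to reproduce the underlying functor of $d(T)·𝐲ᵣ$ term by term.
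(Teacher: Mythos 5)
Your proof is correct and takes essentially the same route as the paper, which offers nothing beyond ``By construction'': unfolding Definition~\ref{def:SigmaS} against the formula $𝐇_S(T) = ∑_{i ∈ I} dᵢ(T)·𝐲_{rᵢ}$, checking that $Σ_{(d,r)}(|T|)$ and the underlying functor of $d(T)·𝐲ᵣ$ are literally the same expression, and then summing over the family and verifying the morphism part is exactly what that phrase abbreviates. Your closing remark that the commutation is strict (not merely up to isomorphism) is also the intended reading, since $Σ_{(d,r)}$ was defined precisely to reproduce the underlying functor of the module termwise.
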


Let us now turn to the explicit description of the initial
$S$-algebra. The mathematical contents essentially
date back to~\cite{fiore:presheaf}.
  \begin{prop}\label{prop:S}
    For any modular signature $S$, the forgetful functor
    $S\alg → 𝐌𝐧𝐝_f(𝐒𝐞𝐭^ℙ)$ is monadic, and furthermore the free
    $Σ_S$-algebra $Σ_S^*(\id)$ on the identity has a canonical
    $S$-algebra structure, which is initial.
  \end{prop}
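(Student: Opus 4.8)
The plan is to recognise $S$-algebras as $Σ_S$-\textbf{monoids} in the convenient monoidal category $𝐄 ≔ [𝐒𝐞𝐭^ℙ,𝐒𝐞𝐭^ℙ]_f$ (convenient by Proposition~\ref{prop:finitary-endo-convenient}), and to deduce both claims from the theory of free monoids for a finitary pointed strong endofunctor. First I would repackage the data. By the lifted adjunctions of Proposition~\ref{prop:adjyev}, giving $T$-module morphisms $dᵢ(T) → T_{pᵢ}$ for all $i$ is the same as giving a single $T$-module morphism $𝐇_S(T) → T$, where $𝐇_S(T) = ∑ᵢ dᵢ(T)·𝐲_{pᵢ}$; and by the commuting square preceding the statement, the underlying endofunctor of $𝐇_S(T)$ is $Σ_S$ applied to the carrier of $T$. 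The elementary shape $dᵢ = (Fᵢ ∘ Θ)^{(p₁,…,pₙ)}$ moreover makes $dᵢ(T)$ functorial in the monoid $T$ with its canonical module structure, which is precisely the data of a pointed strength on $Σ_S$. Thus an $S$-algebra is exactly a monoid $T$ in $𝐄$ together with a $Σ_S$-algebra structure $Σ_S(T) → T$ compatible with right multiplication, i.e.\ a $Σ_S$-monoid.

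Next I would check that $Σ_S$ is finitary on $𝐄$. It is a coproduct of the functors $Σ_{(dᵢ,rᵢ)}(P) = (Fᵢ ∘ P)^{(p₁,…,pₙ)}·𝐲_{rᵢ}$, each obtained from the finitary $Fᵢ$ by precomposition with the finitary fresh-variable functor $P ↦ P(- + ∑ 𝐲_{pⱼ})$ and postcomposition with the cocontinuous functor $(-)·𝐲_{rᵢ}$; finitariness then follows as in the validity proofs of §\ref{s:slices}, using that filtered colimits commute with finite products in locally finitely presentable categories. By Proposition~\ref{prop:reiterman} the free $Σ_S$-algebra monad $Σ_S^*$ therefore exists, with $Σ_S^*(P) = μY.(P + Σ_S Y)$.

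For monadicity I would apply Proposition~\ref{prop:monadic} to the triangle $S\alg \xto{H} 𝐌𝐧𝐝_f(𝐒𝐞𝐭^ℙ) \xto{G} 𝐄$, with $F = G∘H$ the two evident forgetful functors. The functor $G$ is finitary monadic, being the free-monad functor on a finitary endofunctor (cf.\ the proof of Lemma~\ref{lem:lfpmonads} and~\cite{MonadicityMonads}); the functor $F$, from $Σ_S$-monoids to $𝐄$, is finitary monadic by the Fiore–Plotkin–Turi free $Σ_S$-monoid construction~\cite{fiore:presheaf}, whose convergence rests on convenience of $𝐄$ and finitariness of $Σ_S$. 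Since $F$ is finitary monadic, the category $S\alg$ is locally finitely presentable by Lemma~\ref{lem:algfinitary}, hence cocomplete, as are $𝐄$ and $𝐌𝐧𝐝_f$; Proposition~\ref{prop:monadic} then yields that $H$ is finitary monadic, which is the first claim.

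For the initial algebra I would use that the initial $Σ_S$-monoid is the free $Σ_S$-monoid on the initial object $0$ of $𝐄$, and that the free $Σ_S$-monoid on an object $X$ has carrier the free $Σ_S$-algebra on $I + X$, where $I = \id$ is the monoidal unit; at $X = 0$ this carrier is exactly $Σ_S^*(\id)$. It remains to exhibit the structure directly on $Σ_S^*(\id)$: the unit is the injection $\id ↪ \id + Σ_S(Σ_S^*(\id)) ≅ Σ_S^*(\id)$, and the multiplication (capture-avoiding substitution) is defined by initial-algebra recursion using the pointed strength of $Σ_S$; the monoid laws and the compatibility of the $Σ_S$-action with multiplication are verified by the uniqueness half of the recursion principle. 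Initiality among $Σ_S$-monoids then follows because any $Σ_S$-monoid $M$ receives a unique $Σ_S$-algebra morphism out of $Σ_S^*(\id)$ extending $\id \xto{η_M} M$, and uniqueness forces this morphism to be a monoid morphism as well. The main obstacle is precisely this free $Σ_S$-monoid construction, which underlies both the monadicity of $F$ and the identification of the initial object: one must define substitution via the strength and prove the monoid and compatibility axioms by structural recursion. This is the substance of the pointed-strong-endofunctor theory, subsumed in the paper by the register of §\ref{s:friendliness}; everything else is routine finitariness bookkeeping together with Proposition~\ref{prop:monadic}.
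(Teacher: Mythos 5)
Your proposal is correct and takes essentially the same route as the paper: identify $S\alg ≅ Σ_S\Mon$ over $𝐌𝐧𝐝_f(𝐒𝐞𝐭^ℙ)$ via the adjunction of Proposition~\ref{prop:adjyev} and the pointed strength (the paper's Proposition~\ref{prop:SigmaSMon:Salg}), deduce monadicity from the triangle argument of Proposition~\ref{prop:monadic} (which is precisely the paper's proof of Theorem~\ref{thm:sigmamonadic}), and identify the initial object as $Σ_S^⋆(0) ≅ Σ_S^*(\id)$ (Proposition~\ref{prop:sigmamonchar}, all assembled in the paper's Corollary~\ref{cor:SModMnd}). The only slip is your claim that the free $Σ_S$-monoid on a general $X$ has carrier $Σ_S^*(\id + X)$: by Proposition~\ref{prop:sigmamonchar} it is $μA.(\id + Σ_S(A) + X ∘ A)$, and the two agree only at $X = 0$ --- but that is the sole instance you use, so the argument stands.
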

  \begin{proof}
    This is Corollary~\ref{cor:SModMnd} below.
  \end{proof}

  We now seek an explicit description of the initial
  $E$-algebra, for any equational modular signature $E$.

\begin{defi}\label{def:SE}
  Let $E = (S,S',L,R)$ denote any equational modular signature, with
  $S = (Vᵢ,pᵢ)_{i ∈ I}$ and $S' = (Wⱼ,qⱼ)_{j ∈ J}$.
\begin{itemize} 
\item Let $S+S'$ denote the ``disjoint union'', i.e., the modular
  signature $(Uₖ,rₖ)_{k ∈ K}$, where
  \begin{itemize}
  \item $K = I+J$,
  \item $(Uₖ,rₖ)$ is
    \begin{itemize}
    \item $(Vᵢ,pᵢ)$ if $k = in₁(i)$ and
    \item  $(Wⱼ,qⱼ)$ if $k = in₂(j)$.
    \end{itemize}
  \end{itemize}
\item The $S'$-algebra structures given by $L(\spec{S})$ and
  $R(\spec{S})$ on $\spec{S}$, together with its canonical $S$-algebra
  structure, yield two $(S+S')$-algebra structures.
  By initiality of $\spec{S+S'}$, we thus obtain
  two $(S+S')$-algebra morphisms
  \begin{equation} \label{eq:LR''mon}
    \check{L},\check{R}∶  \spec{S+S'} → \spec{S}\rlap{,}
  \end{equation}
  or equivalently, by Proposition~\ref{prop:S},
  \begin{equation} \label{eq:LR'''mon}
    \check{L},\check{R}∶  (Σ_S+Σ_{S'})^*(\id) → Σ_S^*(\id).
  \end{equation}
  \end{itemize}
\end{defi}

\begin{thm}\label{thm:modmnd}
  For any equational modular signature $E = (S,S',L,R)$, the
  coequaliser of the pair
  $\check{L},\check{R}∶ (Σ_S+Σ_{S'})^*(\id) → Σ_S^*(\id)$ in
  $[𝐒𝐞𝐭^ℙ,𝐒𝐞𝐭^ℙ]_f$ admits a unique $S$-algebra structure such that
  the coequalising morphism is an $S$-algebra morphism.  Furthermore,
  this structure is in fact an $E$-algebra structure. Finally, it
  makes the coequaliser into an initial $E$-algebra.
\end{thm}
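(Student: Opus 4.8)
The plan is to present $E\alg$ as the full subcategory of $S\alg$ carved out by the condition $L(T)=R(T)$, and to identify the claimed coequaliser with the reflection of the initial $S$-algebra $\spec{S}$ into this subcategory. First I would record the ambient monadicity: by Proposition~\ref{prop:S} the forgetful functor $S\alg → 𝐌𝐧𝐝_f(𝐒𝐞𝐭^ℙ)$ is monadic, and since finitary monads are the monoids of the convenient category $[𝐒𝐞𝐭^ℙ,𝐒𝐞𝐭^ℙ]_f$ (Proposition~\ref{prop:finitary-endo-convenient}), the forgetful functor to endofunctors is finitary monadic as well; hence by Proposition~\ref{prop:monadic} the composite $V∶ S\alg → [𝐒𝐞𝐭^ℙ,𝐒𝐞𝐭^ℙ]_f$ is finitary monadic. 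I would then observe that $\check{L},\check{R}$ form a \emph{reflexive} pair: the canonical $S$-algebra morphism $ι∶ \spec{S} → \spec{S+S'}$ out of the initial $S$-algebra is a common section, because $\check{L}∘ι$ and $\check{R}∘ι$ are $S$-endomorphisms of $\spec{S}$ and so both equal $\id$ by initiality. Note also that $Σ_{S+S'} = Σ_S + Σ_{S'}$, so $\spec{S+S'} = (Σ_S+Σ_{S'})^*(\id)$ matches the stated domain.

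For the lifting step the aim is to show that the coequaliser $q∶ \spec{S} → Q$ of $\check{L},\check{R}$ taken in $[𝐒𝐞𝐭^ℙ,𝐒𝐞𝐭^ℙ]_f$ lifts uniquely to an $S$-algebra structure on $Q$ making $q$ an $S$-algebra morphism, and is moreover the coequaliser in $S\alg$. For this I would invoke Proposition~\ref{prop:algprescolim}: the monadic functor $V$ creates any colimit preserved by its induced monad, so it suffices that this monad preserve reflexive coequalisers. This is the crux and the main obstacle, as it amounts to checking that the free-monoid construction underlying $V$, built from composition and from the signature endofunctor $Σ_S$, preserves reflexive coequalisers. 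I would prove this by noting that each summand $Σ_{(d,r)}$ is a shift of a post-composition with a finitary functor $𝐒𝐞𝐭^ℙ → 𝐒𝐞𝐭$, tensored with $𝐲_r$; since reflexive coequalisers commute with finite products in locally finitely presentable categories and are preserved by finitary $𝐒𝐞𝐭$-valued functors, and since composition is cocontinuous on the left and finitary on the right in the convenient category $[𝐒𝐞𝐭^ℙ,𝐒𝐞𝐭^ℙ]_f$, every constituent of the monad preserves reflexive coequalisers.

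To see that $Q$ is an $E$-algebra, write $o∶ 𝐇_{S'}(\spec{S+S'}) → \spec{S+S'}$ for the formal $S'$-part of the canonical $(S+S')$-structure and set $φ ≔ q∘\check{L} = q∘\check{R}$. Exploiting that $\check{L}$ (resp.\ $\check{R}$) interprets $o$ via $L$ (resp.\ $R$), preservation of the $S'$-operations by $\check{L}$ together with naturality of the derived operation $L$ along $q$ gives
\[ φ∘o \;=\; q∘\check{L}∘o \;=\; q∘L(\spec{S})∘𝐇_{S'}(\check{L}) \;=\; L(Q)∘𝐇_{S'}(φ), \]
and the symmetric computation yields $φ∘o = R(Q)∘𝐇_{S'}(φ)$. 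Now $φ$ is a pointwise surjection (as $\check{L}$ is a split epi via $ι$ and $q$ a regular epi), and $𝐇_{S'}$, being built from finite products, shifts and finitary $𝐒𝐞𝐭$-valued functors, preserves surjections; cancelling the epimorphism $𝐇_{S'}(φ)$ gives $L(Q)=R(Q)$, i.e.\ $Q$ is an $E$-algebra.

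Finally, for initiality, given any $E$-algebra $T$ let $u∶ \spec{S} → T$ be the unique $S$-morphism. Naturality of $L$ shows that $u$ is automatically an $(S+S')$-morphism $(\spec{S},L(\spec{S})) → (T,L(T))$, so $u∘\check{L}$ is the unique $(S+S')$-morphism $\spec{S+S'} → (T,L(T))$; symmetrically $u∘\check{R}$ is the unique one into $(T,R(T))$. Since $L(T)=R(T)$ these targets coincide, whence $u∘\check{L}=u∘\check{R}$, and the coequaliser property of $q$ in $S\alg$ from the lifting step factors $u$ as a unique $S$-morphism $\bar{u}∶ Q → T$. As $E\alg ↪ S\alg$ is full and $q$ is epic, $\bar{u}$ is the unique $E$-morphism $Q → T$, so $Q$ is the initial $E$-algebra; the uniqueness of the $S$-structure and of the factorisation are precisely what the lifting step supplies. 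Thus the whole statement reduces to the reflexive-coequaliser preservation, which is where I expect the real work to lie.
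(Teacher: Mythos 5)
Your overall architecture is sound and genuinely different from the paper's. The paper obtains this theorem (via Corollary~\ref{cor:modmnd}) by \emph{compiling} equational modular signatures into monoidal equational systems (Proposition~\ref{prop:SigmaSMon:Salg} and the discussion following it) and then invoking the general Theorem~\ref{thm:MES}, which itself rests on Street's equivalence (Corollary~\ref{cor:street}) and on equalisers of monadic functors (Corollary~\ref{cor:monadicequaliser}); monadicity, the $E$-algebra property, and initiality all fall out of that machinery, and one gets in addition a pointwise description of the whole monad $E^⋆$, not just of the initial algebra. You instead take the coequaliser in $[𝐒𝐞𝐭^ℙ,𝐒𝐞𝐭^ℙ]_f$, lift it along the monadic forgetful functor $S\alg → [𝐒𝐞𝐭^ℙ,𝐒𝐞𝐭^ℙ]_f$ by creation of colimits (Proposition~\ref{prop:algprescolim}), and verify the remaining claims by hand. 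Those hand verifications are correct: the reflexivity of $\check{L},\check{R}$ via the section $ι$ and initiality; the computation $φ∘o = L(Q)∘𝐇_{S'}(φ) = R(Q)∘𝐇_{S'}(φ)$ followed by cancellation of the pointwise-surjective $𝐇_{S'}(φ)$; and the initiality argument identifying $u∘\check{L}$ and $u∘\check{R}$ as the unique $(S+S')$-morphism into $(T,L(T))=(T,R(T))$. This is a legitimate, more self-contained route to the initial-algebra statement.

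However, the step you yourself flag as the crux contains a genuine gap as written. You assert that the monad induced by $S\alg → [𝐒𝐞𝐭^ℙ,𝐒𝐞𝐭^ℙ]_f$ preserves reflexive coequalisers because ``every constituent of the monad preserves reflexive coequalisers''. But by Corollary~\ref{cor:SModMnd} and Proposition~\ref{prop:sigmamonchar} this monad is the least-fixed-point construction $C ↦ μA.(\id + Σ_S(A) + C∘A)$, not a composite of its constituents, and preservation of a colimit by the pieces of a recursive construction does not formally transfer to the fixed point. Two ingredients are needed to close this. First, for \emph{every} object $C$ of $[𝐒𝐞𝐭^ℙ,𝐒𝐞𝐭^ℙ]_f$ the functor $C∘{-}$ must preserve reflexive coequalisers: your phrase ``composition is cocontinuous on the left and finitary on the right'' does not give this, since finitariness only yields filtered colimits; what is really used is that all finitary endofunctors on $𝐒𝐞𝐭^ℙ$ are friendly (Example~\ref{ex:friendly}, via~\cite[Corollary~6.30]{algebraictheories}), of which your remark on finitary $𝐒𝐞𝐭$-valued functors is an implicit form. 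Second, one needs the induction over the chain $Hⁿ(0)$ computing the fixed point, with the lemma on reflexive coequalisers of~\cite[Lemma~8.4.2]{BarrWells:ttt} and interchange of colimits, to carry preservation through to the colimit. This is exactly the content of Propositions~\ref{prop:sifted-colim-mon} and~\ref{prop:algebraicptstr} (and Corollary~\ref{cor:freemonad:algebraic}) in the paper. Once you import or reproduce those, your proof is complete; as it stands, the inference from the constituents to the monad is unjustified, and it is precisely the place where the paper does its real technical work.
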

\begin{proof}
  By Corollary~\ref{cor:modmnd} below.
\end{proof}

  \begin{rem}\hfill
    \begin{itemize}
    \item The coequaliser may be computed as a pointwise quotient of
      $Σ_S^*(\id)$, as follows. As explained by
      Hamana~\cite{DBLP:conf/aplas/Hamana04}, $Σ_S^*(\id)$
      may be defined as a term language, each $Σ_S^*(\id)(X)(p)$ being
      the set of terms of type $p$, with free variables of each type
      $q ∈ ℙ$ in $X(q)$ --- as in~§\ref{ss:stlambda}, the pair $(X,p)$
      may be thought of as a sequent $X ⊢ p$. Now, coequalisers (as
      all limits and colimits) are pointwise in functor
      categories~\cite[§V.4]{MacLane:cwm}, so for any $X ∈ 𝐒𝐞𝐭^ℙ$ and
      $p ∈ ℙ$, $\spec{E}(X)(p)$ is the coequaliser of
        \[(Σ_S+Σ_{S'})^*(\id)(X)(p) → Σ_S^*(\id)(X)(p)\]
      in $𝐒𝐞𝐭$. And this is well known to be the quotient of
      $Σ_S^*(\id)(X)(p)$ by the smallest equivalence relation identifying all
      $\check{L}_{X,p}(e)$ with $\check{R}_{X,p}(e)$, for all
      $e ∈ (Σ_S+Σ_{S'})^*(\id)(X)(p)$. Intuitively, following Hamana
      again, $(Σ_S+Σ_{S'})^*(\id)$ is an extension of the term
      language $Σ_S^*(\id)$ with operations, say $oⱼ$, with arities
      $(Vⱼ,qⱼ)$, for all $j$ (where $S' = (Vⱼ,qⱼ)_{j ∈ J}$ as before),
      and $\check{L}$ and $\check{R}$ inductively translate this
      language to $Σ_S^*(\id)$ by interpreting $oⱼ$ using
      $Lⱼ(Σ_S^*(\id))$ and $Rⱼ(Σ_S^*(\id))$, respectively.
    \item We could consider computing the coequaliser of a simpler
      parallel pair
      \begin{equation}
        \spec{S'} → \spec{S}\rlap{,}\label{eq:LRsimplistic}
      \end{equation}
      constructed similarly. Let us show on a simple example that this
      does not compute the desired functor.  The intuition is that
      this coequaliser identifies terms modulo a relation which is not
      a congruence.

      Let $ℙ = 1$, and $S$ consist of a single, binary operation,
      i.e., $S = \{ (Θ²,⋆) \}$.  Thus, an $S$-algebra is merely a
      monad on sets, equipped with a binary $T$-module morphism
      $b∶ T² → T$.  Furthermore, let $S'$ consist of a single,
      ternary operation $t$. Finally, for any $S$-algebra $(T,b)$, let
      $L(T)$ and $R(T)$ denote the $T$-module morphisms
      \begin{mathpar}
        T³ \xto{b×T} T² \xto{b} T \and
        T³ \xto{T×b} T² \xto{b} T.
      \end{mathpar}
      By Proposition~\ref{prop:S}, for any set $X$, $\spec{S}(X)$
      consists of binary trees with leaves in $X$.  Similarly,
      $\spec{S'}(X)$ consists of ternary trees with leaves in
      $X$.  Now, the parallel pair~\eqref{eq:LRsimplistic} maps
      ternary trees to binary trees, replacing any ternary node
      $(t₁,t₂,t₃)$, respectively with $((t₁,t₂),t₃)$ and
      $(t₁,(t₂,t₃))$. Thus, e.g., assuming $x ∈ X$, the binary trees
      $(((x,x),x),x)$ and $((x,(x,x)),x)$, having an even number of
      leaves, are not in the image of the parallel pair, hence are not
      identified in the coequaliser.
      
    \item A perhaps higher-level understanding of this, which will be
      developped in~§\ref{s:friendliness}, is as follows.  The derived
      operations induce monad morphisms $(S')^⋆ → S^⋆$, of which the
      desired monad $E^⋆$ is the coequaliser in
      $𝐌𝐧𝐝_f(𝐒𝐞𝐭^ℙ)$. However, coequalisers of monads are generally
      not pointwise~\cite{DBLP:conf/lics/AdamekMBL12}.  Fortunately,
      \alert{reflexive} coequalisers are, so the desired monad may be
      computed as the pointwise coequaliser of the obvious parallel
      pair $S^⋆ + (S')^⋆ → S^⋆$.  Finally, roughly because $(-)^⋆$ is
      a left adjoint in this case, we have $S^⋆ + (S')^⋆ ≅ (S+S')^⋆$,
      which directly leads to our formula.
    \end{itemize}
  \end{rem}

\section{Registers for (state) functors}\label{s:regfun}
In this section, we define a register $𝐑𝐞𝐠[𝐒𝐞𝐭^ℙ,𝐒𝐞𝐭^𝕊]_f$, which is a
variant of $𝐑𝐞𝐠𝐌𝐧𝐝_f(𝐒𝐞𝐭^ℙ)$ for the case of state functors.  Operations,
equations, and models will be defined exactly as for monads, and a
signature in $𝐑𝐞𝐠[𝐒𝐞𝐭^ℙ,𝐒𝐞𝐭^𝕊]_f$ will again consist of families of
operations and equations, the only difference being that instead of
parametric modules, we will use \alert{facets}.  We
start by presenting the auxiliary notion of facets,
which is a simple variant of the parametric modules
of~§\ref{ss:regmonads}. Next, we introduce a register $𝐑𝐞𝐠⁰[𝐒𝐞𝐭^ℙ,𝐒𝐞𝐭^𝕊]_f$
for operations, and then a refinement $𝐑𝐞𝐠[𝐒𝐞𝐭^ℙ,𝐒𝐞𝐭^𝕊]_f$ with equations.
Again, validity proofs are deferred to~§\ref{s:friendliness}.

\subsection{Facets}
We start in this section by introducing facets.  Let us first explain
how they naturally come up in the case of call-by-value, simply-typed
$λ$-calculus.  We have seen in~§\ref{ss:stlambda} that the source
state functor $S₁$ for call-by-value, simply-typed $λ$-calculus
consists of application binary trees.  The goal now is to design a
register for specifying such a functor $S₁$.  Intuitively, it has two
(type-indexed families of) operations:
\begin{itemize}
\item a first operation for  injecting values into application binary trees,
  of type $X_A → S₁(X)_A$ for all $X ∈ 𝐒𝐞𝐭^ℙ$ and $A ∈ ℙ$, and
\item a second operation for application, of type
  $S₁(X)_{A → B} × S₁(X)_A → S₁(X)_B$, for all $X ∈ 𝐒𝐞𝐭^ℙ$ and $A,B ∈ ℙ$.
\end{itemize}
The type for application is really similar to what we had
in~§\ref{ss:regmonads}: for any type $A ∈ ℙ$, denoting
by $Θ_A∶ [𝐒𝐞𝐭^ℙ,𝐒𝐞𝐭^ℙ] → [𝐒𝐞𝐭^ℙ,𝐒𝐞𝐭]$ the functor defined by
\[Θ_A(S)(X) = S(X)_A\rlap{,}\]
application of a function of type $A → B$ has arity
\[Θ_{A→B}×Θ_A → Θ_B.\]
The type for value injection does not make any sense in the context of
modules over monads, because the functor $M(T)(X) = X_A$ does not form
a module. But here in the context of state functors, we may well
define $𝐈_A∶ [𝐒𝐞𝐭^ℙ,𝐒𝐞𝐭^ℙ] → [𝐒𝐞𝐭^ℙ,𝐒𝐞𝐭]$ by
\[𝐈_A(S)(X) = X_A\rlap{,}\]
for any $A ∈ ℙ$.

Such functors $Θ_A$ and $𝐈_A$ are examples of facets, which we now
introduce more formally.
\begin{defi}
  For any categories $𝐂$ and $𝐃$, a \alert{facet} for $[𝐂,𝐃]_f$ is a
  finitary functor $[𝐂,𝐃]_f → [𝐂,𝐒𝐞𝐭]_f$. 

  Let $𝐅𝐚𝐜𝐞𝐭(𝐂,𝐃) ≔ [[𝐂,𝐃]_f, [𝐂,𝐒𝐞𝐭]_f]_f$ denote the
  category of facets.
\end{defi}

\begin{nota}
  We abbreviate $𝐅𝐚𝐜𝐞𝐭(𝐒𝐞𝐭^ℙ,𝐒𝐞𝐭^𝕊)$ to $𝐅𝐚𝐜𝐞𝐭(ℙ,𝕊)$, for readability.
  \end{nota}

  \begin{defi}
    Here are a few basic constructions of facets, the first three in the
    general case, and the next three for $[𝐒𝐞𝐭^ℙ,𝐒𝐞𝐭^𝕊]_f$.
  \begin{itemize}
  \item Any functor $F∶ 𝐃 → 𝐒𝐞𝐭$ induces a facet $Φ_F$
    defined by $Φ_F(P) = F∘P$.
  \item Similarly, any functor $G∶ 𝐂 → 𝐒𝐞𝐭$ induces a facet $Ψ_G$
    defined by $Ψ_G(P)(C) = G(C)$.
  \item For any facets $F$ and $G$, the product $F×G$ in the
    (functor) category of facets is again a facet.
  \item For any $p ∈ ℙ$, let $Θₚ = Φ_{\evₚ}$.
  \item For any $s ∈ 𝕊$, let $𝐈ₛ = Ψ_{\evₛ}$.
  \item For any facet $F$ for $[𝐒𝐞𝐭^ℙ,𝐒𝐞𝐭^𝕊]$ and $p₁,…,pₙ ∈ ℙ$, let
    \[F^{(p₁,…,pₙ)}(P)(X) = F(P(X + 𝐲_{p₁} + … + 𝐲_{pₙ})).\]
\end{itemize}
\end{defi}
\begin{rem}
  Let $Θ$ denote the identity endofunctor on $[𝐒𝐞𝐭^ℙ,𝐒𝐞𝐭^𝕊]_f$,
  and \[𝐈∶ [𝐒𝐞𝐭^ℙ,𝐒𝐞𝐭^𝕊]_f → [𝐒𝐞𝐭^ℙ,𝐒𝐞𝐭^ℙ]_f \]the constant functor
  mapping anything to the identity endofunctor.  Post-composing with
  evaluation at any $p ∈ ℙ$, resp.\ $s ∈ 𝕊$, we recover the facets
  $𝐈ₚ$ and $Θₛ$.
\end{rem}
\begin{rem}
  The notation $Θ$ introduced above for the identity endofunctor
  is compatible with the one denoting the parametric module
  mapping a monad $T$ on $𝐒𝐞𝐭^ℙ$ to $T$ as a module over itself
  (Example~\ref{ex:basic-param-mod})
  in the sense that, for example, the functor underlying the module $T$ coincides
  with the functor underlying the monad $T$.
\end{rem}
\begin{exa}\label{ex:functors}
  The arities for application binary trees will be given by $𝐈_A → Θ_A$
  and $Θ_{A→B}×Θ_A → Θ_B$, for all types $A$ and $B$.
\end{exa}

\subsection{The register \texorpdfstring{$𝐑𝐞𝐠⁰[𝐒𝐞𝐭^ℙ,𝐒𝐞𝐭^𝕊]_f$}{Reg0[SetP,SetS]f} for specifying operations}
In order to adapt the notion of signature for operations from
$𝐑𝐞𝐠𝐌𝐧𝐝⁰_f(𝐒𝐞𝐭^ℙ)$, we merely need to adapt the notion of elementariness,
which becomes the following:
\begin{defi}
  A facet for $[𝐒𝐞𝐭^ℙ,𝐒𝐞𝐭^𝕊]_f$ is \alert{elementary} if it is
  isomorphic to some finite product of facets of the shape
  $(H∘⟨𝐈,Θ⟩)^{(p₁,…,pₙ)}$ for some $p₁,…,pₙ ∈ ℙ$ and finitary functor
  $H∶ 𝐒𝐞𝐭^ℙ×𝐒𝐞𝐭^𝕊 → 𝐒𝐞𝐭$.
\end{defi}
\begin{rem}
  The notation $(H∘⟨𝐈,Θ⟩)^{(p₁,…,pₙ)}$ deserves some explanation.
  We have $𝐈∶ [𝐒𝐞𝐭^ℙ,𝐒𝐞𝐭^𝕊] → [𝐒𝐞𝐭^ℙ,𝐒𝐞𝐭^ℙ]$
  and $Θ∶ [𝐒𝐞𝐭^ℙ,𝐒𝐞𝐭^𝕊] → [𝐒𝐞𝐭^ℙ,𝐒𝐞𝐭^𝕊]$, and we mean
  \[(H∘⟨𝐈,Θ⟩)^{(p₁,…,pₙ)}(P)(C) ≔ H (C,P(C)).\]
\end{rem}
\begin{exa}
  Typically, any product of facets of the shape $𝐈ₚ^{(p₁,…,pₙ)}$ or
  $Θ_s^{(p₁,…,pₙ)}$, for some $p,p₁,…,pₙ ∈ ℙ$ and $s ∈ 𝕊$, is
  elementary.
\end{exa}

\begin{defi}
  A \alert{facet-based signature} is a family of pairs $(d,s)$ consisting 
  of an elementary facet $d$ and a transition type $s ∈ 𝕊$.
\end{defi}
    
\begin{defi}
  For any facet-based signature $S = (dᵢ,sᵢ)_{i ∈ I}$, an
  \alert{$S$-algebra} is a finitary functor $F∶ 𝐒𝐞𝐭^ℙ → 𝐒𝐞𝐭^𝕊$,
  equipped with natural transformations $dᵢ(F) → Θ_{sᵢ}(F)$ for all 
  $i ∈ I$.  A \alert{morphism of $S$-algebras} is a natural
  transformation commuting with these morphisms.  Let $S\alg$ denote
  the category of $S$-algebras, and $U_S∶ S\alg → [𝐒𝐞𝐭^ℙ,𝐒𝐞𝐭^𝕊]_f$ the
  forgetful functor.
\end{defi}

  \begin{defi}\label{def:elemfunc}
    For any sets $ℙ$ and $𝕊$, we define the monadic register
    $𝐑𝐞𝐠⁰[𝐒𝐞𝐭^ℙ,𝐒𝐞𝐭^𝕊]_f$ as follows.
    \begin{description}
    \item[Signatures] A signature is a facet-based signature.
    \item[Semantics] The \semsig associated to any facet-based
      signature $S$ is the forgetful functor
      $S\alg → [𝐒𝐞𝐭^ℙ,𝐒𝐞𝐭^𝕊]_f$.
\end{description}
\end{defi}
\begin{proof}[{\rm \bf Validity proof:}\nopunct]
  By Proposition~\ref{prop:SigmaSMon:Salgpre} below.
\end{proof}

\subsection{The register \texorpdfstring{$𝐑𝐞𝐠[𝐒𝐞𝐭^ℙ,𝐒𝐞𝐭^𝕊]_f$}{𝐑𝐞𝐠[SetP,SetS]f}}

\begin{defi}
  Given any facet-based signatures $S$ and $S'$, an \alert{$S$-derived
    operation of arity $S'$} is a functor $L∶ S\alg → S'\alg$ over
  $[𝐒𝐞𝐭^ℙ,𝐒𝐞𝐭^𝕊]_f$, i.e., making the following triangle commute.
    \begin{center}
      \diag{%
        S\alg \& \& S'\alg \\
        \& {[𝐒𝐞𝐭^ℙ,𝐒𝐞𝐭^𝕊]_f}
      }{%
        (m-1-1) edge[labela={L}] (m-1-3) %
        edge[labelbl={}] (m-2-2) %
        (m-1-3) edge[labelbr={}] (m-2-2) %
      }
    \end{center}
    We call \alert{basic} the operations of $S$, by contrast with
    the \alert{derived} operations of $S'$.

    More concretely, as in Definition~\ref{def:derivedops}, we may
    introduce derived operations in two stages, as follows:
    \begin{itemize}
    \item an \alert{$S$-facet morphism} $M → N$ between facets $M$ and
      $N$ is a natural family of natural transformations
      $(α_F ∶M(F) ⟶ N(F))_{F∈ S\alg}$;
    \item letting $S' = (Vⱼ,qⱼ)_{j ∈ J}$, an $S$-derived operation $L$
      of arity $S'$ is a family of $S$-facet morphisms $Vⱼ → Θ_{qⱼ}$,
      for all $j ∈ J$.
    \end{itemize}
\end{defi}
\begin{rem}
  Concretely, an $S$-derived operation $L$ of arity $S'$ associates to
  each $S$-algebra $F$ a family of $S$-facet morphisms
  $Vⱼ(F) → F_{qⱼ}$, naturally in $F$.
\end{rem}

\begin{defi}
  An \alert{equational facet-based signature} consists of
  \begin{itemize}
  \item a facet-based signature $S$ called the \alert{operations}
    facet-based signature,
  \item a facet-based signature $S'$ called the \alert{equations}
    facet-based signature, and
  \item a pair of $S$-derived operations of arity $S'$.
\end{itemize}
  For any equational facet-based signature $E = (S,S',L,R)$, an
  \alert{$E$-algebra} is an $S$-algebra $F$ such that the $S'$-algebra
  structures $L(F)$ and $R(F)$ coincide, i.e., $L(F) = R(F)$.  A
  morphism of $E$-algebras is a morphism of $S$-algebras.  We let
  $E\alg$ denote the category of $E$-algebras and morphisms between
  them.
\end{defi}

\begin{defi}\label{def:PModFun}
  For any sets $ℙ$ and $𝕊$, we define the monadic register
  $𝐑𝐞𝐠[𝐒𝐞𝐭^ℙ,𝐒𝐞𝐭^𝕊]_f$ for $[𝐒𝐞𝐭^ℙ,𝐒𝐞𝐭^𝕊]_f$ as follows.
  \begin{description}
  \item[Signatures] A signature is an equational facet-based signature.
  \item[Semantics]
The \semsig associated to any equational facet-based
    signature $E$ is the forgetful functor $E\alg → [𝐒𝐞𝐭^ℙ,𝐒𝐞𝐭^𝕊]_f$.
\end{description}
\end{defi}
\begin{proof}[{\rm \bf Validity proof:}\nopunct]
By Corollary~\ref{cor:PSEmonadic} below.
\end{proof}

\begin{rem}\label{rk:statefuns}
  Any finitary functor $F$ admits a trivial signature consisting of
  the family $((F_s ∘ 𝐈) → Θ_s)_{s ∈ 𝕊}$ of operations, which does not
  prevent other signatures from being more convenient (see the case of
  Example~\ref{ex:functors}).  Here are a few examples
  from~§\ref{s:examples}:
  \begin{center}
    $\begin{array}{lcc}
       \toprule
       \mbox{Language} & \mbox{State functor} & \mbox{Trivial signature} \\
       \midrule
       \Tstrut\Bstrut\overline{λ}μ &
                                        \begin{array}{l}
                                          S₁(X) = S₂(X) \\
                                          = (X_𝐩 × X_𝐬, X_𝐩, X_𝐬)
                                          \end{array}
 &
   \begin{array}{rcl}
     ⟨-|-⟩∶& 𝐈_𝐩 × 𝐈_𝐬& → Θ_𝐜 \\
     η_𝐩 ∶ &𝐈_𝐩& → Θ_𝐩 \\
     η_𝐬 ∶ &𝐈_𝐬& → Θ_𝐬
   \end{array}
       \\ \midrule
      π & S₁(X) = S₂(X) = X_𝐩 & 𝐈_𝐩 → Θ \\ \midrule
      \mbox{Call-by-value, simply-typed $λ$} & S₂(X) = X & ηₜ∶ 𝐈ₜ → Θₜ \mbox{\ \
      (for all $t$)}\\ \midrule
      \mbox{Positive GSOS specifications} & S₁(X) = X & 𝐈 → Θ \\
                     & S₂(X) = 𝔸×X & 𝔸×𝐈 → Θ \\ \bottomrule
    \end{array}$
  \end{center}
\end{rem}

\begin{nota}\label{not:endoformat}  
  In examples, we will present equational facet-based signatures as
  families of equations. For this, we will use
  Notation~\ref{not:monadformat}, which extends to facet-based
  equations. E.g., Example~\ref{ex:paranot} applies verbatim for
  associativity of multiset union in the target state functor for
  differential $λ$-calculus.
\end{nota}

\subsection{Explicit description of initial algebras}
In this section, we state monadicity of the register
$𝐑𝐞𝐠[𝐒𝐞𝐭^ℙ,𝐒𝐞𝐭^𝕊]_f$ of equational facet-based signatures, and give our
explicit description of initial algebras.  We first deal with facet-based
signatures $S$, identifying the initial $S$-algebra as a free algebra
for a suitable endofunctor.  We then characterise the initial
$E$-algebra as a coequaliser of free algebras, for any equational
facet-based signature $E$.

  \begin{defi}\label{def:SigmaSpre}
    For any facet-based signature $S$, we define its \alert{associated
      endofunctor} $Σ_S$ on $[𝐒𝐞𝐭^ℙ,𝐒𝐞𝐭^𝕊]_f$ as follows.
    \begin{itemize}
    \item For any elementary $d = (F ∘ ⟨𝐈,Θ⟩)^{(p₁,…,pₙ)}$ and $s ∈ 𝕊$, 
      let
        \[Σ_{(d,s)}(P) = (F ∘ ⟨\id,P⟩)^{(p₁,…,pₙ)}·𝐲_s\rlap{.}\]
  \item     For a family $S = (dᵢ,sᵢ)$, let
    $Σ_S ≔ ∑_{i ∈ I} Σ_{(dᵢ,sᵢ)}.$
  \end{itemize}
  \end{defi}
  \begin{rem}
    Explicitly, we have for any $(d,s)$:
    \begin{center}
    $Σ_{(d,s)}(P)(X) ≔ (F(X + ∑ᵢ 𝐲_{pᵢ},P(X + ∑ᵢ 𝐲_{pᵢ})))·𝐲_s\rlap{.}$
  \end{center}
  \end{rem}

    \begin{prop}\label{prop:monadic:premodular}
    For any facet-based signature $S$, the forgetful functor
    $U_S∶ S\alg → [𝐒𝐞𝐭^ℙ,𝐒𝐞𝐭^𝕊]_f$ is monadic. Furthermore, the free
    $S$-algebra on a functor $F ∈ [𝐒𝐞𝐭^ℙ,𝐒𝐞𝐭^𝕊]_f$ is the free
    $Σ_S$-algebra $Σ_S^*(F)$, as characterised in
    Proposition~\ref{prop:reiterman}.
  \end{prop}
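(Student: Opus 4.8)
The plan is to exhibit an isomorphism $S\alg ≅ Σ_S\alg$ of categories over $[𝐒𝐞𝐭^ℙ,𝐒𝐞𝐭^𝕊]_f$ and then invoke the endofunctor register $𝐄𝐅_𝐂$ (Definition~\ref{def:endos}, whose validity is Proposition~\ref{prop:reiterman}) for $𝐂 = [𝐒𝐞𝐭^ℙ,𝐒𝐞𝐭^𝕊]_f$. This base category is locally finitely presentable, hence cocomplete, by Proposition~\ref{prop:functor-finitely-pres} (applied to the presheaf categories $𝐒𝐞𝐭^ℙ$ and $𝐒𝐞𝐭^𝕊$). The crucial ingredient is the family of adjunctions $(-)·𝐲_s \dashv (-)_s$ between $[𝐒𝐞𝐭^ℙ,𝐒𝐞𝐭]_f$ and $[𝐒𝐞𝐭^ℙ,𝐒𝐞𝐭^𝕊]_f$, the evident analogue (with $𝕊$ in place of $ℙ$) of the adjunction recalled just before Proposition~\ref{prop:adjyev}; here $(-)_s$ is post-composition with evaluation at $s$, so that $Θ_s$ and $(-)_s$ agree on underlying functors.

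First I would record the pointwise identity $Σ_{(d,s)}(P) = d(P)·𝐲_s$. Writing $d = (H ∘ ⟨𝐈,Θ⟩)^{(p₁,…,pₙ)}$ (renaming the functor of Definition~\ref{def:SigmaSpre} to $H$ to avoid clashing with the free-algebra argument $F$), the facet $d$ satisfies $d(P)(X) = H(X + ∑ᵢ 𝐲_{pᵢ},\, P(X + ∑ᵢ 𝐲_{pᵢ}))$, and applying $(-)·𝐲_s$ pointwise reproduces exactly the explicit formula for $Σ_{(d,s)}(P)$ given in the remark following Definition~\ref{def:SigmaSpre}. In other words $Σ_{(d,s)} = (-)·𝐲_s ∘ d$ as endofunctors of $[𝐒𝐞𝐭^ℙ,𝐒𝐞𝐭^𝕊]_f$.

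The heart of the argument — and the step I expect to carry the real content — is the isomorphism of algebra categories. By the universal property of the coproduct $Σ_S = ∑ᵢ Σ_{(dᵢ,sᵢ)}$, a $Σ_S$-algebra structure on $F$ is the same datum as a family of maps $Σ_{(dᵢ,sᵢ)}(F) = dᵢ(F)·𝐲_{sᵢ} → F$, one for each $i ∈ I$. By the adjunction $(-)·𝐲_{sᵢ} \dashv (-)_{sᵢ}$, each such map transposes to a natural transformation $dᵢ(F) → F_{sᵢ} = Θ_{sᵢ}(F)$, that is, to precisely an operation of arity $(dᵢ,sᵢ)$ in the sense of the definition of $S$-algebra. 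This bijection is natural in $F$, and a morphism in either category is by definition a morphism of underlying functors commuting with the respective structure maps; since transposition along an adjunction is natural, compatibility on one side is equivalent to compatibility on the other. This yields the desired isomorphism $S\alg ≅ Σ_S\alg$ commuting with the two forgetful functors to $[𝐒𝐞𝐭^ℙ,𝐒𝐞𝐭^𝕊]_f$.

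Finally I would check that $Σ_S$ is finitary, so that the endofunctor register applies. Each facet $dᵢ$ is finitary by definition, and $(-)·𝐲_{sᵢ}$, being a left adjoint, is cocontinuous, hence finitary; thus each $Σ_{(dᵢ,sᵢ)} = (-)·𝐲_{sᵢ} ∘ dᵢ$ is finitary. The coproduct $Σ_S = ∑ᵢ Σ_{(dᵢ,sᵢ)}$ is then finitary because, colimits in $[𝐒𝐞𝐭^ℙ,𝐒𝐞𝐭^𝕊]_f$ being pointwise and colimits commuting with colimits, coproducts commute with filtered colimits. Consequently, by Proposition~\ref{prop:reiterman}, the forgetful functor $Σ_S\alg → [𝐒𝐞𝐭^ℙ,𝐒𝐞𝐭^𝕊]_f$ is monadic and its left adjoint sends any $F$ to the free $Σ_S$-algebra $Σ_S^*(F)$. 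Transporting along the isomorphism above, $U_S$ is monadic and the free $S$-algebra on $F$ is $Σ_S^*(F)$, as claimed. The only delicate point is the naturality/compatibility check in the isomorphism of algebra categories; everything else is formal.
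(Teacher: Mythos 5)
Your proposal is correct and follows essentially the same route as the paper: the paper proves this proposition as a direct consequence of Proposition~\ref{prop:SigmaSMon:Salgpre}, whose proof is exactly your middle step — transposing the operations $d_i(X) \to X_{s_i}$ along the adjunction $(-)\cdot\mathbf{y}_{s_i} \dashv (-)_{s_i}$ and copairing to obtain the isomorphism $S\alg \cong \Sigma_S\alg$ over $[\mathbf{Set}^{\mathbb{P}},\mathbf{Set}^{\mathbb{S}}]_f$, after which Proposition~\ref{prop:reiterman} gives monadicity and the free algebras. Your explicit verification that $\Sigma_S$ is finitary (needed for the endofunctor register to apply) is left implicit in the paper's claim that $S \mapsto \Sigma_S$ is a compilation, so including it is a welcome, if minor, addition.
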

  \begin{proof}
    A direct consequence of Proposition~\ref{prop:SigmaSMon:Salgpre}
    below.
  \end{proof}

  We now want to characterise the initial $E$-algebra, for any
  equational facet-based signature $E$. The development closely
  follows~§\ref{ss:ia:mnd}.

\begin{defi}
  Let $E = (S,S',L,R)$ denote any equational facet-based signature,
  with $S = (Vᵢ,pᵢ)_{i ∈ I}$ and $S' = (Wⱼ,qⱼ)_{j ∈ J}$.
  \begin{itemize} 
  \item Let $S+S'$ denote the ``disjoint union'', i.e., the
    facet-based signature $(Uₖ,rₖ)_{k ∈ K}$, where
    \begin{itemize}
    \item $K = I+J$,
  \item $(Uₖ,rₖ)$ is
    \begin{itemize}
    \item $(Vᵢ,pᵢ)$ if $k = in₁(i)$ and
    \item  $(Wⱼ,qⱼ)$ if $k = in₂(j)$.
    \end{itemize}
  \end{itemize}
\item The $S'$-algebra structures given by $L(\spec{S})$ and
  $R(\spec{S})$ on $\spec{S}$, together with its canonical $S$-algebra
  structure, yield two $(S+S')$-algebra structures.
  By initiality of $\spec{S+S'}$, we thus obtain
  two $(S+S')$-algebra morphisms
  \begin{equation} \label{eq:LR''monfacets}
    \check{L},\check{R}∶  \spec{S+S'} → \spec{S}.
  \end{equation}    
\end{itemize}
\end{defi}

\begin{thm}\label{prop:statefuns}
    For any equational facet-based signature $E= (S,S',L,R)$,
  the coequaliser of the pair
  $\check{L},\check{R}∶ \spec{S+S'} → \spec{S}$ in $[𝐒𝐞𝐭^ℙ,𝐒𝐞𝐭^𝕊]_f$
  admits a unique $S$-algebra structure such that the coequalising
  morphism is an $S$-algebra morphism.  Furthermore, this structure is
  in fact an $E$-algebra structure. Finally, it makes the coequaliser
  into an initial $E$-algebra.
\end{thm}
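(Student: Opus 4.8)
The plan is to transpose the proof of Theorem~\ref{thm:modmnd} to the present, functorial setting, with the base category $𝐌𝐧𝐝_f(𝐒𝐞𝐭^ℙ)$ replaced by $𝐁 ≔ [𝐒𝐞𝐭^ℙ,𝐒𝐞𝐭^𝕊]_f$ and the identity monad replaced by the initial object $0 ∈ 𝐁$. First I would record the monadic reduction: by Proposition~\ref{prop:monadic:premodular} the signatures $S$, $S'$ and $S+S'$ are monadic, so that $S\alg ≅ (S^⋆)\alg$ for a finitary monad $S^⋆$ on $𝐁$, and moreover $S^⋆ = Σ_S^*$ is the free-monad construction on the associated endofunctor of Definition~\ref{def:SigmaSpre}; in particular $\spec{S} = S^⋆(0)$, and likewise for $S'$ and $S+S'$.

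Next I would identify the category of $E$-algebras as a coequaliser of monads. The derived operations $L,R$ are functors $S\alg → S'\alg$ over $𝐁$, hence by Corollary~\ref{cor:street} correspond to a parallel pair of monad morphisms $\ell,r ∶ (S')^⋆ ⇉ S^⋆$; and $E\alg$ is by definition the equaliser of $L,R$ in $𝐂𝐀𝐓$. Corollary~\ref{cor:monadicequaliser} then shows that $E\alg → 𝐁$ is finitary monadic — which in particular provides the initial $E$-algebra $\spec{E} = E^⋆(0)$, discharging the validity proof — and that, through the (contravariant) equivalence of Corollary~\ref{cor:street}, its monad $E^⋆$ is the coequaliser of $\ell,r$ in $𝐌𝐧𝐝_f(𝐁)$. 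The three assertions of the theorem are then precisely the translation, via Corollary~\ref{cor:street}, of the single fact $E^⋆ = \mathrm{coeq}(\ell,r)$ once evaluated at $0$.

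It remains to compute $E^⋆(0)$ as the stated coequaliser in $𝐁$. Since $S^⋆ = Σ_S^*$ and the free-monad functor $[𝐁,𝐁]_f → 𝐌𝐧𝐝_f(𝐁)$ is a left adjoint, it preserves coproducts, so $(S+S')^⋆ = (Σ_S+Σ_{S'})^* ≅ S^⋆ + (S')^⋆$ in $𝐌𝐧𝐝_f(𝐁)$. Using the coproduct injection $S^⋆ → S^⋆+(S')^⋆$ as a common section, the pair $\ell,r$ is replaced by the reflexive pair $\check\ell,\check r ≔ [\id,\ell],[\id,r] ∶ (S+S')^⋆ ⇉ S^⋆$, which has the same coequaliser $E^⋆$. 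Now comes the one nontrivial input, the subject of §\ref{s:friendliness}: because the associated endofunctors $Σ_S$ are built from finite products, coproducts and the facet constructions, the monads at hand preserve reflexive coequalisers, so that the forgetful functor $𝐌𝐧𝐝_f(𝐁) → [𝐁,𝐁]_f$ creates reflexive coequalisers. Hence $E^⋆$, as a functor, is the pointwise coequaliser of $\check\ell,\check r$; as colimits in $[𝐁,𝐁]_f$ are objectwise, evaluating at $0$ gives $\spec{E} = E^⋆(0) = \mathrm{coeq}(\check\ell_0,\check r_0)$ in $𝐁$. Finally, $\check\ell_0,\check r_0$ are exactly the maps $\check L,\check R ∶ \spec{S+S'} ⇉ \spec{S}$ of~\eqref{eq:LR''monfacets} (both being the $(S+S')$-algebra morphisms determined by initiality; reflexivity of $(\check L,\check R)$ can also be seen directly, the common section being the unique $S$-algebra map $\spec{S} → \spec{S+S'}$), which yields the theorem's formula together with the $S$- and $E$-algebra structures and the initiality of the coequaliser.

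The main obstacle is the creation result for reflexive coequalisers of the relevant finitary monads invoked in the third paragraph — the substantive content deferred to §\ref{s:friendliness}, and exactly the point that fails for non-reflexive coequalisers of monads (as illustrated by the remark following Theorem~\ref{thm:modmnd}). The two subsidiary verifications — that the signature-level disjoint union $S+S'$ computes the monad coproduct $S^⋆+(S')^⋆$, and that $\check\ell_0,\check r_0$ agree with the maps of~\eqref{eq:LR''monfacets} — are routine once the free-monad adjunction and initiality are in place.
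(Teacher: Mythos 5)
Your proposal is correct and takes essentially the same route as the paper: the paper compiles $E$ into an equational system $𝔼_E = (Σ_{S'} ⊢ L' = R' : Σ_S)$ over $[𝐒𝐞𝐭^ℙ,𝐒𝐞𝐭^𝕊]_f$ (via Proposition~\ref{prop:SigmaSMon:Salgpre}) and then invokes Theorem~\ref{thm:ES}, whose proof is precisely the argument you inline — monadicity of $E\alg → [𝐒𝐞𝐭^ℙ,𝐒𝐞𝐭^𝕊]_f$ via the equaliser in $𝐂𝐀𝐓$ and Corollary~\ref{cor:monadicequaliser}, the coequaliser-of-monads description via Corollary~\ref{cor:street}, passage to the reflexive pair using that the free-monad functor preserves coproducts, and pointwise computation via the friendliness machinery of §\ref{s:friendliness} (Corollaries~\ref{cor:freemonad:algebraic} and~\ref{cor:equaliser-monads}). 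The only difference is organisational (you inline the proof of Theorem~\ref{thm:ES} instead of factoring through the compilation into $𝐄𝐒_{[𝐒𝐞𝐭^ℙ,𝐒𝐞𝐭^𝕊]_f}$), and you correctly isolate the one substantive input, creation of reflexive coequalisers of friendly monads, exactly as the paper does.
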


\section{General registers}\label{sec:gen-signatures}
We have now introduced all the needed registers for our register
$𝐑𝐞𝐠𝐓𝐫𝐚𝐧𝐬𝐌𝐧𝐝_{ℙ,𝕊}$ of Definition~\ref{d:omndreg} to make sense,
including two registers featuring equations, respectively for monads
(Definition~\ref{def:modmnd}) and functors
(Definition~\ref{def:PModFun}).  In this section, in preparation for
the missing validity proofs and the announced explicit descriptions of
initial algebras, we introduce a fundamental register featuring
equations, for a general category, $𝐄𝐒_𝐂$, whose signatures are Fiore
and Hur's equational systems~\cite{FioreHurEquational}.  In order to
deal more specifically with monads, we then introduce a second
register, $𝐏𝐒𝐄𝐅_𝐂$, based on Fiore, Plotkin, and Turi's pointed strong
endofunctors, which incorporates variable binding and substitution.
We then ``merge'' both registers into a single register $𝐌𝐄𝐒_𝐂$,
suited for syntax with variable binding, substitution, and equations.
Proofs are again deferred to~§\ref{s:friendliness}.

\subsection{Equational systems}\label{ss:ES}
A general device for constructing monadic functors is Fiore and
Hur's \alert{equational systems}~\cite{FioreHurEquational}.  In this
section, we view equational systems on a category $𝐂$ as
the signatures of a register $𝐄𝐒_𝐂$ for $𝐂$, called the \alert{equational
  register}, which refines with equations the endofunctorial register
$𝐄𝐅_𝐂$ (see Definition~\ref{def:endos}).

Briefly, an equational system consists of two parts:
\begin{itemize}
\item an endofunctor $Σ$ on $𝐂$, which intuitively specifies operations,
\item and equations.
\end{itemize}

We present them in a slightly non-standard way, and make the link with
the original in Propositions~\ref{prop:alteqsys}
and~\ref{prop:alteqsysalg} below.
\begin{defi}
  For any endofunctor $Γ$ and monad $T$ on a category $𝐂$, a
  \alert{functorial $T$-term} of arity $Γ$ is a natural transformation
  $Γ → T$.
\end{defi}
\begin{exa}\label{ex:assocES}
  Let us consider the finitary endofunctor $Σ$ on sets defined by
  $Σ(X) = X²$, so that $Σ^*(X)$ denotes the set of binary trees with
  leaves in $X$, as generated by the following grammar,
  \[e,f \Coloneqq x ｜ \ope(e,f)\]
  where $x$ ranges over $X$.
  Since $Σ$-algebras are sets equipped with a binary operation, a
  natural equation to impose on them is associativity.  Taking $Γ(X)=X³$,
  the relevant functorial $Σ^*$-terms $L$ and $R$ of arity $Γ$ are defined by
  $L(x₁,x₂,x₃) = \ope(\ope(x₁,x₂),x₃)$, and
  $R(x₁,x₂,x₃) = \ope(x₁,\ope(x₂,x₃))$.
\end{exa}
\begin{exa}
  We will do this right in Example~\ref{ex:betaeq} below, but for
  illustrative purposes, let us describe a failed attempt at
  specifying pure $λ$-terms modulo $β$ by an equational system.  We
  first take $𝐂 = [𝐒𝐞𝐭,𝐒𝐞𝐭]_f$ to consist of finitary endofunctors on
  sets, and $Σ(X)(n) = n + X(n)² + X(n+1)$. The first summand models
  variables, the second application, and the third, abstraction.
  Indeed, any algebra $X$ comes equipped with maps
  $\appₙ∶ X(n)² → X (n)$ and $λₙ∶ X(n+1) → X(n)$, for all $n$. The
  first member of the $β$-equation, $\appₙ(λₙ(e),f)$, would be
  modelled by setting $Γ(X)(n) = X(n+1) × X(n)$ and taking the natural
  transformation $Γ → Σ^*$ mapping any $(e,f) ∈ X(n+1)×X(n)$ to
  $\appₙ(λₙ(e),f)$ (omitting the monad unit $η∶ \id → Σ^*$).  This
  works fine, but we have neglected to build substitution into the
  model, so we cannot define the right-hand side of $β$.  This will be
  rectified in Example~\ref{ex:betaeq}, after observing that (part of)
  $Σ$ is in fact pointed strong in Example~\ref{ex:lam}.
\end{exa}
The following should now look natural.
\begin{defi}\label{def:eqsys}
  An \alert{equational system} $𝔼 = (𝐂∶ Γ ⊢ L = R : Σ)$ consists of
  \begin{itemize}
  \item a locally finitely presentable category $𝐂$, 
  \item finitary endofunctors $Σ$ and $Γ$, together with
  \item functorial $Σ^*$-terms $L$ and $R$ of arity $Γ$.
\end{itemize}
\end{defi}

This differs slightly from Fiore and Hur's~\cite{FioreHurEquational}
definition, so let us readily bridge the gap.

\begin{prop}\label{prop:alteqsys}
  Given any finitary endofunctors $Σ$ and $Γ$ on a locally finitely
  presentable category $𝐂$, there are natural isomorphisms (in $Σ$ and
  $Γ$) between
  \begin{enumerati}
  \item \label{item:GS} functorial $Σ^*$-terms of arity $Γ$, i.e., natural
    transformations $Γ → Σ^*$,
  \item \label{item:monads} monad morphisms $Γ^* → Σ^*$,
  \item \label{item:monadic} functors $Σ\alg → Γ\alg$ over $𝐂$,
  \item \label{item:reflmonads} monad morphisms $(Σ+Γ)^* → Σ^*$ with
    section the canonical morphism $Σ^* → (Σ+Γ)^*$,
  \item \label{item:reflmonadic} functors $Σ\alg → (Σ+Γ)\alg$ over $𝐂$
    which are sections of the canonical functor $(Σ+Γ)\alg → Σ\alg$,
    and
  \item \label{item:ded} natural transformations $Γ ∘ U_Σ → U_Σ$.
  \end{enumerati}
\end{prop}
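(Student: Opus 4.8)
The plan is to establish the five basic bijections connecting the six descriptions, using two tools: the standard universal property of the free monad $F^*$ on a finitary endofunctor $F$ (whose existence, monadicity, and identification $F\alg ≅ F^*\alg$ over $𝐂$ are supplied by Proposition~\ref{prop:reiterman}), namely that monad morphisms $F^* → M$ correspond naturally to natural transformations $F → M$ for every monad $M$; and the equivalence $(-)\alg \colon \op{𝐌𝐧𝐝(𝐂)} → 𝐌𝐨𝐧𝐚𝐝𝐢𝐜/𝐂$ of Proposition~\ref{prop:street} (in its finitary form, Corollary~\ref{cor:street}). Since $Σ$ and $Γ$ are finitary, so are $Σ^*$, $Γ^*$, and $(Σ+Γ)^*$, and both tools apply.

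First I would treat \ref{item:GS} $↔$ \ref{item:monads}: natural transformations $Γ → Σ^*$ are exactly monad morphisms $Γ^* → Σ^*$ by the universal property of $Γ^*$. Next, \ref{item:monads} $↔$ \ref{item:monadic}: by the contravariant equivalence of Corollary~\ref{cor:street}, a monad morphism $Γ^* → Σ^*$ corresponds to a functor $Σ^*\alg → Γ^*\alg$ over $𝐂$, which under $Σ\alg ≅ Σ^*\alg$ and $Γ\alg ≅ Γ^*\alg$ is precisely a functor $Σ\alg → Γ\alg$ over $𝐂$. For \ref{item:monadic} $↔$ \ref{item:ded} I would argue directly: a functor $H \colon Σ\alg → Γ\alg$ over $𝐂$ must send a $Σ$-algebra $(X,a)$ to a $Γ$-algebra on the same carrier $U_Σ(X,a)=X$, i.e.\ to a map $Γ X → X$, and functoriality over $𝐂$ says exactly that this family is natural in $(X,a)$; so such functors coincide with $Γ$-algebra structures on the forgetful functor $U_Σ$, that is, with natural transformations $Γ∘U_Σ → U_Σ$.

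The reflexive descriptions \ref{item:reflmonads} and \ref{item:reflmonadic} I would reduce to the hub. Writing $κ^Σ \colon Σ → Σ^*$ for the universal transformation and $c \colon Σ^* → (Σ+Γ)^*$ for the canonical morphism, a monad morphism $β \colon (Σ+Γ)^* → Σ^*$ corresponds, by freeness of $(Σ+Γ)^*$ on $Σ+Γ$, to a copair $[f,g]$ with $f \colon Σ → Σ^*$ and $g \colon Γ → Σ^*$. Precomposing the section equation $β∘c = \Id_{Σ^*}$ with $κ^Σ$ and using $c∘κ^Σ = κ^{Σ+Γ}∘\iota_Σ$ shows it is equivalent (since a monad morphism out of $Σ^*$ is determined by its restriction along $κ^Σ$) to $f = κ^Σ$, leaving $g$ arbitrary; hence sectioned morphisms $β$ are in natural bijection with the natural transformations $g \colon Γ → Σ^*$, giving \ref{item:reflmonads} $↔$ \ref{item:GS}. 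Finally \ref{item:reflmonads} $↔$ \ref{item:reflmonadic} is again Proposition~\ref{prop:street}: the equivalence is contravariant, so $β∘c=\Id$ becomes $c\alg∘β\alg=\Id$, turning a monad morphism for which $c$ is a section into a functor $Σ\alg → (Σ+Γ)\alg$ over $𝐂$ that is a section of the canonical functor $(Σ+Γ)\alg → Σ\alg$, and conversely.

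Naturality in $Σ$ and $Γ$ holds throughout because each bijection is assembled from universal properties (of $Γ^*$ and $(Σ+Γ)^*$) and from the equivalence of Proposition~\ref{prop:street}, all of which are natural. The step I expect to require the most care is the section computation relating \ref{item:GS}, \ref{item:reflmonads}, and \ref{item:reflmonadic}: one must track precisely which composite the word ``section'' refers to and verify that the contravariance of the monad/monadic equivalence sends sections to sections rather than to retractions — a sign error here would interchange the two reflexive descriptions.
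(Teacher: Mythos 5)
Your proof is correct, and on the hub items it coincides with the paper's: \emph{(i)}$⇔$\emph{(ii)} by the universal property of $Γ^*$, \emph{(ii)}$⇔$\emph{(iii)} by Corollary~\ref{cor:street}, and \emph{(iii)}$⇔$\emph{(vi)} by unfolding what a functor over $𝐂$ is (the paper states this last step without the details you supply). The genuine difference is in how the two reflexive items are attached. The paper proves \emph{(ii)}$⇔$\emph{(iv)} via the isomorphism $(Σ+Γ)^* ≅ Σ^* + Γ^*$, which holds because $(-)^*$ is a left adjoint and hence preserves coproducts, and proves \emph{(iii)}$⇔$\emph{(v)} via the pullback decomposition $(Σ+Γ)\alg ≅ {Σ\alg} ×_𝐂 {Γ\alg}$, under which a section over $𝐂$ of the left projection is exactly a functor $Σ\alg → Γ\alg$ over $𝐂$. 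You instead attach \emph{(iv)} directly to \emph{(i)} by freeness of $(Σ+Γ)^*$ on $Σ+Γ$, computing that the section equation pins the $Σ$-component of the copair to the universal transformation $Σ → Σ^*$ and leaves the $Γ$-component free — the same substance as the paper's coproduct argument, carried out one level down — and you then derive \emph{(v)} from \emph{(iv)} by transporting the section equation across the contravariant equivalence of Proposition~\ref{prop:street}. Both routes are sound. The paper's treatment of \emph{(v)} stays self-contained on the algebra side and needs no bookkeeping about how the equivalence acts on the canonical morphism $Σ^* → (Σ+Γ)^*$; yours buys uniformity, since everything is transported from the monad side by a single equivalence, at the price of exactly the care you flag at the end — that contravariance turns ``$c$ is a section of $β$'' into ``$β\alg$ is a section of $c\alg$'', and that $c\alg$ is indeed the canonical restriction functor $(Σ+Γ)\alg → Σ\alg$ — both of which you handle correctly.
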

\begin{proof} \ \hfill
  \begin{itemize}
  \item \ref{item:GS} $⇔$ \ref{item:monads} This is precisely the
    universal property of $Γ^*$.
  \item \ref{item:monads} $⇔$ \ref{item:monadic} This is precisely
    Corollary~\ref{cor:street}.
  \item \ref{item:monads} $⇔$ \ref{item:reflmonads} By
    $(Σ+Γ)^* ≅ Σ^* + Γ^*$, which holds (in $𝐌𝐧𝐝_f(𝐂)$) because $(-)^*$
    is a left adjoint, hence preserves coproducts.
  \item \ref{item:monadic} $⇔$ \ref{item:reflmonadic} By
    $(Σ+Γ)\alg ≅ {Σ\alg} ×_𝐂 {Γ\alg}$, with the canonical functor
    $(Σ+Γ)\alg → Σ\alg$ as left projection.
  \item \ref{item:monadic} $⇔$ \ref{item:ded} By merely unfolding
    definitions. \qedhere
\end{itemize}
\end{proof}

\begin{rem}\label{rk:FH}
  The expert will have noticed that Fiore and Hur's equational systems
  use~\ref{item:monadic}, in a slightly more general setting:
  they do not assume $𝐂$ to be locally finitely presentable, nor $Σ$
  and $Γ$ to be finitary.
\end{rem}

Let us now define the models of an equational system.

\begin{defi}
  For any equational system $𝔼 = (𝐂∶ Γ ⊢ L = R : Σ)$, an
  \alert{$𝔼$-algebra} is a $Σ$-algebra $X$ whose induced $Σ^*$-algebra
  structure coequalises
  \[L_X,R_X∶ Γ(X) ⇉ Σ^*(X).\]
  Let $𝔼\alg$ denote the category of $𝔼$-algebras, with $Σ$-algebra
  morphisms between them, and let $U_𝔼∶ 𝔼\alg → 𝐂$ denote the
  forgetful functor.
\end{defi}

Let us readily transfer this definition across the various
correspondences of Proposition~\ref{prop:alteqsys}.
\begin{prop}\label{prop:alteqsysalg}
  For any equational system $𝔼 = (𝐂∶ Γ ⊢ L = R : Σ)$ and $Σ^*$-algebra
  $a∶ Σ^*(X) → X$, the following are equivalent:
  \begin{enumerata}
  \item \label{item:alg:GS} $a$ coequalises $L_X,R_X∶ Γ(X) → Σ^*(X)$,
  \item \label{item:alg:monads} $a$ coequalises the corresponding morphisms $Γ^*(X) → Σ^*(X)$,
  \item \label{item:alg:monadic} the induced $Σ$-algebra structure
    $Σ(X) → X$ belongs to the equaliser of the corresponding functors
    $Σ\alg → Γ\alg$ over $𝐂$,
  \item \label{item:alg:reflmonads} $a$ coequalises the corresponding
    morphisms $(Σ+Γ)^*(X) → Σ^*(X)$,
  \item \label{item:alg:reflmonadic} the induced $Σ$-algebra structure
    $Σ(X) → X$ belongs to the equaliser of the corresponding functors
    $Σ\alg → (Σ+Γ)\alg$ over $𝐂$,
  \item \label{item:alg:ded} the corresponding natural transformations
    $Γ ∘ U_Σ → U_Σ$ have the same components at the induced
    $Σ$-algebra structure $Σ(X) → X$.
  \end{enumerata}
\end{prop}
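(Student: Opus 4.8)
The plan is to transport each correspondence of Proposition~\ref{prop:alteqsys} from the level of ``equation data'' down to the level of algebras, showing that all six conditions are equivalent to~\ref{item:alg:GS}, i.e.\ to the single equation $a ∘ L_X = a ∘ R_X$ between maps $Γ(X) → X$. The one tool I will use repeatedly is \emph{restriction of scalars}: a monad morphism $φ∶ M → N$ turns any Eilenberg--Moore $N$-algebra $b∶ N(X) → X$ into the $M$-algebra $b ∘ φ_X$, functorially and over $𝐂$. Combined with Corollary~\ref{cor:street}, the functors $Σ\alg → Γ\alg$ and $Σ\alg → (Σ+Γ)\alg$ appearing in \ref{item:alg:monadic} and \ref{item:alg:reflmonadic} are exactly these restrictions, and the maps of \ref{item:alg:monads} and \ref{item:alg:reflmonads} are the monad morphisms $\hat L,\hat R∶ Γ^* → Σ^*$ and $\check L,\check R∶ (Σ+Γ)^* → Σ^*$ produced in Proposition~\ref{prop:alteqsys}.

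The equivalences \ref{item:alg:monadic}, \ref{item:alg:reflmonadic}, \ref{item:alg:ded} $⇔$ \ref{item:alg:GS} are obtained by direct unfolding. Under Corollary~\ref{cor:street} the functor $Σ\alg → Γ\alg$ induced by $L$ sends $a$ to the $Γ$-algebra $a ∘ L_X∶ Γ(X) → X$, and similarly for $R$; since both functors are over $𝐂$ the carriers coincide, so $a$ lies in their equaliser iff $a ∘ L_X = a ∘ R_X$, giving \ref{item:alg:monadic} $⇔$ \ref{item:alg:GS}. The natural transformation $Γ ∘ U_Σ → U_Σ$ of \ref{item:alg:ded} has component $a ∘ L_X$ (resp.\ $a ∘ R_X$) at $a$, so equality of components is again \ref{item:alg:GS}. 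For \ref{item:alg:reflmonadic}, under $(Σ+Γ)\alg ≅ Σ\alg ×_𝐂 Γ\alg$ the two sections $Σ\alg → (Σ+Γ)\alg$ send $a$ to the $(Σ+Γ)$-algebras whose $Σ$-component is the $Σ$-algebra underlying $a$ and whose $Γ$-components are $a ∘ L_X$ and $a ∘ R_X$; these agree exactly when $a ∘ L_X = a ∘ R_X$.

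The two remaining equivalences, \ref{item:alg:monads} and \ref{item:alg:reflmonads} $⇔$ \ref{item:alg:GS}, are where the only genuine subtlety lies. The maps $a ∘ \hat L_X,\, a ∘ \hat R_X∶ Γ^*(X) → X$ (resp.\ $a ∘ \check L_X,\, a ∘ \check R_X∶ (Σ+Γ)^*(X) → X$) are Eilenberg--Moore algebra structures for the free monad $Γ^*$ (resp.\ $(Σ+Γ)^*$), obtained from $a$ by restriction of scalars. Precomposing with the universal transformation $Γ → Γ^*$ gives one implication for free, since $\hat L$ was built so that this composite is $L$, whence \ref{item:alg:monads} $⇒$ \ref{item:alg:GS}. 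The converse, however, cannot be read off by precomposition alone: it rests on the fact that such a structure is completely determined by the algebra for the \emph{generating} endofunctor it induces, the correspondence $Γ^*\alg ≅ Γ\alg$ being an isomorphism of categories over $𝐂$. Thus $a ∘ \hat L_X = a ∘ \hat R_X$ iff their $Γ$-restrictions $a ∘ L_X$ and $a ∘ R_X$ agree, which is precisely \ref{item:alg:GS}.

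The identical argument for $Σ+Γ$ yields \ref{item:alg:reflmonads} $⇔$ \ref{item:alg:GS}: since $\check L$ and $\check R$ restrict to the identity on the $Σ$-part, the two induced $(Σ+Γ)$-algebras on $X$ share their $Σ$-component and differ only through $a ∘ L_X$ versus $a ∘ R_X$, so the rigidity of free-monad algebras reduces their equality to \ref{item:alg:GS} once more. I expect this free-monad rigidity step, i.e.\ the passage from agreement of two Eilenberg--Moore structures to agreement of the underlying endofunctor-algebra structures, to be the crux; everything else is routine bookkeeping along the isomorphisms already established in Proposition~\ref{prop:alteqsys}.
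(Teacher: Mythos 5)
Your proof is correct and follows essentially the same route as the paper: both unfold \ref{item:alg:GS} $⇔$ \ref{item:alg:monadic} (and similarly \ref{item:alg:reflmonadic}, \ref{item:alg:ded}) directly, prove \ref{item:alg:monads} $⇒$ \ref{item:alg:GS} by precomposing with $Γ → Γ^*$, and obtain the converse from the fact that an Eilenberg--Moore $Γ^*$-algebra structure is determined by the $Γ$-algebra structure it induces --- your ``free-monad rigidity'' is exactly the paper's step that equality of induced $Γ^*$-algebra structures is equivalent to equality of induced $Γ$-algebra structures. You simply spell out the cases the paper dispatches with ``the rest follows similarly''.
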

\begin{proof} \ \hfill
  \begin{itemize}
  \item \ref{item:alg:GS} $⇔$ \ref{item:alg:monadic} Each functor
    $Σ\alg → Γ\alg$ corresponding to $K ∈ \{L,R\}$ maps the induced
    $Σ$-algebra to $Γ(X) \xto{K} Σ^*(X) \xto{a} X$, so both sides
    unfold to the same thing.
  \item \ref{item:alg:monads} $⇒$ \ref{item:alg:GS} Follows from the
    fact that precomposing the induced morphisms $Γ^* → Σ^*$ by
    $Γ → Γ^*$ yields the original $L$ and $R$ by construction.
  \item \ref{item:alg:monadic} $⇒$ \ref{item:alg:monads} This holds
    because~\ref{item:alg:monads} is equivalent to equality of induced
    $Γ^*$-algebra structures, which is further equivalent to equality
    of induced $Γ$-algebra structures.  
  \end{itemize}
  The rest follows similarly.
\end{proof}
\begin{rem}
  The equivalence of~\ref{item:alg:GS} and~\ref{item:alg:monadic}
  entails that our notion of algebra coincides with Fiore and Hur's
  (apart from the differences noted in Remark~\ref{rk:FH}).
\end{rem}

\begin{rem}
  Families $(tᵢ = uᵢ)ᵢ$ of equations are covered by taking the
  coproduct of all involved endofunctors, say $Γᵢ$, and the pointwise
  copairing of functorial terms.
\end{rem}

\begin{defi}
  \label{def:eqsysreg}
  For a given locally finitely presentable category $𝐂$, we define the
  monadic register $𝐄𝐒_𝐂$, called the \alert{equational register}, as
  follows.
  \begin{description}
  \item[Signatures] A signature is an equational system.
  \item[Semantics] The \semsig associated to any signature $𝔼 $ is the
    forgetful functor $𝔼\alg → 𝐂$.
  \end{description}
\end{defi}
\begin{proof}[{\rm \bf Validity proof:}\nopunct]
  By Theorem~\ref{thm:ES} below.
\end{proof}

\begin{exa}
 Given a cocomplete category $𝐂$, the endofunctor register
  $𝐄𝐅_𝐂$ is a subregister of $𝐄𝐒_𝐂$, by mapping any finitary
  endofunctor $Σ$ to the (finitary) equational system given by
  taking $Γ = 0$, and the unique $L$ and $R$.
\end{exa}

\subsection{The register \texorpdfstring{$𝐏𝐒𝐄𝐅_𝐂$}{PSEFC} for monoids}\label{ss:PSEF}
In the previous subsection, we reviewed a fundamental register
available for general locally finitely presentable categories. Here,
we review another fundamental register called $𝐏𝐒𝐄𝐅_𝐂$, available for the
category of monoids in a convenient monoidal category. This register
is essentially due to Fiore, Plotkin, and Turi~\cite{fiore:presheaf}.
Signatures in the register $𝐏𝐒𝐄𝐅_𝐂$ will be pointed strong
endofunctors on the monoidal category $𝐂$.

The point of the register $𝐏𝐒𝐄𝐅_𝐂$ is to specify monads.  In fact,
monads may be specified by the previous register, $𝐄𝐒_𝐂$, but at the
cost of including in the signature
\begin{itemize}
\item operations for the monad multiplication and unit, and
\item equations for associativity and unitality.
\end{itemize}
Instead, the register $𝐏𝐒𝐄𝐅_𝐂$ will directly specify monoids in any
(sufficiently nice) monoidal category $𝐂$, hence be more
economical. This in particular covers the case of finitary monads on a
locally finitely presentable category.

Let us first describe the announced monadic \semsig.  We start by
recalling what a pointed strong endofunctor is, and showing how it
yields a parametric module (in the sense of
Definition~\ref{def:paramX}) on $𝐂$.

\mypar{Pointed strong endofunctors}
\begin{defi}
  \label{d:pstrength-endo}
  A \alert{pointed strong endofunctor} on a monoidal category
  $(𝐂,⊗,I)$ is an endofunctor $Σ$ equipped with a \alert{pointed
    strength}, i.e., a natural transformation with components
  \[st_{X,Y}∶ Σ(X)⊗Y → Σ(X⊗Y)\]
between functors $𝐂 × I/𝐂 → 𝐂$, making the following diagrams commute
(pointed objects $e_Y∶ I → Y$ are denoted by their codomains $Y$ for readability),
\begin{center}
  \diag{%
    (Σ(X)⊗Y)⊗Z \& \& Σ(X)⊗(Y⊗Z) \\
    Σ(X⊗Y)⊗Z \& Σ((X⊗Y)⊗Z) \& Σ (X⊗(Y⊗Z)) %
  }{%
    (m-1-1) edge[labela={α_{Σ(X),Y,Z}}] (m-1-3) %
    edge[labell={st_{X,Y}⊗Z}] (m-2-1) %
    (m-2-1) edge[labelb={st_{X⊗Y,Z}}] (m-2-2) %
    (m-2-2) edge[labelb={Σ(α_{X,Y,Z})}] (m-2-3) %
    (m-1-3) edge[labelr={st_{X,Y\dot{⊗}Z}}] (m-2-3) %
  }%
  \hfil
  \diag{%
    Σ(X) \& \& Σ(X)⊗I \\
     \& Σ(X⊗I) %
   }{%
     (m-1-1) edge[labela={ρ_{Σ(X)}}] (m-1-3) %
     edge[labelbl={Σ(ρ_X)}] (m-2-2) %
     (m-1-3) edge[labelbr={st_{X,\dot{I}}}] (m-2-2) %
  }%
  \end{center}
  where
  $I/𝐂$ inherits monoidal structure $(\dot{I},\dot{⊗},…)$ from
    $𝐂$ in the obvious way~\cite{Weber:genmorph}.

\end{defi}

\begin{rem}
  There is a simpler
  notion of \alert{strong} endofunctor, which also involves a natural
  transformation $st_{X,Y}∶ Σ(X)⊗Y → Σ(X⊗Y)$, although with $Y$ not
  pointed.  Strong endofunctors embed into pointed strong
  endofunctors, which are thus more general. The generalisation is
  necessary to cover variable binding, as shown by the following
  example.
\end{rem}

\begin{exa}\label{ex:lam}
  For defining $λ$-calculus syntax, we take $𝐂 ≔ [𝐒𝐞𝐭,𝐒𝐞𝐭]_f$ to
  be the category of finitary endofunctors on sets with the
  composition tensor product, and $Σ(X)(n) = X(n)² + X(n+1)$ (using
  the equivalence $[𝐒𝐞𝐭,𝐒𝐞𝐭]_f ≃ [𝐒𝐞𝐭_f,𝐒𝐞𝐭]$, where $𝐒𝐞𝐭_f$ is the
  full category spanning finite ordinals).  The tensor product is
  defined on $[𝐒𝐞𝐭_f,𝐒𝐞𝐭]$ by the coend formula~\cite[§IX.6]{MacLane:cwm}
  \[(X⊗Y)(n) = ∫ᵐX(m)×Y(n)ᵐ\rlap{,}\]
  which is in one-to-one correspondence with $X (Y (n))$.
  Elements of $(X⊗Y)(n)$ may be thought of as explicit
  substitutions $x⦇σ⦈$, with $x ∈ X(m)$ and $σ∶ m → Y(n)$, considered
  equivalent up to standard equations, compactly summarised by
  $(f·x)⦇σ⦈ = x⦇σ∘f⦈$, for all $f∶ m → p$, $x ∈ X(m)$ and
  $y∶ p → Y(n)$, where $f·x$ is shorthand for $X(f)(x)$.

  A monoid structure on any $X ∈ 𝐂$ thus amounts to
  \begin{itemize}
  \item a substitution operation $X⊗X → X$ mapping any such explicit
    substitution $x⦇σ⦈ ∈ (X⊗X)(n)$ to some proper substitution, which
    we denote by $x[σ] ∈ X(n)$,
  \item together with a morphism $I → X$, which, because
    $I(n) = \id(n) = n$, amounts to identifying available variables
    within each $X(n)$.
  \end{itemize}
  These data are required to satisfy the usual associativity and
  unitality conditions, which amount to standard substitution lemmas.

  Returning to pointed strengths
  $(st_{X,Y})ₙ∶ Σ(X) (Y(n)) → Σ (X (Y (n)))$, intuitively, they
  describe the behaviour of substitution w.r.t.\ application and
  abstraction.  E.g., for application, we define it to map
  $(in₁(x₁,x₂))⦇σ⦈$ to $in₁(x₁⦇σ⦈,x₂⦇σ⦈)$, which will ensure the usual
  equation $(e₁ e₂)[σ] = e₁[σ] e₂[σ]$.  Abstraction is the point where
  \alert{pointedness} of $st$ comes into play.  Indeed, supposing that
  $Y$ is equipped with a point $e_Y∶ I → Y$, we may define
  $σ^↑∶ m+1 → Y(n+1)$ by copairing of
  \begin{center}
    $m \xto{σ} Y(n) \xto{Y(in₁)} Y(n+1)$ \hfil and \hfil
    $1 \xto{(e_Y)₁} Y(1) \xto{Y(in₂)} Y(n+1)$.
  \end{center}
  We use this in defining the pointed strength to map any $in₂(x)⦇σ⦈$,
  where $x ∈ X(m+1)$ and $σ∶ m → Y(n)$, to $in₂(x⦇σ^↑⦈)$. This will
  ensure the usual equation $λ(e)[σ] = λ(e[σ^↑])$.
\end{exa}

\mypar{The parametric module $Σ\pmod$} Let us now show how every
pointed strong endofunctor induces a parametric module.
\begin{defi}\label{def:sigmas}
We define the parametric
module $Σ\pmod$ on $𝐂$ as assigning to any monoid $X ∈ 𝐂$, the object
$Σ(X)$, equipped with the action given by the composite
  \[Σ(X)⊗ X → Σ(X⊗X) → Σ(X).\]
\end{defi}
The verification that this assignment indeed defines a parametric module is
straightforward.

\mypar{The category $Σ\Mon$ of models} We now introduce
$Σ$-monoids. These are exactly the classical $Σ$-monoids, which we
present from a point of view better suited to our purpose. They are
monoids equipped with a ``compatible'' algebra structure:
\begin{defi}
  A \alert{$Σ$-monoid} is a monoid $X$, equipped with an $X$-module morphism
  $ν_X∶ Σ\pmod(X) → X$. A $Σ$-monoid morphism is a monoid morphism $f∶ X →
Y$
  making the following square commute.
  \begin{center}
    \diag{%
      Σ\pmod(X) \&       Σ\pmod(Y) \\
      X \& Y %
    }{%
      (m-1-1) edge[labela={Σ\pmod(f)}] (m-1-2) %
      edge[labell={ν_X}] (m-2-1) %
      (m-2-1) edge[labelb={f}] (m-2-2) %
      (m-1-2) edge[labelr={ν_Y}] (m-2-2) %
    }
  \end{center}
  We let $Σ\Mon$ denote the category of $Σ$-monoids and morphisms between
them, while $U_Σ∶ Σ\Mon → 𝐌𝐨𝐧(𝐂)$ denotes the obvious forgetful functor.
\end{defi}

Let us first prove that this agrees with the standard definition.
\begin{prop}
  The category $Σ\Mon$ is isomorphic over $𝐌𝐨𝐧(𝐂)$ to the following category.
  \begin{itemize}
  \item Objects are monoids $X$ equipped with $Σ$-algebra structure
    $ν_X∶ Σ(X) → X$ making the diagram
  \begin{equation}
    \diag{%
      Σ(X)⊗X \& Σ (X⊗X) \& Σ(X) \\
      X⊗X \& \& X %
    }{%
      (m-1-1) edge[labela={st_{X,X}}] (m-1-2) %
      edge[labell={ν_X ⊗ X}] (m-2-1) %
      (m-1-2) edge[labela={Σ(m_X)}] (m-1-3) %
      (m-2-1) edge[labelb={m_X}] (m-2-3) %
      (m-1-3) edge[labelr={ν_X}] (m-2-3) %
    }
    \label{eq:monoidalg}
  \end{equation}
  commute.
\item Morphisms are morphisms in $𝐂$ which are both monoid and
  $Σ$-algebra morphisms.
  \end{itemize}
\end{prop}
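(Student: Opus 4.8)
The plan is to prove that the isomorphism is the identity on all underlying data, so that the entire argument reduces to unfolding the definition of ``$X$-module morphism $Σ\pmod(X) → X$'' and comparing it with the two clauses of the target category. I would carry this out in two steps, treating objects first and morphisms second, and then record that both correspondences preserve the underlying monoid.

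On objects, I would begin with an object of $Σ\Mon$: a monoid $X$ together with an $X$-module morphism $ν_X∶ Σ\pmod(X) → X$, where the target $X$ carries its canonical self-module structure with action the multiplication $m_X∶ X ⊗ X → X$. Since the underlying object of $Σ\pmod(X)$ is $Σ(X)$, the underlying $𝐂$-morphism of $ν_X$ is a map $ν_X∶ Σ(X) → X$, i.e.\ exactly a $Σ$-algebra structure. By Definition~\ref{def:sigmas}, the action of the module $Σ\pmod(X)$ is the composite $Σ(m_X) ∘ st_{X,X}$, so the defining equation of a module morphism, $ν_X ∘ Σ(m_X) ∘ st_{X,X} = m_X ∘ (ν_X ⊗ X)$, is precisely the commutativity of diagram~\eqref{eq:monoidalg}. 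Hence objects of $Σ\Mon$ are in bijection, over $𝐌𝐨𝐧(𝐂)$, with monoids equipped with a $Σ$-algebra structure satisfying~\eqref{eq:monoidalg}.

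On morphisms, I would take a $Σ$-monoid morphism, namely a monoid morphism $f∶ X → Y$ for which the square with legs $ν_X$, $ν_Y$, and top edge $Σ\pmod(f)$ commutes. As $Σ\pmod$ is the parametric module of Definition~\ref{def:sigmas}, the underlying $𝐂$-morphism of $Σ\pmod(f)$ is $Σ(f)$, so the square reads $ν_Y ∘ Σ(f) = f ∘ ν_X$ on underlying morphisms, which is exactly the condition that $f$ be a $Σ$-algebra morphism. Thus a $Σ$-monoid morphism is a morphism in $𝐂$ that is simultaneously a monoid morphism and a $Σ$-algebra morphism, matching the second clause of the target category. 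Since both assignments are literally the identity on the underlying monoid, on the structure map $ν_X$, and on the underlying morphism $f$, functoriality, mutual inverseness, and commutation with the forgetful functors to $𝐌𝐨𝐧(𝐂)$ are all immediate.

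I do not expect a genuine obstacle here: the proof is a direct comparison of two commuting squares. The only non-purely-formal inputs are the identification of the action on $Σ\pmod(X)$ with $Σ(m_X) ∘ st_{X,X}$ and of the underlying morphism of $Σ\pmod(f)$ with $Σ(f)$, both of which are fixed by Definition~\ref{def:sigmas} (and covered by the already-asserted fact that $Σ\pmod$ is a well-defined parametric module, whose naturality clause is what makes $Σ(f)$ an action-preserving morphism). Everything else is definitional unfolding.
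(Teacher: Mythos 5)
Your proof is correct and takes essentially the same approach as the paper, which also establishes the isomorphism by pure definitional unfolding of $Σ\pmod$ (the paper's proof is a single sentence noting that a morphism $Σ(X) → X$ is an $X$-module morphism iff diagram~\eqref{eq:monoidalg} commutes). Your version merely spells out the same unfolding in more detail, including the morphism case, which the paper leaves implicit.
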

\begin{proof}
  By definition of $Σ\pmod$, it is equivalent for a morphism
  $Σ(X) → X$ to be an $X$-module morphism, and for the
  condition~\eqref{eq:monoidalg} to hold.
\end{proof}
Let us finally check that we have defined \asemsig.

\begin{thm}\label{thm:sigmamonadic}
  For any finitary, pointed strong endofunctor on a convenient
  monoidal category $𝐂$, the category $Σ\Mon$ and the forgetful
  functor $U_Σ∶ Σ\Mon → 𝐌𝐨𝐧(𝐂)$ form a monadic \semsig.  In fact, the
  forgetful functor $U_Σ∶ Σ\Mon → 𝐌𝐨𝐧(𝐂)$ is finitary monadic.
\end{thm}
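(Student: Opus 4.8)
The plan is to factor the forgetful functor through the base category $𝐂$ and then invoke the cancellation property for monadic functors (Proposition~\ref{prop:monadic}). Concretely, $U_Σ∶ Σ\Mon → 𝐌𝐨𝐧(𝐂)$ sits in the commuting triangle
\[
  Σ\Mon \xto{U_Σ} 𝐌𝐨𝐧(𝐂) \xto{U} 𝐂,
\]
where $U$ is the forgetful functor to $𝐂$ and the composite $V ≔ U ∘ U_Σ$ sends a $Σ$-monoid to its carrier. Since $𝐂$ is convenient, it is locally finitely presentable, hence cocomplete. If I can show that both legs into $𝐂$, namely $U$ and $V$, are finitary monadic and that $𝐌𝐨𝐧(𝐂)$ and $Σ\Mon$ are cocomplete, then Proposition~\ref{prop:monadic} yields at once that $U_Σ$ is finitary monadic, which is the second (and hence the first) assertion. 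To obtain a genuine monadic \semsig it then remains only to note that $Σ\Mon$ has an initial object, which follows from its cocompleteness.

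For the outer leg $U∶ 𝐌𝐨𝐧(𝐂) → 𝐂$, I present $𝐌𝐨𝐧(𝐂)$ as the category of algebras of an equational system on $𝐂$: the signature endofunctor is $X ↦ I + X⊗X$ (unit and multiplication), with the associativity and unit laws as equations. Convenience of $𝐂$ gives that $⊗$ preserves filtered colimits on the right and all colimits on the left; a diagonal/cofinality argument in a filtered index category then shows that $X ↦ X⊗X$ is finitary, so the signature endofunctor, being a finite coproduct of finitary functors, is finitary. By the validity of the equational register $𝐄𝐒_𝐂$ (Theorem~\ref{thm:ES}), $U$ is finitary monadic, and by Lemma~\ref{lem:algfinitary} its domain $𝐌𝐨𝐧(𝐂)$ is locally finitely presentable, in particular cocomplete.

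For the inner-composite leg $V∶ Σ\Mon → 𝐂$, I use the same device. By the alternative description of $Σ\Mon$ established just above, via the commuting square~\eqref{eq:monoidalg}, a $Σ$-monoid is exactly a monoid $X$ equipped with a $Σ$-algebra structure $ν_X∶ Σ(X) → X$ satisfying~\eqref{eq:monoidalg}. I therefore extend the previous equational system by one operation $Σ(X) → X$, so that the signature endofunctor becomes $X ↦ I + X⊗X + Σ(X)$ — still finitary, since $Σ$ is finitary by hypothesis — and I add the equation~\eqref{eq:monoidalg} relating $ν$, the multiplication $m$, and the pointed strength $st$. This identifies $Σ\Mon$, over $𝐂$, with the algebras of the resulting equational system, so Theorem~\ref{thm:ES} makes $V$ finitary monadic and Lemma~\ref{lem:algfinitary} makes $Σ\Mon$ locally finitely presentable, hence cocomplete.

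With $U$ and $V$ finitary monadic and $𝐂$, $𝐌𝐨𝐧(𝐂)$, $Σ\Mon$ all cocomplete, Proposition~\ref{prop:monadic} gives that $U_Σ$ is monadic and finitary; and since $Σ\Mon$ is cocomplete it has an initial object, so $(Σ\Mon, U_Σ)$ is a monadic \semsig over $𝐌𝐨𝐧(𝐂)$, as required. The main obstacle is the faithful encoding of the defining conditions as an equational system in the sense of Definition~\ref{def:eqsys}: the monoid laws are routine, but the compatibility condition~\eqref{eq:monoidalg} must be expressed as a parallel pair of functorial terms $Γ ⇉ (Σ')^{*}$ into the free monad on the signature $Σ'$, with $Γ(X) = Σ(X)⊗X$. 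Checking that both legs of this pair are natural — in particular that the side built from the pointed strength $st$ assembles into a natural transformation of finitary endofunctors — is the delicate point; finitariness of $Γ$ itself follows by the same diagonal argument used above for $X ↦ X⊗X$.
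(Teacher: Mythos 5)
Your proof is correct and non-circular, and it follows the paper's skeleton exactly — the commuting triangle $Σ\Mon → 𝐌𝐨𝐧(𝐂) → 𝐂$, local finite presentability (hence cocompleteness) of all three categories, and the cancellation property of Proposition~\ref{prop:monadic} (whose statement already incorporates the finitariness transfer proved in Lemma~\ref{lem:finitary}) — but it differs genuinely at the one substantive step. Where the paper obtains finitary monadicity of the two legs over $𝐂$ by citing Fiore and Hur~\cite[§7.2]{FioreHurEquational} as a black box, you reconstruct it inside the paper by presenting $𝐌𝐨𝐧(𝐂)$ and $Σ\Mon$ as categories of algebras of equational systems and invoking Theorem~\ref{thm:ES}. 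This amounts to inlining the cited proof (Fiore and Hur's own argument is precisely an equational-system presentation of $Σ$-monoids), so your route buys self-containedness at the cost of the encoding work. One caution about ordering: you should invoke only the finitary-monadicity clause of Theorem~\ref{thm:ES}, which rests on Proposition~\ref{prop:reiterman} and Corollary~\ref{cor:monadicequaliser}; its later clauses (pointwise computation of coequalisers) rest on Proposition~\ref{prop:sifted-colim-mon}, whose proof already treats monoids as algebras of a free-monoid monad — that is, it presupposes the $Σ = 0$ case of the very statement you are proving — so leaning on those clauses would be circular.

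The step you flag as delicate is indeed the only real content of the encoding, and it needs one idea your proposal does not supply. Condition~\eqref{eq:monoidalg} uses the pointed strength $st_{X,X}$, but a pointed strength requires its second argument to be pointed, whereas in a functorial term $L∶ Γ → (Σ')^*$ (with $Γ(X) = Σ(X)⊗X$ and $Σ'(X) = I + X⊗X + Σ(X)$) the object $X$ is bare. The resolution is to apply $st$ not at $X$ but at the free algebra $(Σ')^*(X)$, which is canonically pointed by the formal unit $I → Σ'((Σ')^*(X)) → (Σ')^*(X)$: take $L_X$ to be $Σ(η_X)⊗η_X$ followed by $st$ at $(Σ')^*(X)$, then by the formal multiplication and the formal $Σ$-operation, and take $R_X$ to be the evident term for $m ∘ (ν ⊗ X)$. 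That an algebra coequalises $L_X$ and $R_X$ if and only if its structure maps satisfy~\eqref{eq:monoidalg} then follows from naturality of $st$ with respect to the induced structure map $(Σ')^*(X) → X$, which is a morphism of pointed objects precisely because it is an algebra map. With this supplied, your argument is complete and yields the same conclusion as the paper's.
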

\begin{proof}
  We have a commuting triangle
  \begin{center}
    \diag{%
      Σ\Mon \& \& 𝐌𝐨𝐧(𝐂) \\
      \& 𝐂\rlap{,} %
    }{%
      (m-1-1) edge[labela={U_Σ}] (m-1-3) %
      edge[labelbl={𝐔^{Σ\Mon}}] (m-2-2) %
      (m-1-3) edge[labelbr={𝐔^{𝐌𝐨𝐧(𝐂)}}] (m-2-2) %
    }
  \end{center}
  where, by~\cite[§7.2]{FioreHurEquational}, $𝐔^{Σ\Mon}$ and
  $𝐔^{𝐌𝐨𝐧(𝐂)}$ are both monadic, the corresponding monads being
  finitary (noting that $𝐌𝐨𝐧(𝐂)≅ 0\Mon$).
  By~\cite[Remark~2.78]{Adamek} with $λ = ω$, it follows that all
  three categories are locally finitely presentable, hence in
  particular cocomplete.  The result thus follows by
  Proposition~\ref{prop:monadic} and Lemma~\ref{lem:finitary}.
\end{proof}

We now want to recall a standard characterisation of the monad induced
by the monadic functor $U^{Σ\Mon}$, but before that let us fix some
notation.

\begin{nota}\label{not:flocon}
  Let us refine Notation~\ref{not:star} and Remark~\ref{rk:ambiguity}.
  For a pointed strong endofunctor $Σ$ on a convenient monoidal
  category $𝐂$, $Σ^⋆(0)$ might be read as different objects of $𝐂$,
  according to whether $Σ$ is viewed as a finitary endofunctor or a
  finitary pointed strong endofunctor.  Since $Σ$-monoids are also
  monadic over monoids, it might even be understood as a monoid in
  $𝐂$.  We choose the following convention:
  \begin{itemize}
  \item  $Σ^⋆$ denotes the ``free $Σ$-monoid'' monad $𝐂$,
  \item  $Σ^⊛$ denotes the ``free $Σ$-monoid'' monad on
    $𝐌𝐨𝐧(𝐂)$, 
   \item $Σ^*$ denotes the ``free $Σ$-algebra'' monad on $𝐂$.
  \end{itemize}
\end{nota}
\begin{prop}\label{prop:sigmamonchar}
  For any pointed strong endofunctor $Σ$ on a convenient monoidal
  category $𝐂$, the monad $Σ^⋆$ induced by the right adjoint functor
  $Σ\Mon → 𝐌𝐨𝐧(𝐂) → 𝐂$ maps any object $C ∈ 𝐂$ to
  $μA.(I + Σ(A) + C⊗A)$.
  As a consequence of Proposition~\ref{prop:reiterman},
  the carrier $Σ^⋆(0)$ of the initial $Σ$-monoid is then
  isomorphic to $Σ^*(I)$.
\end{prop}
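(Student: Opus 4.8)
The plan is to exhibit $μA.(I + Σ(A) + C⊗A)$ as the carrier of the free $Σ$-monoid on $C$, which is exactly $Σ^⋆(C)$: indeed, by Theorem~\ref{thm:sigmamonadic} the functor $Σ\Mon → 𝐌𝐨𝐧(𝐂)$ is finitary monadic, and composing with the (monadic) free-monoid functor $𝐌𝐨𝐧(𝐂) → 𝐂$ shows that $Σ\Mon → 𝐂$ has a left adjoint whose induced monad is $Σ^⋆$. Writing $F_C ≔ I + Σ(-) + C⊗(-)$, I would first observe that $F_C$ is finitary: coproducts and $Σ$ are finitary, and $C⊗(-)$ is finitary because in a convenient monoidal category the tensor preserves filtered colimits on the right. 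Since $𝐂$ is in particular cocomplete, Proposition~\ref{prop:reiterman} yields the initial algebra $M ≔ μA.\,F_C(A)$, whose structure isomorphism splits as three maps $e∶ I → M$, $ν∶ Σ(M) → M$, and $κ∶ C⊗M → M$.

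Next I would equip $M$ with a $Σ$-monoid structure. The unit is $e$, the generator is the composite $C ≅ C⊗I \xrightarrow{C⊗e} C⊗M \xrightarrow{κ} M$, and $ν$ serves as the $Σ$-action in the sense of Definition~\ref{def:sigmas}. The multiplication $m∶ M⊗M → M$ is the delicate point. Since $-⊗M$ preserves all colimits on the left (convenience) on a locally finitely presentable category, it is a left adjoint, so the internal hom $[M,M]$ exists with evaluation $\ev∶ [M,M]⊗M → M$. I would then put an $F_C$-algebra structure on $[M,M]$ — the $I$-summand being the transpose of the left unitor, the $Σ$-summand being the transpose of $Σ([M,M])⊗M \xrightarrow{st} Σ([M,M]⊗M) \xrightarrow{Σ(\ev)} Σ(M) \xrightarrow{ν} M$ (this is where the \emph{pointed} strength is used, $M$ being pointed by $e$), and the $C$-summand being the transpose of $(C⊗[M,M])⊗M → C⊗([M,M]⊗M) \xrightarrow{C⊗\ev} C⊗M \xrightarrow{κ} M$ — and define $m$ as the transpose of the unique $F_C$-algebra morphism $M → [M,M]$ out of the initial algebra.

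With $m$ in hand, the monoid laws (associativity, unitality) and the compatibility condition~\eqref{eq:monoidalg} making $ν$ an $M$-module morphism all follow by the uniqueness half of initiality: in each case both sides of the required equation are $F_C$-algebra morphisms out of $M$ (or out of a suitable tensor power, again an $F_C$-algebra since $-⊗M$ and $-⊗(M⊗M)$ distribute over the defining coproduct), so they coincide. Freeness is then checked directly: given any $Σ$-monoid $(X,e_X,m_X,ν_X)$ and a morphism $f∶ C → X$, the maps $e_X$, $ν_X$, and $m_X∘(f⊗X)∶ C⊗X → X$ assemble into an $F_C$-algebra structure on $X$, and the unique induced algebra morphism $\tilde f∶ M → X$ is seen — once more by uniqueness — to be the unique $Σ$-monoid morphism extending $f$. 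This identifies $M$ with $Σ^⋆(C)$.

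For the final claim, take $C = 0$. Because $-⊗A$ preserves colimits on the left, it preserves the initial object, so $0⊗A ≅ 0$ and $F_0(A) ≅ I + Σ(A)$; hence $Σ^⋆(0) ≅ μA.(I + Σ(A))$, which by Proposition~\ref{prop:reiterman} is precisely the free $Σ$-algebra $Σ^*(I)$ on $I$. I expect the construction of $m$ and the verification of the axioms to be the main obstacle, the two essential ingredients being the pointed strength (needed to define the $Σ$-part of the multiplication, and requiring the carrier to be pointed) and the left-cocontinuity of the tensor (needed both for the internal hom and to distribute $-⊗M$ over the coproduct defining $F_C$).
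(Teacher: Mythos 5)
Your proposal is correct in outline, but note that the paper does not actually prove this proposition in-house: its ``proof'' is a pointer to the literature (\cite{fiore:presheaf,FioreHurEquational,DBLP:conf/rta/FioreS17}, with \cite[Theorem~2.15]{BHL} cited for a complete and slightly more general statement). What you wrote is essentially a reconstruction of the argument contained in those references: realise the free $Σ$-monoid on $C$ as the initial algebra $M ≔ μA.(I + Σ(A) + C⊗A)$ of the finitary endofunctor $F_C ≔ I + Σ({-}) + C⊗{-}$ (finitariness of $C⊗{-}$ being exactly preservation of filtered colimits on the right, i.e., convenience); obtain the right adjoint $[M,{-}]$ of ${-}⊗M$ from left cocontinuity of the tensor and local finite presentability; define the multiplication as the transpose of the unique $F_C$-algebra morphism $\hat m∶ M → [M,M]$, the pointed strength at the pointed object $(M,e)$ being what makes the $Σ$-summand of the algebra structure on $[M,M]$ definable; then verify the axioms and the universal property by initiality. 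Your isolation of the two essential ingredients (pointedness of the strength, left cocontinuity of $⊗$), and the specialisation $C = 0$, $0⊗A ≅ 0$, hence $Σ^⋆(0) ≅ μA.(I + Σ(A)) ≅ Σ^*(I)$, match the intended reading exactly. So you are not taking a different route from the paper: you are supplying the proof that the paper outsources.

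One step, however, is wrong as literally written: the parenthetical claim that the monoid laws hold because both sides are ``$F_C$-algebra morphisms out of a suitable tensor power, again an $F_C$-algebra''. First, initiality yields uniqueness only for algebra morphisms whose \emph{domain is the initial algebra} $M$; two parallel $F_C$-algebra morphisms between other algebras need not agree. Second, the tensor powers are not naturally $F_C$-algebras anyway: distributing ${-}⊗M$ over the defining coproduct produces a map $Σ(M)⊗M → M⊗M$, whereas an $F_C$-algebra structure on $M⊗M$ would require a map $Σ(M⊗M) → M⊗M$, and the strength $Σ(M)⊗M → Σ(M⊗M)$ points the wrong way to convert one into the other. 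The repair is the device you already use to define $m$: transpose until the domain is $M$. For associativity, both sides of the equation between maps $(M⊗M)⊗M → M$ transpose twice to morphisms $M → [M,[M,M]]$; equip $[M,[M,M]]$ with an $F_C$-algebra structure (again via evaluation and the strength) and check, using the two strength coherence axioms, that both transposes are algebra morphisms, whence they agree by initiality (alternatively, invoke a parametrised-initiality lemma as in the cited references). Note also that the compatibility law~\eqref{eq:monoidalg} needs no separate initiality argument at all: it is precisely the $Σ$-component of the statement that $\hat m$ is an $F_C$-algebra morphism, untransposed; similarly the left unit law is the $I$-component, while the right unit law genuinely concerns an endomorphism of $M$ and so does fall to the naive argument once that endomorphism is checked to be an algebra morphism. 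With these adjustments your sketch becomes a complete proof.
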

\begin{proof}
  This is more or less known
  since~\cite{fiore:presheaf,FioreHurEquational,DBLP:conf/rta/FioreS17},
  see~\cite[Theorem~2.15]{BHL} for an explicit, complete, yet slightly
  more general statement.
\end{proof}

Let us finally define our register for monoids.
\begin{defi}\label{def:fpt}
  For any convenient monoidal category $𝐂$, the
  monadic 
  register $𝐏𝐒𝐄𝐅_𝐂$ for the category $𝐌𝐨𝐧(𝐂)$ of monoids in $𝐂$ is
  defined as follows.
  \begin{description}
  \item[Signatures] A signature is a finitary, pointed strong
    endofunctor on $𝐂$.
  \item[Semantics] The \semsig associated to any signature $Σ$ is the
    forgetful functor $Σ\Mon → 𝐌𝐨𝐧(𝐂)$.
  \end{description}
\end{defi}
\begin{proof}[{\rm \bf Validity proof:}\nopunct]
  By Theorem~\ref{thm:sigmamonadic}.
\end{proof}

\subsection{Monoidal equational systems}\label{ss:MES}
In this section, we combine the registers $𝐄𝐒_𝐂$ and $𝐏𝐒𝐄𝐅_𝐂$ of the
two previous sections, yielding a new register $𝐌𝐄𝐒_𝐂$ obtained 
from $𝐏𝐒𝐄𝐅_𝐂$ by `adding monoidal equations'.

\begin{defi}
  Given finitary, pointed strong endofunctors $Σ$ and $Γ$ on a
  convenient monoidal category $𝐂$, a \alert{monoidal functorial
    $Σ^⊛$-term} of arity $Γ$ is a parametric module morphism
  $Γ\pmod → (Σ^⊛)\pmodmon$, where we recall $Γ\pmod$ from
  Definition~\ref{def:sigmas} and $(Σ^⊛)\pmodmon$ from
  Notation~\ref{not:flocon} and Example~\ref{ex:Tmod}.
\end{defi}
\begin{defi}
  A \alert{monoidal equational system} $𝔼 = (𝐂∶ Γ ⊢ L =_⊗ R : Σ)$ consists of
  \begin{itemize}
  \item a convenient monoidal category $𝐂$,
  \item two pointed strong endofunctors $Σ$ and $Γ$ on $𝐂$, and
  \item a parallel pair $L,R∶ Γ\pmod ⇉ (Σ^⊛)\pmodmon$ of monoidal
    functorial $Σ^⊛$-terms, which we call a \alert{monoidal equation}.
  \end{itemize}
\end{defi}

\begin{prop}\label{prop:mon:alteqsys}
  Given any locally finitely presentable category $𝐂$, there are
  natural isomorphisms (in $Σ$ and $Γ$) between
  \begin{enumerati}
  \item \label{item:mon:GS} monoidal functorial $Σ^⊛$-terms of arity
    $Γ$, or in other words parametric module morphisms $Γ\pmod → (Σ^⊛)\pmodmon$
    (Definition~\ref{def:paramodule}),
  \item \label{item:mon:monads} morphisms $Γ^⊛ → Σ^⊛$ of monads on $𝐌𝐨𝐧(𝐂)$,
  \item \label{item:mon:monadic} functors $Σ\Mon → Γ\Mon$ over $𝐌𝐨𝐧(𝐂)$,
  \item \label{item:mon:reflmonads} morphisms $(Σ+Γ)^⊛ → Σ^⊛$ of monads on
    $𝐌𝐨𝐧(𝐂)$, with section the canonical morphism $Σ^⊛ → (Σ+Γ)^⊛$,
  \item \label{item:mon:reflmonadic} functors $Σ\Mon → (Σ+Γ)\Mon$ over
    $𝐌𝐨𝐧(𝐂)$ which are sections of the forgetful functor
    $(Σ+Γ)\Mon → Σ\Mon$, and
  \item \label{item:mon:ded} natural transformations
    \begin{center}
      \diag(.3,1){%
         \& 𝐌𝐨𝐧 \\
         Σ\Mon \& \& 𝐌𝐨𝐝\rlap{.} \\
         \& 𝐌𝐨𝐧 %
       }{%
         (m-2-1) edge[labelal={U_Σ}, bend left=10] (m-1-2) %
         (m-2-1) edge[labelbl={U_Σ}, bend right=10] (m-3-2) %
         (m-1-2) edge[labelar={Γ\pmod}, bend left=10] (m-2-3) %
         (m-3-2) edge[labelbr={(-)\pmodmon}, bend right=10] (m-2-3) %
         (m-1-2) edge[labell={},cell=.3] (m-3-2) 
      }
    \end{center}
  \end{enumerati}
\end{prop}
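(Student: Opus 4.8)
The plan is to prove Proposition~\ref{prop:mon:alteqsys} as a systematic transport of Proposition~\ref{prop:alteqsys} across the ``monoidification'' dictionary that replaces the base category $𝐂$ by $𝐌𝐨𝐧(𝐂)$, the free‑algebra monad $Γ^*$ by the free‑monoid monad $Γ^⊛$, and natural transformations $Γ → Σ^*$ by parametric module morphisms $Γ\pmod → (Σ^⊛)\pmodmon$ in the sense of Example~\ref{ex:Tmod} and Definition~\ref{def:paramodule}. First I would check that this transport is legitimate: since $𝐂$ is convenient, $𝐌𝐨𝐧(𝐂)$ is locally finitely presentable (this is exactly what the proof of Theorem~\ref{thm:sigmamonadic} records, via $𝐌𝐨𝐧(𝐂) ≅ 0\Mon$), and $Σ^⊛$, $Γ^⊛$ are \emph{finitary} monads on it, so Corollary~\ref{cor:street} applies with base $𝐌𝐨𝐧(𝐂)$. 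I would also note that $Σ+Γ$ is again a finitary pointed strong endofunctor (coproducts combine the strengths componentwise and preserve finitariness), so that $(Σ+Γ)\Mon$ and $(Σ+Γ)^⊛$ make sense.

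With this in place, four of the five equivalences replay the argument of Proposition~\ref{prop:alteqsys} essentially verbatim. The equivalence \ref{item:mon:monads}$\Leftrightarrow$\ref{item:mon:monadic} is Corollary~\ref{cor:street} for $𝐌𝐨𝐧(𝐂)$: monad morphisms $Γ^⊛ → Σ^⊛$ correspond contravariantly to functors $Σ\Mon → Γ\Mon$ over $𝐌𝐨𝐧(𝐂)$, using $Σ\Mon ≅ Σ^⊛\alg$ and $Γ\Mon ≅ Γ^⊛\alg$ from Theorem~\ref{thm:sigmamonadic}. Next, \ref{item:mon:monadic}$\Leftrightarrow$\ref{item:mon:reflmonadic} follows from the pullback decomposition $(Σ+Γ)\Mon ≅ Σ\Mon ×_{𝐌𝐨𝐧(𝐂)} Γ\Mon$, valid because a $(Σ+Γ)$-action on a monoid is precisely a pair of a $Σ$-action and a $Γ$-action, so that a functor $Σ\Mon → Γ\Mon$ over $𝐌𝐨𝐧(𝐂)$ is the same datum as a section of the projection $(Σ+Γ)\Mon → Σ\Mon$; dualising this decomposition through Corollary~\ref{cor:street} gives $(Σ+Γ)^⊛ ≅ Σ^⊛ + Γ^⊛$ in $𝐌𝐧𝐝_f(𝐌𝐨𝐧(𝐂))$ and hence \ref{item:mon:monads}$\Leftrightarrow$\ref{item:mon:reflmonads}. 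Finally, \ref{item:mon:monadic}$\Leftrightarrow$\ref{item:mon:ded} is pure unfolding: a functor $Σ\Mon → Γ\Mon$ over $𝐌𝐨𝐧(𝐂)$ equips each $Σ$-monoid $A$ with a $Γ$-monoid structure on the same underlying monoid, i.e.\ a $U_Σ(A)$-module morphism $Γ\pmod(U_Σ A) → U_Σ A$ natural in $A$, which is exactly the $2$-cell of \ref{item:mon:ded}, whose target leg $(-)\pmodmon ∘ U_Σ$ sends $A$ to $U_Σ A$ viewed as the regular module over itself.

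The one step with genuine content is \ref{item:mon:GS}$\Leftrightarrow$\ref{item:mon:monads}, the monoidal analogue of ``the universal property of $Γ^*$''. Here the hard part will be to establish the universal property of $Γ^⊛$ itself: for any finitary monad $M$ on $𝐌𝐨𝐧(𝐂)$, monad morphisms $Γ^⊛ → M$ should be in natural bijection with parametric module morphisms $Γ\pmod → M\pmodmon$, and specialising $M = Σ^⊛$ then yields the equivalence. The route I would take is to compose Corollary~\ref{cor:street} with a free‑algebra reduction: a monad morphism $Γ^⊛ → M$ is a functor $M\alg → Γ\Mon$ over $𝐌𝐨𝐧(𝐂)$, hence a natural assignment to each $M$-algebra $(A, a∶ M(A) → A)$ of a $U(A)$-module morphism $Γ\pmod(U A) → U A$; one sends a parametric module morphism $φ∶ Γ\pmod → M\pmodmon$ to $a ∘ φ_{U A}$, and conversely recovers $φ_X$ from the value of this family at the free $M$-algebra $M(X)$, precomposed with $Γ(η_X)$. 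The main obstacle is precisely the bookkeeping in this reduction: verifying that $a ∘ φ_{U A}$ is again a module morphism (which rests on the $M$-algebra structure map $a$ being compatible with the regular action defining $M\pmodmon$ in Example~\ref{ex:Tmod}), and that the free‑algebra reconstruction is a genuine two‑sided inverse, all compatibly with the monoid and pointed‑strength structure used to build $Γ\pmod$ in Definition~\ref{def:sigmas}. Once this universal property is secured, the remaining content is a faithful replay of Proposition~\ref{prop:alteqsys}.
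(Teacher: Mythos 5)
Your proposal is correct and takes essentially the same route as the paper: the paper likewise dispatches \ref{item:mon:monads}$⇔$\ref{item:mon:monadic} and \ref{item:mon:reflmonads}$⇔$\ref{item:mon:reflmonadic} via Proposition~\ref{prop:street} (using $Σ^⊛\alg ≅ Σ\Mon$), treats \ref{item:mon:monadic}$⇔$\ref{item:mon:reflmonadic} and \ref{item:mon:ded}$⇔$\ref{item:mon:monadic} as definition unfolding, and settles \ref{item:mon:GS} by exactly the bijection you describe — for an arbitrary monad $T$ on $𝐌𝐨𝐧(𝐂)$, functors $T\alg → Γ\Mon$ over $𝐌𝐨𝐧(𝐂)$ correspond to parametric module morphisms $Γ\pmod → T\pmodmon$, one direction given by evaluating at free algebras and precomposing with $Γ(η)$, the other by postcomposing the algebra structure map, then taking $T = Σ^⊛$. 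Your only deviations are cosmetic: you reach \ref{item:mon:reflmonads} through $(Σ+Γ)^⊛ ≅ Σ^⊛ + Γ^⊛$ (mirroring Proposition~\ref{prop:alteqsys}) where the paper gets \ref{item:mon:reflmonads}$⇔$\ref{item:mon:reflmonadic} directly from the same equivalence, and you package the key bijection as a universal property of $Γ^⊛$.
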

\begin{proof}
  Both equivalences \ref{item:mon:monads} $⇔$ \ref{item:mon:monadic}
  and \ref{item:mon:reflmonads} $⇔$ \ref{item:mon:reflmonadic} follow
  directly from Proposition~\ref{prop:street} because by definition we have
  $Σ^⊛\alg ≅ Σ\Mon$, and similarly for $Γ$ and $Σ+Γ$.  Furthermore,
  \ref{item:mon:ded} $⇔$ \ref{item:mon:monadic} and
  \ref{item:mon:monadic} $⇔$ \ref{item:mon:reflmonadic} both follow by
  mere definition unfolding.  This leaves us with proving that the
  first point agrees with one of the others.

  For this, we exhibit a natural isomorphism
  \begin{mathpar}
    \mprset{fraction={===}} \inferrule*{ T\alg → Γ\Mon \mbox{ (over
        $𝐌𝐨𝐧(𝐂)$)}}{ Γ\pmod → T\pmodmon }
\end{mathpar}
for any monad $T$ on $𝐌𝐨𝐧(𝐂)$. The result thus follows by taking
$T = Σ^⊛$. 
\begin{itemize}
\item Given any $L∶ T\alg → Γ\Mon$ over $𝐌𝐨𝐧(𝐂)$, consider the
  morphism $L^↓∶ Γ\pmod → T\pmodmon$ defined at any monoid $M$ by
    \[Γ(M) \xto{Γ(η_M)} Γ (T (M)) \xto{L(T(M))} T(M)\rlap{.}\]
  This indeed defines a morphism of the claimed type, by a simple
  diagram chasing, and furthermore this assignment is clearly natural
  in $Γ$ and $T$.
\item Conversely, given any $α∶ Γ\pmod → T\pmodmon$, let
  $α^↑∶ T\alg → Γ\Mon$ map any $T$-algebra structure $a∶ T(M) → M$ on a monoid $M$
  to the $Γ$-algebra structure
    \[Γ(M) \xto{α_M} T(M) \xto{a} M\rlap{.}\]
  Again, a simple diagram chase shows that this $Γ$-algebra structure
  satisfies the coherence law of $Γ$-monoids.
\end{itemize}
These two maps are easily checked to be mutually inverse, thus proving
the claim.
\end{proof}

\begin{exa}\label{ex:betaeq}
  Recall from Example~\ref{ex:lam} the pointed strong endofunctor
  $Σ(X)(n) = X(n)² + X(n+1)$ for pure $λ$-calculus on $[𝐒𝐞𝐭,𝐒𝐞𝐭]_f$.
  As promised, let us now use this to complete the aborted treatment
  of the $β$-equation in Example~\ref{ex:betaeq}.  This is made
  possible by working directly at the level of monoids (which we think
  of as objects equipped with substitution). We again take
  $Γ(X)(n) = X(n+1) × X(n)$, and define $L$ and $R$ at any $Σ$-monoid
  $T$ to map any $(f,e) ∈ T(n+1)×T(n)$ to $λ(f)\ e$ and $f[e]$
  respectively, where $f[e]$ denotes the result of substituting $e$
  for the $(n+1)$th variable of $f$.  More precisely, $f[e]=μ∘Tuₑ(f)$, where
  $uₑ∶n+1 → Tn$ is $[η,e]$.
\end{exa}

Let us now turn to defining algebras for a monoidal equational system.
\begin{defi}
  For any monoidal equational system $𝔼 = (𝐂∶ Γ ⊢ L =_⊗ R : Σ)$, an
  \alert{$𝔼$-algebra} is a $Σ$-monoid $X$ whose $Σ^⊛(X)$-algebra structure
  coequalises $L_X,R_X∶ Γ\pmod(X) ⇉ (Σ^⊛)\pmodmon(X)$. An $𝔼$-algebra
  morphism is a morphism between underlying $Σ^⊛$-algebras (a.k.a.\
  $Σ$-monoids).  We let $U_𝔼∶ 𝔼\alg → 𝐌𝐨𝐧(𝐂)$ denote the forgetful
  functor.
\end{defi}

Mimicking Proposition~\ref{prop:alteqsysalg}, we now transfer this
definition across the various correspondences of
Proposition~\ref{prop:mon:alteqsys}.
\begin{prop}
  For any monoidal equational system $𝔼 = (𝐂∶ Γ ⊢ L =_⊗ R : Σ)$ and 
  $Σ^⊛$-algebra $a∶ Σ^⊛(X) → X$, the following are equivalent:
  \begin{enumerata}
  \item \label{item:mon:alg:GS} $a$ coequalises $L_X,R_X∶ Γ\pmod(X) → (Σ^⊛)\pmodmon(X)$,
  \item \label{item:mon:alg:monads} $a$ coequalises the corresponding morphisms $Γ^⊛(X) → Σ^⊛(X)$,
  \item \label{item:mon:alg:monadic} the induced $Σ$-monoid structure
    $Σ\pmod(X) → X$ belongs to the equaliser of the corresponding functors
    $Σ\Mon → Γ\Mon$,
  \item \label{item:mon:alg:monadicfurther} the induced $Σ$-monoid structure
    $Σ\pmod(X) → X$ belongs to the equaliser of the induced functors
    $Σ\Mon → Γ\alg$,
  \item \label{item:mon:alg:reflmonads} $a$ coequalises the corresponding
    morphisms $(Σ+Γ)^⊛(X) → Σ^⊛(X)$,
  \item \label{item:mon:alg:reflmonadic} the induced $Σ\pmod$-algebra structure
    $Σ\pmod(X) → X$ belongs to the equaliser of the corresponding functors
    $Σ\Mon → (Σ+Γ)\Mon$,
  \item \label{item:mon:alg:ded} the corresponding natural transformations
    $Γ\pmod ∘ U_Σ → (-)\pmodmon ∘ U_Σ$ have the same components at the induced
    $Σ$-algebra structure $Σ\pmod(X) → X$.
  \end{enumerata}
\end{prop}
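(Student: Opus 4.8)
The plan is to mirror the proof of Proposition~\ref{prop:alteqsysalg}, replacing the object-level correspondence of Proposition~\ref{prop:alteqsys} by its monoidal counterpart, Proposition~\ref{prop:mon:alteqsys}. Concretely, each clause refers to one of the mutually corresponding data of Proposition~\ref{prop:mon:alteqsys}, evaluated at the fixed $Σ^⊛$-algebra $a∶ Σ^⊛(X) → X$, and the task is to check that ``$a$ coequalises'' and ``$X$ lies in the relevant equaliser'' unfold to the same equation.

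First I would treat \ref{item:mon:alg:GS} $⇔$ \ref{item:mon:alg:monadic}. Under the bijection \ref{item:mon:GS} $⇔$ \ref{item:mon:monadic} of Proposition~\ref{prop:mon:alteqsys}, the functor $Σ\Mon → Γ\Mon$ attached to $K ∈ \ens{L,R}$ sends the $Σ$-monoid $X$ to the $Γ$-monoid whose structure map is the composite $Γ(X) \xto{K_X} Σ^⊛(X) \xto{a} X$ (this is exactly the construction $α^↑$ from the proof of Proposition~\ref{prop:mon:alteqsys}). Hence the two functors agree at $X$ precisely when $a ∘ L_X = a ∘ R_X$, which is \ref{item:mon:alg:GS}.

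Next, for \ref{item:mon:alg:monads} $⇒$ \ref{item:mon:alg:GS} I would use that precomposing the induced monad morphisms $Γ^⊛ → Σ^⊛$ with the parametric-module unit recovers the original $L_X$ and $R_X$ (the defining relation $L = L^↓$, $R = R^↓$ from Proposition~\ref{prop:mon:alteqsys}), so coequalising the former entails coequalising the latter. For \ref{item:mon:alg:monadic} $⇒$ \ref{item:mon:alg:monads}, I would observe that \ref{item:mon:alg:monads} is equivalent to the equality of the two induced $Γ^⊛$-algebra structures on $X$, which, through the monadicity equivalence $Γ^⊛\alg ≅ Γ\Mon$ (Proposition~\ref{prop:street}), is in turn equivalent to the equality of the induced $Γ$-monoid structures, i.e.\ \ref{item:mon:alg:monadic}. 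These three implications close the cycle \ref{item:mon:alg:GS} $⇔$ \ref{item:mon:alg:monads} $⇔$ \ref{item:mon:alg:monadic}, exactly as in Proposition~\ref{prop:alteqsysalg}.

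The genuinely new clause is \ref{item:mon:alg:monadicfurther}, which records equalisation of the induced functors $Σ\Mon → Γ\alg$ rather than $Σ\Mon → Γ\Mon$; I expect this to be the only delicate point. I would dispatch it by noting that both functors factor through $Γ\Mon$ and that a $Γ$-monoid structure on a \emph{fixed} monoid $X$ is nothing but its underlying $Γ$-algebra structure $Γ(X) → X$: no extra monoidal data is attached, only a compatibility condition with the already-fixed multiplication of $X$. Consequently two such $Γ$-monoid structures coincide iff their underlying $Γ$-algebra structures do, giving \ref{item:mon:alg:monadic} $⇔$ \ref{item:mon:alg:monadicfurther}. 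Finally, the remaining clauses \ref{item:mon:alg:reflmonads}, \ref{item:mon:alg:reflmonadic}, and \ref{item:mon:alg:ded} follow by the same pattern, now invoking the equivalences \ref{item:mon:reflmonads}, \ref{item:mon:reflmonadic}, and \ref{item:mon:ded} of Proposition~\ref{prop:mon:alteqsys} together with the coproduct decomposition $(Σ+Γ)^⊛ ≅ Σ^⊛ + Γ^⊛$, so I would simply remark that the rest follows similarly.
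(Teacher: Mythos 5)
Your proposal is correct and takes essentially the same approach as the paper's proof: both transfer Proposition~\ref{prop:alteqsysalg} along the correspondences of Proposition~\ref{prop:mon:alteqsys}, unfolding the functors $Σ\Mon → Γ\Mon$ at the fixed algebra via the $α^↑$ construction for \ref{item:mon:alg:GS}~$⇔$~\ref{item:mon:alg:monadic}, recovering $L$ and $R$ by precomposition with the unit for the step out of \ref{item:mon:alg:monads}, and using the isomorphism $Γ^⊛\alg ≅ Γ\Mon$ for the step into \ref{item:mon:alg:monads}, with the remaining clauses dispatched by the same pattern. The only cosmetic difference is that you close the cycle via \ref{item:mon:alg:monadic}~$⇒$~\ref{item:mon:alg:monads} where the paper uses \ref{item:mon:alg:monadicfurther}~$⇒$~\ref{item:mon:alg:monads}, which is immaterial given the evident equivalence \ref{item:mon:alg:monadic}~$⇔$~\ref{item:mon:alg:monadicfurther} that you also verify.
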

\begin{proof} \ \hfill
  \begin{itemize}
  \item \ref{item:mon:alg:GS} $⇔$ \ref{item:mon:alg:monadic} The
    corresponding functors $Σ\Mon → Γ\Mon$ map the induced $Σ\pmod$-algebra
    to $Γ\pmod(X) \xto{K} (Σ^⊛)\pmodmon(X) \xto{a} X$, for $K = L,R$, so both sides
    unfold to the same thing.
  \item \ref{item:mon:alg:monadic} $⇔$ \ref{item:mon:alg:monadicfurther} is clear.
  \item \ref{item:mon:alg:monads} $⇒$ \ref{item:mon:alg:GS} Follows
    from the fact that precomposing the induced morphisms \[
      (Γ^⊛)\pmodmon → (Σ^⊛)\pmodmon
    \]
    by $Γ\pmod → (Γ^⊛)\pmodmon$ yields the original $L$ and $R$ by construction.
  \item \ref{item:mon:alg:monadicfurther} $⇒$
    \ref{item:mon:alg:monads} This holds
    because~\ref{item:mon:alg:monads} is equivalent to equality of
    induced $Γ^⊛$-algebra structures, which is further equivalent to
    equality of induced $Γ$-algebra structures.
  \end{itemize}
  The rest follows easily.
\end{proof}

\begin{defi}
  \label{def:moneqsysreg}
  For a given convenient monoidal category $𝐂$, we define the monadic
  register $𝐌𝐄𝐒_𝐂$, called the \alert{monoidal equational register},
  as follows.
  \begin{description}
  \item[Signatures] A signature is a monoidal equational system.
  \item[Semantics] The \semsig associated to a signature $𝔼 $ is the
    forgetful functor $𝔼\alg → 𝐌𝐨𝐧(𝐂)$.
  \end{description}
\end{defi}
\begin{proof}[{\rm \bf Validity proof:}\nopunct]
  By Theorem~\ref{thm:MES} below.
\end{proof}

\section{Computing initial algebras in the presence of
equations}\label{s:friendliness}
In this section, we establish the announced explicit descriptions of
initial algebras, thereby proving that the registers introduced
in~§\ref{ss:regmonads} and~§\ref{s:regfun} are valid.  For this, we in
passing also prove the validity of the registers
from~§\ref{sec:gen-signatures}, and establish useful explicit
descriptions of initial algebras for them too.  In order, we
characterise the underlying monad and initial algebra for suitable
signatures in our registers featuring equations, namely the registers
\begin{itemize}
\item $𝐄𝐒_𝐂$ of equational systems (Definition~\ref{def:eqsysreg}),
\item $𝐌𝐄𝐒_𝐂$ of monoidal equational systems
  (Definition~\ref{def:moneqsysreg}),
\item $𝐑𝐞𝐠𝐌𝐧𝐝_f(𝐒𝐞𝐭^ℙ)$ of equational modular signatures
  (Definition~\ref{def:modmnd}), and
\item $𝐑𝐞𝐠⁰[𝐒𝐞𝐭^ℙ,𝐒𝐞𝐭^𝕊]_f$ of equational facet-based signatures
  (Definition~\ref{def:elemfunc}).
\end{itemize}
But before doing this, in~§\ref{ss:friendliness}, we study a
well-known~\cite{algebraictheories} refinement of finitariness, which
we call \alert{friendliness}, and prove the main foundational result
about it. 
In~§\ref{ss:friendlyES}--\ref{ss:friendlyPModMnd}, we then exploit
this to characterise underlying monads and initial algebras for the
announced registers.

\subsection{Reflexive coequalisers of friendly monoids}\label{ss:friendliness}
Let us start from the announced result on monads
(Theorem~\ref{thm:modmnd}).  The monads generated by our register
$𝐑𝐞𝐠𝐌𝐧𝐝_f(𝐒𝐞𝐭^ℙ)$ with equations are, almost by definition,
coequalisers in $𝐌𝐧𝐝_f(𝐒𝐞𝐭^ℙ)$.  We have announced in
Theorem~\ref{thm:modmnd} that their underlying functors are
coequalisers in $[𝐒𝐞𝐭^ℙ,𝐒𝐞𝐭^ℙ]_f$.  Technically, the goal is thus to
delineate sufficient conditions for monad coequalisers
$T_𝐄 ⇉ T_𝐃 ↠ T'$ to be computed pointwise, in the sense that each
$T_𝐄(X) ⇉ T_𝐃(X) ↠ T'(X)$ is a coequaliser, and the monad structure is
entirely determined by the family $(T'(X))_{X ∈ 𝐒𝐞𝐭^ℙ}$
(see~\cite[§V.3]{MacLane:cwm}).  Roughly, this will work if both
monads $T_𝐄$ and $T_𝐃$ preserve reflexive coequalisers. Because
(finitary) monads are monoids in the category of (finitary)
endofunctors, we can in fact generalise the result to monoids in a
suitable category (see Proposition~\ref{prop:algebraicptstr} below).
Preservation of reflexive coequalisers plays a fundamental role in the
study of algebraic theories~\cite{algebraictheories}. As we will use
it a lot, let us give it a name.

\begin{defi}\ \hfill
  \begin{itemize}
  \item A \alert{reflexive pair} of morphisms is a pair $f,g∶ X → Y$
    of parallel morphisms, sharing a common section $s$, i.e.,
    $s∶ Y → X$ such that $fs = \id_Y = gs$.
  \item A \alert{reflexive coequaliser} is a coequaliser of a
    reflexive pair.
  \item A functor is \alert{friendly} when it is finitary and
    preserves reflexive coequalisers.
\end{itemize}
\end{defi}
\begin{rem}
  By~\cite[Theorem~7.7]{algebraictheories}, when the considered
  categories are cocomplete, this is equivalent to preserving all
  \alert{sifted} colimits.
\end{rem}

\begin{defi}
  An object $X$ of a monoidal category is \alert{$⊗$-friendly}
  (pronounced ``tensor-friendly'') when the functor $X ⊗ {-}$ is
  friendly.
\end{defi}

\begin{prop}\label{prop:fiendlyendos}
  For any locally finitely presentable category $𝐂$, a finitary
  endofunctor on $𝐂$ is friendly iff it is $⊗$-friendly as an object
  of $[𝐂,𝐂]_f$ (viewed as monoidal for the composition tensor
  product).
\end{prop}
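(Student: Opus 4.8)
The plan is to reduce everything to the fact that colimits in $[𝐂,𝐂]_f$ are computed pointwise, so that, writing $\ev_c∶ [𝐂,𝐂]_f → 𝐂$ for evaluation at an object $c$, a cocone in $[𝐂,𝐂]_f$ is colimiting exactly when each of its images under the $\ev_c$ is. First I would record this pointwise computation: since $𝐂$ is cocomplete, colimits in $[𝐂,𝐂]$ are pointwise, and finitary functors are closed under them, because colimits commute with the filtered colimits witnessing finitariness (for a diagram $(G_i)$ of finitary functors and a filtered diagram $(c_j)$, one has $(\colim_i G_i)(\colim_j c_j) = \colim_i\colim_j G_i(c_j) = \colim_j(\colim_i G_i)(c_j)$). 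Hence the inclusion $[𝐂,𝐂]_f ↪ [𝐂,𝐂]$ and each $\ev_c$ preserve and reflect colimits. Under the identification $F⊗{-} = F∘{-}$ (post-composition by $F$), evaluating at $c$ gives $(F∘G)(c) = F(G(c))$, so $F∘{-}$ preserves the colimit of a diagram $(G_i)$ precisely when $F$ preserves each pointwise colimit $\colim_i G_i(c)$ in $𝐂$.

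For the forward implication, suppose $F$ is friendly. Given a filtered, respectively reflexive, diagram in $[𝐂,𝐂]_f$, each $\ev_c$ carries it to a diagram of the same kind in $𝐂$ — being a functor, $\ev_c$ preserves the common section of a reflexive pair — whose colimit $F$ preserves by friendliness. By the pointwise criterion above, $F∘{-}$ then preserves the filtered colimit, respectively the reflexive coequaliser, so $F∘{-}$ is friendly.

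For the converse, the key device is the constant-functor embedding $K∶ 𝐂 → [𝐂,𝐂]_f$ sending $x$ to the constant functor $\Delta_x$. I would check four points: that each $\Delta_x$ is finitary (the colimit of a constant diagram over a nonempty connected index category, such as a filtered one, is its value), that $\ev_c∘K = \id_𝐂$ for every $c$, that $(F∘{-})∘K = K∘F$ since $F∘\Delta_x = \Delta_{F(x)}$, and crucially that $K$ preserves all colimits, the pointwise colimit of the $\Delta_{x_i}$ being $\Delta_{\colim_i x_i}$. Then, given a filtered colimit or a reflexive coequaliser in $𝐂$, applying $K$ yields a colimit of the same kind in $[𝐂,𝐂]_f$ (a reflexive pair stays reflexive, $K$ preserving its section automatically), which $F∘{-}$ preserves by hypothesis; rewriting via $(F∘{-})∘K = K∘F$ and applying any $\ev_c$ together with $\ev_c∘K = \id$ shows that $F$ preserves the original colimit. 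Hence $F$ is friendly.

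The main obstacle — really the only non-formal point — is establishing the pointwise description of colimits in $[𝐂,𝐂]_f$ together with the colimit-preservation and intertwining properties of $K$; once these are in place, both implications are a matter of transporting filtered colimits and reflexive coequalisers back and forth along $\ev_c$ and $K$. The step deserving care is making sure the diagrams produced by $\ev_c$ and by $K$ genuinely remain filtered, respectively reflexive — for the latter, that the relevant functors preserve the common section — since this is exactly what lets the single hypothesis of preserving reflexive coequalisers apply uniformly.
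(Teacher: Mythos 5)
Your proof is correct and rests on exactly the fact that the paper's one-line proof cites: colimits in functor categories, and hence in $[𝐂,𝐂]_f$, are computed pointwise, so that $F∘{-}$ preserves a (filtered or reflexive-coequaliser) colimit precisely when $F$ preserves the corresponding pointwise colimits. Your constant-functor embedding for the converse direction is a detail the paper leaves implicit, but it is the natural completion of the same argument rather than a different route.
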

\begin{proof}
  Colimits of functors are computed
  pointwise~\cite[§V.4]{MacLane:cwm}.
\end{proof}
\begin{nota}
  By the proposition, in all of our use cases, $⊗$-friendliness and
  friendliness are synonymous, hence we use the latter for simplicity.
\end{nota}

\begin{defi} A monoidal category is \alert{friendly} when it is
  convenient and all objects are friendly.
\end{defi}

\begin{exa}\label{ex:friendly}\hfill
  \begin{itemize}
  \item The composition monoidal structure on finitary endofunctors on
    any locally finitely presentable category is convenient, though not
    friendly in general. By Proposition~\ref{prop:fiendlyendos}, the
    friendly objects are precisely the endofunctors preserving
    reflexive coequalisers.
  \item On categories of the form $𝐒𝐞𝐭^ℙ$ for some set $ℙ$, though,
    by~\cite[Corollary~6.30]{algebraictheories}, all finitary
    endofunctors are friendly, hence $[𝐒𝐞𝐭^ℙ,𝐒𝐞𝐭^ℙ]_f$ is friendly.
\end{itemize}
\end{exa}

\begin{prop}
  \label{prop:sifted-colim-mon}
  For any convenient monoidal category $𝐂$, the forgetful functor
  \[𝐌𝐨𝐧(𝐂) → 𝐂 \]creates reflexive coequalisers of friendly objects.
  More concretely, given a reflexive pair $X ⇉ Y$ of monoid morphisms,
  if $X$ and $Y$ are friendly, then the coequaliser in $𝐂$ lifts
  uniquely to a cocone of monoids, which is again a coequaliser.
\end{prop}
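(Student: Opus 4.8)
The plan is to form the coequaliser in $𝐂$ and transport the monoid structure along it. Since $𝐂$ is convenient it is locally finitely presentable, hence cocomplete, so the reflexive pair $f,g∶ X ⇉ Y$ underlying the given reflexive pair of monoid morphisms (with common section $s$) has a coequaliser $q∶ Y → Z$ in $𝐂$. I would take $e_Z = q∘e_Y$ as unit, and obtain the multiplication $m_Z∶ Z⊗Z → Z$ from a universal property once the key lemma below is in place. Indeed, if $q⊗q$ is known to be the coequaliser of the reflexive pair $f⊗f,g⊗g∶ X⊗X ⇉ Y⊗Y$, then the composite $q∘m_Y∶ Y⊗Y → Z$ coequalises this pair — because $f$ and $g$ are monoid morphisms we have $m_Y∘(f⊗f)=f∘m_X$ and $m_Y∘(g⊗g)=g∘m_X$, whence $q∘m_Y∘(f⊗f)=q∘f∘m_X=q∘g∘m_X=q∘m_Y∘(g⊗g)$ — and therefore factors uniquely as $m_Z∘(q⊗q)$.

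The crux, and the main obstacle, is exactly this lemma: that $q⊗q$ is again a coequaliser. I would prove it by a $3×3$-lemma argument applied to the grid with entries $A_{ij}$, for $i,j∈\{1,2,3\}$, obtained by tensoring the coequaliser diagram $X ⇉ Y → Z$ with itself, where the index $1,2,3$ stands for the factor $X,Y,Z$ respectively. The three columns $A_{1j} ⇉ A_{2j} → A_{3j}$ are the images of $q$ under $-⊗X$, $-⊗Y$ and $-⊗Z$, all of which are coequalisers since the tensor preserves all colimits on the left. The first two rows $A_{i1} ⇉ A_{i2} → A_{i3}$ are the images of $q$ under $X⊗-$ and $Y⊗-$, which are coequalisers precisely because $X$ and $Y$ are friendly, so that these finitary functors preserve the reflexive coequaliser $q$. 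The $3×3$ lemma for reflexive coequalisers then yields that the diagonal map $q⊗q∶ A_{22} → A_{33}$ is the coequaliser of $f⊗f,g⊗g$, with section $s⊗s$. The essential subtlety is that no hypothesis on $Z$ is needed: $Z$ occurs only as a right tensor factor (covered by left-cocontinuity) or as a codomain, so the asymmetric friendliness assumption on $X$ and $Y$ alone suffices.

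Finally I would verify the monoid axioms for $(Z,e_Z,m_Z)$ and conclude creation. Associativity and unitality follow by cancellation: the maps $q⊗q$ and $q⊗q⊗q$ are epimorphisms — the latter because each of $Y^{⊗2}⊗q$, $Y⊗q⊗Z$ and $q⊗Z⊗Z$ is a coequaliser (using that $Y$, and hence every $Y^{⊗n}$, is friendly, together with left-cocontinuity of the tensor), so their composite $q^{⊗3}$ is epic — and the two composites $m_Z∘(m_Z⊗Z)$ and $m_Z∘(Z⊗m_Z)$ agree after precomposition with $q^{⊗3}$, by associativity of $m_Y$ and compatibility of $q$; unitality is analogous and simpler. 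By construction $q$ is a monoid morphism, and since $q$ is epic the monoid structure on $Z$ making it so is unique, which gives the uniqueness of the lift. Lastly, any monoid morphism $h∶ Y → W$ coequalising $f,g$ factors through $q$ by a unique $𝐂$-morphism $\bar h∶ Z → W$, and $\bar h$ is a monoid morphism by cancelling the epimorphisms $q$ and $q⊗q$; hence $q$ is the coequaliser in $𝐌𝐨𝐧(𝐂)$ and the forgetful functor creates it, as claimed.
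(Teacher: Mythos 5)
Your proof is correct, and it takes a genuinely different route from the paper's. The paper never constructs the monoid structure on $Z$ by hand: it first reduces creation to a preservation statement — by Proposition~\ref{prop:algprescolim}, it suffices that the free-monoid monad $(-)^⋆$ preserve reflexive coequalisers of friendly objects — and then proves that preservation via the initial-algebra description $X^⋆ = μA.(I + X⊗A)$ of Proposition~\ref{prop:sigmamonchar}: it shows that the coequaliser object $Z$ is itself friendly, proves by induction (using the same $3×3$ lemma you invoke, but applied with all three rows, which is exactly where friendliness of $Z$ is needed) that each $H_Xⁿ(0) ⇉ H_Yⁿ(0) ↠ H_Zⁿ(0)$ is a reflexive coequaliser, where $H_X = I + X⊗{-}$, and concludes by interchange of colimits. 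You instead equip the coequaliser directly with a monoid structure, starting from the key lemma that $q⊗q$ coequalises $f⊗f,g⊗g$, and verify the axioms, uniqueness of the lift, and the universal property by epi-cancellation. Your key lemma is sound, including the subtle point you flag: the $3×3$ argument with all three columns but only the top two rows does suffice (compute the colimit of the $2×2$ grid of reflexive pairs rows-first, then columns, and use that the diagonal embedding of the reflexive-pair category into its square is final, by siftedness), so friendliness of $Z$ is never required — on this point your argument is sharper than the paper's, where friendliness of $Z$ is an explicit intermediate step; your epimorphism bookkeeping for $q⊗q$ and $q^{⊗3}$ (via $Y^{⊗2}⊗q$, $Y⊗q⊗Z$, $q⊗Z⊗Z$, using that friendly objects are closed under tensor) is also correct. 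What the paper's more abstract detour buys is reusability: the same scheme, with $I + X⊗A$ replaced by $I + X⊗A + F(A)$, immediately yields Proposition~\ref{prop:algebraicptstr} (the free $F$-monoid monad on a friendly pointed strong $F$ is friendly), and it establishes along the way that friendly objects are closed under reflexive coequalisers. Your hands-on route is more elementary and self-contained — no free monoids, no induction, minimal hypotheses — but would need reworking to deliver those byproducts.
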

\begin{proof}
  Monoids are the algebras of the ``free monoid'' monad, which we
  denote by $(-)^⋆$ in this proof. Furthermore, a forgetful functor
  from monad algebras always creates those colimits that the monad
  preserves (Proposition~\ref{prop:algprescolim}).  It thus suffices
  to show that $(-)^⋆$ preserves reflexive coequalisers of friendly
  objects.

  Let thus $X₁ ⇉ X₂ ↠ Z$ denote a reflexive coequaliser, with $X₁$ and
  $X₂$ friendly.

  Let us first show that $Z$ is again friendly.
  For this, we consider any reflexive coequaliser
  $A₁ ⇉ A₂ ↠ C$. Then, we have:
  \[
  \begin{array}{rcll}
    Z⊗C & = & (\colimᵢ Xᵢ) ⊗ (\colimⱼ Aⱼ) \\
        & ≅ & \colimᵢ (Xᵢ ⊗ (\colimⱼ Aⱼ)) & \mbox{($𝐂$ is convenient)} \\
        & ≅ & \colimᵢ \colimⱼ (Xᵢ ⊗ Aⱼ) & \mbox{(each $Xᵢ$ is
                                          friendly)}     \\
        & ≅ & \colimⱼ \colimᵢ (Xᵢ ⊗ Aⱼ) & \mbox{(by interchange of
                                          colimits)}     \\
        & ≅ & \colimⱼ  ((\colimᵢ Xᵢ) ⊗ Aⱼ) & \mbox{($𝐂$ is
                                             convenient)}     \\
        & = & \colimⱼ  (Z ⊗ Aⱼ)\rlap{,}
  \end{array}\]
  as desired.

  Furthermore, by Proposition~\ref{prop:sigmamonchar} (with $Σ = 0$),
  $Z^⋆$ is the initial algebra of the endofunctor $H_Z = I + Z⊗{-}$.
  Morevoer, $H_Z$ is the coequaliser of $H_{X₁} ⇉ H_{X₂}$, i.e.,
  $(I + X₁⊗{-}) ⇉ (I + X₂⊗{-})$. 

  We then prove by induction that $H_{X₁}ⁿ(0) ⇉ H_{X₂}ⁿ(0) ↠ H_Zⁿ(0)$
  is again a (reflexive) coequaliser, for all $n ∈ ℕ$.
  \begin{itemize}
  \item The base case is trivial.
  \item Assuming that  $H_{X₁}ⁿ(0) ⇉ H_{X₂}ⁿ(0) ↠ H_Zⁿ(0)$
    is a coequaliser, we consider the following diagram.
    \begin{center}
      \diag{%
        H_{X₁}(H_{X₁}ⁿ(0)) \& H_{X₁}(H_{X₂}ⁿ(0)) \& H_{X₁}(H_{𝐙}ⁿ(0)) \\
        H_{X₂}(H_{X₁}ⁿ(0)) \& H_{X₂}(H_{X₂}ⁿ(0)) \& H_{X₂}(H_{𝐙}ⁿ(0)) \\
        H_{𝐙}(H_{X₁}ⁿ(0)) \& H_{𝐙}(H_{X₂}ⁿ(0)) \& H_{𝐙}(H_{𝐙}ⁿ(0))
      }{%
        (m-1-1.5) edge[labela={}] (m-1-2.175) %
        (m-1-1.-5) edge[labela={}] (m-1-2.185) %
        (m-1-1.-70) edge[labell={}] (m-2-1.70) %
        (m-1-1.-110) edge[labell={}] (m-2-1.110) %
        (m-2-1.5) edge[labela={}] (m-2-2.175) %
        (m-2-1.-5) edge[labela={}] (m-2-2.185) %
        (m-1-2.-70) edge[labell={}] (m-2-2.70) %
        (m-1-2.-110) edge[labell={}] (m-2-2.110) %
        (m-1-2) edge[labela={}] (m-1-3) %
        (m-2-2) edge[labelb={}] (m-2-3) %
        (m-1-3.-70) edge[labell={}] (m-2-3.70) %
        (m-1-3.-110) edge[labell={}] (m-2-3.110) %
        (m-3-1.5) edge[labela={}] (m-3-2.175) %
        (m-3-1.-5) edge[labela={}] (m-3-2.185) %
        (m-2-1) edge[labell={}] (m-3-1) %
        (m-2-2) edge[labelr={}] (m-3-2) %
        (m-3-2) edge[labelb={}] (m-3-3) %
        (m-2-3) edge[labelr={}] (m-3-3) %
      }
    \end{center}
    By construction all columns are reflexive coequalisers, and by
    friendliness so are all rows.
    By~\cite[Lemma~8.4.2]{BarrWells:ttt}, the diagonal is thus again a
    (reflexive) coequaliser.
  \end{itemize}
  Finally, by interchange of colimits, we obtain that
  $X₁^⋆ ⇉ X₂^⋆ ↠ Z^⋆$ is also a coequaliser, as desired.
\end{proof}
\begin{cor}
\label{cor:equaliser-monads}
  Reflexive coequalisers of friendly monads on a finitely presentable category
  $𝐂$ are computed pointwise.

  Similarly, reflexive coequalisers of finitary monads on a category
  of the form $𝐒𝐞𝐭^ℙ$ for some set $ℙ$, are computed pointwise.
\end{cor}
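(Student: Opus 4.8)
The plan is to instantiate Proposition~\ref{prop:sifted-colim-mon} at the monoidal category $[𝐂,𝐂]_f$ of finitary endofunctors, equipped with the composition tensor product. By Proposition~\ref{prop:finitary-endo-convenient} this category is convenient, and its monoids are exactly the finitary monads on $𝐂$, so that $𝐌𝐧𝐝_f(𝐂) = 𝐌𝐨𝐧([𝐂,𝐂]_f)$ with the expected forgetful functor $𝐌𝐧𝐝_f(𝐂) → [𝐂,𝐂]_f$. Moreover, by Proposition~\ref{prop:fiendlyendos}, a finitary monad is friendly precisely when it is $⊗$-friendly as an object of $[𝐂,𝐂]_f$. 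Hence Proposition~\ref{prop:sifted-colim-mon} applies verbatim: the forgetful functor $𝐌𝐧𝐝_f(𝐂) → [𝐂,𝐂]_f$ creates reflexive coequalisers of friendly objects.

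It then remains to observe that creation, in this case, amounts to pointwise computation. First I would note that creation means the reflexive coequaliser of a pair $T_𝐄 ⇉ T_𝐃$ of friendly monads is computed as in the base category $[𝐂,𝐂]_f$. Next, since $𝐂$ is locally finitely presentable, hence cocomplete, colimits in the functor category $[𝐂,𝐂]_f$ are computed pointwise~\cite[§V.4]{MacLane:cwm}. Combining these two facts, for each object $X$ the sequence $T_𝐄(X) ⇉ T_𝐃(X) ↠ T'(X)$ is a coequaliser in $𝐂$, and the monad structure on $T'$ is the unique one lifting this family --- which is exactly the assertion that the monad coequaliser is computed pointwise.

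Finally, for the second statement, I would invoke Example~\ref{ex:friendly}: on any category of the form $𝐒𝐞𝐭^ℙ$, every finitary endofunctor is friendly, so that the monoidal category $[𝐒𝐞𝐭^ℙ,𝐒𝐞𝐭^ℙ]_f$ is itself friendly. Consequently every finitary monad on $𝐒𝐞𝐭^ℙ$ is friendly, the friendliness hypothesis of the first statement becomes automatic, and the pointwise computation of reflexive coequalisers follows by the same argument.

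Since both statements are essentially direct specialisations of Proposition~\ref{prop:sifted-colim-mon}, I do not anticipate a genuine obstacle. The only point requiring care is the bookkeeping identifying \emph{creation of the coequaliser by the forgetful functor}, together with \emph{pointwise computation of colimits in the functor category}, with the informal phrase ``computed pointwise'' used in the statement; this is the step I would spell out most explicitly.
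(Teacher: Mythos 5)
Your proposal is correct and follows exactly the paper's route: the paper's own proof simply says the first statement "follows directly from" Proposition~\ref{prop:sifted-colim-mon} (instantiated, as you do, at the convenient monoidal category $[𝐂,𝐂]_f$ whose monoids are the finitary monads, with friendliness matched via Proposition~\ref{prop:fiendlyendos}), and obtains the second statement from Example~\ref{ex:friendly}. The only difference is that you spell out the bookkeeping (creation by the forgetful functor plus pointwise colimits in $[𝐂,𝐂]_f$ equals "computed pointwise"), which the paper leaves implicit.
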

\begin{proof}
  The first point follows directly from the proposition.  For the
  second, we additionally use the fact that $[𝐒𝐞𝐭^ℙ,𝐒𝐞𝐭^ℙ]_f$ is
  friendly (Example~\ref{ex:friendly}).
\end{proof}

\subsection{Initial algebras for equational systems}\label{ss:friendlyES}
In this section, we want to apply the previous corollary to
characterise the induced monad and initial algebra for equational
systems.  For this, we should show that the monads $T_𝐃$ and $T_𝐄$
underlying the relevant parallel pair $𝐃 ⇉ 𝐄$ in $𝐌𝐨𝐧𝐚𝐝𝐢𝐜_f/𝐂$ are
indeed friendly. What helps us here is that these monads are free on a
friendly endofunctor, as we now show. We again state this in the
abstract context of a convenient monoidal category.
\begin{prop}\label{prop:freemon:algebraic}
  In any convenient monoidal category, the free monoid on a friendly object
  is again friendly.
\end{prop}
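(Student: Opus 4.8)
The plan is to combine the explicit description of the free monoid as an $\omega$\nobreakdash-colimit with the observation that friendly objects are closed under the operations out of which this colimit is built. Write $X^{\star}$ for the free monoid on an object $X$. By Proposition~\ref{prop:sigmamonchar} with $\Sigma = 0$, we have $X^{\star} = \mu A.(I + X \otimes A)$, the initial algebra of the endofunctor $H_X = I + X \otimes {-}$; this is exactly the identification already used for the object $Z$ in the proof of Proposition~\ref{prop:sifted-colim-mon}. Since $X \otimes {-}$ is finitary by convenience and adding the constant $I$ preserves this, $H_X$ is finitary, so its initial algebra is computed as the $\omega$\nobreakdash-colimit $X^{\star} \cong \colim_n H_X^n(0)$ of the initial chain (the Adámek construction, available since a convenient category is cocomplete; cf.\ Proposition~\ref{prop:reiterman}). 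I would also record at the outset that, in a convenient monoidal category, friendliness of an object $Y$ reduces to the single requirement that $Y \otimes {-}$ preserve reflexive coequalisers, the finitariness clause being automatic from preservation of filtered colimits on the right.

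First I would isolate two closure properties of the class of friendly objects. For closure under colimits, suppose $Z = \colim_d Z_d$ with each $Z_d$ friendly, and let $A_1 \rightrightarrows A_2 \twoheadrightarrow C$ be a reflexive coequaliser. Preservation of all colimits on the left gives $Z \otimes C \cong \colim_d (Z_d \otimes C)$; by friendliness of each $Z_d$, the object $Z_d \otimes C$ is the coequaliser of $Z_d \otimes A_1 \rightrightarrows Z_d \otimes A_2$; interchanging the two colimits and applying left-cocontinuity once more then exhibits $Z \otimes C$ as the coequaliser of $Z \otimes A_1 \rightrightarrows Z \otimes A_2$. This is precisely the interchange computation of Proposition~\ref{prop:sifted-colim-mon}, now run for an arbitrary colimit; in particular the initial object $0$ (an empty colimit) and binary coproducts of friendly objects are friendly. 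For closure under tensor and the unit, the unit $I$ is friendly since $I \otimes {-} \cong \mathrm{id}$, and if $Y, Y'$ are friendly then so is $Y \otimes Y'$, because $(Y \otimes Y') \otimes {-} \cong Y \otimes (Y' \otimes {-})$ is a composite of two functors preserving reflexive coequalisers (each functor sends the common section along, so the intermediate pair is again reflexive).

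Then I would prove by induction on $n$ that each $H_X^n(0)$ is friendly. The base case $H_X^0(0) = 0$ is friendly as an empty colimit. For the step, $H_X^{n+1}(0) = I + X \otimes H_X^n(0)$ is friendly because $I$ is friendly, $X \otimes H_X^n(0)$ is friendly by closure under tensor (using that both $X$ and, by the induction hypothesis, $H_X^n(0)$ are friendly), and a binary coproduct of friendly objects is friendly. Finally $X^{\star} \cong \colim_n H_X^n(0)$ is a colimit of friendly objects, hence friendly by closure under colimits, which is the claim.

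The one genuinely delicate point — and the step I would treat most carefully — is the interchange argument inside the closure-under-colimits property: one must verify that the canonical comparison obtained by commuting the outer colimit past the coequaliser is exactly the cocone produced by applying $Z \otimes {-}$ to the coequalising cocone, so that the resulting isomorphism really witnesses preservation of the reflexive coequaliser rather than a mere abstract isomorphism of objects. Modulo this naturality bookkeeping, the whole argument rests only on the two defining clauses of convenience (all colimits preserved on the left, filtered colimits on the right) together with the commutation of colimits with colimits, so nothing is needed beyond Proposition~\ref{prop:sigmamonchar} and the pattern already established in Proposition~\ref{prop:sifted-colim-mon}.
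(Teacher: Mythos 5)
Your argument is correct, but it takes a genuinely different route from the paper's proof. The paper proceeds abstractly: since $X^{\star}$ is a monoid, $X^{\star}\otimes{-}$ is a monad, and Corollary~\ref{cor:Ucreates:Tpreserves} reduces the claim to showing that the forgetful functor from its category of algebras creates the relevant colimits; this follows because that category is isomorphic over the base to the category of algebras for the \emph{endofunctor} $X\otimes{-}$ (\cite[Proposition~23.2]{KellyUnified}), whose forgetful functor creates every colimit that $X\otimes{-}$ preserves (Lemma~\ref{lem:falg-creates-colim}). That route needs no chain construction and no induction, and it delivers the stronger statement that $X^{\star}\otimes{-}$ preserves colimits of \emph{any} shape that $X\otimes{-}$ preserves. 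You instead transplant the technique of Proposition~\ref{prop:sifted-colim-mon}: identify $X^{\star}$ with the colimit of the initial chain of $I + X\otimes{-}$ (via Proposition~\ref{prop:sigmamonchar} with $\Sigma=0$), prove that friendly objects are closed under colimits, tensor, and the unit, and induct along the chain. What your route buys is self-containedness: it rests only on the two clauses of convenience plus interchange of colimits, avoids the appeal to Kelly's free-monad/endofunctor-algebra comparison, and produces reusable closure lemmas (which would in fact also yield the paper's more general statement about an arbitrary shape $\mathbf{D}$, since nothing in your interchange computation is specific to reflexive coequalisers). What it costs is length and the naturality bookkeeping you flag at the end --- which, for what it is worth, the paper's own proof of Proposition~\ref{prop:sifted-colim-mon} handles with exactly the same chain-of-canonical-isomorphisms level of rigour, so your proof is complete by the paper's standards.
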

\begin{proof}
  We in fact prove the more general result that if $X∈ 𝐂$ preserves
  $𝐃$-colimits for a given category $𝐃$, in the sense that $X⊗ -$
  preserves $𝐃$-colimits, then so does $X^⋆ ⊗ {-}$. By
  Corollary~\ref{cor:Ucreates:Tpreserves}, it is enough to show that
  the forgetful functor from the category of algebras for the monad
  $X^⋆⊗-$ creates $𝐃$-colimits.  But, by~\cite[Proposition
  23.2]{KellyUnified}, this category of algebras is isomorphic (over
  $𝐂$) to the category of algebras for the endofunctor $X⊗-$. Thus, we
  are left with showing that the forgetful functor from this latter
  category creates $𝐃$-colimits, which follows from the next lemma.
\end{proof}

The following is analogous to Proposition~\ref{prop:algprescolim},
with an endofunctor instead of a monad.
\begin{lem}
  \label{lem:falg-creates-colim}
  Let $F$ be an endofunctor on a category $𝐂$. Then, the forgetful functor
  $F\alg → 𝐂$ creates any colimit that $F$ preserves.
  More specifically, given a category $𝐃$ such that $F$ preserves colimits
  of all diagrams $J∶ 𝐃 → 𝐂$, then the forgetful functor $F\alg → 𝐂$ creates 
  colimits of all diagrams $J∶ 𝐃 → F\alg$.
\end{lem}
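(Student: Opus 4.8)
The plan is to run the standard ``creation-by-lifting'' argument for an algebra-forgetful functor, exactly as in the monad case of Proposition~\ref{prop:algprescolim} but now with a bare endofunctor. Write $U∶ F\alg → 𝐂$ for the forgetful functor and fix a diagram $J∶ 𝐃 → F\alg$, with components the $F$-algebras $(A_d, a_d∶ F(A_d) → A_d)$; the actions $a_d$ assemble into a natural transformation $a∶ F∘UJ → UJ$. Assume, as the definition of creation requires, that $UJ$ has a colimit, given by a universal cocone $(λ_d∶ A_d → C)_{d∈𝐃}$. The whole proof hinges on the hypothesis that $F$ preserves this colimit, so that $(F(λ_d)∶ F(A_d) → F(C))_d$ is again colimiting.

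First I would build the algebra structure on the colimit. Postcomposing the actions with the cocone, the family $(λ_d∘a_d∶ F(A_d) → C)_d$ is a cocone under $F∘UJ$, so the universal property of $F(C)$ supplies a unique $c∶ F(C) → C$ with $c∘F(λ_d) = λ_d∘a_d$ for every $d$. This identity says precisely that each $λ_d$ is an $F$-algebra morphism $(A_d, a_d) → (C, c)$, so $(C, c)$ lifts the cocone $(λ_d)$ to $F\alg$; and since $c$ is forced by the universal property of $F(C)$, this is the only $F$-algebra structure on $C$ making all the $λ_d$ into algebra morphisms, i.e.\ the unique lift.

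Next I would check that the lift is colimiting in $F\alg$. Given any $F$-algebra $(B, b)$ and a cocone $(μ_d∶ (A_d,a_d) → (B,b))_d$ in $F\alg$, the underlying cocone in $𝐂$ factors uniquely through $C$ by some $h∶ C → B$ with $h∘λ_d = μ_d$. To see that $h$ is an algebra morphism, i.e.\ $h∘c = b∘F(h)$, I would again invoke that $(F(λ_d))_d$ is colimiting: both composites $F(C) → B$ agree after precomposition with each $F(λ_d)$, where the equation reduces to $μ_d∘a_d = b∘F(μ_d)$, which holds because $μ_d$ is an algebra morphism. Faithfulness of $U$ then promotes $h$ to the unique mediating morphism in $F\alg$, finishing creation. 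The only place any genuine hypothesis is used, and thus the single ``load-bearing'' step, is the construction of $c$: it is exactly preservation of the colimit by $F$ that endows $F(C)$ with the universal property needed both to define $c$ and to pin down the lift and the mediating morphism, all remaining verifications being routine diagram chases against the jointly epimorphic family $(F(λ_d))_d$. The second, more specific clause of the statement is then just the instantiation of this argument to the case where the preserved colimits are those of shape $𝐃$.
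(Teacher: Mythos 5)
Your proof is correct: the paper's own proof of this lemma is literally the single word ``Straightforward,'' and your argument — building the algebra structure $c$ on the colimit via preservation, checking uniqueness of the lift, and verifying the mediating morphism is an algebra map by testing against the jointly epimorphic family $(F(\lambda_d))_d$ — is exactly the standard creation argument the authors intend. Nothing to add.
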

\begin{proof}
  Straightforward.
\end{proof}

\begin{cor}\label{cor:freemonad:algebraic}
  The free monad on a friendly endofunctor on a finitely presentable category
  is friendly.
\end{cor}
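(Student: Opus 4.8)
The plan is to reduce the statement directly to Proposition~\ref{prop:freemon:algebraic} by passing through the standard identification of finitary monads with monoids. Let $𝐂$ be the locally finitely presentable category in question and let $F$ be a friendly endofunctor on it. By Proposition~\ref{prop:finitary-endo-convenient}, the category $[𝐂,𝐂]_f$ of finitary endofunctors is convenient for the composition monoidal structure $(∘,\Id)$, and in this monoidal category the monoids are exactly the finitary monads on $𝐂$. Crucially, the free monad on $F$ coincides with the free monoid on $F$ in $[𝐂,𝐂]_f$; in particular it is again an object of $[𝐂,𝐂]_f$, i.e.\ a finitary endofunctor.

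Second, I would translate the two friendliness notions into one another. By Proposition~\ref{prop:fiendlyendos}, a finitary endofunctor is friendly precisely when it is $⊗$-friendly as an object of $[𝐂,𝐂]_f$. Hence the hypothesis that $F$ is a friendly endofunctor says exactly that $F$ is a friendly object of the convenient monoidal category $[𝐂,𝐂]_f$.

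Finally, I would invoke Proposition~\ref{prop:freemon:algebraic}: the free monoid on the friendly object $F$ is again friendly in $[𝐂,𝐂]_f$. Applying Proposition~\ref{prop:fiendlyendos} once more in the reverse direction, this free monoid --- which is the free monad on $F$ --- is a friendly endofunctor, as desired.

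The proof is thus a pure instantiation, with all the real content residing in Proposition~\ref{prop:freemon:algebraic}. The only point needing care is the bookkeeping of the dictionary ``free monad $=$ free monoid in $[𝐂,𝐂]_f$'', together with the fact that Proposition~\ref{prop:freemon:algebraic} genuinely outputs an object of $[𝐂,𝐂]_f$ (so that the result is a \emph{finitary} endofunctor, not merely an endofunctor); this is automatic because friendly objects are by definition finitary. I expect no substantive obstacle beyond confirming that this translation is legitimate.
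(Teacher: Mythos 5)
Your proof is correct and is exactly the paper's intended argument: the corollary is stated without proof precisely because it is the instantiation of Proposition~\ref{prop:freemon:algebraic} at the convenient monoidal category $[𝐂,𝐂]_f$ (Proposition~\ref{prop:finitary-endo-convenient}), using Proposition~\ref{prop:fiendlyendos} to identify the two notions of friendliness and the standard identification of free monads with free monoids for the composition tensor. Your bookkeeping of that dictionary is the whole content, and it is handled correctly.
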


We now want to apply Corollary~\ref{cor:equaliser-monads} to prove a
first free+quotient explicit description of initial algebras for
equational systems.  The exact same technique will then be applied to
other registers in the following subsections, namely to monoidal
equational systems, equational modular signatures, and equational
facet-based signatures.

Before giving the explicit description, we need to introduce the
following.
\begin{defi}
  For any functorial term $K∶ Γ → Σ^*$, let $\tilde{K}$ denote the monad
  morphism \[(Σ+Γ)^* ≅ Σ^* + Γ^* \xto{[\id_{Σ^*},K']} Σ^*\rlap{,}\]
  where $K'∶ Γ^* → Σ^*$ denotes the monad morphism induced by $K$ by
  freeness of $Γ^*$.
\end{defi}
We may now state:
\begin{thm}\label{thm:ES}
Let  $𝔼 = (𝐂 ∶ Γ ⊢ L = R : Σ)$ be any equational system. Then:
\begin{enumerati}
\item The forgetful functor $𝔼\alg → 𝐂$ is finitary monadic.
\item \label{item:coeqtilde} The finitary monad $𝔼^*$ underlying the forgetful functor
  $𝔼\alg → 𝐂$ is the coequaliser (in $𝐌𝐧𝐝_f(𝐂)$) of
  $\tilde{L},\tilde{R}∶ (Σ+Γ)^* ⇉ Σ^*$.
\end{enumerati}
Furthermore, if $Σ$ and $Γ$ are friendly (which is in particular the
case when $𝐂$ is $[𝐒𝐞𝐭^ℙ,𝐒𝐞𝐭^ℙ]_f$ for some set $ℙ$), we have:
\begin{enumerati}[resume]
\item 
  The above coequaliser $𝔼^*$ is created by the forgetful functor
  $𝐌𝐧𝐝_f(𝐂) → [𝐂,𝐂]_f$, hence computed pointwise.
\item 
  The initial $𝔼$-algebra is the
  coequaliser of
  \[\tilde{L}₀,\tilde{R}₀∶ (Σ+Γ)^*(0) ⇉ Σ^*(0)\rlap{,}\]
  with its canonical $Σ$-algebra structure.
\end{enumerati}
\end{thm}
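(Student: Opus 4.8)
The plan is to realise $𝔼\alg$ as an equaliser of finitary monadic functors, transport this across the contravariant equivalence of Corollary~\ref{cor:street} so as to recognise the underlying monad $𝔼^*$ as a coequaliser in $𝐌𝐧𝐝_f(𝐂)$, and finally, under the friendliness hypotheses, invoke the reflexive-coequaliser machinery (Corollary~\ref{cor:equaliser-monads}) to compute that coequaliser pointwise.

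For the first two claims, I would start from Proposition~\ref{prop:alteqsysalg}, specifically from its characterisation~\ref{item:alg:reflmonadic}: an $𝔼$-algebra is exactly a $Σ$-algebra lying in the equaliser of the two functors $Σ\alg ⇉ (Σ+Γ)\alg$ over $𝐂$ that correspond, via Proposition~\ref{prop:alteqsys}, to $L$ and $R$. Since $Σ$ and $Σ+Γ$ are finitary endofunctors, the forgetful functors $Σ\alg → 𝐂$ and $(Σ+Γ)\alg → 𝐂$ are finitary monadic by Proposition~\ref{prop:reiterman}. Corollary~\ref{cor:monadicequaliser} then applies verbatim, with $F₁∶ Σ\alg → 𝐂$, $F₂∶ (Σ+Γ)\alg → 𝐂$, and $G₁,G₂$ the two functors above: the composite $𝔼\alg → Σ\alg → 𝐂$ is finitary monadic, which is the monadicity claim. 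For the coequaliser description, the equivalence $(-)\alg∶ \op{𝐌𝐧𝐝_f(𝐂)} → 𝐌𝐨𝐧𝐚𝐝𝐢𝐜_f/𝐂$ of Corollary~\ref{cor:street} is contravariant, so the equaliser of monadic functors just obtained corresponds to a coequaliser in $𝐌𝐧𝐝_f(𝐂)$; the two functors $G₁,G₂$ correspond, via the equivalence \ref{item:reflmonads} $⇔$ \ref{item:reflmonadic} of Proposition~\ref{prop:alteqsys}, precisely to the monad morphisms $\tilde{L},\tilde{R}∶ (Σ+Γ)^* ⇉ Σ^*$ by the very definition of $\tilde{K}$, whence $𝔼^*$ is their coequaliser.

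For the last two claims, I would verify the hypotheses of Corollary~\ref{cor:equaliser-monads} for the parallel pair $\tilde{L},\tilde{R}$. Friendliness of the two monads follows from Corollary~\ref{cor:freemonad:algebraic}: if $Σ$ and $Γ$ are friendly, then so is their coproduct $Σ+Γ$ (reflexive coequalisers, being colimits, commute with the coproduct), hence both $Σ^*$ and $(Σ+Γ)^*$ are friendly as free monads on friendly endofunctors. Reflexivity of the pair is immediate: the canonical monad morphism $Σ^* → (Σ+Γ)^*$ is a common section of $\tilde{L}$ and $\tilde{R}$, since both restrict to $\id_{Σ^*}$ on the summand $Σ^* ↪ Σ^* + Γ^* ≅ (Σ+Γ)^*$ by the definition $\tilde{K} = [\id_{Σ^*},K']$. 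Corollary~\ref{cor:equaliser-monads} thus shows that the coequaliser $𝔼^*$ is created by the forgetful functor $𝐌𝐧𝐝_f(𝐂) → [𝐂,𝐂]_f$, i.e.\ computed pointwise, which is the pointwise claim (the parenthetical case $𝐂 = [𝐒𝐞𝐭^ℙ,𝐒𝐞𝐭^ℙ]_f$ being covered by Example~\ref{ex:friendly}, where all finitary endofunctors are friendly).

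Finally, for the description of the initial $𝔼$-algebra, its carrier is $𝔼^*(0)$ by Notation~\ref{not:star}. Since $𝔼^*$ is computed pointwise, evaluating the coequaliser at $0$ exhibits $𝔼^*(0)$ as the coequaliser of $\tilde{L}₀,\tilde{R}₀∶ (Σ+Γ)^*(0) ⇉ Σ^*(0)$ in $𝐂$; its $Σ$-algebra structure is the one transported from the canonical structure on the initial $Σ$-algebra $Σ^*(0)$ along the coequalising, hence $Σ$-algebra, morphism $Σ^*(0) → 𝔼^*(0)$. The main obstacle I anticipate is not any single deduction but the bookkeeping across the contravariant equivalence in the coequaliser claim — making sure the equaliser of functors corresponds to the coequaliser of exactly $\tilde{L}$ and $\tilde{R}$ — together with confirming that the friendliness and reflexivity just established fit the hypotheses of Corollary~\ref{cor:equaliser-monads} precisely.
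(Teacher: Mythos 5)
Your proof is correct and follows essentially the same route as the paper's: realising $𝔼\alg$ as an equaliser of finitary monadic functors over $𝐂$, invoking Corollary~\ref{cor:monadicequaliser} for monadicity and Corollary~\ref{cor:street} (together with the correspondences of Proposition~\ref{prop:alteqsys}) to identify $𝔼^*$ as the coequaliser of $\tilde{L},\tilde{R}$, and then Corollaries~\ref{cor:freemonad:algebraic} and~\ref{cor:equaliser-monads} for the friendly case. The only (harmless) difference is that the paper derives monadicity from the non-reflexive pair $Σ\alg ⇉ Γ\alg$ and only afterwards passes to the reflexive pair $Σ\alg ⇉ (Σ+Γ)\alg$, whereas you work with the reflexive pair throughout.
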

\begin{proof}
  We start by expressing the category $𝔼\alg$ as an equaliser in $𝐂𝐀𝐓$.
  
  For any functorial term $K∶ Γ → Σ^*$, let $K'∶ Σ\alg → Γ\alg$ map any
  $Σ$-algbra $a∶ Σ(X) → X$ to the composite
  \[Γ(X) \xto{K_X} Σ^*(X) \xto{\tilde{a}} X\rlap{,}\]
  where $\tilde{a}$ denotes the induced $Σ^*$-algebra structure on
  $X$.

  The category $𝔼\alg$ is then the equaliser of the (generally non-reflexive) pair
  below left,
  \begin{center}
    \diag{%
      Σ\alg \& Γ\alg
    }{%
      (m-1-1.north east) edge[bend left=20,labela={L'}] (m-1-2.north west) %
      (m-1-1.-10) edge[bend right=20,labelb={R'}] (m-1-2.190) %
    }
    \hfil
    \diag{%
      Σ\alg \& (Σ+Γ)\alg
    }{%
      (m-1-1.north east) edge[bend left=20,labela={{\tilde{L}}\alg}] (m-1-2.north west) %
      (m-1-1.-10) edge[bend right=20,labelb={{\tilde{R}}\alg}] (m-1-2.190) %
    }
  \end{center}
  which already entails monadicity by
  Corollary~\ref{cor:monadicequaliser}.  But it is also an equaliser
  of the reflexive pair above right, which
  entails~\ref{item:coeqtilde} by Corollary~\ref{cor:street}.  The
  rest then follows from Corollary~\ref{cor:equaliser-monads} using
  Corollary~\ref{cor:freemonad:algebraic}.
\end{proof}
\begin{exa}
  Let us recall Example~\ref{ex:assocES}, where we introduced an
  equational system whose algebras are sets equipped with an
  associative binary operation.  Because we know how to compute
  coequalisers in sets, the theorem says that the free algebra on any
  $X ∈ 𝐒𝐞𝐭$ is obtained by quotienting the free $Σ$-algebra $Σ^*(X)$ by
  following relation $∼$.  We first construct the free $(Σ+Γ)$-algebra
  $(Σ+Γ)^*(X)$ on $X$, obtained by freely adding a binary operation
  and a ternary operation, say $f$, to $X$.  We then define two maps
  $L,R∶ (Σ+Γ)^*(X) → Σ^*(X)$, by recursively interpreting $f(x,y,z)$
  as $(x+y)+z$, resp.\ $x + (y+z)$.  We finally define $∼$ to be
  the smallest equivalence relation such that $x∼y$ whenever $x = L(z)$ and
  $y = R(z)$ for some $z ∈ (Σ+Γ)^*(X)$.
\end{exa}

\subsection{Initial algebras for monoidal equational systems}\label{ss:friendlyMES}
In this section, we characterise the induced monad and initial
algebra of monoidal equational systems, under mild additional
hypotheses.

\begin{thm}\label{thm:MES}
Let  $𝔼 = (𝐂 ∶ Γ ⊢ L =_⊗ R : Σ)$ be any monoidal equational system. Then:
\begin{enumerati}
\item \label{item:meqs:finitary:monadic} The forgetful functor
  $𝔼\alg → 𝐌𝐨𝐧(𝐂)$ is finitary monadic.
\item \label{item:meqs:rcoeq} The finitary monad $𝔼^⋆$ induced by the forgetful functor
  $𝔼\alg → 𝐂$ is the coequaliser of $L,R∶ (Σ+Γ)^⋆ ⇉ Σ^⋆$ in
  $𝐌𝐧𝐝_f(𝐂)$, where we recall that, for any finitary, pointed strong
  endofunctor $F$, $F^⋆$ denotes the ``free $F$-monoid'' monad on $𝐂$.
\end{enumerati}
If $Σ$ and $Γ$ are friendly, which is in particular the case when $𝐂$
is $[𝐒𝐞𝐭^ℙ,𝐒𝐞𝐭^ℙ]_f$ for some set $ℙ$, then:
\begin{enumerati}[resume]
\item The above coequaliser $𝔼^⋆$ is created by the forgetful functor
  $𝐌𝐧𝐝_f(𝐂) → [𝐂,𝐂]_f$, hence computed pointwise.
\item The initial $𝔼$-algebra is the coequaliser of
  $L₀,R₀∶ (Σ+Γ)^*(\id) ⇉ Σ^*(\id)$, equipped with its canonical
  $Σ$-monoid structure.
\end{enumerati}
\end{thm}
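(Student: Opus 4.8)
The plan is to replay the proof of Theorem~\ref{thm:ES} one level up, working over $𝐌𝐨𝐧(𝐂)$ (and over $𝐂$) in place of $𝐂$, and invoking the monoidal correspondences of Proposition~\ref{prop:mon:alteqsys} wherever the endofunctor correspondences of Proposition~\ref{prop:alteqsys} were used. First I would express $𝔼\alg$ as an equaliser in $𝐂𝐀𝐓$. By the characterisation of $𝔼$-algebras established just before Definition~\ref{def:moneqsysreg} (the monoidal analogue of Proposition~\ref{prop:alteqsysalg}), an $𝔼$-algebra is exactly a $Σ$-monoid lying in the equaliser of the pair $Σ\Mon ⇉ Γ\Mon$ over $𝐌𝐨𝐧(𝐂)$ corresponding to $L,R$ under Proposition~\ref{prop:mon:alteqsys}\ref{item:mon:monadic}; equivalently — and this reflexive reformulation is the crux — $𝔼\alg$ is the equaliser of the \emph{reflexive} pair
\[Σ\Mon ⇉ (Σ+Γ)\Mon\]
of functors over $𝐌𝐨𝐧(𝐂)$ corresponding to $L,R$ under Proposition~\ref{prop:mon:alteqsys}\ref{item:mon:reflmonadic}, whose common section is the canonical forgetful functor $(Σ+Γ)\Mon → Σ\Mon$.

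For part~\ref{item:meqs:finitary:monadic}, I would apply Corollary~\ref{cor:monadicequaliser} with base $𝐌𝐨𝐧(𝐂)$ to the pair $Σ\Mon ⇉ Γ\Mon$: the functors $U_Σ$ and $U_Γ$ are finitary monadic by Theorem~\ref{thm:sigmamonadic}, so the equaliser $𝔼\alg → 𝐌𝐨𝐧(𝐂)$ is finitary monadic. For part~\ref{item:meqs:rcoeq}, I would view the \emph{same} reflexive equaliser over $𝐂$: by Proposition~\ref{prop:monadic} the composites $Σ\Mon → 𝐌𝐨𝐧(𝐂) → 𝐂$ and $(Σ+Γ)\Mon → 𝐌𝐨𝐧(𝐂) → 𝐂$ are finitary monadic, with underlying monads $Σ^⋆$ and $(Σ+Γ)^⋆$ by the very definition of $(-)^⋆$ (Notation~\ref{not:flocon}). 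A second application of Corollary~\ref{cor:monadicequaliser}, now with base $𝐂$, then shows $𝔼\alg → 𝐂$ is finitary monadic and underlies the equaliser in $𝐌𝐨𝐧𝐚𝐝𝐢𝐜_f/𝐂$. Transporting this equaliser across the contravariant equivalence $𝐌𝐨𝐧𝐚𝐝𝐢𝐜_f/𝐂 ≃ \op{𝐌𝐧𝐝_f(𝐂)}$ of Corollary~\ref{cor:street} turns it into the coequaliser in $𝐌𝐧𝐝_f(𝐂)$ of the induced reflexive pair $L,R∶ (Σ+Γ)^⋆ ⇉ Σ^⋆$ (the common section becoming the canonical $Σ^⋆ → (Σ+Γ)^⋆$), and the value of this coequaliser is precisely the monad $𝔼^⋆$ underlying $𝔼\alg → 𝐂$.

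For parts~(iii)--(iv) I would invoke Corollary~\ref{cor:equaliser-monads}: to see that the coequaliser $𝔼^⋆$ is created by $𝐌𝐧𝐝_f(𝐂) → [𝐂,𝐂]_f$, hence computed pointwise, it suffices that the monads $Σ^⋆$ and $(Σ+Γ)^⋆$ be \emph{friendly} endofunctors of $𝐂$ (the pair being already reflexive). Establishing this friendliness is the single genuinely new ingredient, and I expect it to be the main obstacle: unlike in Theorem~\ref{thm:ES}, where the free monads were handled by Corollary~\ref{cor:freemonad:algebraic}, the monad $Σ^⋆$ is a free \emph{$Σ$-monoid} monad and is therefore not merely a free monad on an endofunctor. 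I would derive its friendliness from the explicit description $F^⋆(C) = μA.(I + F(A) + C⊗A)$ of Proposition~\ref{prop:sigmamonchar} (applied to $F = Σ$ and $F = Σ+Γ$): writing $F^⋆(C)$ as the filtered colimit of the stages $G_C^n(0)$ of the initial-algebra chain of $G_C(A) = I + F(A) + C⊗A$, one shows by induction that $C ↦ G_C^n(0)$ preserves reflexive coequalisers, using friendliness of $F$ for the $F(A)$ summand, the fact that $⊗$ preserves all colimits on the left (convenience) and reflexive coequalisers on the right of friendly objects, and an interchange-of-colimits argument for the diagonal summand $C⊗G_C^n(0)$ exactly as in the proof of Proposition~\ref{prop:sifted-colim-mon}; passing to the filtered colimit then yields friendliness of $F^⋆$. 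This argument runs smoothly precisely when every object of $𝐂$ is friendly, i.e.\ when $𝐂$ is a friendly monoidal category — which is the case for the advertised $𝐂 = [𝐒𝐞𝐭^ℙ,𝐒𝐞𝐭^ℙ]_f$ (Example~\ref{ex:friendly}), where moreover the hypotheses on $Σ$ and $Γ$ are automatic.

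Finally, for part~(iv) I would evaluate the pointwise coequaliser of~(iii) at the initial object $0$ of $𝐂$. Since coequalisers are computed pointwise, this exhibits the initial $𝔼$-algebra $𝔼^⋆(0)$ as the coequaliser of $(Σ+Γ)^⋆(0) ⇉ Σ^⋆(0)$; and by Proposition~\ref{prop:sigmamonchar} we have $Σ^⋆(0) ≅ Σ^*(\id)$ and $(Σ+Γ)^⋆(0) ≅ (Σ+Γ)^*(\id)$, so the initial $𝔼$-algebra is the coequaliser of $L₀,R₀∶ (Σ+Γ)^*(\id) ⇉ Σ^*(\id)$, equipped with the $Σ$-monoid structure transported along the identifications of~(ii)--(iii).
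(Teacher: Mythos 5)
Your proposal is correct and takes essentially the same route as the paper: parts (i)–(ii) are obtained there by exactly your equaliser argument (Theorem~\ref{thm:sigmamonadic} together with Corollaries~\ref{cor:street} and~\ref{cor:monadicequaliser}), and parts (iii)–(iv) by Corollary~\ref{cor:equaliser-monads} plus friendliness of $Σ^⋆$ and $(Σ+Γ)^⋆$, which the paper isolates as Proposition~\ref{prop:algebraicptstr} and proves precisely by your argument — generalising Proposition~\ref{prop:sifted-colim-mon} using the fixed-point description of Proposition~\ref{prop:sigmamonchar}, under the hypothesis, which you correctly identify as necessary, that $𝐂$ itself be friendly. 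The one slip is your appeal to Proposition~\ref{prop:monadic} for finitary monadicity of the composite $Σ\Mon → 𝐌𝐨𝐧(𝐂) → 𝐂$: that proposition is a cancellation property (monadicity of the diagonal and of the right-hand leg gives monadicity of the top arrow), not closure of monadic functors under composition, which fails in general; the fact you need is instead established directly in the proof of Theorem~\ref{thm:sigmamonadic} via~\cite[§7.2]{FioreHurEquational}, and is what Notation~\ref{not:flocon} presupposes.
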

\begin{proof}
  By Theorem~\ref{thm:sigmamonadic}, both forgetful functors
  $Σ\Mon → 𝐂$ and $Γ\Mon → 𝐂$ are finitary
  monadic. Corollaries~\ref{cor:street} and~\ref{cor:monadicequaliser}
  thus directly entail~\ref{item:meqs:finitary:monadic}
  and~\ref{item:meqs:rcoeq}. The rest will follow from
  Corollary~\ref{cor:equaliser-monads} if we prove that both monads
  $(Σ+Γ)^⋆$ and $Σ^⋆$ are friendly. This is proved in the next lemma.
\end{proof}

\begin{prop}\label{prop:algebraicptstr}
  For any finitary, pointed strong endofunctor $F$ on a friendly
  monoidal category $𝐂$, if $F$ is friendly, then so is the ``free
  $F$-monoid'' monad $F^⋆$.
\end{prop}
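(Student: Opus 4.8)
Since the free $F$-monoid monad $F^⋆$ is already finitary — its forgetful functor $Σ\Mon → 𝐌𝐨𝐧(𝐂) → 𝐂$ is finitary monadic by Theorem~\ref{thm:sigmamonadic} — the plan is to establish only that $F^⋆$ preserves reflexive coequalisers, friendliness then being immediate. The starting point is Proposition~\ref{prop:sigmamonchar}, which computes $F^⋆(C) = μA.(I + F(A) + C ⊗ A)$, the initial algebra of the endofunctor $H_C ≔ I + F({-}) + C ⊗ {-}$. Because $F$ is finitary and, by convenience of $𝐂$, the functor $C ⊗ {-}$ preserves filtered colimits, each $H_C$ is finitary, so its initial algebra is the colimit of the $ω$-chain $0 → H_C(0) → H_C^2(0) → \cdots$; that is, $F^⋆(C) ≅ \colim_n H_C^n(0)$, and this presentation is natural in $C$ since $C ↦ H_C$ varies only in its $C ⊗ {-}$ summand.

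Given a reflexive coequaliser $C_1 ⇉ C_2 ↠ Z$ in $𝐂$, I would then show by induction on $n$ that $H_{C_1}^n(0) ⇉ H_{C_2}^n(0) ↠ H_Z^n(0)$ is again a reflexive coequaliser, following verbatim the inductive step in the proof of Proposition~\ref{prop:sifted-colim-mon}. The case $n = 0$ is trivial. For the step, setting $A_j ≔ H_{C_j}^n(0)$ for $j ∈ \{1, 2, Z\}$, I would arrange the objects $H_{C_i}(A_j)$ into a $3 × 3$ grid whose diagonal is exactly $H_{C_1}^{n+1}(0) ⇉ H_{C_2}^{n+1}(0) ↠ H_Z^{n+1}(0)$. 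Its columns vary $C$ with $A_j$ fixed and are reflexive coequalisers because tensoring on the left, $({-}) ⊗ A_j$, preserves all colimits (convenience) and coproduct with the fixed object $I + F(A_j)$ preserves connected colimits. Its rows vary the argument with $C_i$ fixed and are reflexive coequalisers because $H_{C_i}$ preserves reflexive coequalisers: $F$ does by hypothesis, $C_i ⊗ {-}$ does since every object of a friendly monoidal category is friendly, and coproducts commute with reflexive coequalisers. The $3 × 3$ lemma~\cite[Lemma~8.4.2]{BarrWells:ttt} then yields that the diagonal is a reflexive coequaliser, completing the induction.

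Finally I would take the colimit of these levelwise coequalisers over $n$; since colimits commute with colimits, the result is the coequaliser of the colimits, so that $F^⋆(C_1) ⇉ F^⋆(C_2) ↠ F^⋆(Z)$ is a coequaliser, as required. The main obstacle is the inductive step: both the functor $H_{C_i}$ and its argument $H_{C_i}^n(0)$ move with $i$, so the inductive hypothesis cannot be applied directly and one is forced through the $3 × 3$ lemma. This is precisely the point at which the two separate friendliness assumptions — on $F$ and on every object of $𝐂$ — are simultaneously required, respectively for the rows and, together with convenience, for the columns.
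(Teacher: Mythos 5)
Your proposal is correct and takes essentially the same approach as the paper: the paper's proof consists precisely of the remark that one should generalise the argument of Proposition~\ref{prop:sifted-colim-mon} (the case $F=0$) using the initial-algebra description of the free $F$-monoid monad from Proposition~\ref{prop:sigmamonchar}, and you carry out exactly that generalisation — the finitary $\omega$-chain, the $3\times 3$ lemma~\cite[Lemma~8.4.2]{BarrWells:ttt} for the inductive step, and interchange of colimits at the end. The one point where the generalisation is not verbatim (the proof of Proposition~\ref{prop:sifted-colim-mon} must first show that the coequaliser object $Z$ is friendly, for the third row of the grid) is handled correctly in your argument by appealing to friendliness of every object of the ambient friendly category.
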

\begin{proof}
  This is a direct generalisation of the proof of
  Proposition~\ref{prop:sifted-colim-mon} (which corresponds to the case
  $F = 0$), using the fact that the free $F$-monoid monad maps $X$ to
  the initial algebra of $A ↦ I + X ⊗ A + F(A)$, as recalled in
  Proposition~\ref{prop:sigmamonchar}.
\end{proof}

\subsection{Initial algebras for \texorpdfstring{$𝐑𝐞𝐠𝐌𝐧𝐝_f(𝐒𝐞𝐭^ℙ)$}{RegMndfSetP}}
\label{ss:friendlyModMnd}
In this section, we characterise the underlying monad and initial
algebra of equational modular signatures, i.e., signatures of the
register $𝐑𝐞𝐠𝐌𝐧𝐝_f(𝐒𝐞𝐭^ℙ)$, filling in the missing bits
from~§\ref{ss:ia:mnd}.  We first deal with the register
$𝐑𝐞𝐠𝐌𝐧𝐝⁰_f(𝐒𝐞𝐭^ℙ)$ without equations, by compiling (in the sense of
Definition~\ref{def:compilation}) to the register
$𝐏𝐒𝐄𝐅_{[𝐒𝐞𝐭^ℙ,𝐒𝐞𝐭^ℙ]_f}$ of pointed strong endofunctors on
$[𝐒𝐞𝐭^ℙ,𝐒𝐞𝐭^ℙ]_f$. We then tackle the whole register
$𝐑𝐞𝐠𝐌𝐧𝐝_f(𝐒𝐞𝐭^ℙ)$, using friendliness.

Let us first deal with the case without equations.  Recalling from
Definition~\ref{def:SigmaS} the endofunctor $Σ_S$ on $[𝐒𝐞𝐭^ℙ,𝐒𝐞𝐭^ℙ]_f$
induced by a modular signature $S$, the idea in this case is that the
assignment $S ↦ Σ_S$ may be viewed as mapping signatures of
$𝐑𝐞𝐠𝐌𝐧𝐝⁰_f(𝐒𝐞𝐭^ℙ)$ to signatures of $𝐏𝐒𝐄𝐅_{[𝐒𝐞𝐭^ℙ,𝐒𝐞𝐭^ℙ]_f}$, i.e.,
pointed strong endofunctors.  We first establish this by equipping
$Σ_S$ with a pointed strength, then use compilation to transport the
problem, and conclude.
\begin{prop}
  For any modular signature $S$, the endofunctor $Σ_S$ admits a
  pointed strength given at any $P ∈ [𝐒𝐞𝐭^ℙ,𝐒𝐞𝐭^ℙ]_f$,
  $Q ∈ \id/[𝐒𝐞𝐭^ℙ,𝐒𝐞𝐭^ℙ]_f$, $X ∈ 𝐒𝐞𝐭^ℙ$, and operation of arity
  $(d,p)$, say with $d = (F ∘ Θ)^{(p₁,…,pₙ)}$, by applying
  $F(P(-))·𝐲ᵣ$ to the obvious morphism
  $Q(X) + ∑ᵢ 𝐲_{pᵢ} → Q (X + ∑ᵢ 𝐲_{pᵢ} ).$
\end{prop}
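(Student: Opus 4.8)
The plan is to build the pointed strength summand by summand and to reduce every coherence check to a single, transparent ``context-extension'' morphism. Since $Σ_S = ∑_{i ∈ I} Σ_{(d_i,r_i)}$ and, by Proposition~\ref{prop:finitary-endo-convenient}, the composition tensor preserves all colimits on the left, we have $Σ_S(P) ∘ Q ≅ ∑_i (Σ_{(d_i,r_i)}(P) ∘ Q)$ naturally; hence it suffices to define the strength on a single summand $(d,r)$, with $d = (F ∘ Θ)^{(p_1,…,p_n)}$, and to take the coproduct, the two coherence diagrams being stable under coproducts. So I would fix such a summand and, for brevity, abbreviate $W = ∑_i 𝐲_{p_i}$ and $A = X + W$.

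Next I would make the two types explicit. For a pointed object $(Q, e_Q ∶ \mathrm{id} → Q)$, unfolding $⊗ = {∘}$ gives, at any $X ∈ 𝐒𝐞𝐭^ℙ$,
\[(Σ_{(d,r)}(P) ∘ Q)(X) = F(P(Q(X) + W)) · 𝐲_r \qquad\mbox{and}\qquad Σ_{(d,r)}(P ∘ Q)(X) = F(P(Q(X + W))) · 𝐲_r.\]
I would then define the context-extension morphism
\[c^Q_X ∶ Q(X) + W → Q(X + W)\]
as the copairing of $Q(\mathrm{inl})$ with $e_{Q,A} ∘ \mathrm{inr}$, where $\mathrm{inl}∶ X → A$ and $\mathrm{inr}∶ W → A$ are the coprojections, and set $(st_{P,Q})_X = F(P(c^Q_X)) · 𝐲_r$. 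This is exactly the functor $F(P(-)) · 𝐲_r$ applied to $c^Q_X$, as the statement prescribes, and it has the required type by the two displays above. Naturality of $st$ in $X$, in $P$, and in the pointed object $Q$ is then immediate from naturality of $F$, of $P$, of $e_Q$, and of the coprojections.

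Finally I would verify the two axioms of Definition~\ref{d:pstrength-endo}. Because functor composition is strictly associative and unital, the associator and unitors are identities, so the unit triangle collapses to $st_{P,\dot{\mathrm{id}}} = \mathrm{id}$ and the hexagon to $st_{P ∘ Q, R} ∘ (st_{P,Q} ∘ R) = st_{P, Q ∘ R}$. For the unit, the identity point on $\mathrm{id}$ makes $c^{\mathrm{id}}_X$ the copairing of $\mathrm{inl}$ and $\mathrm{inr}$, i.e.\ the identity on $A$, whence $st_{P,\dot{\mathrm{id}}} = \mathrm{id}$. Stripping the functorial prefactor $F(P(-)) · 𝐲_r$ (which turns composites into composites), the associativity axiom reduces to the single identity
\[c^{Q ∘ R}_X = Q(c^R_X) ∘ c^Q_{R(X)},\]
which I would check on the two summands of $Q(R(X)) + W$: on $Q(R(X))$ both sides equal $Q(R(\mathrm{inl}))$ by functoriality, and on $W$ both sides equal $e_{Q, R(A)} ∘ e_{R, A} ∘ \mathrm{inr}$, using naturality of $e_Q ∶ \mathrm{id} → Q$ together with the formula $e_{Q ∘ R, Y} = e_{Q, R(Y)} ∘ e_{R, Y}$ for the point induced on $Q ∘ R$ in $\mathrm{id}/𝐂$. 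This last summand is the only step requiring genuine care, and it is precisely where pointedness of $Q$ and $R$ (as opposed to mere functoriality) enters; everything else is bookkeeping.
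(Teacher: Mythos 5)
Your proof is correct and takes essentially the same route as the paper: the paper's proof is literally ``Straightforward'', with the accompanying remark defining the strength summand by summand as the coproduct of the morphisms obtained by applying $F(P(-))·𝐲_{r}$ to the obvious context-extension map $Q(X) + ∑_j 𝐲_{q_j} → Q(X + ∑_j 𝐲_{q_j})$. What you add is exactly what the paper leaves implicit -- identifying that map as the copairing of $Q(\mathrm{inl})$ with the point, and checking the (strict) unit and associativity axioms via the identity $c^{Q∘R}_X = Q(c^R_X) ∘ c^Q_{R(X)}$ -- all of which is accurate.
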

\begin{proof}
  Straightforward.
\end{proof}
\begin{rem}
  In a bit more detail, letting $S = (dᵢ,pᵢ)_{i ∈ I}$,
  we have $Σ_S = ∑_{i ∈ I} Σ_{dᵢ,pᵢ}$.
  The pointed strength is defined as the coproduct (of morphisms)
  \[(∑_{i ∈ I} Σ_{dᵢ,pᵢ})(P) ∘ Q = ∑_{i ∈ I} (Σ_{dᵢ,pᵢ}(P) ∘ Q)
  \xto{∑_{i ∈ I} stⁱ_{P,Q}} ∑_{i ∈ I} Σ_{dᵢ,pᵢ}(P ∘ Q)\rlap{,} \]where
  for all $i ∈ I$, say with $dᵢ ≅ (F ∘ Θ)^{(qⁱ₁,…,qⁱ_{nᵢ})}$ and $pᵢ ∈ ℙ$,
  $stⁱ_{P,Q,X}$ is the obvious morphism
  \[F (P (Q (X) + ∑_{j ∈ nᵢ} 𝐲_{qⁱⱼ})) · 𝐲_{pᵢ} →
  F (P (Q (X + ∑_{j ∈ nᵢ} 𝐲_{qⁱⱼ}))) · 𝐲_{pᵢ}.\]
\end{rem}

  \begin{prop}\label{prop:SigmaSMon:Salg}
    The assignment $S ↦ Σ_S$ defines a compilation
    \[𝐑𝐞𝐠𝐌𝐧𝐝⁰_f(𝐒𝐞𝐭^ℙ) → 𝐏𝐒𝐄𝐅_{[𝐒𝐞𝐭^ℙ,𝐒𝐞𝐭^ℙ]_f}.\]
    More concretely, for any modular signature $S = (dᵢ,rᵢ)_{i ∈ I}$,
    there exists an isomorphism \[S\alg ≅ Σ_S\Mon \]of categories over
    $𝐌𝐧𝐝_f(𝐒𝐞𝐭^ℙ)$.
  \end{prop}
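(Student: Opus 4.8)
The plan is to prove the concrete statement — an isomorphism $S\alg ≅ Σ_S\Mon$ of categories over $𝐌𝐧𝐝_f(𝐒𝐞𝐭^ℙ)$ — from which the compilation claim follows at once. Indeed, monoids in $[𝐒𝐞𝐭^ℙ,𝐒𝐞𝐭^ℙ]_f$ for the composition structure are exactly finitary monads on $𝐒𝐞𝐭^ℙ$, so the base categories of the two registers coincide, and the two semantics are the forgetful functors $S\alg → 𝐌𝐧𝐝_f(𝐒𝐞𝐭^ℙ)$ and $Σ_S\Mon → 𝐌𝐧𝐝_f(𝐒𝐞𝐭^ℙ)$. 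An isomorphism of these over the base is precisely an isomorphism in $𝐂𝐀𝐓/𝐌𝐧𝐝_f(𝐒𝐞𝐭^ℙ)$, which is what Definition~\ref{def:compilation} requires of a compilation $S ↦ Σ_S$.

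The first reduction is to rewrite both kinds of structure as a single module morphism out of $Σ_S$. Unfolding definitions, an $S$-algebra is a finitary monad $T$ together with a family of $𝐒𝐞𝐭$-valued $T$-module morphisms $dᵢ(T) → T_{pᵢ}$, $i ∈ I$. By the $ℙ$-indexed family of adjunctions of Proposition~\ref{prop:adjyev}, such a family corresponds bijectively to a single $𝐒𝐞𝐭^ℙ$-valued $T$-module morphism $∑_i dᵢ(T)·𝐲_{pᵢ} → T$, that is, to a morphism $𝐇_S(T) → T$, exactly as recorded in the remark following Definition~\ref{def:derivedops}. On the other side, a $Σ_S$-monoid is a monoid (= finitary monad) $T$ equipped with a $T$-module morphism $ν_T∶ Σ_S\pmod(T) → T$, where $Σ_S\pmod$ is the parametric module induced (Definition~\ref{def:sigmas}) by the pointed strength on $Σ_S$ constructed in the preceding proposition. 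By the commuting square stated just above, the parametric modules $𝐇_S$ and $Σ_S\pmod$ have the \emph{same} underlying endofunctor $Σ_S$; the crux is therefore to show that they agree as parametric modules, i.e.\ that their $T$-actions coincide.

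This action-matching is the only non-formal step, and is where I expect the main (though routine) work to lie. The action on $𝐇_S(T)$ is the free substitution action inherited from $T$ through the adjunction, whereas the action on $Σ_S\pmod(T)$ is the composite $Σ_S(T) ∘ T \xto{st_{T,T}} Σ_S(T ∘ T) \xto{Σ_S(μ)} Σ_S(T)$ built from the strength (taking $P = Q = T$, with $T$ pointed by its unit $η$). The point is that the strength was defined precisely as $F(P(-))·𝐲ᵣ$ applied to the reindexing $Q(X) + ∑ᵢ 𝐲_{pᵢ} → Q(X + ∑ᵢ 𝐲_{pᵢ})$, which encodes how $dᵢ(T) = (Fᵢ ∘ Θ)^{(p₁,…,pₙ)}(T)$ transports bound variables across substitution; postcomposing with $Σ_S(μ)$ then recovers exactly the derived $T$-module structure on $dᵢ(T)·𝐲_{pᵢ}$. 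I would verify this by a componentwise diagram chase, reducing over the coproduct $i ∈ I$ to the single elementary case $d = (F ∘ Θ)^{(p₁,…,pₙ)}$ and unfolding both actions at each $X ∈ 𝐒𝐞𝐭^ℙ$.

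Once $𝐇_S ≅ Σ_S\pmod$ as parametric modules is established, module morphisms $𝐇_S(T) → T$ correspond bijectively, and naturally in $T$, to module morphisms $Σ_S\pmod(T) → T$, matching the objects of the two categories. On morphisms, a model morphism on either side is simply a monad morphism $f∶ T → T'$ commuting with the respective structure maps; naturality of the bijection in $T$ shows that $f$ is compatible with the $S$-algebra structures iff it is compatible with the $Σ_S$-monoid structures. Since both forgetful functors send a model to its underlying monad, the induced functor is an isomorphism over $𝐌𝐧𝐝_f(𝐒𝐞𝐭^ℙ)$, as claimed.
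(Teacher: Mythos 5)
Your proposal is correct and follows essentially the same route as the paper's proof: transpose the family of $𝐒𝐞𝐭$-valued module morphisms through the adjunction of Proposition~\ref{prop:adjyev}, copair into a single $𝐒𝐞𝐭^ℙ$-valued module morphism out of $Σ_S$, identify this with $Σ_S$-monoid structure, and check the bijection extends to morphisms over $𝐌𝐧𝐝_f(𝐒𝐞𝐭^ℙ)$. The only difference is presentational: you explicitly isolate the verification that the copaired module action agrees with the action induced by the pointed strength (i.e.\ that $𝐇_S$ and $Σ_S\pmod$ coincide as parametric modules), a step the paper compresses into ``recalling Definition~\ref{def:sigmas}''.
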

  \begin{proof}
Any family of $𝐒𝐞𝐭$-valued module morphisms $ρᵢ∶ dᵢ(X) → X_{rᵢ}$
corresponds by the adjunction of Proposition~\ref{prop:adjyev} to a
family $\tilde{ρᵢ}∶ dᵢ(X) · 𝐲_{rᵢ} → X$ of $𝐒𝐞𝐭^ℙ$-valued module
morphisms, hence by copairing to one compatible module morphism
$Σ_S\pmod(X) → X$ (recalling Definition~\ref{def:sigmas}), and thus to
$Σ_S$-monoid structure on $X$.  This correspondence extends
straightforwardly to morphisms, hence defining the desired functor
$S\alg → Σ_S\Mon$ over $𝐌𝐧𝐝_f(𝐒𝐞𝐭^ℙ)$.  Since it is bijective, the
functor is an isomorphism.
  \end{proof}

  \begin{cor}\label{cor:SModMnd}
    For any modular signature $S$, the forgetful functor
    $S\alg → 𝐌𝐧𝐝_f(𝐒𝐞𝐭^ℙ)$ is monadic, and furthermore the free
    $S$-algebra on an endofunctor $X ∈ [𝐒𝐞𝐭^ℙ,𝐒𝐞𝐭^ℙ]_f$ is the free
    $Σ_S$-monoid $Σ_S^⋆(X)$, as characterised in
    Proposition~\ref{prop:sigmamonchar}.

    In particular, $Σ_S^*(\id)$ has a canonical $S$-algebra structure,
    which is initial.
  \end{cor}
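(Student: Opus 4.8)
The plan is to transport everything through the compilation $S \mapsto Σ_S$ established in Proposition~\ref{prop:SigmaSMon:Salg}, thereby reducing the corollary to the already-proved results about pointed strong endofunctors. First I would record the ambient facts: by Proposition~\ref{prop:finitary-endo-convenient} the category $[𝐒𝐞𝐭^ℙ,𝐒𝐞𝐭^ℙ]_f$ is convenient for the composition monoidal structure, and its monoids are exactly the finitary monads, i.e.\ $𝐌𝐨𝐧([𝐒𝐞𝐭^ℙ,𝐒𝐞𝐭^ℙ]_f) = 𝐌𝐧𝐝_f(𝐒𝐞𝐭^ℙ)$; moreover $Σ_S$ is a finitary, pointed strong endofunctor by the preceding proposition. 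All three inputs below — Theorem~\ref{thm:sigmamonadic}, the isomorphism of Proposition~\ref{prop:SigmaSMon:Salg}, and the characterisation of Proposition~\ref{prop:sigmamonchar} — then apply directly to $Σ_S$.

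For monadicity, I would apply Theorem~\ref{thm:sigmamonadic} to $Σ_S$, which yields that the forgetful functor $U_{Σ_S} ∶ Σ_S\Mon → 𝐌𝐧𝐝_f(𝐒𝐞𝐭^ℙ)$ is finitary monadic. Since Proposition~\ref{prop:SigmaSMon:Salg} gives an isomorphism $S\alg ≅ Σ_S\Mon$ of categories over $𝐌𝐧𝐝_f(𝐒𝐞𝐭^ℙ)$ — that is, commuting with the two forgetful functors — and since our notion of monadicity is defined up to precisely such an isomorphism over the base, monadicity of $S\alg → 𝐌𝐧𝐝_f(𝐒𝐞𝐭^ℙ)$ transfers immediately.

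For the free algebra, I would observe that an isomorphism of categories over the base carries left adjoints to left adjoints; composing the isomorphism $S\alg ≅ Σ_S\Mon$ with the common forgetful functor $𝐌𝐧𝐝_f(𝐒𝐞𝐭^ℙ) → [𝐒𝐞𝐭^ℙ,𝐒𝐞𝐭^ℙ]_f$, it is also an isomorphism over $[𝐒𝐞𝐭^ℙ,𝐒𝐞𝐭^ℙ]_f$. Hence the free $S$-algebra on an endofunctor $X ∈ [𝐒𝐞𝐭^ℙ,𝐒𝐞𝐭^ℙ]_f$ corresponds under the isomorphism to the free $Σ_S$-monoid on $X$, which is precisely $Σ_S^⋆(X)$ as characterised in Proposition~\ref{prop:sigmamonchar} (namely $μA.(\id + Σ_S(A) + X ∘ A)$, the tensor being composition). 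For the final clause, initial objects are likewise preserved by the isomorphism, so the initial $S$-algebra corresponds to the initial $Σ_S$-monoid $Σ_S^⋆(0)$. Proposition~\ref{prop:sigmamonchar} identifies its carrier as $Σ_S^*(I)$, and since the monoidal unit $I$ for composition is the identity endofunctor $\id$, this is exactly $Σ_S^*(\id)$, equipped with its canonical $Σ_S$-algebra — hence $S$-algebra — structure.

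There is essentially no hard step here: the corollary is an assembly of cited results, and the only points requiring care are purely formal. These are that the isomorphism of Proposition~\ref{prop:SigmaSMon:Salg} lives over $𝐌𝐧𝐝_f(𝐒𝐞𝐭^ℙ)$, so that it respects forgetful functors, free objects, and initial objects; the harmless identification of the composition unit $I$ with $\id$; and the reading of ``free $S$-algebra on an endofunctor'' relative to the two-step forgetful functor $S\alg → 𝐌𝐧𝐝_f(𝐒𝐞𝐭^ℙ) → [𝐒𝐞𝐭^ℙ,𝐒𝐞𝐭^ℙ]_f$, which is what makes the target of $Σ_S^⋆$ the correct comparison.
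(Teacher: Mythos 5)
Your proposal is correct and follows exactly the route the paper intends: the corollary is deduced by transporting Theorem~\ref{thm:sigmamonadic} and Proposition~\ref{prop:sigmamonchar} across the isomorphism $S\alg ≅ Σ_S\Mon$ over $𝐌𝐧𝐝_f(𝐒𝐞𝐭^ℙ)$ from Proposition~\ref{prop:SigmaSMon:Salg}, with $[𝐒𝐞𝐭^ℙ,𝐒𝐞𝐭^ℙ]_f$ convenient by Proposition~\ref{prop:finitary-endo-convenient}. The only difference is that you make explicit the formal transfer steps (monadicity, left adjoints, initial objects across an isomorphism over the base, and the identification $I = \id$) which the paper leaves implicit in stating this as a corollary.
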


  We now want to characterise the initial $E$-algebra, for any
  equational modular signature $E = (S,S',L,R)$, 
  where we recall that $L$ and $R$ are functors $S\alg → S'\alg$
  over $𝐌𝐧𝐝_f(𝐒𝐞𝐭^ℙ)$.
Clearly:
\begin{prop}
  For any equational modular signature $E= (S,S',L,R)$, $E\alg$ is
  the equaliser in $𝐂𝐀𝐓$ of $L$ and $R$.
\end{prop}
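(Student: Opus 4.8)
The plan is to compute the equaliser concretely and then match it, field by field, against the definition of $E\alg$. Recall that in $𝐂𝐀𝐓$ the equaliser of a parallel pair $L,R∶ S\alg ⇉ S'\alg$ is the (generally non-full) subcategory of $S\alg$ whose objects are those $T$ with $L(T) = R(T)$ and whose morphisms are those $f$ with $L(f) = R(f)$, together with the evident inclusion into $S\alg$. I would therefore verify separately that the objects and the morphisms of this subcategory are precisely the $E$-algebras and the $E$-algebra morphisms.

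For the object part, the key observation is that $L$ and $R$ both lie over $𝐌𝐧𝐝_f(𝐒𝐞𝐭^ℙ)$, so that for any $S$-algebra $T$ the two $S'$-algebras $L(T)$ and $R(T)$ have the same underlying monad. Consequently the equation $L(T) = R(T)$ reduces to equality of the two associated $S'$-algebra structures carried by that common monad, which is exactly the defining condition of an $E$-algebra. This step is essentially a definition chase.

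The only non-tautological step, and the place where a little care is needed, concerns the morphisms. By definition an $E$-algebra morphism is merely an $S$-algebra morphism, with no extra side condition, whereas an equaliser morphism $f$ must additionally satisfy $L(f) = R(f)$. I would show this side condition is automatic. The forgetful functor $S'\alg → 𝐌𝐧𝐝_f(𝐒𝐞𝐭^ℙ)$ is faithful, since a morphism of $S'$-algebras is nothing but a monad morphism subject to commutation conditions; and because $L$ and $R$ are functors over $𝐌𝐧𝐝_f(𝐒𝐞𝐭^ℙ)$, for any $S$-algebra morphism $f$ between two $E$-algebras the images $L(f)$ and $R(f)$ are parallel $S'$-algebra morphisms sharing the same underlying monad morphism. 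Faithfulness then forces $L(f) = R(f)$. I expect this faithfulness argument to be the main (indeed the only) obstacle, in the sense that it is the sole point not settled purely by unfolding definitions. Having established it, the hom-sets of the equaliser coincide with those of $E\alg$, so $E\alg$ — equipped with its forgetful functor to $𝐌𝐧𝐝_f(𝐒𝐞𝐭^ℙ)$ — is the equaliser of $L$ and $R$ in $𝐂𝐀𝐓$, as claimed.
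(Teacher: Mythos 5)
Your proposal is correct and is essentially the paper's own argument: the paper offers no proof beyond the word ``Clearly'', and what it treats as clear is exactly what you verify --- the object conditions agree by definition, and the morphism condition $L(f)=R(f)$ holds automatically because $L$ and $R$ are functors over $𝐌𝐧𝐝_f(𝐒𝐞𝐭^ℙ)$ and the forgetful functor $S'\alg → 𝐌𝐧𝐝_f(𝐒𝐞𝐭^ℙ)$ is faithful. Your identification of this faithfulness step as the only non-tautological point is precisely why the paper can dismiss the proof as obvious.
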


By Proposition~\ref{prop:SigmaSMon:Salg}, $L$ and
$R$ induce functors $L',R'∶ Σ_S\Mon → Σ_{S'}\Mon$ over
$𝐌𝐧𝐝_f(𝐒𝐞𝐭^ℙ)$ making the following square commute serially.
\begin{center}
  \diag{%
    S\alg \& S'\alg \\
    Σ_S\Mon \& Σ_{S'}\Mon
  }{%
    (m-1-1) edge[labell={≅}] (m-2-1)
    (m-1-2) edge[labelr={≅}] (m-2-2)
    (m-1-1.10) edge[bend left=10,labela={L}] (m-1-2.170) %
    (m-1-1.0) edge[bend right=10,labelb={R}] (m-1-2.180) %
    (m-2-1.10) edge[bend left=10,labela={L'}] (m-2-2.170) %
    (m-2-1.0) edge[bend right=10,labelb={R'}] (m-2-2.180) %
  }
\end{center}
By Corollary~\ref{cor:street}, $L'$ and $R'$ induce a
reflexive parallel pair of monad morphisms
\[L'',R''∶ (Σ_S+Σ_{S'})^⋆ ⇉ Σ_S^⋆\rlap{.}\]
The tuple $(Σ_S,Σ_{S+S'},L'',R'')$ forms a
monoidal equational system $𝔼_E$, over the category $[𝐒𝐞𝐭^ℙ,𝐒𝐞𝐭^ℙ]_f$, whose
algebras are by construction isomorphic to $E$-algebras.  This
readily entails:
\begin{cor}
  The assignment
  \[E= (S,S',L,R) ↦ 𝔼_E ≔ (Σ_{S'} ⊢ L'' =_⊗ R'' : Σ_S)\]
  induces a compilation
  \[𝐑𝐞𝐠𝐌𝐧𝐝_f(𝐒𝐞𝐭^ℙ) → 𝐌𝐄𝐒_{[𝐒𝐞𝐭^ℙ,𝐒𝐞𝐭^ℙ]_f}.\]
  More concretely, for any equational modular signature $E$, we have
  an isomorphism
  \[E\alg ≅ 𝔼_E\Mon\]
  of categories over $𝐌𝐧𝐝_f(𝐒𝐞𝐭^ℙ)$.
\end{cor}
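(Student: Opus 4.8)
The plan is to package the constructions already carried out above into a compilation in the sense of Definition~\ref{def:compilation}. Both registers share the same base category, since monoids in $[𝐒𝐞𝐭^ℙ,𝐒𝐞𝐭^ℙ]_f$ are exactly the finitary monads on $𝐒𝐞𝐭^ℙ$, i.e.\ $𝐌𝐨𝐧([𝐒𝐞𝐭^ℙ,𝐒𝐞𝐭^ℙ]_f) ≅ 𝐌𝐧𝐝_f(𝐒𝐞𝐭^ℙ)$. Hence, once the map $E ↦ 𝔼_E$ is seen to land in $𝐒𝐢𝐠_{𝐌𝐄𝐒}$, it remains only to exhibit, for each $E$, an isomorphism $E\alg ≅ 𝔼_E\Mon$ in $𝐂𝐀𝐓/𝐌𝐧𝐝_f(𝐒𝐞𝐭^ℙ)$ between the two semantics.

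First I would confirm that $𝔼_E$ is a genuine monoidal equational system. The ambient monoidal category $[𝐒𝐞𝐭^ℙ,𝐒𝐞𝐭^ℙ]_f$ is convenient, and in fact friendly, by Proposition~\ref{prop:finitary-endo-convenient} and Example~\ref{ex:friendly}. The associated endofunctors $Σ_S$ and $Σ_{S'}$ are finitary, and both carry the pointed strength for $Σ_S$-type endofunctors constructed above, so each is a finitary pointed strong endofunctor. Finally, the reflexive pair $L'',R''$ produced from $L',R'$ by Corollary~\ref{cor:street} is, through the equivalences of Proposition~\ref{prop:mon:alteqsys}, precisely a parallel pair of monoidal functorial $Σ_S^⊛$-terms of arity $Σ_{S'}$. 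Thus $𝔼_E = (Σ_{S'} ⊢ L'' =_⊗ R'' : Σ_S)$ is a well-formed signature of $𝐌𝐄𝐒_{[𝐒𝐞𝐭^ℙ,𝐒𝐞𝐭^ℙ]_f}$.

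Next I would build the isomorphism of semantics. The proposition above identifying $E\alg$ with the equaliser in $𝐂𝐀𝐓$ of $L,R∶ S\alg ⇉ S'\alg$ places that equaliser over $𝐌𝐧𝐝_f(𝐒𝐞𝐭^ℙ)$. Proposition~\ref{prop:SigmaSMon:Salg} supplies isomorphisms $S\alg ≅ Σ_S\Mon$ and $S'\alg ≅ Σ_{S'}\Mon$ over this base, and the serially commuting square displayed above shows that they intertwine $(L,R)$ with the induced pair $(L',R')$. Since an isomorphism between parallel pairs transports equalisers to equalisers, $E\alg$ is therefore isomorphic over the base to the equaliser of $L',R'∶ Σ_S\Mon ⇉ Σ_{S'}\Mon$. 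On the other hand, the monoidal analogue of Proposition~\ref{prop:alteqsysalg} characterises the objects of $𝔼_E\Mon$ as exactly the $Σ_S$-monoids whose structure lies in the equaliser of these same functors $L',R'$; that is, $𝔼_E\Mon$ is that equaliser, again over $𝐌𝐧𝐝_f(𝐒𝐞𝐭^ℙ)$. Composing the two identifications yields the required isomorphism $E\alg ≅ 𝔼_E\Mon$ over the base, whence the compilation.

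The step I expect to demand the most care is the ``over the base'' bookkeeping rather than any isolated computation: one must check that the isomorphisms of Proposition~\ref{prop:SigmaSMon:Salg} make both squares relating $(L,R)$ to $(L',R')$ commute not merely up to abstract equivalence but as morphisms over $𝐌𝐧𝐝_f(𝐒𝐞𝐭^ℙ)$, so that the single equaliser of $L',R'$ can serve simultaneously as $E\alg$ and as $𝔼_E\Mon$ in $𝐂𝐀𝐓/𝐌𝐧𝐝_f(𝐒𝐞𝐭^ℙ)$. Granting this compatibility — which is exactly what the preceding discussion has been arranged to provide — the corollary follows at once, justifying the remark that it ``readily entails'' the statement.
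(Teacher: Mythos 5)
Your proposal is correct and follows essentially the same route as the paper: the paper's own justification is precisely the construction preceding the corollary (the compilation $S ↦ Σ_S$ of Proposition~\ref{prop:SigmaSMon:Salg}, the serially commuting square relating $(L,R)$ to $(L',R')$, and the correspondences of Proposition~\ref{prop:mon:alteqsys}), after which it declares the algebras "by construction isomorphic". Your write-up merely makes explicit the two identifications the paper leaves implicit — that $E\alg$ and $𝔼_E\Mon$ are both the equaliser of $L',R'∶ Σ_S\Mon ⇉ Σ_{S'}\Mon$ over $𝐌𝐧𝐝_f(𝐒𝐞𝐭^ℙ)$ — which is exactly the intended reading.
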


As a result, we readily obtain by Theorem~\ref{thm:MES}:
\begin{cor}\label{cor:modmnd}
  Let $E = (S,S',L,R)$ be any equational modular signature. Then:
  \begin{enumerati}
  \item \label{item:ems:monadic} The forgetful functor
    $E\alg → 𝐌𝐧𝐝_f(𝐒𝐞𝐭^ℙ)$ is finitary monadic.
  \item The finitary monad $E^⋆$ underlying the forgetful functor
    \[E\alg → [𝐒𝐞𝐭^ℙ,𝐒𝐞𝐭^ℙ]_f \]is the reflexive coequaliser in
    $𝐌𝐧𝐝_f([𝐒𝐞𝐭^ℙ,𝐒𝐞𝐭^ℙ]_f)$ of $L'',R''∶ (Σ_S+Σ_{S'})^⋆ ⇉ Σ_S^⋆$.
  \item The above coequaliser is created by the forgetful functor
    \[𝐌𝐧𝐝_f([𝐒𝐞𝐭^ℙ,𝐒𝐞𝐭^ℙ]_f) → [[𝐒𝐞𝐭^ℙ,𝐒𝐞𝐭^ℙ]_f,[𝐒𝐞𝐭^ℙ,𝐒𝐞𝐭^ℙ]_f]_f\rlap{,}\]
  hence computed pointwise.
\item The initial $E$-algebra is the coequaliser of
  $(L'')₀,(R'')₀∶ (Σ_S+Σ_{S'})^*(\id) ⇉ Σ_S^*(\id)$, equipped with
  its canonical $S$-algebra structure.
\end{enumerati}
\end{cor}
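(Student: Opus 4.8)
The plan is to derive all four claims from Theorem~\ref{thm:MES} via the compilation $E ↦ 𝔼_E$ just constructed. Recall that the preceding corollary associates to any equational modular signature $E = (S,S',L,R)$ the monoidal equational system $𝔼_E = (Σ_{S'} ⊢ L'' =_⊗ R'' : Σ_S)$ on the convenient monoidal category $[𝐒𝐞𝐭^ℙ,𝐒𝐞𝐭^ℙ]_f$, together with an isomorphism $E\alg ≅ 𝔼_E\alg$ of categories over $𝐌𝐧𝐝_f(𝐒𝐞𝐭^ℙ)$. The observation that makes the translation seamless is that finitary monads on $𝐒𝐞𝐭^ℙ$ are precisely monoids in $[𝐒𝐞𝐭^ℙ,𝐒𝐞𝐭^ℙ]_f$ for the composition tensor product, so that $𝐌𝐨𝐧([𝐒𝐞𝐭^ℙ,𝐒𝐞𝐭^ℙ]_f) = 𝐌𝐧𝐝_f(𝐒𝐞𝐭^ℙ)$ and the relevant forgetful functors coincide.

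First I would instantiate Theorem~\ref{thm:MES} at $𝔼_E$, taking $𝐂 = [𝐒𝐞𝐭^ℙ,𝐒𝐞𝐭^ℙ]_f$, $Σ = Σ_S$ and $Γ = Σ_{S'}$. Its part~\ref{item:meqs:finitary:monadic} states that $𝔼_E\alg → 𝐌𝐧𝐝_f(𝐒𝐞𝐭^ℙ)$ is finitary monadic; transporting along the isomorphism $E\alg ≅ 𝔼_E\alg$, which lies over $𝐌𝐧𝐝_f(𝐒𝐞𝐭^ℙ)$ and hence commutes with the forgetful functors, yields claim~\ref{item:ems:monadic}. Part~\ref{item:meqs:rcoeq} identifies the induced monad $𝔼_E^⋆$ as the coequaliser of $L'',R''∶ (Σ_S+Σ_{S'})^⋆ ⇉ Σ_S^⋆$ in $𝐌𝐧𝐝_f([𝐒𝐞𝐭^ℙ,𝐒𝐞𝐭^ℙ]_f)$, which, since $E^⋆$ and $𝔼_E^⋆$ agree under the isomorphism, is exactly the second claim.

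For the remaining two claims I would invoke the conditional parts of Theorem~\ref{thm:MES}, whose only hypothesis is that $Σ$ and $Γ$ be friendly. Here this is automatic: by Example~\ref{ex:friendly} every finitary endofunctor on $𝐒𝐞𝐭^ℙ$ is friendly, so $Σ_S$ and $Σ_{S'}$ are friendly. The third part of the theorem then asserts that the coequaliser is created by the forgetful functor to $[[𝐒𝐞𝐭^ℙ,𝐒𝐞𝐭^ℙ]_f,[𝐒𝐞𝐭^ℙ,𝐒𝐞𝐭^ℙ]_f]_f$, hence computed pointwise, giving the third claim; its fourth part describes the initial algebra as the coequaliser of $(L'')₀,(R'')₀∶ (Σ_S+Σ_{S'})^*(\id) ⇉ Σ_S^*(\id)$, where I would note that the identity endofunctor $\id$ is exactly the monoidal unit $I$ of the composition structure, so that the evaluation of the free-algebra monad at $\id$ matches the expression appearing in Theorem~\ref{thm:MES}.

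Since virtually all the content has been absorbed into Theorem~\ref{thm:MES} and the compilation corollary, the genuine checks here are purely bookkeeping: that the isomorphism $E\alg ≅ 𝔼_E\alg$ lies over the correct base (so that monadicity and the coequaliser descriptions transport), and that friendliness is discharged automatically over $𝐒𝐞𝐭^ℙ$. I do not expect any real obstacle; the main risk is notational, namely keeping the three ``free'' constructions apart --- $(-)^⋆$ for free monoids, $(-)^⊛$ for the free-monoid monad on monoids, and $(-)^*$ for free algebras, as fixed in Notation~\ref{not:flocon} --- and confirming via Proposition~\ref{prop:sigmamonchar} that $Σ_S^⋆(0) ≅ Σ_S^*(\id)$, which reconciles the monad description of claim~\ref{item:meqs:rcoeq} with the initial-algebra formula.
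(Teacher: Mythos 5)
Your proposal is, in structure, exactly the paper's own proof: the paper obtains Corollary~\ref{cor:modmnd} by applying Theorem~\ref{thm:MES} to the monoidal equational system $𝔼_E = (Σ_{S'} ⊢ L'' =_⊗ R'' : Σ_S)$ and transporting along the isomorphism $E\alg ≅ 𝔼_E\alg$ over $𝐌𝐧𝐝_f(𝐒𝐞𝐭^ℙ) = 𝐌𝐨𝐧([𝐒𝐞𝐭^ℙ,𝐒𝐞𝐭^ℙ]_f)$ provided by the compilation corollary immediately preceding it, which is precisely what you do; your transport of monadicity, of the coequaliser description, and the reconciliation $Σ_S^⋆(0) ≅ Σ_S^*(\id)$ via Proposition~\ref{prop:sigmamonchar} are all accurate.

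One justification needs repair, although it does not sink the proof. To discharge the friendliness hypothesis you write: ``by Example~\ref{ex:friendly} every finitary endofunctor on $𝐒𝐞𝐭^ℙ$ is friendly, so $Σ_S$ and $Σ_{S'}$ are friendly.'' This conflates two levels. Example~\ref{ex:friendly} concerns finitary endofunctors \emph{of} $𝐒𝐞𝐭^ℙ$, i.e., the \emph{objects} of $[𝐒𝐞𝐭^ℙ,𝐒𝐞𝐭^ℙ]_f$; it says that this monoidal category is friendly. But $Σ_S$ and $Σ_{S'}$ are endofunctors \emph{on} $[𝐒𝐞𝐭^ℙ,𝐒𝐞𝐭^ℙ]_f$, and friendliness of all objects of a category says nothing, by itself, about endofunctors of that category (a general finitary endofunctor on a functor category of this kind need not preserve reflexive coequalisers). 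What licenses the step is either the parenthetical built into the statement of Theorem~\ref{thm:MES} itself (``which is in particular the case when $𝐂$ is $[𝐒𝐞𝐭^ℙ,𝐒𝐞𝐭^ℙ]_f$''), which is how the paper proceeds, or a direct check from Definition~\ref{def:SigmaS}: $Σ_S$ is a coproduct of functors of the form $P ↦ F(P({-} + ∑ᵢ 𝐲_{pᵢ}))·𝐲ᵣ$, i.e., pre- and post-composition with fixed functors; since colimits in $[𝐒𝐞𝐭^ℙ,𝐒𝐞𝐭^ℙ]_f$ are pointwise, such a functor is friendly as soon as the finitary functor $F∶ 𝐒𝐞𝐭^ℙ → 𝐒𝐞𝐭$ preserves reflexive coequalisers --- and \emph{that} is where the friendliness of finitary functors on $𝐒𝐞𝐭^ℙ$ (Example~\ref{ex:friendly}) genuinely enters, one level down from where you invoke it.
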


\subsection{Initial algebras for \texorpdfstring{$𝐑𝐞𝐠[𝐒𝐞𝐭^ℙ,𝐒𝐞𝐭^𝕊]_f$}{𝐑𝐞𝐠[SetP,SetS]f}}
\label{ss:friendlyPModMnd}
In this section, we prove monadicity of the register $𝐑𝐞𝐠[𝐒𝐞𝐭^ℙ,𝐒𝐞𝐭^𝕊]_f$ of
equational facet-based signatures, and characterise underlying monads
and initial algebras.  For this, we proceed essentially as in the
previous section, but more simply since the intricacies related to
$Σ$-monoids do not arise. We are thus able to compile
\begin{itemize}
\item $𝐑𝐞𝐠⁰[𝐒𝐞𝐭^ℙ,𝐒𝐞𝐭^𝕊]_f$ to the register  $𝐄𝐅_{[𝐒𝐞𝐭^ℙ,𝐒𝐞𝐭^𝕊]_f}$ of
  endofunctors on $[𝐒𝐞𝐭^ℙ,𝐒𝐞𝐭^𝕊]_f$, and 
\item $𝐑𝐞𝐠[𝐒𝐞𝐭^ℙ,𝐒𝐞𝐭^𝕊]_f$ to the register $𝐄𝐒_{[𝐒𝐞𝐭^ℙ,𝐒𝐞𝐭^𝕊]_f}$ of
  equational systems over $[𝐒𝐞𝐭^ℙ,𝐒𝐞𝐭^𝕊]_f$.
\end{itemize}

Recalling from Definition~\ref{def:SigmaSpre} the endofunctor
$Σ_S$ associated to any facet-based signature $S$, we have:
  \begin{prop}\label{prop:SigmaSMon:Salgpre}
    The assignment $S ↦ Σ_S$ defines a compilation
    \[𝐑𝐞𝐠⁰[𝐒𝐞𝐭^ℙ,𝐒𝐞𝐭^𝕊]_f → 𝐄𝐅_{[𝐒𝐞𝐭^ℙ,𝐒𝐞𝐭^𝕊]_f}.\]
    More concretely, for any facet-based signature
    $S = (dᵢ,sᵢ)_{i ∈ I}$, there exists an isomorphism
    \[S\alg ≅ Σ_S\alg \]of categories over $[𝐒𝐞𝐭^ℙ,𝐒𝐞𝐭^𝕊]_f$.
  \end{prop}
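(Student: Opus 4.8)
The plan is to mirror the proof of Proposition~\ref{prop:SigmaSMon:Salg}, but in the present setting the argument is strictly simpler: facets carry no action and $Σ_S$-algebras are algebras for a bare endofunctor rather than $Σ_S$-monoids, so there is no multiplication or module compatibility to track, and everything reduces to a single adjunction applied facet by facet. First I would record the bookkeeping identity on which the whole proof rests. Writing $S = (dᵢ,sᵢ)_{i ∈ I}$ with each $dᵢ$ elementary, a direct comparison of Definition~\ref{def:SigmaSpre} with the defining formula of $dᵢ$ (the finitary functor $Hᵢ∶ 𝐒𝐞𝐭^ℙ × 𝐒𝐞𝐭^𝕊 → 𝐒𝐞𝐭$ and the binding superscript occur identically on both sides, the sole difference being a postcomposition with $(-)·𝐲_{sᵢ}$) shows that
\[
  Σ_{(dᵢ,sᵢ)}(F) = dᵢ(F)·𝐲_{sᵢ}
\]
for every $F ∈ [𝐒𝐞𝐭^ℙ,𝐒𝐞𝐭^𝕊]_f$, i.e.\ $Σ_{(dᵢ,sᵢ)}$ is the composite $(-)·𝐲_{sᵢ} ∘ dᵢ$. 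In particular $Σ_{(dᵢ,sᵢ)}$, being the composite of the finitary facet $dᵢ$ with the cocontinuous left adjoint $(-)·𝐲_{sᵢ}$, is finitary, and hence so is $Σ_S = ∑_{i ∈ I} Σ_{(dᵢ,sᵢ)}$; thus $Σ_S$ is a legitimate signature of $𝐄𝐅_{[𝐒𝐞𝐭^ℙ,𝐒𝐞𝐭^𝕊]_f}$.

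Next I would invoke the adjunction $(-)·𝐲_s ⊣ (-)_s$ between $[𝐒𝐞𝐭^ℙ,𝐒𝐞𝐭]_f$ and $[𝐒𝐞𝐭^ℙ,𝐒𝐞𝐭^𝕊]_f$ --- the evident analogue for $𝕊$ (in place of $ℙ$ in the codomain) of the adjunction recalled just before Proposition~\ref{prop:adjyev}. Recalling that $Θ_{sᵢ}(F) = F_{sᵢ} = (-)_{sᵢ}(F)$, an $S$-algebra structure on $F$ is exactly a family of natural transformations $ρᵢ∶ dᵢ(F) → F_{sᵢ}$ in $[𝐒𝐞𝐭^ℙ,𝐒𝐞𝐭]_f$. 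Transposing each $ρᵢ$ across the adjunction gives $\tilde{ρᵢ}∶ dᵢ(F)·𝐲_{sᵢ} → F$, that is, by the identity above, $\tilde{ρᵢ}∶ Σ_{(dᵢ,sᵢ)}(F) → F$; copairing over $i ∈ I$ then yields a single morphism $Σ_S(F) → F$, i.e.\ a $Σ_S$-algebra structure on $F$. Since transposition and copairing are both bijective, this produces a bijection between $S$-algebra structures and $Σ_S$-algebra structures sharing the common carrier $F$.

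It then remains to promote this carrier-preserving bijection to an isomorphism of categories over $[𝐒𝐞𝐭^ℙ,𝐒𝐞𝐭^𝕊]_f$. A morphism of $S$-algebras $F → G$ is a natural transformation commuting with the $ρᵢ$; by naturality of the adjunction bijection in both arguments, it commutes with the $ρᵢ$ if and only if it commutes with the transposed $\tilde{ρᵢ}$, if and only if it is a morphism of $Σ_S$-algebras. Hence the bijection is functorial and identity-on-carriers, giving the desired isomorphism $S\alg ≅ Σ_S\alg$ over $[𝐒𝐞𝐭^ℙ,𝐒𝐞𝐭^𝕊]_f$. Since the semantics of $S$ and of $Σ_S$ are precisely the forgetful functors $U_S$ and $U_{Σ_S}$, this isomorphism is exactly the identification of semantics in $𝐂𝐀𝐓/[𝐒𝐞𝐭^ℙ,𝐒𝐞𝐭^𝕊]_f$ required by Definition~\ref{def:compilation}, so $S ↦ Σ_S$ is indeed a compilation into $𝐄𝐅_{[𝐒𝐞𝐭^ℙ,𝐒𝐞𝐭^𝕊]_f}$. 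I expect no genuine obstacle: the only steps requiring care are reading the identity $Σ_{(dᵢ,sᵢ)}(F) = dᵢ(F)·𝐲_{sᵢ}$ off the dense notation of Definition~\ref{def:SigmaSpre} and pinning down the correct $𝕊$-indexed form of the $(-)·𝐲_s ⊣ (-)_s$ adjunction; once these are in place, the bijection and its functoriality are immediate.
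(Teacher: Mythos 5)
Your proposal is correct and follows essentially the same route as the paper's proof: transpose the facet morphisms $ρᵢ∶ dᵢ(F) → F_{sᵢ}$ across the adjunction $(-)·𝐲_{sᵢ} ⊣ (-)_{sᵢ}$, copair to obtain a $Σ_S$-algebra structure, and extend the bijection to morphisms by naturality to get an isomorphism of categories over $[𝐒𝐞𝐭^ℙ,𝐒𝐞𝐭^𝕊]_f$. The only differences are that you spell out details the paper leaves implicit — the identity $Σ_{(dᵢ,sᵢ)}(F) = dᵢ(F)·𝐲_{sᵢ}$ and the finitariness of $Σ_S$ needed for it to be a legitimate signature of $𝐄𝐅_{[𝐒𝐞𝐭^ℙ,𝐒𝐞𝐭^𝕊]_f}$ — both of which are correct.
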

  \begin{proof}
    As in the proof of Proposition~\ref{prop:SigmaSMon:Salg},
    by the adjunction 
    \begin{center}
      \adj{[𝐒𝐞𝐭^ℙ,𝐒𝐞𝐭^𝕊]_f}{[𝐒𝐞𝐭^ℙ,𝐒𝐞𝐭]_f\rlap{,}}{(-)·𝐲ₛ}{(-)ₛ}
    \end{center}
    the facet morphisms
    $ρᵢ∶ dᵢ(X) → X_{sᵢ}$ correspond to natural transformations
    $\tilde{ρᵢ}∶ dᵢ(X) · 𝐲_{sᵢ} → X$, hence by copairing to
    $Σ_S$-algebra structure $Σ_S(X) → X$.  This correspondence extends
    straightforwardly to morphisms, hence defining the desired functor
    $S\alg → Σ_S\alg$ over $[𝐒𝐞𝐭^ℙ,𝐒𝐞𝐭^𝕊]_f$.  Since it is bijective,
    the functor is an isomorphism.
  \end{proof}

  We now want to show that, for any equational facet-based signature
  $E$, the forgetful functor $E\alg → [𝐒𝐞𝐭^ℙ,𝐒𝐞𝐭^𝕊]_f$ is monadic, and
  to explicitly characterise the corresponding monad and initial
  algebra.  For this, we can exhibit $E\alg$ as an equaliser of
  finitary monadic functors over $[𝐒𝐞𝐭^ℙ,𝐒𝐞𝐭^𝕊]_f$ and apply
  Corollary~\ref{cor:monadicequaliser}:
  \begin{prop}
    For any equational facet-based signature $E= (S,S',L,R)$, $E\alg$
    is the equaliser in $𝐂𝐀𝐓$ of $L$ and $R$.
  \end{prop}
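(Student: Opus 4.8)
The plan is to show that $E\alg$ coincides, on the nose, with the equaliser subcategory of $L,R∶ S\alg ⇉ S'\alg$ that $𝐂𝐀𝐓$ provides. Recall that the equaliser $\mathbf{Eq}$ of a parallel pair $L,R∶ 𝐀 → 𝐁$ in $𝐂𝐀𝐓$ is the (generally non-full) subcategory $\mathbf{Eq} ↪ 𝐀$ whose objects are those $a ∈ 𝐀$ with $L(a) = R(a)$, and whose morphisms are those $f$ of $𝐀$ with $L(f) = R(f)$; composition and identities are inherited from $𝐀$, and the inclusion $\mathbf{Eq} ↪ 𝐀$ is the universal equalising cone. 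So the task splits into matching $E\alg$ with $\mathbf{Eq}$ on objects and on morphisms.

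On objects the match is immediate: by definition an $E$-algebra is precisely an $S$-algebra $F$ with $L(F) = R(F)$, which is exactly the condition defining the objects of $\mathbf{Eq}$. The only real content lies in the morphisms. By definition a morphism of $E$-algebras is merely a morphism of $S$-algebras, with \emph{no} further condition, whereas a morphism of $\mathbf{Eq}$ is an $S$-algebra morphism $f$ \emph{additionally} satisfying $L(f) = R(f)$. The key step is thus to show that this extra condition is automatic: for any $S$-algebra morphism $f∶ F → F'$ between $E$-algebras, one has $L(f) = R(f)$. This is where the hypothesis that $L$ and $R$ are functors \emph{over} $[𝐒𝐞𝐭^ℙ,𝐒𝐞𝐭^𝕊]_f$ enters. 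Since $L(F) = R(F)$ and $L(F') = R(F')$, both $L(f)$ and $R(f)$ are morphisms of $S'\alg$ with the same source and the same target. Moreover, writing $U_S$ and $U_{S'}$ for the respective forgetful functors, the commutations $U_{S'} L = U_S = U_{S'} R$ give $U_{S'}(L(f)) = U_S(f) = U_{S'}(R(f))$. Because $U_{S'}$ is faithful — morphisms of $S'$-algebras being by definition natural transformations satisfying compatibility conditions that $U_{S'}$ simply forgets — we conclude $L(f) = R(f)$, as desired.

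Hence $E\alg$ and $\mathbf{Eq}$ have the same objects and, between any two such objects, the same hom-sets (the subset carved out by $L(f)=R(f)$ being, by the previous step, all of $\Hom_{S\alg}(F,F')$), with composition and identities inherited from $S\alg$ in both cases. Therefore $E\alg = \mathbf{Eq}$, and the inclusion $E\alg ↪ S\alg$ is the equaliser of $L$ and $R$ in $𝐂𝐀𝐓$. I expect the faithfulness-based morphism matching to be the only delicate point; everything else is routine bookkeeping with the definitions. The very same argument establishes the analogous statement for equational modular signatures, replacing facets and facet morphisms by parametric modules and parametric module morphisms.
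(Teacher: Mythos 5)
Your proof is correct and is essentially the paper's own argument: the paper dismisses this proposition as ``Straightforward,'' and what you have written is precisely that routine verification, spelled out. The single point of substance --- that $L(f)=R(f)$ holds automatically for any $S$-algebra morphism $f$ between $E$-algebras, because $L(f)$ and $R(f)$ are parallel in $S'\alg$, the over-category condition gives $U_{S'}(L(f)) = U_S(f) = U_{S'}(R(f))$, and the forgetful functor $U_{S'}$ is faithful --- is identified and handled correctly, and the same argument indeed covers the analogous proposition for equational modular signatures.
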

  \begin{proof}
    Straightforward.
  \end{proof}

  But in fact, perhaps more conveniently, we can also compile into
  equational systems.  By Proposition~\ref{prop:SigmaSMon:Salgpre},
  $L$ and $R$ induce functors $L',R'∶ Σ_S\alg → Σ_{S'}\alg$ over
  $[𝐒𝐞𝐭^ℙ,𝐒𝐞𝐭^𝕊]_f$ making the following square commute serially.
\begin{center}
  \diag{%
    S\alg \& S'\alg \\
    Σ_S\alg \& Σ_{S'}\alg
  }{%
    (m-1-1) edge[labell={≅}] (m-2-1)
    (m-1-2) edge[labelr={≅}] (m-2-2)
    (m-1-1.10) edge[bend left=10,labela={L}] (m-1-2.170) %
    (m-1-1.0) edge[bend right=10,labelb={R}] (m-1-2.180) %
    (m-2-1.10) edge[bend left=10,labela={L'}] (m-2-2.170) %
    (m-2-1.0) edge[bend right=10,labelb={R'}] (m-2-2.180) %
  }
\end{center}
This readily entails:
\begin{cor}
  The assignment
  \[E= (S,S',L,R) ↦ 𝔼_E ≔ (Σ_{S'} ⊢ L' = R' : Σ_S)\]
  induces a compilation
  \[𝐑𝐞𝐠[𝐒𝐞𝐭^ℙ,𝐒𝐞𝐭^𝕊]_f → 𝐄𝐒_{[𝐒𝐞𝐭^ℙ,𝐒𝐞𝐭^𝕊]_f}.\]
  More concretely, for any equational facet-based signature
  $E= (S,S',L,R)$, we have isomorphisms
  \[E\alg ≅ 𝔼_E\alg\]
  of categories over $[𝐒𝐞𝐭^ℙ,𝐒𝐞𝐭^𝕊]_f$, and
  \[E^⋆ ≅ 𝔼_E^⋆\]
  of finitary monads thereupon.
\end{cor}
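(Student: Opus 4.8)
The plan is to realise both $E\alg$ and $𝔼_E\alg$ as equalisers of the same parallel pair, read through the isomorphisms of Proposition~\ref{prop:SigmaSMon:Salgpre}, and then to promote the resulting isomorphism of categories to an isomorphism of monads. First I would recall, from the proposition immediately preceding the statement, that $E\alg$ is the equaliser in $𝐂𝐀𝐓$ of the pair $L,R∶ S\alg ⇉ S'\alg$. On the other side, unfolding the definition of algebras for the equational system $𝔼_E = (Σ_{S'} ⊢ L' = R' : Σ_S)$ and applying the equivalence \ref{item:alg:GS}~$⇔$~\ref{item:alg:monadic} of Proposition~\ref{prop:alteqsysalg}, the category $𝔼_E\alg$ is the equaliser in $𝐂𝐀𝐓$ of the pair $L',R'∶ Σ_S\alg ⇉ Σ_{S'}\alg$. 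Both sides thus present themselves as equalisers of parallel pairs over $[𝐒𝐞𝐭^ℙ,𝐒𝐞𝐭^𝕊]_f$.

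Next I would invoke Proposition~\ref{prop:SigmaSMon:Salgpre} twice, for $S$ and for $S'$, to obtain isomorphisms $S\alg ≅ Σ_S\alg$ and $S'\alg ≅ Σ_{S'}\alg$ over $[𝐒𝐞𝐭^ℙ,𝐒𝐞𝐭^𝕊]_f$. Since $L'$ and $R'$ are by construction the functors induced by $L$ and $R$ across these isomorphisms, the displayed square commutes serially; that is, the two parallel pairs are isomorphic as diagrams of shape $\bullet ⇉ \bullet$. Because equalisers in $𝐂𝐀𝐓$ are determined up to canonical isomorphism by their diagram, and all the structural isomorphisms and the equaliser inclusions lie over the base, this produces an isomorphism $E\alg ≅ 𝔼_E\alg$ in $𝐂𝐀𝐓/[𝐒𝐞𝐭^ℙ,𝐒𝐞𝐭^𝕊]_f$, which is exactly the datum making $E ↦ 𝔼_E$ a compilation.

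Finally, for the isomorphism of monads I would argue that by Theorem~\ref{thm:ES} the forgetful functor $𝔼_E\alg → [𝐒𝐞𝐭^ℙ,𝐒𝐞𝐭^𝕊]_f$ is finitary monadic; since monadic functors are closed under isomorphism over the base, so is $E\alg → [𝐒𝐞𝐭^ℙ,𝐒𝐞𝐭^𝕊]_f$, whence $E^⋆$ is defined. The isomorphism $E\alg ≅ 𝔼_E\alg$ over the base is then an isomorphism in $𝐌𝐨𝐧𝐚𝐝𝐢𝐜_f/[𝐒𝐞𝐭^ℙ,𝐒𝐞𝐭^𝕊]_f$, which transports along the equivalence $(-)\alg$ of Corollary~\ref{cor:street} to an isomorphism $E^⋆ ≅ 𝔼_E^⋆$ of finitary monads. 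I expect the only delicate point to be the serial commutativity of the square — that a single isomorphism of diagrams simultaneously intertwines $L$ with $L'$ and $R$ with $R'$ — but this holds by the construction of $L'$ and $R'$, so the whole argument is a faithful transcription of the monad case in Corollary~\ref{cor:modmnd}.
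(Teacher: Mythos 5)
Your proposal is correct and follows exactly the route the paper intends: the paper derives this corollary as ``readily'' following from the preceding proposition ($E\alg$ as the equaliser of $L,R$ in $𝐂𝐀𝐓$) together with the serially commuting square of isomorphisms from Proposition~\ref{prop:SigmaSMon:Salgpre}, with $𝔼_E\alg$ identified as the equaliser of $L',R'$ just as in the proof of Theorem~\ref{thm:ES}, and the monad isomorphism obtained from monadicity plus the equivalence of Corollary~\ref{cor:street}. Your write-up simply makes these implicit steps explicit, including the correct appeal to Proposition~\ref{prop:alteqsysalg} for the equaliser description of $𝔼_E\alg$.
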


As a result, we readily obtain by Theorem~\ref{thm:ES}:
\begin{cor}\label{cor:PSEmonadic}
  Let $E= (S,S',L,R)$ be any equational facet-based signature. Then:
  \begin{enumerati}
  \item The forgetful functor $E\alg → [𝐒𝐞𝐭^ℙ,𝐒𝐞𝐭^𝕊]_f$ is finitary
    monadic.
  \item The finitary monad $E^⋆$ which underlies the forgetful functor
    $E\alg → [𝐒𝐞𝐭^ℙ,𝐒𝐞𝐭^𝕊]_f$ is the reflexive coequaliser in
    $𝐌𝐧𝐝_f([𝐒𝐞𝐭^ℙ,𝐒𝐞𝐭^𝕊]_f)$ of
    $L',R'∶ (Σ_S+Σ_{S'})^* ⇉ Σ_S^*$.
  \item The above coequaliser is created by the forgetful functor
    \[𝐌𝐧𝐝_f([𝐒𝐞𝐭^ℙ,𝐒𝐞𝐭^𝕊]_f) → [[𝐒𝐞𝐭^ℙ,𝐒𝐞𝐭^𝕊]_f,[𝐒𝐞𝐭^ℙ,𝐒𝐞𝐭^𝕊]_f]_f\rlap{,}\]
    hence computed pointwise.
  \item The initial $E$-algebra is the coequaliser of
    $(L')₀,(R')₀∶ (Σ_S+Σ_{S'})^*(0) ⇉ Σ_S^*(0)$, equipped with its
    canonical $S$-algebra structure.
\end{enumerati}
\end{cor}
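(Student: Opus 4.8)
The plan is to read off all four items from Theorem~\ref{thm:ES}, applied to the equational system $𝔼_E = (Σ_{S'} ⊢ L' = R' : Σ_S)$ over $𝐂 = [𝐒𝐞𝐭^ℙ,𝐒𝐞𝐭^𝕊]_f$ furnished by the compilation of the preceding corollary, and then to transport the conclusions back along the isomorphisms $E\alg ≅ 𝔼_E\alg$ over $𝐂$ and $E^⋆ ≅ 𝔼_E^⋆$. In the notation of Theorem~\ref{thm:ES} we have $Σ = Σ_S$ and $Γ = Σ_{S'}$, with $L', R'$ the monad morphisms $(Σ_S+Σ_{S'})^* ⇉ Σ_S^*$ corresponding under Proposition~\ref{prop:alteqsys} to the functors $L', R'∶ Σ_S\alg → Σ_{S'}\alg$. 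Parts (i) and (ii) of Theorem~\ref{thm:ES} then give the first two items at once: finitary monadicity of $E\alg → 𝐂$, and the identification of $E^⋆$ as the coequaliser of this pair in $𝐌𝐧𝐝_f(𝐂)$.

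For the remaining two items I would invoke parts (iii) and (iv) of Theorem~\ref{thm:ES}: these say exactly that the coequaliser is created by the forgetful functor $𝐌𝐧𝐝_f(𝐂) → [𝐂,𝐂]_f$, hence computed pointwise, and that the initial $E$-algebra is the coequaliser of $(L')₀,(R')₀∶ (Σ_S+Σ_{S'})^*(0) ⇉ Σ_S^*(0)$ with its canonical $S$-algebra structure. Both parts carry the standing hypothesis that the endofunctors $Σ_S$ and $Σ_{S'}$ be friendly, so the substance of the proof is to verify this. Since these endofunctors are already known to be finitary (they are the signatures produced by the compilation into $𝐄𝐅_𝐂$ of Proposition~\ref{prop:SigmaSMon:Salgpre}), it remains only to show that they preserve reflexive coequalisers.

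Here lies the one genuine difference with the monad case of Corollary~\ref{cor:modmnd}, where the base $[𝐒𝐞𝐭^ℙ,𝐒𝐞𝐭^ℙ]_f$ is a friendly monoidal category by Example~\ref{ex:friendly} and friendliness comes for free. The present base $[𝐒𝐞𝐭^ℙ,𝐒𝐞𝐭^𝕊]_f$ is a presheaf category on the non-discrete category $(𝐒𝐞𝐭^ℙ)_f × 𝕊$, whose finitely presentable objects fail to be projective, so not every finitary endofunctor is friendly and the blanket argument is unavailable. I would therefore unfold Definition~\ref{def:SigmaSpre}: each summand of $Σ_S$ sends $P$ to $X ↦ F(X+W,P(X+W))·𝐲_s$ with $W = ∑ᵢ𝐲_{pᵢ}$ independent of $P$. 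Colimits in $𝐂$ being pointwise, and the operations $P ↦ P(X+W)$, $(-)·𝐲_s$, and the coproduct over the family all preserving colimits, the claim reduces to reflexive-coequaliser preservation by $F∶ 𝐒𝐞𝐭^ℙ×𝐒𝐞𝐭^𝕊 → 𝐒𝐞𝐭$; and this holds because $F$ is a finitary functor whose domain $𝐒𝐞𝐭^ℙ×𝐒𝐞𝐭^𝕊 ≅ 𝐒𝐞𝐭^{ℙ+𝕊}$ is a genuine power of $𝐒𝐞𝐭$, to which the reasoning behind Example~\ref{ex:friendly} (namely \cite[Corollary~6.30]{algebraictheories}) applies. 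This friendliness check is the main obstacle; everything else is a transparent transport across the compilation isomorphisms and a direct citation of Theorem~\ref{thm:ES}.
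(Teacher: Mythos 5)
Your proposal is correct and follows the paper's route exactly: the paper's entire proof of this corollary consists of the compilation $E ↦ 𝔼_E ≔ (Σ_{S'} ⊢ L' = R' : Σ_S)$ into equational systems over $[𝐒𝐞𝐭^ℙ,𝐒𝐞𝐭^𝕊]_f$ (Proposition~\ref{prop:SigmaSMon:Salgpre} and the compilation corollary immediately preceding the statement), followed by the words ``we readily obtain by Theorem~\ref{thm:ES}'', with the conclusions transported along $E\alg ≅ 𝔼_E\alg$ and $E^⋆ ≅ 𝔼_E^⋆$ just as in your first paragraph.

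The one place you go beyond the paper is the friendliness check for $Σ_S$ and $Σ_{S'}$, and you are right that it is needed rather than free: the parenthetical in Theorem~\ref{thm:ES} asserts friendliness only for bases of the form $[𝐒𝐞𝐭^ℙ,𝐒𝐞𝐭^ℙ]_f$, whereas the base here is $[𝐒𝐞𝐭^ℙ,𝐒𝐞𝐭^𝕊]_f$, so strictly speaking the hypothesis of items \emph{(iii)}--\emph{(iv)} must be discharged by hand — the paper glosses over this. Your discharge is the right one: pointwise computation of colimits, constancy of $W = ∑ᵢ𝐲_{pᵢ}$, colimit-preservation of $(-)·𝐲_s$ and of coproducts, reducing to preservation of reflexive coequalisers by the finitary $F∶ 𝐒𝐞𝐭^ℙ×𝐒𝐞𝐭^𝕊 → 𝐒𝐞𝐭$, which holds by \cite[Corollary~6.30]{algebraictheories} since $𝐒𝐞𝐭^ℙ×𝐒𝐞𝐭^𝕊$ is again a power of $𝐒𝐞𝐭$ (one should also note that fixing the first coordinate keeps the relevant diagrams sifted, filtered and sifted categories being connected). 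One caveat: your contrast with Corollary~\ref{cor:modmnd} overstates the asymmetry. There too, Theorem~\ref{thm:MES}\emph{(iii)}--\emph{(iv)} requires $Σ_S$ and $Σ_{S'}$ to be friendly \emph{as endofunctors of} $[𝐒𝐞𝐭^ℙ,𝐒𝐞𝐭^ℙ]_f$; what Example~\ref{ex:friendly} provides is friendliness of the \emph{objects} of that monoidal category (which feeds Proposition~\ref{prop:algebraicptstr}), not of its endofunctors, so the monad case rests on the same kind of unfolding, hidden behind the analogous parenthetical of Theorem~\ref{thm:MES}. In short, your verification is not a divergence from the paper's proof but a filling-in of a step it leaves implicit in both cases.
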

  
\section{Applications}\label{s:applications}

In this section, we design a signature in $𝐑𝐞𝐠𝐓𝐫𝐚𝐧𝐬𝐌𝐧𝐝_{ℙ,𝕊}$ for each
of the announced languages.  One exception is Positive GSOS systems:
for them, we go further, by recasting them as the signatures of a
subregister of $𝐑𝐞𝐠𝐓𝐫𝐚𝐧𝐬𝐌𝐧𝐝_{ℙ,𝕊}$.

  \subsection{The call-by-value, simply-typed, big-step \lam-calculus}\label{ss:cbvspec}
  Recall from~§\ref{ss:stlambda}, that the simply-typed,
  call-by-value, big-step $λ$-calculus forms \atransition monad, where
  we take $ℙ = 𝕊$ to be the set of types (generated from some fixed
  set of type constants).  The monad $T$ over $𝐒𝐞𝐭^ℙ$ is given by
  values, the source state functor $S₁$ is given by application binary
  trees, and the second state functor $S₂$ is the identity.

  Let us now design a signature for this \transition monad.  Let
  $F∶ 𝐒𝐞𝐭^ℙ → 𝐒𝐞𝐭^𝕊$ be specified by the signature of $𝐑𝐞𝐠[𝐒𝐞𝐭^ℙ,𝐒𝐞𝐭^𝕊]_f$
  consisting of two families of operations
  $\app_{A,B}∶ Θ_{A→B} × Θ_A → Θ_{B}$ and $\val_A∶ 𝐈_A → Θ_A$.  Our
  signature for call-by-value, simply-typed, big-step $λ$-calculus is
  presented in the following table \vspace{-1em}
\begin{center}
$\begin{array}[t]{|p{.18\linewidth}|c|c|c|}
   \hline
   \multirow{2}{2.3cm}{Monad and state functors} & \Tstrutpetit \Bstrutbas T & S₁ & S₂  \\ \cline{2-4}
   & \Tstrut \Bstrut
   λ_{A,B}: (F_{B}∘Θ)^{(A)} → Θ_{A→B} & F & \Id \\ \hline\hline
   Rules & \multicolumn{3}{|c|}{   \Tstruthaut
   \inferrule{ }{\val_A(v) \leadsto v}
  \qquad \qquad
  \inferrule{ e₁ \leadsto λ_{A, B}(e₃) \\ e₂ \leadsto w \\ e₃[w] \leadsto v }{ \app_{A,B}(e₁, e₂) \leadsto v }} \\ \hline
  \end{array}$
\end{center}
where 
\begin{itemize}
\item $-[-]∶ (S₁T)^{(A)}_{B}×T_A → (S₁T)_{B}$ denotes the substitution morphism
  \[\begin{array}{rc}
S₁(T(X+𝐲_A))_B × T(X)_A ≅ & ∑_{e' ∈ T(X)_A} S₁(T(X+𝐲_A))_B \\
& ↓ \rlap{$\scriptstyle [S₁(T[ηᵀ_X,\tilde{e}'])_B]_{e ∈ T(X)_A}$} \\
  & S₁(T (T (X)))_B \\
  & ↓ \rlap{$\scriptstyle S₁(μᵀ_X)_B$} \\
  & S₁ (T (X))_B\rlap{,}
  \end{array}\]
  with $\tilde{e}'∶ 𝐲_A → T(X)$ corresponding to $e' ∈ T(X)_A$ by Yoneda;
\item $S₁=F$ and $S₂=\Id$ are specified by easy signatures as in
  Remark~\ref{rk:statefuns};
\item the rules should be understood as families of rules indexed by
  suitable types.
\end{itemize}
In a bit more detail, the first rule is indexed by the type $A$ of
$v$.  The second one is indexed by two types $A$ and $B$.  There are
five metavariables, $e₁$, $e₂$, $e₃$, $v$, and $w$.  We thus take
$V ≔ (S₁T)_{A→B} × (S₁T)_A × (S₁T)^{(A)}_{B} × T_{B} × T_A.$

\subsection{The \lbmu-calculus}\label{ss:lambdamu}
Recall from~§\ref{ss:lbmu} that the
$\overline{λ}μ$-calculus~\cite{DBLP:phd/hal/Herbelin95,Iouri} forms
\atransition monad with $ℙ = 2 = \ens{𝐩,𝐬}$, where
$𝐩$ stands for ``programs'' and $𝐬$ for ``stacks''.
The placetaker monad $T$ on
$𝐒𝐞𝐭^ℙ$ is given by programs and stacks.
The set of transition types is $𝕊 = 3 = \ens{𝐜,𝐩,𝐬}$, where $𝐜$ stands for
``commands'', and both state functors
$S₁,S₂∶ 𝐒𝐞𝐭^ℙ → 𝐒𝐞𝐭^𝕊$ are given by 
$S₁(A) = S₂(A) = (A_𝐩 × A_𝐬, A_𝐩, A_𝐬)$.
Let us repeat the grammar for the reader's convenience.
\begin{center}
  \vspace{-0.25em}
  $\begin{array}[t]{c}
    \mbox{Commands} \\
    c  \Coloneqq  ⟨e|π⟩
  \end{array}$
\hfil
  $\begin{array}[t]{c}
    \mbox{Programs} \\
    e  \Coloneqq  x ｜ μα.c ｜ λx.e
  \end{array}$
    \hfil
    $\begin{array}[t]{c}
    \mbox{Stacks} \\
    π  \Coloneqq  α ｜ e⋅π
      \end{array}$
\end{center}

Transitions are generated by the congruence rules and the following two rules
\begin{center}
  ${⟨μα.c|π⟩ → c[α↦π]}$ \hfil ${⟨λx.e|e'⋅π⟩ → ⟨e[x↦e']|π⟩}$.
\end{center}

Let us see how to specify this \transition monad using our register.

We saw in Remark~\ref{rk:statefuns} that both state functors may be specified by
operations
\begin{mathpar}
     ⟨-|-⟩∶  𝐈_𝐩 × 𝐈_𝐬  → Θ_𝐜 \and
     η_𝐩 ∶  𝐈_𝐩  → Θ_𝐩 \and
     η_𝐬 ∶  𝐈_𝐬  → Θ_𝐬.
   \end{mathpar}

   The monad, say $T$, is specified by operations
\begin{mathpar}
  μ: Θ^{(𝐬)}_𝐩 × Θ^{(𝐬)}_𝐬 → Θ_𝐩 \and
  λ: Θ^{(𝐩)}_𝐩 → Θ_𝐩 \and
  {·}: Θ_𝐩 × Θ_𝐬 → Θ_𝐬,
\end{mathpar}
with no equation.
 
The basic transition rules are almost as usual:
\begin{mathpar}
  \inferrule{ }{⟨μ⟨e|π'⟩|π⟩ ↝ ⟨e[π],π'[π]⟩}
  \and
 \inferrule{ }{⟨λ(e)|e'⋅π⟩ ↝ ⟨e[e']|π⟩}~·
\end{mathpar}

\begin{rem}\hfill
  \begin{itemize}
  \item The various substitution morphisms ${-}[-]$ are defined
    analogously as in~§\ref{ss:cbvspec}.
  \item The first rule has metavariable module
    $V ≔ T^{(𝐬)}_𝐩×T^{(𝐬)}_𝐬×T_𝐬$, the argument being $(e,π',π)$.
  \item The second rule has $V ≔ T_𝐩^{(𝐩)}×T_𝐩×T_𝐬$.
\end{itemize}
\end{rem}

Let us finish by listing the various congruence rules.
\begin{center}
\begin{tabular}{rcc}
  Commands:
  &
    $ 
      \inferrule{e ↝ e' }{⟨e|π⟩ ↝ ⟨e'|π⟩}
      $ 
    &
      $
        \inferrule{π ↝ π'}{⟨e|π⟩ ↝ ⟨e|π'⟩}
      $ 
      \\
  \\
  Programs:
  &
    $
    \inferrule{⟨e|π⟩ ↝ ⟨e'|π'⟩ }{μ⟨e|π⟩ ↝ μ⟨e'|π'⟩}
    $
    &
      $
        \inferrule{e ↝ e' }{λ(e) ↝ λ(e')}
      $
      \\
  \\
      Stacks:
  &
    $
      \inferrule{e ↝ e' }{e·π ↝ e'·π}
    $
    &
      $
      \inferrule{π ↝ π' }{e·π ↝ e·π'}
      $
\end{tabular}
\end{center}

\subsection{The \pii-calculus}\label{ss:signature:pi}
Recall from~§\ref{ss:expi} that
the $π$-calculus~\cite{DBLP:books/daglib/0004377} also forms \atransition
monad with $ℙ = 2 = \ens{𝐜,𝐩}$ ($𝐜$ for ``channels'', $𝐩$ for
``processes'').  The placetaker monad is the identity on channels, and is
defined on processes by the  grammar
\[P,Q \Coloneqq x ｜ 0 ｜ (P|Q) ｜ νa.P ｜ \abar⟨b⟩.P ｜ a(b).P\rlap{,}\]
where $a$ and $b$ range over channel names, and $b$ is bound in
$a(b).P$ and in $νb.P$.  Processes are identified when related by the
smallest context-closed equivalence relation $≡$ satisfying
\begin{center}
 $0|P ≡ P$
\hfil $P|Q ≡ Q|P$ \hfil $P|(Q|R) ≡ (P|Q)|R$ \hfil
$(νa.P)|Q ≡ νa.(P|Q)$,
\end{center}
where in the last equation $a$ should not occur free in $Q$.
Transitions are then given by the following rules.
\begin{mathpar}
\inferrule{ }{\abar⟨b⟩.P | a(c).Q ⟶ P|(Q[c↦b])} \and
\inferrule{P ⟶ Q}{P|R ⟶ Q|R} \and \inferrule{P ⟶ Q}{νa.P ⟶ νa.Q} 
\end{mathpar}
Let us recall that we denote any object $X ∈ 𝐒𝐞𝐭^ℙ$ by
$X = (X_𝐜,X_𝐩)$, so that we have $T(X)=(X_𝐜, T(X)_𝐩) ∈
𝐒𝐞𝐭^ℙ$. Furthermore, transitions relate processes, so we have $𝕊 = 1$
and $S₁(X) = S₂(X) = X_𝐩$.

Let us see how to specify this \transition monad using our
register. The state functor has been specified in
Remark~\ref{rk:statefuns}, by a single operation $η∶ 𝐈_𝐩 → Θ$.  The
placetaker monad $T$ is specified by operations

\begin{center}
  $0∶ 1 → Θ_𝐩 \quad {|}∶ Θ_𝐩 × Θ_𝐩 → Θ_𝐩 \quad ν∶ Θ^{(𝐜)}_𝐩 → Θ_𝐩
  \quad out∶ Θ_𝐜² × Θ_𝐩 → Θ_𝐩 \quad in∶ Θ_𝐜 × Θ^{(𝐜)}_𝐩 → Θ_𝐩$
\end{center}
  with equations 
\[
0|P ≡ P \quad P|Q ≡ Q|P \quad P|(Q|R) ≡ (P|Q)|R \quad ν(P)|Q ≡ ν(P|𝐰_𝐜(Q)),
\]
 almost copied verbatim from the standard presentation above,
 where $𝐰_𝐜(Q)$ denotes the action of $T(X) → T(X + 𝐲_𝐜)$ on $Q$.
 Finally, the transition rules are as follows,
 \begin{mathpar}
     \inferrule{ }{out(a,b,P) | in(a,Q) ↝ P|(Q[b])}
    \and
    \inferrule{P ↝ Q}{P|R ↝ Q|R} \and
    \inferrule{P ↝ Q}{ν(P) ↝ ν(Q)}
  \end{mathpar}
  where $Q[b]$ denotes the action
  of
  \[T(X+𝐲_𝐜)_𝐩 × T(X)_𝐜 ≅ ∑_{b ∈ T(X)_𝐜} T(X+𝐲_𝐜)_𝐩 \xto{[T[ηᵀ_X,\tilde{b}]_𝐩]_{b ∈ T(X)_𝐜}}
  T (T(X))_𝐩 \xto{(μᵀ_X)_𝐩} T(X)_𝐩\] on $(Q,b)$, with
  $\tilde{b}∶ 𝐲_𝐜 → T(X)$ corresponding to $b ∈ T(X)_𝐜$ by Yoneda.
  
  \begin{rem}
    The third rule has as metavariable module $V ≔ (Θ^{(𝐜)}_𝐩)²$.
  \end{rem}

  \begin{rem}
    The alternative presentation mentioned in~§\ref{ss:expi} may be
    specified in a similar way.
  \end{rem}

  \begin{rem}\label{rk:pilts}
    We have accounted for the presentation of the $π$-calculus through
    a reduction relation, but there is an important, alternative
    presentation based on a labelled transition system~\cite[Table~1.5
    p38]{DBLP:books/daglib/0004377}.  Let us explain why our framework
    cannot cover such a presentation. The problem is that it includes
    an input transition $a(x).P \xto{a(y)} P[x ↦ y]$ in which $y$ may
    be fresh. Letting $T$ denote the monad for syntax as
    in~§\ref{ss:expi}, the correct way to model the fresh case is to
    take as corresponding component of the target state functor $S₂$
    the $T$-module $T_𝐩^{(𝐜)}$, whose elements are processes $P$ with
    a fresh variable. Because the module $T_𝐩^{(𝐜)}$ is not of the
    shape $S₂∘T$, this goes beyond the framework of \transition
    monads.
\end{rem}

  \subsection{A register for Positive GSOS systems}
  Finally, let us recall that Positive GSOS rules have the shape
  \[\inferrule{
    … \\ \labredrule{a_{i,j}}{xᵢ}{y_{i,j}} \\ … \\ (i ∈ n, j ∈ nᵢ)}{\labredrule{c}{\ope(x₁,…,xₙ)}{e}}\rlap{\ ,}\]
  where the variables $xᵢ$ and $y_{i,j}$ are all distinct, $\ope ∈ O$ is
  an operation with arity $n$, and $e$ is an expression potentially
  depending on all the variables.

  We saw in~§\ref{ss:gsos} that each family of operations and rules yields
  \atransition monad where
  \begin{itemize}
  \item $ℙ=1$, because we are in an untyped setting,
  \item  $𝕊 = 1$ because states are terms,
  \item $T$ denotes the term monad,
  \item  $S₁(X) = X$, and
  \item $S₂(X) = 𝔸×X$, where $𝔸$ denotes the set of labels.
\end{itemize}

Let us now define a register $𝐆𝐒𝐎𝐒⁺$ for specifying positive
GSOS systems~\cite{GSOS}. This is a subregister of our register
$𝐑𝐞𝐠𝐓𝐫𝐚𝐧𝐬𝐌𝐧𝐝_{ℙ,𝕊}$, for untyped ($ℙ = 𝕊 = 1$) transition monads.
Let us recall that signatures in this register consist of tuples $(Σ_T,Σ₁,Σ₂,Σ_\Red)$,
where $T = \spec{Σ_T}$, $S₁ = \spec{Σ₁}$, $S₂ = \spec{Σ₂}$, and $Σ_\Red$
is a signature in the register $𝐑𝐞𝐠𝐓𝐫𝐚𝐧𝐬𝐒𝐭𝐫𝐮𝐜𝐭_{ℙ,𝕊}(T,S₁,S₂)$.

In order to describe this subregister, we have to describe its class
of signatures, and then assign to each such signature a tuple $(Σ_T,Σ₁,Σ₂,Σ_\Red)$
as above.

A signature of the register $𝐆𝐒𝐎𝐒⁺$ consists of
\begin{itemize}
\item 
 three sets $O$ (for operations), $𝔸$
 (for labels), and $R$ (for rules),
\item 
 for each element $o$ of $O$, a number $mₒ$ (the arity),
\item 
  for each rule,
  \begin{itemize}
  \item 
    an operation $o ∈ O$ (for the source of the conclusion),
    \item a label $c ∈ 𝔸$ (the label of the conclusion),
    \item  for each $1 ≤ i ≤ mₒ$,
      \begin{itemize}
      \item 
        a number $n_i$ (the number of premises for this argument),
        \item for each $1 ≤ j ≤ nᵢ$ , an element $a_{ij}$ of $𝔸$
        (for the label of the premise),
      \end{itemize}
\item a term $e$ in the syntax generated by $O$, potentially depending on
        $mₒ + ∑_i n_i$ variables.
  \end{itemize}
          \end{itemize}

We now describe the tuple $(Σ_T,Σ₁,Σ₂,Σ_\Red)$ associated to a signature as above:
\begin{itemize}
\item the signatures for both state functors have been given in
  Remark~\ref{rk:statefuns}: $S₁$ is specified by a single operation
  $𝐈 → Θ$, while $S₂$ is specified by a single operation $𝔸×𝐈 → Θ$;
\item the signature for the underlying monad is the family
  $(Θ^{m_o} → Θ)_{o ∈ O}$ (following §\ref{ss:regmon});
\item finally, each Positive GSOS rule yields a rule
  \[\inferrule{
  … \\ eᵢ \leadsto (a_{i,j},e_{i,j}) \\ … \\ (i ∈ mₒ, j ∈ nᵢ) }{ o(e₁,…,e_{mₒ})
\leadsto (c,E)}\] in the corresponding signature $Σ_\Red$ (in the register
$𝐑𝐞𝐠𝐓𝐫𝐚𝐧𝐬𝐒𝐭𝐫𝐮𝐜𝐭_{ℙ,𝕊}(T,S₁,S₂)$).
\end{itemize}

\begin{rem}
  In a bit more detail:
  \begin{itemize}
  \item The metavariable module is $V = T^{mₒ + ∑_{i ∈ mₒ} nᵢ}$, of
    which a typical element is a tuple
    $e ≔ ((eₖ)_{k ∈ mₒ}, (e_{i,j})_{i ∈ mₒ, j ∈ nᵢ}) ∈ T^{mₒ + ∑ᵢ
      nᵢ}(X)$.
\item 
There are $∑_{i ∈ mₒ} nᵢ$ premises.
\item The $(i,j)$th premise maps any tuple $e$ to
  $(eᵢ,(a_{i,j},e_{i,j}))$.
\item The conclusion maps it to $(o(e₁,…,e_{mₒ}),(c,E(e)))$, where
  $E∶ V → T$ is the target expression viewed as a $T$-module morphism.
\end{itemize}
\end{rem}

\subsection{The differential \lam-calculus}\label{ss:difftrans}
Recall from~§\ref{ss:lambdadiff} that the differential $λ$-calculus syntax
is defined by
  \begin{displaymath}
    \begin{array}[t]{rcll}
      e,f,g & \Coloneqq & x ｜ λx.e ｜ e⟨U⟩ ｜ D e ⋅ f & \mbox{(terms)} \\
      U,V & \Coloneqq & 0 ｜ e + U & \mbox{(multiterms)}\rlap{,}
    \end{array}
  \end{displaymath}
  where terms and multiterms are considered equivalent up to the
  following equations.
  \begin{mathpar}
    e + e' + U = e' + e + U \and D (D e · f) · g = D (D e · g) · f
  \end{mathpar}

  Based on unary multiterm substitution $e[x↦U]$ and partial
  derivation $\frac{∂e}{∂x} ⋅ U$, the transition relation is defined
  as the smallest context-closed relation satisfying the rules below,
    \begin{center}
      \hfil $\inferrule{}{(λx.e)⟨U⟩ → e[x ↦ U]}$ \hfil
      $\inferrule{}{D(λx.e)⋅f → λx.\left (\frac{∂e}{∂x}⋅f \right )}$
  \end{center}
  where $λ$ is linearly extended to multiterms: $ λx.(e₁+…+eₙ)$ is
  notation for $λx.e₁ + … + λx.eₙ$.

  We saw that this all forms \atransition monad with one placetaker
  type and one transition type ($ℙ = 𝕊 = 1$).
  
  Let us now design a signature specifying this \transition monad.
  Such a signature consists of signatures for the state functors and
  monad, together with a signature for the transition module.

  \mypar{Signature for the placetaker monad} Recalling that $\oc$
  denote the finite multiset functor, the monad $T$ of differential
  $λ$-calculus is specified by the signature $S$ with operations 
  \begin{mathpar}
    λ∶ Θ^{(1)} → Θ \and {-}⟨{-}⟩∶ Θ×\oc Θ → Θ \and D{-}·{-}∶ Θ×Θ → Θ
  \end{mathpar}
  and one equation:
  \begin{equation}
    D (D e · f) · g ≡ D (D e · g) · f\rlap{.}\label{eq:DD}
  \end{equation}
  By taking $\oc Θ$ as type for the second argument of application, we
  directly identify multiterms as finite multisets, which explains why
  we do not need any further equation for enforcing order irrelevance
  of $+$.

  \mypar{Signature for the state functors} Our state functors are
  $S₁ = \Id$ and $S₂ = {\oc}$, which are specified by easy signatures
  in the sense of Remark~\ref{rk:statefuns}.

  \mypar{Signature for transitions} Specifying the transition rules
  requires the following lemma.
  \begin{lem}\label{lem:deriv}
    There exist $T$-module morphisms
    \begin{center}
      $σ∶ T^{(1)}×\oc T → \oc T$ \hfil and \hfil
      $δ∶ T^{(1)}×\oc T → \oc T ^{(1)}$,
  \end{center}
  respectively called \alert{unary multiterm substitution} and
  \alert{partial derivation}, satisfying the equations
  of~\cite[Definition~6.3 and~6.4]{Iouri}.
  \end{lem}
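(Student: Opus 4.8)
The plan is to exploit the explicit description of the placetaker monad $T$ provided by Theorem~\ref{thm:modmnd}. Writing $S$ for the operations-only signature (with operations $λ∶ Θ^{(1)} → Θ$, $-⟨-⟩∶ Θ×\oc Θ → Θ$, and $D{-}·{-}∶ Θ×Θ → Θ$) and $E$ for the full signature obtained by adjoining the single equation~\eqref{eq:DD}, that theorem identifies $T$ with the \emph{pointwise} quotient of the free term monad $\bar T ≔ Σ_S^*(\id)$ by the congruence $\equiv$ generated by $D(De·f)·g \equiv D(De·g)·f$. The idea is therefore to construct both operations first on the \emph{free} syntax $\bar T$, where every element is a genuine term and structural recursion is available, and then to descend them along the quotient $q∶ \bar T ↠ T$.

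First I would define $\bar T$-module morphisms $\bar σ∶ \bar T^{(1)}×\oc\bar T → \oc\bar T$ and $\bar δ∶ \bar T^{(1)}×\oc\bar T → \oc\bar T^{(1)}$ by the clauses of~\cite[Definitions~6.3 and~6.4]{Iouri}, i.e.\ by structural recursion on the first argument. Each clause is built only from the syntactic operations, the monad multiplication (ordinary capture-avoiding substitution, already available on $\bar T$), the multiset monoid structure carried by $\oc$, and recursive calls of $\bar σ$ and $\bar δ$ on immediate subterms; hence the recursion is well-founded on the inductively generated $\bar T$. I would then prove, by the same structural induction, the two \emph{substitution lemmas} asserting that $\bar σ$ and $\bar δ$ commute with substitution of the ordinary (non-distinguished) variables. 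This is precisely what makes $\bar σ$ and $\bar δ$ morphisms of (finitary) $\bar T$-modules, the distinguished variable being accounted for by the $(-)^{(1)}$ construction — and retained in the codomain of $\bar δ$, on which a partial derivative still depends.

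Finally I would descend to $T$. Since $T(X)$ is, pointwise, the quotient of $\bar T(X)$ by $\equiv$, it suffices to check that $\bar σ$ and $\bar δ$ send $\equiv$-related inputs to $\equiv$-related outputs; well-definedness in the multiterm argument is immediate because $\equiv$ is a congruence and the second argument enters linearly, so the genuine content is invariance under the single generating equation~\eqref{eq:DD}. As $q$ is a quotient of modules, the factored maps $σ$ and $δ$ are automatically $T$-module morphisms, and the defining equations of~\cite{Iouri}, holding for $\bar σ,\bar δ$, descend to $T$. The main obstacle — and exactly the point flagged as erroneous in the conference version — is this compatibility with~\eqref{eq:DD}: one must evaluate $\bar δ$ (and $\bar σ$) on both $D(De·f)·g$ and $D(De·g)·f$ and verify that the results coincide modulo $\equiv$. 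This is a concrete calculation in which the Leibniz and chain-rule clauses interact with the Schwarz-type symmetry expressed by~\eqref{eq:DD}, and it is precisely here that the abstract, adjunction-based recipe of Remark~\ref{rk:deriv-adjoint} fails to apply and the explicit description of~§\ref{s:friendliness} becomes indispensable.
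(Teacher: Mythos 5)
Your proposal is correct and takes essentially the same route as the paper's proof: both exploit the explicit free+quotient description (Corollary~\ref{cor:modmnd}) to define $σ$ and $δ$ by structural recursion on the free syntax $Σ_S^*(\id)$ following~\cite[Definitions~6.3 and~6.4]{Iouri}, then check compatibility with Equation~\eqref{eq:DD} so that the maps descend to the quotient $T$. The only difference is an immaterial reordering: the paper descends first and verifies the module morphism property afterwards via~\cite[Lemma~6.10]{Iouri}, whereas you establish the substitution lemmas on the free syntax and let the module structure descend along the quotient.
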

  \begin{proof}[Proof sketch]
    By the explicit description of Corollary~\ref{cor:modmnd}, $T$ is a
    quotient of the initial $Σ_S$-monoid $Z$ by
    Equation~\eqref{eq:DD}. We thus first define both operations from
    $Z^{(1)}$ by induction following~\cite[Definition~6.3
    and~6.4]{Iouri}, then check that both obtained morphisms
    coequalise the relevant parallel pair to extend them to morphisms
    from $T^{(1)}$. Finally, we check that both obtained morphisms are
    indeed module morphisms, which follows
    from~\cite[Lemma~6.10]{Iouri}.  
\end{proof}

  Now that both auxiliary operations $σ$ and $δ$ are defined, the main
  transition rules are
\begin{mathpar}
 \inferrule{ }{λ(t)⟨U⟩ ↝ σ(t,U)}
  \and
  \inferrule{ }{D(λ(t))·u ↝ λ(δ(t,u))}\ ,
\end{mathpar}
where we implicitly coerce $λ∶ T^{(1)} → T$ into a morphism
$\oc T^{(1)} → \oc T$.

Furthermore, because reduction in the differential $λ$-calculus is
context-closed, we need to include the following congruence
rules, detailed in~\cite[Definition 6.18]{Iouri}:
  \begin{mathpar}
    \inferrule{t ↝ U}{λ(t) ↝ λ(U)}
    \and
    \inferrule{t ↝ U}{D t·s ↝ D U·s}
    \and
    \inferrule{s ↝ U}{D t·s ↝ D t·U}
    \\
    \inferrule{t ↝ U}{t⟨S⟩ ↝ U⟨S⟩}
    \and
    \inferrule{t ↝ U}{s⟨t + V⟩ ↝ s⟨U + V⟩}
  \end{mathpar}
  where, as for $λ$ above, we implicitly coerce some module morphisms
  to $T$ into module morphisms to $\oc T$, also lifting some of their
  arguments from $T$ to $\oc T$, namely we use
  \begin{mathpar}
    {\begin{array}[t]{rcl}
        D(-)·{-}∶ \oc T × \oc T & → & \oc T \\
        (∑ᵢ eᵢ, ∑ⱼ fⱼ) & ↦ & ∑_{i,j} Deᵢ·fⱼ
      \end{array}}
    \and
    {\begin{array}[t]{rcl}
       {-}⟨{-}⟩∶ \oc T × \oc T & → & \oc T \\
       (∑ᵢ eᵢ, ∑ⱼ fⱼ) & → & ∑ᵢ eᵢ⟨∑ⱼfⱼ⟩       
      \end{array}}
    \and
    {\begin{array}[t]{rcl}
    {+}∶ \oc T × \oc T & → & \oc T \\
  (∑ᵢ eᵢ, ∑ⱼ fⱼ) & → & ∑ᵢ eᵢ + ∑ⱼfⱼ.
    \end{array}}
  \end{mathpar}
  \begin{rem}\ \hfill
    \begin{itemize}
    \item The first lifting is only used in cases where one of its
      arguments is a singleton multiset, i.e.,
      we need $DU·t$ and $Dt·u$, but not $DU·T$.
    \item In the second case, only the first argument needs lifting,
      i.e., we have $(e₁ + … + eₙ)⟨U⟩ = e₁⟨U⟩ + … + eₙ⟨U⟩$.
    \item The last lifting is in fact mere multiset union.
  \end{itemize}
\end{rem}

\section{Conclusion and perspectives}\label{s:conclu}
We have introduced \transition monads as a generalisation of reduction
monads, and demonstrated that they cover relevant new examples.  We
have introduced a register of signatures for specifying them.  Let us
briefly assess the scope of our register for \transition monads, or
more generally of the notion of \transition monad itself.

All combinations of call-by-value vs.\ call-by-name, small-step vs.\
big-step, or simply-typed vs.\ untyped variants of $λ$-calculus should
be handled smoothly.

Simple imperative languages like
IMP~\cite[Chapter~2]{DBLP:books/daglib/0070910} may also be organised
into transition monads, at least  in a trivial way since, in the absence
of first-class functions, reduction is generally defined on closed
programs. 

Languages whose transition rules involve some kind of evaluation
contexts, such as ML~\cite[Chapter~10]{ATAPL}, $λ$-calculi with
\texttt{let rec}~\cite{DBLP:journals/apal/AriolaB02}, or the
substitution calculus~\cite{DBLP:conf/rta/Accattoli19}, should also fit
into the framework, although with a bit more work.

A first class of languages which clearly cannot be organised naturally
as \transition monads is those whose transitions involve non-free
modules, as noticed in Remark~\ref{rk:pilts}.  Extending the register
$𝐑𝐞𝐠[𝐒𝐞𝐭^ℙ,𝐒𝐞𝐭^𝕊]_f$ of~§\ref{s:regfun} to cover such examples seems
at hand.

Another, more problematic class of languages is those with advanced
type systems, e.g., polymorphic or dependent types. Covering such
examples is one of our longer-term goals.

Another limitation of our approach is the weakness of the induced
induction principle.  As discussed in~§\ref{s:intro}, this is the
price to pay for its simplicity. What is missing, in comparison with
previous work like~\cite{AHLM19}, is a kind of Grothendieck
construction for signatures/registers. This works smoothly in most
examples.  However, in cases like the differential $λ$-calculus, this
would require extending the definition of unary multiterm substitution
and partial derivation (Lemma~\ref{lem:deriv}) to all models of the
syntax. And this appears to leave some design choices open, which
might be a reflection of the diversity of categorical semantics for
differential
$λ$-calculus~\cite{DBLP:journals/mscs/BluteCS06,DBLP:conf/tlca/Fiore07,DBLP:journals/mscs/Ehrhard18,Christine}.

In the longer term, we plan to refine our register in a way ensuring
that the generated transition system satisfies important properties
like congruence of useful behavioural equivalences, confluence, or type
soundness. In this direction, a result on congruence of applicative
bisimilarity for a simpler register has recently been obtained by
Borthelle et al.~\cite{BHL}.

\undef\abbrevbib
\bibliographystyle{alphaurl}
\bibliography{opmonades}

\end{document}